\def\versiondense#1{#1}
\def\versionspacy#1{}
\newcommand{\zigzag}{zigzag}
\newcommand{\tw}{\operatorname{tw}}
\newcommand{\pw}{\operatorname{pw}}
\newcommand{\Oh}{O}
\DeclareMathOperator*{\argmax}{arg\,max}
\newcommand*\bigcdot{\mathpalette\bigcdot@{.5}}
\newcommand*\bigcdot@[2]{\mathbin{\vcenter{\hbox{\scalebox{#2}{$\m@th#1\bullet$}}}}}
\def\paragraphbf#1{\par\vskip 7pt\noindent{\bf #1}\hskip 10pt}
\def\paragraphsf#1{\par\vskip 7pt\noindent{\sf #1}\hskip 10pt}
\def\inline#1:{\par\vskip 7pt\noindent{\bf #1:}\hskip 10pt}
\newcommand{\tildeG}{\widetilde{\G}}
\newcommand{\C}{\mathcal{C}}
\newcommand{\G}{\mathcal{G}}
\newcommand{\ksum}{$k$-SUM}
\newcommand{\kspm}{$k$-SPM}
\newcommand{\spm}{\textsc{Subset $\Sigma-\Pi$ Maximization}}
\newcommand{\pbds}{\textsc{PBDS}}
\newcommand{\tx}{\tilde{x}}
\newcommand{\ty}{\tilde{y}}
\newcommand{\nex}{\operatorname{e}}
\newcommand{\A}{\mathcal{A}}
\newcommand{\D}{\mathcal{D}}
\newcommand{\wone}{$\mathsf{W}[1]$}
\newcommand{\wtwo}{$\mathsf{W}[2]$}
\newcommand{\bli}{\textbf{i}}
\newcommand{\blj}{\textbf{j}}
\newcommand{\blh}{\textbf{h}}
\newcommand{\blr}{\textbf{r}}
\newcommand{\ca}{\alpha}
\newcommand{\cb}{\beta}
\newcommand{\cg}{\gamma}
\newcommand{\inv}{^{-1}}
\newcommand{\soli}{\texttt{Sol}_\bli}
\newcommand{\solj}{\texttt{Sol}_\blj}
\newcommand{\solh}{\texttt{Sol}_\blh}
\newcommand{\vali}{\texttt{Val}_\bli}
\newcommand{\valj}{\texttt{Val}_\blj}
\newcommand{\valh}{\texttt{Val}_\blh}
\renewcommand{\S}{\mathcal{S}}
\def\Si{\mathcal{S}_\bli}
\newcommand{\T}{\mathcal{T}}
\newcommand{\TD}{\mathtt{T}}
\renewcommand{\Xi}{X_\bli}
\newcommand{\Xj}{X_\blj}
\newcommand{\Xh}{X_\blh}
\newcommand{\Xr}{X_\blr}
\def\dnsitem{\vspace{-5pt}\item}
\newcommand{\Gi}{\G_\bli}
\newcommand{\Ci}{\C_\bli}
\newcommand{\Cj}{\C_\blj}
\newcommand{\Ch}{\C_\blh}
\def\I{{\cal I}}
\def\L{{\cal L}}
\def\X{{\cal X}}
\def\Y{{\cal Y}}
\def\parent{\textsc{par}}
\def\opt{{opt}}
\def\wt{\omega}
\newcommand{\mcc}{\textsc{Multi-Colored Clique} }
\theoremstyle{plain}
\newtheorem{theorem}{Theorem}[section]
\newtheorem{lemma}[theorem]{Lemma}
\newtheorem{claim}[theorem]{Claim}
\newtheorem{proposition}[theorem]{Proposition}
\newtheorem{observation}[theorem]{Observation}
\newtheorem{conjecture}[theorem]{Conjecture}
\author{Keerti Choudhary}{Indian Institute of Technology Delhi, India}{keerti@iitd.ac.in}{}{}
\author{Avi Cohen}{Tel Aviv University, Israel}{avicohen2@mail.tau.ac.il}{}{}
\author{N. S. Narayanaswamy}{Department of Computer Science and Engineering, IIT Madras, India}{swamy@cse.iitm.ac.in}{}{}
\author{David Peleg}{Weizmann Institute of Science, Rehovot, Israel}{david.peleg@weizmann.ac.il}{}{}
\author{R.~Vijayaragunathan}{Department of Computer Science and Engineering, IIT Madras, India}{vijayr@cse.iitm.ac.in}{}{}
\titlerunning{Budgeted Dominating Sets in Uncertain Graphs}
\authorrunning{Choudhary et al.}
\date{}
\keywords{Uncertain graphs, Dominating set, NP-hard, PTAS,
treewidth, planar graph.}
\begin{document}

\title{Budgeted Dominating Sets in Uncertain Graphs}

\versionspacy{
\author{Keerti Choudhary
\thanks{Indian Institute of Technology, Delhi, India.
  keerti@iitd.ac.in}
\and
Avi Cohen
\thanks{Tel Aviv University, Israel.
  avicohen2@mail.tau.ac.il}
\and
N. S. Narayanaswamy
\thanks{
  Indian Institute of Technology, Madras, India.
  \{swamy,vijayr\}@cse.iitm.ac.in}{}{}
\and
David Peleg
\thanks{Weizmann Institute of Science, Rehovot, Israel.
  david.peleg@weizmann.ac.il}
\and
R.~Vijayaragunathan~$^\ddag$
}
\date{\today}
}

\maketitle

\begin{abstract}
We study the {\em Budgeted Dominating Set} (BDS) problem on uncertain graphs, namely, graphs with a probability distribution $p$ associated with the edges, such that an edge $e$ exists in the graph with probability $p(e)$.
The input to the problem consists of a vertex-weighted uncertain graph $\G=(V, E, p, \omega)$ and an integer {\em budget} (or {\em solution size}) $k$, and the objective is to compute a vertex set $S$ of size $k$ that maximizes the expected total domination (or total weight) of vertices in the closed neighborhood of $S$. 
We refer to the problem as the {\em Probabilistic Budgeted Dominating Set}~(PBDS) problem. 
In this article, we present the following results on the complexity of the PBDS problem.

\begin{enumerate}
\dnsitem  We show that the PBDS problem is NP-complete even when restricted to uncertain {\em trees} of diameter at most four. 
This is in sharp contrast with the well-known fact that the BDS problem is solvable in polynomial time in trees.
We further show that PBDS is \wone-hard for the budget parameter $k$, and under the {\em Exponential time hypothesis} it cannot be solved in $n^{o(k)}$ time.

\item We show that if one is willing to settle for $(1-\epsilon)$ approximation, then there exists a PTAS for PBDS on trees. 
Moreover, for the scenario of uniform edge-probabilities, the problem can be solved optimally in polynomial time.

\item We consider the parameterized complexity of the PBDS problem, and show that Uni-PBDS (where all edge probabilities are identical) is \wone-hard for the parameter pathwidth. On the other hand, we show that it is FPT in the combined parameters of the budget $k$ and the treewidth.   

\item Finally, we extend some of our parameterized results to planar and apex-minor-free graphs.
\end{enumerate}

Our first hardness proof (Thm. \ref{thm:treehardness}) makes use of
the new problem of $k$-\spm\ ($k$-SPM), which we believe is of independent interest.
We prove its NP-hardness by a reduction from the well-known {\ksum} problem,
presenting a close relationship between the two problems.
\end{abstract}

\section{Introduction}
\label{section:introduction}

\vspace{-10pt}
\paragraphbf{Background and Motivation.}
Many optimization problems in network theory deal with placing resources
in key vertices in the network so as to maximize coverage.
Some practical contexts where such coverage problems occur include
placing mobile towers in wireless networks to maximize reception, assigning
emergency vehicle centres in a populated area to guarantee fast response,
opening production plants to ensure short distribution lines, and so on.
In the context of social networks, the problem of spreading influencers
so as to affect as many of the network members as possible
has recently attracted considerable interest. 

Coverage problems may assume different forms depending on
the optimized parameter.
A basic ``full coverage'' variant is the classical {\em dominating set} problem,
which asks to find a {\em minimal} vertex set $S$ such that each vertex not in
$S$ is {\em dominated by} $S$, i.e., is adjacent to at least one vertex in $S$. 
In the dual {\em budgeted dominating set (BDS)} problem, given a bound $k$
(the budget), it is required to find a set $S$ of size at most $k$
{\em maximizing} the number of covered vertices. 
Over {\em vertex weighted} graphs, the goal is to maximize the total
{\em weight} of the covered vertices, also known as the {\em domination}.
It is this variant that we're concerned with here. 

Traditionally, coverage problems involve a fixed network of static topology.
The picture becomes more interesting when the network structure is uncertain,
due to potential edge connections and disconnections or link failurs.
Pre-selection of resource locations at the design stage becomes more
challenging in such partial-information settings.
  
In this work, we study the problem in one of the most fundamental settings,
where the input is a graph whose edges fail independently with a given
probability. The goal is to find a $k$-element set that maximizes
the {\em expected} (1-hop) coverage (or domination).
Our results reveal that the probabilistic versions of the coverage problem
are significantly harder than their deterministic counterparts, 
and analyzing them require more elaborate techniques.

An {\em uncertain graph} $\G$ is a triple $(V, E, p)$, where $V$ is a set of $n$
vertices, $E \subseteq V \times V$ is a set of $m$ edges, and the function
$p:E \to [0,1]$ assigns a probability of existence to each edge in $E$.
So an $m$ edge uncertain graph $\G$ represents a probability space
consisting of $2^m$ graphs, sometimes called {\em possible worlds},
derived by sampling each edge $e \in E$ independently with probability $p(e)$.
For 
$H = (V, E' \subseteq E)$, the event of sampling $H$ as a possible world,
denoted $H \sqsubseteq \G$, occurs with probability
\versiondense{
$\Pr(H \sqsubseteq \G) =  \prod_{e \in E'} p(e) \prod_{e \in E\setminus E'} \big(1-p(e)\big)$.
}
\versionspacy{
$$\Pr(H \sqsubseteq \G) ~=~  \prod_{e \in E'} p(e) \prod_{e \in E\setminus E'}
\big(1-p(e)\big).$$
}
The notion of possible worlds dates back to Leibniz and
\emph{possible world semantics (PWS)} is well-studied
in the modal logic literature, beginning with the work of Kripke.

Our work focuses on budgeted dominating sets on
vertex-weighted uncertain graphs,
i.e., the {\em Probabilistic Budgeted Dominating Set} (PBDS) problem.
The input consists of a vertex-weighted uncertain graph $\G=(V,E,p,\wt)$,
with a weight function $\wt:V \to \mathbb{Q}^+$ and an integer budget $k$.
Set $p(vv)=1$ for every $v$. For a vertex $u$ and a set $S \subseteq V$,
denote by
\versiondense{
$\Pr(u \sim S) = 1-\prod_{v \in S}(1-p(uv))$
}
\versionspacy{
$$\Pr(u \sim S) ~=~ 1-\prod_{v \in S}(1-p(uv))$$
}
the probability that 
$u\in S$ or $u$ is connected to some vertex in $S$.
For sets $S_1, S_2 \subseteq V$, the expected coverage (or domination)
of $S_1$ by $S_2$ is defined as
$\C(S_1,S_2) = \sum_{v \in S_1}\big(w(v) \Pr(v \sim S_2)\big).$
The PBDS problem aims to find a set $S$ of size $k$ that maximizes $\C(V,S)$
over the possible worlds.
Its decision version is defined as follows.

\smallskip
\noindent
\fcolorbox{gray!20}{gray!18}{
\parbox{13.5cm}{
{\bf  Probabilistic budgeted dominating set (PBDS)} \\[1mm]
\textsf{\bfseries Input:} A vertex-weighted uncertain graph $\G=(V, E, p, \wt)$,
an integer $k$ and a target domination value $t$.\\
\textsf{\bfseries Question:} Is there a set $S \subseteq V$ of size at most $k$ such that $\C(V, S) \geq t$ ?
}
}
\vspace{.25cm}

\paragraphbf{Our Results and Discussion.}
The budgeted dominating set problem is known to have a polynomial time solution
on trees. A natural question is if the same applies to the probabilistic
version of the problem.
We answer this question negatively, showing the following.

\begin{theorem}
\label{thm:treehardness}
The {\sc PBDS} problem is NP-hard on uncertain trees of diameter $4$. Furthermore, 
(i)~the {\sc PBDS} problem on uncertain trees is \wone-hard for the parameter $k$, and
(ii)~an $n^{o(k)}$ time solution to {\sc PBDS} will falsify the Exponential time hypothesis.
\end{theorem}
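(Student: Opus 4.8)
The plan is to route all three statements through the auxiliary problem \kspm{}: the task of choosing a $k$-element subset $T\subseteq\{1,\dots,n\}$ so as to maximize an objective of ``sum-minus-product'' shape, $\sum_{i\in T}x_i-\prod_{i\in T}y_i$, for given positive rationals $x_1,\dots,x_n$ and $y_1,\dots,y_n\in(0,1)$. We will (i)~give a polynomial, parameter-preserving reduction from \ksum{} to \kspm{}, and (ii)~give a polynomial, parameter-preserving reduction from \kspm{} to {\sc PBDS} on uncertain trees of diameter exactly $4$. Since \ksum{} is NP-hard and \wone-hard for the number $k$ of summands, and, under ETH, has no $n^{o(k)}$-time algorithm, and since both reductions send a parameter-$k$ instance on $n$ elements to a parameter-$k$ instance on $O(n)$ elements/vertices, all three claims of the theorem follow at once.

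\emph{From \ksum{} to \kspm{}.}
Given integers $a_1,\dots,a_n$ and target $t$, first shift them, $b_i:=a_i+C$ with $C$ large enough that every $b_i>0$, so the question becomes whether some $T$ with $|T|=k$ has $\sigma(T):=\sum_{i\in T}b_i$ equal to $t':=t+kC$. Let $B:=\max_i b_i$, pick a large integer $M$ (a fixed polynomial in $n$ and $B$, made precise below), and set $y_i:=1-b_i/M\in(0,1)$ and $x_i:=K-\tfrac{b_i}{M}-\tfrac{b_i^{2}}{2M^{2}}+\tfrac{t'b_i}{M^{2}}$ with $K$ large enough that all $x_i>0$. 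Expanding $\prod_{i\in T}(1-b_i/M)$ in elementary symmetric polynomials and using $e_2=\tfrac{1}{2}\bigl(\sigma(T)^{2}-\sum_{i\in T}b_i^{2}\bigr)$, the index-dependent parts of the $M^{-1}$ and $M^{-2}$ orders cancel, leaving
\[
\sum_{i\in T}x_i-\prod_{i\in T}y_i \;=\; \Bigl(kK-1+\tfrac{(t')^{2}}{2M^{2}}\Bigr)\;-\;\frac{1}{2M^{2}}\bigl(\sigma(T)-t'\bigr)^{2}\;+\;R(T),\qquad |R(T)|=O\!\bigl(k^{3}B^{3}/M^{3}\bigr).
\]
Since $\sigma(T)-t'$ is an integer, the quadratic term is $0$ when it vanishes and at most $-\tfrac{1}{2M^{2}}$ otherwise; hence, taking $M$ a large enough polynomial in $n$ and $B$ that the $O(k^{3}B^{3}/M^{3})$ fluctuation of $R$ (and also the slack needed by the tree construction below) stays below $\tfrac{1}{8M^{2}}$, and setting the \kspm{} threshold to $\tau:=kK-1+\tfrac{(t')^{2}}{2M^{2}}-\tfrac{1}{8M^{2}}$, we get a yes-instance of \kspm{} iff the \ksum{} instance is a yes-instance. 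For the hard \ksum{} instances $B=n^{O(k)}$, so $M$ and all of $x_i,y_i,\tau$ are rationals with $O(k\log n)$ bits, which makes the reduction polynomial (and an FPT-reduction).

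\emph{From \kspm{} to {\sc PBDS} on trees of diameter $4$.}
Build the tree with a centre $c$ of weight $w(c):=1$, an item vertex $u_i$ for each $i\in[n]$ joined to $c$ by an edge of probability $p(cu_i):=1-y_i$, and a private leaf child $v_i$ attached to $u_i$ by an edge of probability $1$, with $w(u_i)+w(v_i):=x_i$ (split into two positive rationals). This tree has diameter exactly $4$; set the target domination value to $t:=\tau+w(c)=\tau+1$. An exchange argument shows that some optimum set $S$ has the form $\{u_i:i\in T\}$ with $|T|=k$: because $1-y_i=b_i/M$ is tiny, putting $c$ into $S$ covers the unselected item vertices only with negligible probability and is dominated by spending that slot on an item vertex; putting a leaf $v_i$ into $S$ covers $u_i$ fully but, unlike $u_i$ itself, does nothing towards covering $c$; and adding any further item vertex strictly increases the coverage, so the whole budget lands on item vertices. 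For such $S$, every vertex outside $\{c\}\cup\{u_i,v_i:i\in T\}$ has coverage $0$, hence $\C(V,S)=\sum_{i\in T}\bigl(w(u_i)+w(v_i)\bigr)+w(c)\bigl(1-\prod_{i\in T}(1-p(cu_i))\bigr)=1+\sum_{i\in T}x_i-\prod_{i\in T}y_i$, which is at least $t$ exactly when the \kspm{} objective is at least $\tau$. This closes the chain.

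\emph{Main obstacle.}
The delicate part is the reduction from \ksum{} to \kspm{}: one must keep the perturbation modulus $M$, the weights $x_i$, and the threshold $\tau$ as rationals with only polynomially many bits -- and polynomially many even after $k$ is fixed, so that both the parameterized reduction and the $n^{o(k)}$ lower bound survive -- while still guaranteeing that the $\bigl(\sigma(T)-t'\bigr)^{2}$ term strictly dominates the tail $R(T)$ of higher elementary symmetric functions (and the slack required by the exchange argument). This is precisely what forces the use of the variant of \ksum{} whose hard instances carry numbers of magnitude $n^{O(k)}$, together with a careful bound on $R(T)$. By comparison, the exchange argument and the diameter bookkeeping in the second step are routine.
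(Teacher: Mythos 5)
Your plan is sound and reaches all three claims by the same two-step strategy as the paper (\ksum~$\to$ \kspm~$\to$ PBDS on diameter-$4$ trees, parameter $k$ preserved throughout), but both steps are carried out by genuinely different constructions. For \ksum~$\to$ \kspm, the paper normalizes the inputs to $\tx_i=(L+x_i)/(kL)$, sets $y_i=\nex^{x_i/(kL)}$ truncated to polynomially many bits via the Lagrange remainder, and derives correctness from concavity of $F(z)=z-\nex^{z-1}$ together with two error lemmas; you instead build an exact rational instance $y_i=1-b_i/M$, $x_i=K-b_i/M-b_i^2/(2M^2)+t'b_i/M^2$, so that the sum-minus-product telescopes (through the second elementary symmetric function) into a quadratic penalty $-(\sigma(T)-t')^2/(2M^2)$ plus an $O(k^3B^3/M^3)$ tail, with integrality of $\sigma(T)-t'$ giving the $1/(2M^2)$ separation. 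Your version avoids transcendental approximation entirely, which is arguably more elementary, at the cost of the bookkeeping on $M$. For \kspm~$\to$ PBDS, the paper uses a unit-weight tree in which every $b_i$ carries $N$ leaves and the probabilities are rescaled by $X_{\max}Y_{\max}$, so the reduction works for arbitrary positive rationals and yields hardness already for unweighted trees; your gadget (one pendant leaf per item, $p(cu_i)=1-y_i$, vertex weights summing to $x_i$) is smaller and cleaner but needs $y_i\in(0,1)$ and an exchange argument that exploits the numerics of the instances coming out of your first step ($1-y_i$ tiny, $x_i$ bounded below by a constant). Two caveats: as stated, your step (ii) is not a reduction from general \kspm\ (for small $x_i$ or large $1-y_i$ the centre $c$ can beat the item vertices, which is exactly what the paper's rescaling and $N$ leaves per item are there to prevent), so it should be presented as a reduction from the composed family, which is enough for the theorem; and your hardness is for vertex-weighted trees, slightly weaker than the paper's unit-weight construction, though still within the theorem as stated. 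The remark that polynomiality needs $B=n^{O(k)}$ is unnecessary — the numbers are encoded in binary, so $M=\mathrm{poly}(k,B)$ always has polynomial bit-length — but harmless.
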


In order to prove the theorem, we introduce the following problem.

\smallskip
\noindent
\fcolorbox{gray!20}{gray!18}{
\parbox{13.5cm}{
\spm~{($k$-SPM)} \\[1mm]
\textsf{\bfseries Input:} A multiset ${\cal A} = \{ (x_1,y_1), \ldots, (x_N,y_N)\}$ of $N$ pairs of positive rationals, 
an integer $k$, and a rational $t$.\\[1mm]
\textsf{\bfseries Question:} Is there a set $S \subseteq [N]$ of size exactly $k$ satisfying
$\sum_{i \in S} x_i - \prod_{i \in S} y_i \geq t$ ?
}
}
\smallskip

To establish the complexity of the $k$-SPM problem, we 
present a polynomial time reduction from $k$-SUM to $k$-SPM,
thereby proving that both $k$-PBDS and $k$-SPM are NP-hard.
Moreover, Downey and Fellows \cite{DowneyFellows92} showed
that the $k$-SUM problem is \wone-hard, implying that if $k$-SUM has an FPT
solution with parameter $k$, then the $\mathsf{W}$ hierarchy collapses.
This provides our second hardness result.

\begin{theorem}
\label{thm:kspmwonehard}
The $k$-{\sc SPM} problem is \wone-hard for the parameter $k$. 
Furthermore, any $N^{o(k)}$ time solution to $k$-{\sc SPM} falsify the Exponential time hypothesis.
\end{theorem}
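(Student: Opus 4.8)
We are told $k$-SUM is $\mathsf{W}[1]$-hard with parameter $k$ and has no $N^{o(k)}$ algorithm under ETH, so a parameter-preserving polynomial-time reduction transfers both lower bounds. Recall that in $k$-SUM we are given a multiset $\{z_1,\dots,z_N\}$ of rationals (or integers) and a target $s$, and we must decide whether some $k$-element subset $S \subseteq [N]$ satisfies $\sum_{i\in S} z_i = s$. The obstacle in mapping this to $k$-SPM is that the $k$-SPM objective $\sum_{i\in S} x_i - \prod_{i\in S} y_i \ge t$ is an \emph{inequality} involving a \emph{product} term, whereas $k$-SUM is an \emph{equality} with only a sum. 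So the product $\prod_{i\in S} y_i$ must be engineered to be essentially constant on all $k$-subsets, and the inequality must be forced to behave like an equality.

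**The main idea is to make every $y_i$ equal.** If we set $y_i = c$ for all $i$ and some fixed rational $c>0$, then $\prod_{i\in S} y_i = c^k$ for every $k$-element set $S$, a constant independent of the choice of $S$. Then the $k$-SPM condition becomes $\sum_{i\in S} x_i \ge t + c^k$, i.e. a pure threshold on the subset sum of the $x_i$'s. To turn the $k$-SUM equality $\sum_{i\in S} z_i = s$ into such a threshold, I would first shift: replace $z_i$ by $x_i := z_i + M$ for a large rational $M$ chosen so that all $x_i$ are positive (as required by $k$-SPM), which transforms the target to $\sum_{i\in S} x_i = s + kM =: s'$. Now I need ``$\sum x_i \ge$ threshold'' to be equivalent to ``$\sum x_i = s'$''. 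This requires ruling out subsets whose sum \emph{exceeds} $s'$; the standard trick is to scale and perturb: multiply all values by a suitable integer and add a tiny offset so that the only way to hit the threshold from below or above is to land exactly on $s'$ — or, more cleanly, run the reduction twice, once asking $\sum x_i \ge s'$ and once (after negating/reflecting the instance) asking $\sum x_i \ge -s'$ on the reflected values, and take the conjunction; a $k$-SUM yes-instance satisfies both, a no-instance fails at least one. Either packaging gives a polynomial-time many-one reduction.

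**Carrying it out, the steps in order are:** (1) state the reduction from an arbitrary $k$-SUM instance $(\{z_i\}, s, k)$, defining $x_i = z_i + M$ with $M > \max_i |z_i|$ so $x_i > 0$; (2) set $y_i = c$ for a convenient constant such as $c = 1$, so $\prod_{i \in S} y_i = 1$ for all $k$-subsets $S$; (3) set the $k$-SPM target to $t = s' - 1$ where $s' = s + kM$, and observe the $k$-SPM condition becomes exactly $\sum_{i\in S} x_i \ge s'$; (4) handle the upper side: either introduce the companion instance with values $x_i' = 2M - x_i$ (again positive, with $k$-SUM target $2kM - s'$) so that $\sum_{i\in S} x_i' \ge 2kM - s'$ is equivalent to $\sum_{i\in S} x_i \le s'$, and reduce to the conjunction of the two $k$-SPM instances, or argue via a multiplicative blow-up that a single instance suffices; (5) verify correctness in both directions and note that $N$, $k$ are preserved up to a constant factor and all numbers have polynomially bounded bit-length, so the reduction is polynomial-time and parameter-preserving. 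Combining with the $\mathsf{W}[1]$-hardness and the ETH lower bound for $k$-SUM yields both claims of Theorem~\ref{thm:kspmwonehard}.

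**I expect step (4) — forcing the inequality to emulate an equality — to be the main obstacle,** since $k$-SPM as stated offers only a one-sided threshold and a single product term that we have already spent on making things constant. The cleanest resolution is likely the two-instance conjunction, but if one wants a genuine many-one (not Turing) reduction, I would instead massage the numbers: scale the $z_i$ by $N+1$ and add $1$ to each, so that any $k$-subset sum is congruent to $k \pmod{N+1}$ and distinct subset sums differ by at least $N+1$; then a subset sum that is $\ge s'$ but ``as small as possible'' is forced to equal $s'$ exactly, which can be encoded by choosing $M$ and $t$ so that overshooting is impossible within the attainable range. Checking this arithmetic carefully is the one place the proof needs real (though elementary) work; everything else is bookkeeping.
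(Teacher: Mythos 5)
There is a genuine gap, and it sits exactly where you predicted: step (4). Once you set all $y_i$ equal to a constant $c$, the product $\prod_{i\in S} y_i = c^k$ is the same for every $k$-subset, so the $k$-SPM instance you produce asks only whether some $k$-subset of positive numbers has sum at least a threshold. That question is solved in polynomial time by taking the $k$ largest $x_i$, so no correct many-one reduction from $k$-SUM can have this form; the hardness has already been destroyed before step (4) tries to repair it. The repairs themselves also fail. The two-instance conjunction is unsound: in a no-instance of $k$-SUM one subset may overshoot $s'$ while a \emph{different} subset undershoots it, so both one-sided instances can be yes-instances even though no single subset hits $s'$ exactly (e.g.\ values $\{0,2\}$, $k=1$, target $1$); and a conjunction of two queries is a Turing, not a many-one, reduction in any case. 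The scaling/congruence trick in (4b) does not help either, because the $k$-SPM question never asks for the \emph{smallest} sum above the threshold -- any overshoot, however large, still satisfies the inequality, and with constant $y_i$ nothing penalizes overshooting.

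The missing idea, which is the heart of the paper's proof of Theorem~\ref{thm:sspmHard}, is to use the product term itself as the penalty for overshooting. The paper sets $\tx_i=(L+x_i)/(kL)$ and takes $\ty_i$ to be a rational approximation of $y_i=e^{x_i/(kL)}$, so that for a $k$-subset $S$ the objective $\sum_{i\in S}\tx_i-\prod_{i\in S}y_i$ equals $F(z_S)$ with $F(z)=z-e^{z-1}$, a strictly concave function whose unique maximum value $0$ is attained only at $z_S=1$, i.e.\ only when $\sum_{i\in S}x_i=0$. Thus the one-sided threshold ``$\ge 0$'' automatically enforces the exact equality, with deviations in \emph{either} direction pushing the value strictly below $-\lambda$ for $\lambda=(2kL)^{-2}$; Lemmas~\ref{lem:lagrange}, \ref{lem:prodLowerbound} and~\ref{lem:precision} then control the error introduced by replacing $e^{x_i/(kL)}$ with rationals so the gap survives. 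Your proposal lacks any analogue of this concavity/penalty mechanism, so the reduction as described does not establish the $\mathsf{W}[1]$-hardness or the ETH lower bound for $k$-SPM.
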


The $k$-SUM problem can be solved easily in $\widetilde O(n^{\lceil k/2\rceil})$
time. However, it has been a long-standing open problem to obtain any polynomial
improvement over this bound~\cite{AbboudL13, Patrascu10}.
Patrascu and Williams~\cite{PatrascuW10} showed an $n^{o(k)}$ time algorithm
for $k$-SUM falsifies the famous {\em Exponential time hypothesis} (ETH).
Hence, our polynomial time reductions also imply that any algorithm optimally
solving $k$-PBDS or $k$-SPM must require $n^{\Omega(k)}$ time unless ETH fails.

\begin{theorem}
\label{thm:pbdslowerbound}
Under the $k$-{\sc SUM} conjecture, for any $\varepsilon>0$, there does not exist  an $n^{\lceil k/2\rceil - \varepsilon}$ 
time algorithm to {\sc PBDS} problem on vertex-weighted uncertain trees. 
\end{theorem}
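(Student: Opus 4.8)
The plan is to prove Theorem~\ref{thm:pbdslowerbound} by upgrading the two reductions that underlie Theorems~\ref{thm:treehardness} and~\ref{thm:kspmwonehard} into a single \emph{fine-grained} reduction from \ksum{} to \pbds{} on uncertain trees: one that increases the instance by only a constant factor and leaves the selection size~$k$ \emph{unchanged}. Recall that the \ksum{} conjecture states that for every fixed $\varepsilon>0$ there is no $O(n^{\lceil k/2\rceil-\varepsilon})$-time algorithm for \ksum{} on $n$ numbers (in its standard polynomially bounded form, i.e., the $n$ numbers have $O(\log n)$ bits). Given such a reduction, an $n^{\lceil k/2\rceil-\varepsilon}$-time algorithm for \pbds{} on uncertain trees would solve an $n$-number \ksum{} instance in time $O\bigl((\Theta(n))^{\lceil k/2\rceil-\varepsilon}\bigr)=O(n^{\lceil k/2\rceil-\varepsilon})$, contradicting the conjecture; so the whole task is to construct the reduction and verify it is economical enough.

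The reduction composes two steps, each essentially already in the paper. \textbf{Step~1} is a precision-bounded variant of the reduction from \ksum{} to \kspm{} used for Theorem~\ref{thm:kspmwonehard}. We may assume the input numbers $a_1,\dots,a_n$ are positive integers bounded by some $U=\mathrm{poly}(n)$. Map them to the $N=n$ pairs $(x_i,y_i)$, where $x_i$ is an affine image of~$a_i$ and $y_i$ encodes~$a_i$ via a base $r=1+\Theta(1/(kU))$ slightly above~$1$, rescaled so that $y_i\in(0,1)$. Since every selected set $S$ has size exactly~$k$, the objective $\sum_{i\in S}x_i-\prod_{i\in S}y_i$ depends only on $s=\sum_{i\in S}a_i$ and, being an affine function of~$s$ minus a convex exponential in~$s$, is strictly concave in~$s$; tuning the parameters so that its maximiser is the \ksum{} target~$T$ and choosing the \kspm{} threshold $t$ just below the optimal value makes a $k$-subset feasible exactly when its $a$-sum equals~$T$. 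The gap between the best and the second-best value of this objective is $1/\mathrm{poly}(n)$, so it suffices to store each $y_i$ and~$t$ to $O(\log n)$ bits of precision; hence every produced number is polynomially bounded with polynomially many bits, the step is polynomial time, and --- crucially --- the number of pairs remains~$n$ and the selection size remains~$k$. \textbf{Step~2} is the reduction from \kspm{} to \pbds{} on diameter-$4$ uncertain trees used for Theorem~\ref{thm:treehardness}: the $N$ pairs become $N$ ``element'' vertices hung off a common neighbour, private pendant weights realise the $\Sigma$-part $\sum_{i\in S}x_i$, and the expected coverage $1-\prod_{i\in S}\bigl(1-p(\cdot)\bigr)$ of the common neighbour realises the $\Pi$-part $\prod_{i\in S}y_i$; the tree has $\Theta(N)$ vertices, and by the dominance argument already used to prove Theorem~\ref{thm:treehardness} together with the monotonicity of $\C(V,\cdot)$, an optimal solution of size at most~$k$ consists of exactly~$k$ element vertices, so the effective budget is exactly~$k$.

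Composing Steps~1 and~2 transforms, in polynomial time, a \ksum{} instance on $n$ numbers into a \pbds{} instance on an uncertain tree with $\Theta(n)$ vertices and budget exactly~$k$, which is precisely the reduction required above; Theorem~\ref{thm:pbdslowerbound} follows.

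The delicate point --- and the only place where Theorem~\ref{thm:pbdslowerbound} asks for more than Theorems~\ref{thm:treehardness} and~\ref{thm:kspmwonehard} already provide --- is the accounting in the exponent. The argument breaks if either step alters~$k$ by even an additive constant (a shift $k\mapsto k+c$ would propagate only the weaker exponent $\lceil(k-c)/2\rceil$ rather than $\lceil k/2\rceil$) or enlarges the instance super-linearly (with $n^{1+\delta}$ vertices the transferred running time $n^{(1+\delta)(\lceil k/2\rceil-\varepsilon)}$ exceeds $n^{\lceil k/2\rceil}$ for small~$\varepsilon$ and contradicts nothing). So the substance of the proof is to re-examine the two constructions and confirm that (i)~the element-to-element correspondence is one-to-one up to $O(1)$ fixed gadget vertices independent of~$n$, (ii)~the produced \pbds{} budget equals~$k$ exactly, and (iii)~Step~1's arithmetic can be carried out with polynomially many, polynomially bounded bits --- which is exactly why one uses the approximate base-$r$ encoding rather than an exact exponential one, whose entries would each have polynomially many bits. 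Each of these checks is routine once isolated, but together they are what the proof must verify.
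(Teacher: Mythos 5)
Your proposal is correct and follows essentially the same route as the paper: compose the $k$-SUM $\to$ $k$-SPM $\to$ PBDS reductions while keeping $k$ unchanged, and shrink the quadratic-size tree of Theorem~\ref{thm:treehardness} to $\Theta(N)$ vertices by replacing each element vertex's many unit-weight pendants with a single weighted pendant, so that an $n^{\lceil k/2\rceil-\varepsilon}$ algorithm for weighted PBDS on trees would contradict the $k$-SUM conjecture. Your Step~1 (base-$r$ encoding with $O(\log n)$-bit precision) is only a cosmetic variant of the paper's truncated-exponential encoding from Theorem~\ref{thm:sspmHard}, which already yields polynomially bounded numbers.
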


An intriguing question is whether the $k$-SPM is substantially harder than
$k$-SUM. For the simple scenario of $k=2$, the $2$-SUM problem has
an $O(n \log n)$ time solution.
However, it is not immediately clear whether the $2$-SPM problem has a truly
sub-quadratic time solution (i.e., $O(n^{2-\varepsilon})$ time for some
$\varepsilon>0$). We leave this as an open question.
This is especially of interest due to the following result.

\begin{theorem}
\label{thm:twospmforpbds}
Let $1 \leqslant c < 2$ be the smallest real such that
$2$-{\sc SPM} problem has an $\widetilde O(n^{c})$ time algorithm. 
Then, there exists an $\widetilde O\big((d n)^{c\lceil k/2\rceil +1}\big)$ time
algorithm for optimally solving $k$-{\sc PBDS}  
on trees with arbitrary edge-probabilities, for some constant $d>0$.
\end{theorem}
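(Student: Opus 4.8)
The starting point is to recast the objective. Writing $W=\sum_{v\in V}\omega(v)$ and using $\Pr(v\sim S)=1-\prod_{s\in S}\bigl(1-p(vs)\bigr)$, we have $\C(V,S)=W-\Phi(S)$ with
\[
\Phi(S)\ :=\ \sum_{v\in V}\omega(v)\prod_{s\in S}\bigl(1-p(vs)\bigr),
\]
so $k$-PBDS on a tree $T$ is the problem of choosing a $k$-element $S$ that minimises $\Phi(S)$. The plan is a meet-in-the-middle over $S$, using the tree structure to turn the combination step into a single call to $2$-SPM on lists of size $O(n^{\lceil k/2\rceil})$.

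The engine would be a decomposition of $\Phi$ across a separator vertex. Suppose $S=S_L\uplus S_R$ and $V\setminus\{v^*\}=V_L\uplus V_R$, where $v^*$ is a vertex, $V_L,V_R$ are unions of connected components of $T-v^*$, $S_L\subseteq V_L$, $S_R\subseteq V_R$, and first $v^*\notin S$. Since $T$ is a tree, every $v\in V_L$ has $N[v]\subseteq V_L\cup\{v^*\}$, hence $\prod_{s\in S}(1-p(vs))=\prod_{s\in S_L}(1-p(vs))$; symmetrically on $V_R$; and only $v^*$ sees both sides. With $B_L(S_L)=\sum_{v\in V_L}\omega(v)\prod_{s\in S_L}(1-p(vs))$, $Y_L(S_L)=\prod_{s\in S_L}(1-p(v^*s))$, and $B_R,Y_R$ defined analogously,
\[
\Phi(S_L\cup S_R)\ =\ B_L(S_L)+B_R(S_R)+\omega(v^*)\,Y_L(S_L)\,Y_R(S_R).
\]
Minimising this over admissible pairs is exactly maximising $\bigl(-B_L(S_L)\bigr)+\bigl(-B_R(S_R)\bigr)-\bigl(\omega(v^*)Y_L(S_L)\bigr)\cdot\bigl(Y_R(S_R)\bigr)$ over one ``left'' item and one ``right'' item, i.e.\ a two-sided instance of $2$-SPM; a uniform additive shift makes the entries positive rationals as required, and the two-sided variant reduces to the stated one-sided one with $O(1)$ overhead. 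If instead $v^*\in S$, then $1-p(v^*v^*)=0$ annihilates the product term, $\Phi$ splits additively over $V_L$ and $V_R$, and the two sides are optimised independently — an easier case.

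This gives the algorithm: for each vertex $v^*$ and each admissible side-assignment in the family described below, enumerate the left list of all $S_L\subseteq V_L$ with $|S_L|\le\lceil k/2\rceil$, tagging each with $\bigl(-B_L(S_L),\,\omega(v^*)Y_L(S_L)\bigr)$ (there are at most $\binom{n}{\lceil k/2\rceil}=O(n^{\lceil k/2\rceil})$ of them and each tag costs $\mathrm{poly}(n)$), build the analogous right list, and run the assumed $\widetilde{O}(N^{c})$ algorithm for $2$-SPM on the two lists ($N=O(n^{\lceil k/2\rceil})$) to obtain the best compatible pair with $|S_L|+|S_R|=k$; output the best over all choices. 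If the family of side-assignments has size $\widetilde{O}(n)$ overall, the total cost is $\widetilde{O}(n)\cdot\widetilde{O}\bigl((n^{\lceil k/2\rceil})^{c}\bigr)$, which after absorbing the $k$-dependent constant overheads (colour-coding the split $k=k_L+k_R$ with $k_L,k_R\le\lceil k/2\rceil$, and the lemma's overhead) into the base becomes $\widetilde{O}\bigl((dn)^{c\lceil k/2\rceil+1}\bigr)$.

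The main obstacle will be the combinatorial lemma that makes this enumeration both correct and small: for the optimal $S$ there is a vertex $v^*$ and a partition $S=S_L\uplus S_R$ with $|S_L|,|S_R|\le\lceil k/2\rceil$ and $N[S_L]\cap N[S_R]\subseteq\{v^*\}$, witnessed by a member of a family of $\widetilde{O}(n)$ efficiently generable configurations. A single $S$-centroid $v^*$ guarantees every component of $T-v^*$ carries at most $\lfloor k/2\rfloor$ of $S$, but those weights may fail to admit a balanced $2$-partition (e.g.\ three components each holding $k/3$ of the budget), and in such cases \emph{no} vertex has a balanced component partition at all, so an admissible split need not be a union of components of $T-v^*$: its two parts can be separated only by vertices outside $N[S]$. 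I would prove the lemma by a shallow centroid-type decomposition of $T[N[S]]$ that routes all ``cross'' budget through one carefully chosen vertex (keeping the shared set of size $\le 1$) while branching only $\widetilde{O}(1)$ ways per level; establishing exactly this — that such a split always exists and that the set of configurations needed to capture it stays near-linear — is the delicate part, and everything else is the bookkeeping above.
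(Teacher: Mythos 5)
Your overall plan — root the computation at a separator vertex $v^*$, split the budget into two halves, enumerate the $O(n^{\lceil k/2\rceil})$ half-solutions on each side with tags of the form (additive coverage, product of survival probabilities at $v^*$), and convert the cross maximization of $a+\bar a - b\bar b$ into a single positive $2$-SPM instance by a shift-and-scale trick — is indeed the skeleton of the paper's argument (the last step is precisely the paper's Lemma on converting the two-list problem to $2$-SPM, and your $v^*\in S$ case matches the paper's easy case). But there is a genuine gap, and you have located it yourself: the combinatorial lemma guaranteeing a balanced split through a single vertex is not proven, only sketched as a hoped-for ``shallow centroid-type decomposition of $T[N[S]]$.'' That sketch cannot work as stated: $N[S]$ depends on the unknown optimal $S$, so the family of configurations must be enumerable without knowing $S$; and once the two sides $S_L,S_R$ are no longer unions of components of $T-v^*$, your decomposition $\Phi(S_L\cup S_R)=B_L(S_L)+B_R(S_R)+\omega(v^*)Y_L(S_L)Y_R(S_R)$ requires each configuration to also specify the vertex bipartition $V_L,V_R$ (so that every covered vertex other than $v^*$ sees only one side), and it is exactly the existence of a near-linear family of such bipartitions that is left unestablished — indeed your own ``three components each holding $k/3$'' example shows a strict two-way component split can fail for every choice of $v^*$.

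The paper closes this gap differently, and the fix is worth internalizing because it sidesteps the hard lemma entirely: it proves a simple counting lemma giving a vertex $v$ and an index $q$ such that the \emph{three} parts --- the prefix of child subtrees $\bigcup_{i<q}T_v^i$, the single subtree $T_v^q$, and the suffix $\bigcup_{i>q}T_v^i$ --- each contain at most $k/2$ elements of $S$. The middle part $U_0=T_v^q$ interacts with the coverage of $v$ only through the single bit $d$ indicating whether the child $z$ of $v$ in $U_0$ is selected; conditioning on $d$ and on the budget $c_0$, its best contribution $Z_\gamma$ is purely additive and is precomputed by brute force over at most $n^{k/2}$ subsets. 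Only the prefix and suffix lists then enter the two-list maximization, which is converted to $2$-SPM exactly as you propose. Iterating over the $O(n)$ valid pairs $(v,q)$ and the $O(k^2)$ budget splits (no colour-coding is needed) gives the claimed $\widetilde O\big((dn)^{c\lceil k/2\rceil+1}\big)$ bound. So your proposal is missing the one structural idea (the prefix/middle/suffix split with the middle decoupled by an indicator bit) that makes the meet-in-the-middle correct on trees; without it, the argument does not go through.
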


Given the hardness of $k$-PBDS on uncertain trees, it is of interest to
develop efficient approximation algorithms.  
Clearly, the expected neighborhood size of a vertex set is a submodular
function, and thus it is known that the greedy algorithm yields a
$(1-1/\nex)$-approximation
for the PBDS problem in general uncertain graphs~\cite{KempeKT03,NemhauserWF78}.
For uncertain {\em trees}, we improve this
by presenting a fully polynomial-time approximation scheme for PBDS.

\begin{theorem}
\label{thm:FPTAS}
For any integer $k$, and any $n$-vertex tree with arbitrary edge probabilities, 
a $(1-\epsilon)$-approximate solution to the optimal probabilistic budgeted dominating set 
$(${\sc PBDS}$)$ of size $k$ can be computed in time $\widetilde O(k^2\epsilon^{-1}n^2)$.
\end{theorem}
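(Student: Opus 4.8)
\subsection*{Proof sketch (plan)}

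The plan is to design a bottom-up dynamic program on the (rooted) tree whose states record, besides the number of vertices selected so far, one extra real parameter per vertex, and then to make the state space finite by snapping that parameter to a polynomial-size grid. Root the tree at an arbitrary vertex $r$. For a candidate solution $S$ and a vertex $v$, let $\pi_v(S)=\prod_{c}(1-p(vc))$, the product taken over the children $c$ of $v$ with $c\in S$; this is the probability that none of $v$'s selected children dominates $v$. The structural fact we exploit is that in a tree the closed neighbourhood of any non-root vertex $u$ lies entirely inside the subtree rooted at $\mathrm{par}(u)$, so the contribution $\omega(u)\Pr(u\sim S)$ of $u$ to $\C(V,S)$ is completely determined by the triple $\big(\,[u\in S],\ [\mathrm{par}(u)\in S],\ \pi_u(S)\,\big)$: it equals $\omega(u)$ if $u\in S$, and $\omega(u)\big(1-(1-p(u\,\mathrm{par}(u))\cdot[\mathrm{par}(u)\in S])\,\pi_u(S)\big)$ otherwise. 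Accordingly I build, for every vertex $v$, a table $V_v[b,j,\pi]$ equal to the maximum total contribution of the vertices of the subtree of $v$ \emph{other than $v$ itself}, over all $S$ confined to that subtree with $|S|=j$, with $[v\in S]=b$, and with $\pi_v(S)=\pi$; the contribution of $v$ is deliberately deferred and added later by $\mathrm{par}(v)$ (and at $r$ it is added at the end, using no parent term). Merging the children of $v$ one at a time is an ordinary knapsack-style $(\max,+)$ combination in the coordinate $j$, where each time a child $c$ is put into $S$ the accumulated $\pi$ is further multiplied by the fixed factor $(1-p(vc))$; handling a child also needs, for each value of $[v\in S]$ and each budget level, a maximization over the child's $\pi$-coordinate to finalise that child's contribution before it is absorbed. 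Invoking the standard $O(nk)$ bound for the budget-$k$ tree knapsack, the whole computation costs $O(nk)$ multiplied by the size of the $\pi$-grid (times a polylog factor for rational arithmetic).

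It remains to fix the grid. I keep $\pi\in[0,1]$ on the uniform grid of spacing $\eta$, rounding every updated product back to that grid (say, downward). Since multiplying by a number in $[0,1]$ never amplifies an already-incurred additive error, and each $\pi_v(S)$ is a product of at most $k$ such factors, the value of $\pi_v$ stored by the algorithm differs from the true $\pi_v(S)$ by at most $O(k\eta)$ additively; hence the stored contribution of each vertex $v$ is within $O(\omega(v)\,k\eta)$ of the truth, and so for \emph{every} $k$-set $S$ the value the DP assigns to $S$ is within $O(Wk\eta)$ of $\C(V,S)$, where $W=\sum_v\omega(v)$. Because a single selected vertex already dominates itself, $\mathrm{OPT}\ge\max_v\omega(v)\ge W/n$, so taking $\eta=\Theta(\epsilon/(nk))$ keeps this additive error below $\epsilon\cdot\mathrm{OPT}$ — exactly the slack needed to turn the DP optimum (which is attained by a genuine $k$-set) into a $(1-\epsilon)$-approximation (using monotonicity of $\C(V,\cdot)$ to pass between "size $k$" and "size $\le k$"). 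The grid then has $O(nk/\epsilon)$ points, and the running time is $O(nk)\cdot O(nk/\epsilon)=\widetilde O(k^2\epsilon^{-1}n^2)$, as claimed.

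The step I expect to be the main obstacle is the error analysis of the rounded product coordinate: one must check that rounding $\pi$ at every update (rather than once, globally) causes no multiplicative blow-up, and that the accumulated additive error, summed over all $n$ vertices, stays below $\epsilon\cdot\mathrm{OPT}$ — which is precisely where the crude lower bound $\mathrm{OPT}\ge W/n$ is used. A secondary point is the bookkeeping needed to keep the per-vertex update in amortised near-linear time in $n$ despite the added $\pi$-coordinate: one must project out each merged child's $\pi$-coordinate before the $(\max,+)$ convolution and then reuse the usual subtree-size charging argument for tree knapsack.
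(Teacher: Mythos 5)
Your plan is correct and follows essentially the same route as the paper's proof: a bottom-up DP on the rooted tree indexed by budget and membership indicators, made polynomial by snapping the per-vertex partial-coverage quantity to an $O(\mathrm{poly}(n,k)/\epsilon)$-size set, with the accumulated additive error charged against $\mathrm{OPT}\ge\max_v\omega(v)$ and the same $\widetilde O(k^2\epsilon^{-1}n^2)$ bookkeeping. The only differences are implementation-level: you discretize the probability product $\pi_v$ on an additive grid of spacing $\Theta(\epsilon/(nk))$ and carry it as an explicit DP coordinate, whereas the paper rounds the weighted coverage term $A(\sigma)$ to multiples of $\epsilon W/n$ (with $W$ the maximum weight) and controls the state space via lists of Pareto-maximal ("preferable") vectors — both yield the same error analysis and running time.
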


We also consider a special case that the number of distinct probability edges on the input uncertain tree is bounded above by some constant $\gamma$. 
\begin{theorem}
\label{thm:boundedProb}
For any integer $k$, and an $n$-vertex tree $\T$ with at most $\gamma$ edge  probabilities,  an optimal solution for the {\sc PBDS} problem on $\T$ can be computed in time $\widetilde O(k^{(\gamma+2)}n)$.
\end{theorem}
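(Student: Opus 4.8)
Here is the approach I would take.

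\medskip

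The plan is a bottom‑up dynamic program on a rooted version of $\T$, refining the classical tree DP for budgeted dominating set so that it stays \emph{exact} despite the probabilities. Write $p_1,\dots,p_\gamma$ for the distinct edge‑probabilities of $\T$. The probabilities enter the DP state only through the quantity $\pi(u,S)=\prod_{v\in S,\ v\text{ a child of }u}(1-p(uv))$ attached to the current vertex $u$: this always equals $\prod_{j=1}^{\gamma}(1-p_j)^{c_j}$, where $c_j$ counts the chosen children of $u$ joined to $u$ by a class‑$j$ edge, and since $\sum_j c_j$ never exceeds the budget spent in the subtree of $u$, which is at most $k$, only $\binom{k+\gamma}{\gamma}=O(k^{\gamma})$ values are attainable. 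Hence $\pi(u,S)$ can be carried in the state. (This is precisely where the FPTAS of Theorem~\ref{thm:FPTAS} must round the analogous quantity to a polynomial‑size grid; with only $\gamma$ distinct probabilities no rounding is needed.)

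Root $\T$ at an arbitrary vertex $r$. For a vertex $u$, a budget $b\in\{0,\dots,k\}$, a flag $\sigma\in\{\mathrm{in},\mathrm{out}\}$ recording whether $u\in S$, and---only when $\sigma=\mathrm{out}$---an attainable value $\chi$ of $\pi(u,S)$, define $F[u][b][\sigma][\chi]$ to be the maximum of $\sum_w\wt(w)\Pr(w\sim S)$ over all $S$ contained in the subtree of $u$ with $|S|=b$, $u\in S\iff\sigma=\mathrm{in}$, and $\pi(u,S)=\chi$, where the sum ranges over every subtree vertex $w$ \emph{except} that, when $\sigma=\mathrm{out}$, the term of $u$ itself is \emph{deferred}: that term equals $\wt(u)\bigl(1-\chi\cdot f\bigr)$, where $f=1-p(u,u')$ if $u$'s parent $u'$ lies in $S$ and $f=1$ otherwise, information available only one level up. For $\sigma=\mathrm{in}$ the $\chi$‑coordinate is unnecessary, so each vertex carries $O(k^{\gamma+1})$ states. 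Leaves take the base case $F[u][0][\mathrm{out}][1]=0$, $F[u][1][\mathrm{in}]=\wt(u)$, all other entries $-\infty$; and since $\C(V,\cdot)$ is monotone, the optimal domination with $|S|\le k$ equals $\max_{b\le k}\max\bigl\{\,F[r][b][\mathrm{in}],\ \max_{\chi}\bigl(F[r][b][\mathrm{out}][\chi]+\wt(r)(1-\chi)\bigr)\bigr\}$ (the root has no parent, so its deferred term uses $f=1$).

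The transition at $u$ folds in the children $c_1,\dots,c_{d_u}$ one at a time, knapsack‑style, maintaining a partial table $G_i$ indexed by (budget used by $u$ together with the first $i$ subtrees, the flag $\sigma$, the current value of the product), with $G_0[0][\mathrm{out}][1]=0$, $G_0[1][\mathrm{in}]=\wt(u)$ and $F[u]=G_{d_u}$. The efficiency hinge is that \emph{before} folding a child $c_i$ we collapse its table to size $O(k)$: the product coordinate $\chi'$ of $c_i$ is needed only to evaluate $c_i$'s own deferred term, and that term is determined once $u$'s flag is known, so we precompute $A_\sigma[c_i][b']=\max_{\chi'}\bigl(F[c_i][b'][\mathrm{out}][\chi']+\wt(c_i)(1-\chi' f_\sigma)\bigr)$ for $\sigma\in\{\mathrm{in},\mathrm{out}\}$, with $f_{\mathrm{in}}=1-p(uc_i)$, $f_{\mathrm{out}}=1$, and set $B[c_i][b']=F[c_i][b'][\mathrm{in}]$. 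Folding then reads: from a state $(b_{\mathrm{old}},\sigma,\chi_{\mathrm{old}})$ of $G_{i-1}$ and a child‑budget $b'\le k-b_{\mathrm{old}}$, either take $c_i\notin S$ (add $A_\sigma[c_i][b']$, product unchanged) or take $c_i\in S$ (add $B[c_i][b']$, and if $\sigma=\mathrm{out}$ multiply the product by $1-p(uc_i)$). Each update is $O(1)$ apart from an $O(\log k)$ dictionary lookup on product values; precomputing the $A_\sigma$ costs $O(k^{\gamma+1})$ per child, and the fold costs $O(|G_{i-1}|\cdot k)=O(k^{\gamma+2})$ per child. Since $\sum_u d_u=n-1$, the total is $\widetilde{O}(k^{\gamma+2}n)$, the soft‑$O$ also absorbing exact rational arithmetic on the products.

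Correctness should be routine once this is set up; the crux---and the only real obstacle---is the running‑time argument. A naive fold of two \emph{full} child tables would cost $O(k^{2\gamma+2})$ per child, and the whole bound hinges on the observation that a child's product coordinate may be maximized out the instant its parent's membership flag is fixed, because it influences the objective only through the child's own domination term. One must also check that budget is never double‑counted across the $\mathrm{in}$ flag and the children, and that the constraint $\sum_j c_j\le b\le k$ genuinely caps the attainable product values at $O(k^\gamma)$.
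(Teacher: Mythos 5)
Your proposal is correct and is in substance the paper's own approach: the paper proves Theorem~\ref{thm:boundedProb} by taking the PTAS dynamic program of Section~\ref{subsection:PTAS}, dropping the rounding in $A(\sigma)$, and observing that with at most $\gamma$ distinct edge probabilities the local quantity $A(\sigma)$ (your product coordinate $\pi(u,S)$) takes only $O(k^{\gamma})$ values, which bounds $\phi_v$ and, via Proposition~\ref{lemma:total_runtime}, yields the same $\widetilde O(k^{\gamma+2}n)$ bound. Your deferred self-term and product-indexed table are just a repackaging of the paper's parent-flag state and Pareto-maximal lists $\L^*_i(\beta)$, resting on the identical counting argument.
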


We investigate the complexity of PBDS on bounded treewidth graphs.
The hardness construction on bounded treewidth graphs is much more challenging.
Due to this inherent difficulty,
we focus on the {\em uniform} scenario, where all edge probabilities $p(e)$
are identical.
\versionspacy{
(Such a scenario may arise naturally, e.g., in settings where all links are
provided by the same manufacturer and operate in a similar environment.)}
We refer to this version of the problem as \emph{Uni-PBDS}. 
We show that for any $0 < q < 1$, the Uni-PBDS
problem with edge-probability $q$
is \wone-hard for the pathwidth parameter of the input uncertain graph $\G$.
In contrast, the BDS problem (when all probabilities are one) is
\versiondense{
FPT
}
\versionspacy{
fixed-parameter tractable
}%
when parameterized by the pathwidth
of the input graph.

\vspace{-3pt}
\begin{theorem}
\label{thm:whardpathwidth}
Uni-PBDS is \wone-hard w.r.t. the pathwidth of the input uncertain graph. 
\end{theorem}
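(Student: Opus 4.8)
The plan is to give a parameterized reduction from the \mcc problem, which is \wone-hard when parameterized by the number $k$ of color classes, to Uni-PBDS with any fixed uniform edge-probability $q\in(0,1)$, producing an uncertain graph whose pathwidth is bounded by $O(k)$. From an instance $(G,V_1,\dots,V_k)$ of \mcc on $n$ vertices I would construct in polynomial time an uncertain graph $\G$ with uniform probability $q$, a budget $b$, and a target $t$ such that $\G$ has a set $S$ with $|S|\le b$ and $\C(V,S)\ge t$ if and only if $G$ has a clique that uses exactly one vertex of each $V_i$; all vertex weights, $b$, and $t$ are chosen as functions of $q$, $k$, and $n$.

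The construction uses three kinds of gadgets. \emph{Selection gadgets} (conceptually one per color class) let $S$ ``name'' a representative $u_i\in V_i$ by how it intersects a ladder-type structure; correctness of ``exactly one representative per class'' is enforced through the strict concavity of probabilistic domination --- a weight-$w$ vertex with $j$ neighbours in $S$ is covered by $w\bigl(1-(1-q)^j\bigr)$, so under a tight budget it is strictly wasteful to name a class twice and strictly costly to leave one un-named. \emph{Edge-verification gadgets}, one per pair $\{i,j\}$, are wired to ``indicator'' vertices whose coverage status reflects the representatives named for colors $i$ and $j$, and are tuned so that such a gadget attains its maximal possible contribution to $\C(V,S)$ exactly when $u_iu_j\in E(G)$; here the factor $(1-q)^{|N(v)\cap S|}$ provides the product-type test that singles out the edges of $G$. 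Taking $t$ to be the sum of all gadget maxima then makes a yes-instance of Uni-PBDS equivalent to all $\binom{k}{2}$ adjacency tests succeeding simultaneously. Finally, \emph{propagation gadgets} --- chains of small ``copy'' subgadgets, one chain per color --- carry the name chosen for class $i$ across the whole construction, and their internal consistency (every copy on the chain names the same vertex) is again forced by a budget-tightness argument of the same flavour as in the selection gadgets.

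The pathwidth bound comes from a serialized layout: order the $\binom{k}{2}$ verification gadgets linearly, let the $k$ propagation chains run through that sequence in parallel, and spread each selection over the $\Theta(n)$ ``mini-slices'' of the block where its class's representative is scanned, so that a path decomposition sweeping along the sequence holds, in any bag, only the $k$ current chain vertices together with $O(1)$ local vertices --- yielding $\pw(\G)=O(k)$, which is what \wone-hardness w.r.t.\ pathwidth requires. I expect the main obstacle to be exactly this reconciliation: the reduction must check $\Omega(k^2)$ pairwise constraints, which cannot all touch all selection gadgets inside pathwidth $O(k)$, forcing the per-class choices to travel along long, thin propagation chains whose consistency is maintained purely by domination-budget pressure. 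A related point to handle with care is that, since Uni-PBDS is fixed-parameter tractable in budget plus treewidth, the reduction is obliged to spend a budget that grows with $n$ (otherwise it would yield an FPT algorithm in the pathwidth alone); I would arrange this by encoding each representative in a spread-out unary fashion along its chain and padding the unused slots so that $b$ is a fixed function of $k$ and $n$. Once the gadgets and the weight scales are pinned down --- choosing the weights as sufficiently separated magnitudes so that bonuses cannot be forged or recombined in unintended ways --- checking both directions of the equivalence and exhibiting the width-$O(k)$ path decomposition is careful but otherwise routine.
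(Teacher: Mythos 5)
Your overall strategy coincides with the paper's: reduce from \mcc (\wone-hard in $k$), use the concavity of the coverage function $w\bigl(1-(1-q)^{j}\bigr)$ to build equality/adjacency tests via terms of the form $(1-q)^{|N(v)\cap S|}$, let the budget grow with $n$ so as not to contradict the FPT result in $k$ plus treewidth, and bound the pathwidth by a function of $k$. However, there is a genuine gap where you yourself locate ``the main obstacle'': the propagation/copy chains whose internal consistency is ``maintained purely by domination-budget pressure'' are asserted, not constructed. Carrying a choice among $n$ candidates along a long chain of copy gadgets, under \emph{uniform} edge probabilities and with only weight separation and budget tightness as enforcement tools, is exactly the kind of claim that needs an explicit gadget and a canonical-solution analysis; nothing in the sketch shows that a budget-$b$ solution cannot cheat by desynchronizing two copies on a chain and recouping the lost coverage elsewhere. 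The same holds for the selection and verification gadgets: the assertions ``it is strictly wasteful to name a class twice'' and ``the gadget attains its maximum exactly when $u_iu_j\in E(G)$'' require the detailed structural claims that occupy most of the paper's argument (forcing the solution to consist only of designated high-degree vertices, exactly one per sub-gadget, a unique ``selected'' gadget per block, and a strict-inequality computation showing mismatched indices lose at least a quantifiable amount). Without these, neither direction of the equivalence is verified.

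It is also worth noting that your plan is harder than necessary. \wone-hardness w.r.t.\ pathwidth only needs $\pw(\G)\le g(k)$ for some function $g$, and the paper settles for $g(k)=4\binom{k}{2}+6$: it uses $4\binom{k}{2}$ global ``connector'' vertices $r^i_{i,j},s^i_{i,j},r^j_{i,j},s^j_{i,j}$ which are placed in every bag, so that after their removal the graph splits into blocks of pathwidth at most $6$. The index of the chosen vertex (resp.\ edge) is encoded in \emph{unary} by which prefix/suffix of the $A$-side of an $\I$-gadget is adjacent to $s$ versus $r$, and consistency between a vertex block and an edge block is checked by the single inequality $(1-p)^{\,n+1+q}+(1-p)^{\,n+1-q}<2(1-p)^{\,n+1}$ for $q\ne 0$; the forcing of the solution's shape is done with auxiliary $\D$-gadgets having many tail vertices plus a counting argument on the budget $k'=(n+1)(kn+m)$. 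This completely avoids propagation chains, at the cost of pathwidth $O(k^2)$ rather than the $O(k)$ you aim for. If you want to complete your route, you must supply the copy-gadget construction and its tightness proof; alternatively, adopting the global-connector trick lets you drop the chains and reduces your remaining work to the (still substantial, but standard) canonical-solution analysis.
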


Then, we consider the Uni-PBDS problem with combined $k$ and treewidth
parameters. 
We show that the Uni-PBDS problem can be formulated as a 
variant of the Extended Monadic Second order  (EMS)  problem due to Arnborg et al.~\cite{ArnborgLS91}, to derive an
FPT algorithm for the Uni-PBDS problem
parameterized by the treewidth of $\G$ and $k$.  
\begin{theorem}
\label{thm:twkfpt}
For any integer $k$, and any $n$-vertex uncertain graph of treewidth $w$
with uniform edge probabilities, 
$k$-Uni-{\sc PBDS}
can be solved in time $O(f(k,w) n^2)$, and thus is FPT in the combined parameter involving $k$ and $w$.
Furthermore, $f(k,w)$ is $k^{O(w)}$. 
\end{theorem}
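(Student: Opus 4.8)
Write $q$ for the common edge probability. Since $p(vv)=1$, for any $S\subseteq V$ we have $\Pr(v\sim S)=1$ when $v\in S$, and $\Pr(v\sim S)=1-(1-q)^{|N(v)\cap S|}$ when $v\notin S$. Hence, setting $A_c(S)=\{v\notin S:\ |N(v)\cap S|=c\}$ and extending $\wt$ additively to vertex sets,
\[
 \C(V,S)\ =\ \sum_{v\in S}\wt(v)\ +\ \sum_{c=1}^{k}\bigl(1-(1-q)^{c}\bigr)\,\wt\bigl(A_c(S)\bigr).
\]
Since $\C(V,\cdot)$ is monotone we may assume $|S|=k$, so every count $|N(v)\cap S|$ lies in $\{0,\dots,k\}$ and the sum is finite; in particular $\C(V,S)$ depends only on the partition of $V$ induced by the pair ``$v\in S$?'' together with the neighbour count $|N(v)\cap S|$.

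First I would cast this as a weighted instance of the extended monadic second order (EMS) optimisation problem of Arnborg et al.~\cite{ArnborgLS91}. Take a free monadic variable $X_0$ for the solution set $S$ with the cardinality constraint $|X_0|=k$, together with free variables $X_1,\dots,X_k$, and an MS formula $\psi(X_0,X_1,\dots,X_k)$ forcing $X_c$ to equal $A_c(X_0)$ for each $c$ (the condition ``$v\in X_c$ iff $v\notin X_0$ and $v$ has exactly $c$ neighbours in $X_0$'' is first-order of size $O(c)$). The objective $\sum_{v\in X_0}\wt(v)+\sum_{c\ge 1}(1-(1-q)^c)\sum_{v\in X_c}\wt(v)$ is a fixed linear function of the $\wt$-weighted cardinalities of the free set variables, so $k$-Uni-PBDS is exactly a weighted EMS optimisation problem and is therefore solvable on graphs of treewidth $w$ in time $g(k,w)\cdot n$ for some function $g$. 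This already yields fixed-parameter tractability in $k$ and $w$.

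To extract the explicit bound $f(k,w)=k^{O(w)}$ I would carry out the associated bottom-up computation directly on a nice tree decomposition with edge-introduce nodes. A table entry at a bag $B$ is indexed by a trace $S_B\subseteq B$ of $S$, a map $B\to\{0,\dots,k\}$ recording, for each $v\in B$, the number of already-introduced neighbours of $v$ that lie in $S$, and an integer $s\le k$ holding $|S\cap V_B|$ for the processed part $V_B$; its value is the maximum, over partial solutions consistent with this state, of the partial objective in which a vertex's weighted contribution is \emph{committed} at its forget node — where, by the edge-covering property, its recorded count already equals the final count $|N(v)\cap S|$. The transitions for introduce-vertex (guess membership, update $s$), introduce-edge $uv$ (bump the count of $u$ and/or $v$ according to $S_B$), forget-vertex (add the committed term $\wt(v)$ or $\wt(v)\bigl(1-(1-q)^{c}\bigr)$ and project out $v$), and join (combine matching states, with the partial sizes and committed contributions corrected so the shared bag is not double-counted) are routine. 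Each table has at most $2^{|B|}(k+1)^{|B|+1}=k^{O(w)}$ entries, there are $O(wn)$ nodes, and a further factor of $n$ in the stated $n^2$ accounts for arithmetic on the rational values, each of which is a sum of at most $n$ input weights scaled by powers of $1-q$; this gives $O(f(k,w)n^{2})$ with $f(k,w)=k^{O(w)}$.

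I expect the last step to be the real obstacle: a black-box appeal to the MS meta-theorem only gives a tower-type dependence on $k$ and $w$, so the $k^{O(w)}$ bound needs the hand-rolled dynamic program together with the observation that, because $|S|=k$, no neighbour count ever has to be tracked past $k$, which is exactly what keeps the state space at $k^{O(w)}$. A secondary point needing care is the join node: one must use a decomposition in which every edge is introduced on exactly one side, so that the partial neighbour counts add up correctly and the committed contributions of the two subtrees partition the set of forgotten vertices.
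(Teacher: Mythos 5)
Your proposal is correct and takes essentially the same two-step route as the paper: the paper likewise casts Uni-PBDS as an EMS optimisation in the sense of Arnborg--Lagergren--Seese, with set variables $Y_0,\dots,Y_k$ for the vertices having exactly $i$ solution-neighbours and weights $(1-(1-p)^i)\wt(v)$, to get FPT in $k$ and $w$, and then supplies a separate explicit dynamic program over a nice tree decomposition to pin down $f(k,w)=2^{O(w\log k)}=k^{O(w)}$ in $O(f(k,w)n^2)$ time. The only divergence is DP bookkeeping: the paper works on a standard nice decomposition (no edge-introduce nodes) and tracks, for each bag vertex, both the solution-neighbours already accounted for ($\alpha$) and those still to come ($\beta$), discounting weights by $(1-p)^{\beta}$, rather than your edge-introduce/commit-at-forget scheme -- an inessential variation.
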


Finally, using the structural property of dominating sets from
Fomin et al.~\cite{FominLRS11}, we derive FPT algorithms
parameterized by the budget $k$ in apex-minor-free graphs and planar graphs. 

\vspace{-3pt}
\begin{theorem}
For any integer $k$, and any $n$-vertex weighted planar or apex-minor free
graph, the Uni-PBDS problem can be solved in time $2^{O(\sqrt{k} \log k)}n^{O(1)}$.
\end{theorem}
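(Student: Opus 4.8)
The plan is to combine the bounded-treewidth algorithm of Theorem~\ref{thm:twkfpt} with the bidimensionality of the domination number on apex-minor-free graphs, via the structural machinery of Fomin et al.~\cite{FominLRS11}. Since every planar graph is apex-minor-free (it excludes the apex graph $K_5$ as a minor), it suffices to handle the apex-minor-free case. First I would dispose of the degenerate edge probabilities: if $q=0$ then $\C(V,S)=\sum_{v\in S}\wt(v)$ and one simply takes the $k$ heaviest vertices, while if $q=1$ the problem is the classical weighted Budgeted/Partial Dominating Set, already covered by the bidimensional framework of~\cite{FominLRS11}. So assume $0<q<1$. The key preliminary observation is \emph{locality}: for $0<q<1$ we have $\Pr(v\sim S)>0$ iff $v\in N_\G[S]$, hence $\C(V,S)=\C(N_\G[S],S)$; and since the probabilities are uniform, the contribution $\wt(v)\bigl(1-(1-q)^{|N(v)\cap S|}\bigr)$ of a vertex $v$ depends only on how many of its neighbours lie in $S$. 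Thus the objective value of a candidate $S$ is determined entirely by the weighted induced subgraph $\G[N_\G[S]]$ together with the constant $q$.

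Next comes the structural step. Let $S^\star$ be an optimal solution with $|S^\star|=k$, and set $H^\star:=\G[N_\G[S^\star]]$. As an induced subgraph of $\G$, $H^\star$ is apex-minor-free, and $S^\star$ is a dominating set of $H^\star$, so $H^\star$ has a dominating set of size at most $k$. By the bidimensionality of the dominating set number on apex-minor-free graphs (the structural property of dominating sets of~\cite{FominLRS11}, ultimately going back to the contraction-bidimensionality of domination), $\tw(H^\star)\le c\sqrt{k}$ for a constant $c$ depending only on the excluded apex minor. In other words, an optimal solution always lies inside an induced subgraph of $\G$ of treewidth $O(\sqrt k)$ on which it is a full dominating set, even though $\G$ itself may have unbounded treewidth.

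The algorithm then exploits this in the standard bidimensional fashion of~\cite{FominLRS11}: using the treewidth bound on $\G[N_\G[S^\star]]$ together with the locality of the objective, one reduces in polynomial time the search for an optimal $k$-set in $\G$ to solving $k$-Uni-{\sc PBDS} on a polynomial number of weighted instances of treewidth $w=O(\sqrt k)$ (e.g.\ via a protrusion-type decomposition of $\G$ along separators of size $O(\sqrt k)$). On each such instance we invoke Theorem~\ref{thm:twkfpt}, which runs in time $k^{O(w)}n^2=k^{O(\sqrt k)}n^2=2^{O(\sqrt k\log k)}n^2$ (first computing a suitable width-$O(\sqrt k)$ tree decomposition in $2^{O(\sqrt k)}n$ time by a constant-factor treewidth approximation), and return the best solution found. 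The overall running time is $2^{O(\sqrt k\log k)}n^{O(1)}$, and since planar graphs are apex-minor-free the same bound applies in the planar case.

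I expect the main obstacle to be the reduction in the last step: one must verify that Uni-PBDS genuinely meets the requirements of the \cite{FominLRS11} framework despite its objective being a \emph{global} weighted sum over $V(\G)$ rather than a property of the solution set alone. Concretely, when the instance is split across a separator, the weight of a dominated vertex may be charged on the opposite side from the selected vertices that dominate it; one has to argue, using the locality observation and the fact that the dynamic program of Theorem~\ref{thm:twkfpt} already tracks, for each boundary vertex, the number of selected neighbours, that these cross-separator contributions are accounted for exactly, and that neither the constant $c$ nor the polynomial overheads acquire a dependence on $q$. The uniformity of the edge probabilities is precisely what keeps this bookkeeping finite and is used crucially here.
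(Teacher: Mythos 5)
There is a genuine gap in the algorithmic half of your argument. Your structural step is fine and matches the paper's spirit: for $0<q<1$ the objective is local, an optimal $S^\star$ dominates $\G[N[S^\star]]$, and by the Fomin et al.\ bound (Lemma~\ref{lemma:bdrdom}) that induced subgraph has treewidth $O(\sqrt{k})$. But this only says that \emph{some unknown} induced subgraph containing the optimum has small treewidth; it does not tell you how to find one. Your proposed fix --- reducing ``in polynomial time'' to polynomially many instances of treewidth $O(\sqrt k)$ via a ``protrusion-type decomposition of $\G$ along separators of size $O(\sqrt k)$'' --- is unsubstantiated and is exactly where the difficulty lives: $\G$ itself may have unbounded treewidth, no bound is offered on the number or shape of the candidate regions, and no argument is given that one of them contains an optimal solution together with \emph{all} the vertices it covers (your own closing worry about cross-separator charging of dominated weight is a symptom of this missing piece, not a detail to be checked later).

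The paper closes this gap with a different, concrete device. Order the vertices as $\sigma=(v_1,\dots,v_n)$ by non-increasing single-vertex expected coverage $\C(V,v)$, and take $S_{opt}$ to be the lexicographically least optimal solution with largest selected index $i$. An exchange argument (Lemma~\ref{lemma:lexileast}) shows $S_{opt}$ is a $2$-dominating set of the prefix graph $\G[V_\sigma^i]$, hence a $3$-dominating set of $\G[N[V_\sigma^i]]$, so by Lemma~\ref{lemma:bdrdom} this \emph{prefix neighborhood graph} has treewidth $O(\sqrt k)$ --- crucially, a graph the algorithm can search for: it computes a constant-factor treewidth approximation (Lemma~\ref{lemma:tdaprox}) of $\G[N[V_\sigma^j]]$ for every prefix $j$, takes the largest $j$ with approximate width at most $3c\delta\sqrt k$, and runs the treewidth-plus-$k$ algorithm of Theorem~\ref{thm:twkfpt} \emph{once} on that single induced subgraph. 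Since $S_{opt}\subseteq V_\sigma^i$ and everything it covers lies in $N[V_\sigma^i]$, the restricted instance preserves the optimum exactly and no cross-boundary accounting ever arises. Without the coverage-ordering and the exchange/prefix argument (or some substitute for them), your proposal does not yield an algorithm, so as written it does not prove the theorem.
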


\paragraphbf{Related Work.}
\versiondense{
Uncertain graphs have been used in the literature to model the uncertainty
among relationships in protein-protein interaction networks in bioinformatics~
\cite{AsthanaKGR04}, road networks~\cite{AnezBP96,HuaPei10} and social
networks~\cite{DomingosRichardson01,KempeKT03,SwamynathanWBAZ08,WhiteHarary01}.
Connectivity~\cite{Ball80,BallP83,GuoJerrum19,KarpL85,ProvanB83,Valiant79},
network flows~\cite{Evans76,FrankHakimi65}, structural-context
similarity~\cite{ZouL13}, minimum spanning trees~\cite{ErlebachHKMR08},
coverage~\cite{Daskin83,HassinRS09,HassinRS17,NarayanaswamyNV18,MelachrinoudisHelander96}, and community detection~\cite{BonchiGKV14,PengZZLQ18} are
well-studied problems on uncertain graphs.
In particular, budgeted coverage problems model a wide variety of interesting
combinatorial optimization problems on uncertain graphs. For example,
the classical facility location problem~\cite{Hochbaum96,KhullerMN99}
is a variant of coverage.
As another example, in a classical work, Kempe, Kleinberg, and Tardos~
\cite{KempeKT03} study influence maximization problem as an expected coverage
maximization problem in uncertain graphs. They consider the scenario where
influence propagates probabilistically along relationships, under different
influence propagation models, like the {\em Independent Cascade (IC)} and
{\em Linear Threshold (LT)} models, and show that choosing $k$ influencers
to maximize the expected influence is NP-hard in the IC model.
The coverage problem in the presence of uncertainty was studied extensively also
in sensor placement and w.r.t. the placement of light sources
in computer vision.
A special case of the budgeted coverage problem is the
{\em Most Reliable Source (MRS)} problem, where given an uncertain graph
$\G = (V, E, p)$, the goal is to find a vertex $u \in V$ such that the expected
number of vertices in $u$'s connected component is maximized.
To the best of our knowledge, the computational complexity of MRS is not known,
but it is polynomial time solvable on some specific graph classes like trees
and series-parallel graphs~\cite{ColbournXue98,Ding09,Ding11,DingXue11,MelachrinoudisHelander96}. 
Domination is another special kind of coverage and its complexity is very
well-studied.  The classical dominating set (DS) problem is known to be \wtwo-hard
in general graphs~\cite{DowneyFellows92},
and on planar graphs it is fixed parameter tractable with respect to the size
of the dominating-set as the parameter~\cite{FrickGrohe99}.
Further, on $H$-minor-free graphs,
the dominating-set problem is solvable in subexponential time~
\cite{AlberFN04,DemaineFHT05jacm}. It also admits a linear kernel on
$H$-minor-free graphs and graphs of bounded expansion~
\cite{AFN04,DrangeDFKLPPRVS16,F.V.Fomin:2010oq,FominLST18,PhilipRS09}.  
On graphs of treewidth bounded by $w$, the classical dynamic programming
approach~\cite{cygan2015parameterized} 
can be applied to show that the DS problem is FPT when parameterized~by~$w$.
}
\versiondense{
The {\em Budgeted Dominating Set (BDS)} problem
is known to be NP-hard~\cite{KneisMR07} as well as \wone-hard for the
budget parameter~\cite{DowneyFellows92}.
Furthermore, a subexponential parameterized algorithm is known~
for BDS on apex-minor-free graphs~\cite{FominLRS11}.
The treewidth-parameterized FPT algorithm for the dominating-set problem can be
adapted to solve the BDS problem in time $\Oh(3^{w}kn)$.
In particular, for trees there exists a linear running time algorithm.
}
\versiondense{
PBDS was studied as {\sc Max-Exp-Cover-1-RF} in the survey paper
\cite{NarayanaswamyV20}, and given a dynamic programming algorithm on
a nice tree decomposition with runtime $2^{\Oh(w \cdot \Delta)}n^{\Oh(1)}$,
where $\Delta$ is the maximum degree of $\G$.
The question whether PBDS has a treewidth parameterized FPT algorithm
remained unresolved; it is settled in the negative in this work. 
}

\versionspacy{Uncertain graphs have been used in the literature to model the uncertainty
among relationships in protein-protein interaction networks in bioinformatics~
\cite{AsthanaKGR04}, road networks~\cite{AnezBP96,HuaPei10} and social
networks~\cite{DomingosRichardson01,KempeKT03,SwamynathanWBAZ08,WhiteHarary01}.
Connectivity~\cite{Ball80,BallP83,GuoJerrum19,KarpL85,ProvanB83,Valiant79},
network flows~\cite{Evans76,FrankHakimi65}, structural-context
similarity~\cite{ZouL13}, minimum spanning trees~\cite{ErlebachHKMR08},
coverage~\cite{Daskin83,HassinRS09,HassinRS17,NarayanaswamyNV18,MelachrinoudisHelander96}, and community detection~\cite{BonchiGKV14,PengZZLQ18} are
well-studied problems on uncertain graphs.
In particular, budgeted coverage problems model a wide variety of interesting
combinatorial optimization problems on uncertain graphs. For example,
the classical facility location problem~\cite{Hochbaum96,KhullerMN99}
is a variant of coverage.
As another example, in a classical work, Kempe, Kleinberg, and Tardos~
\cite{KempeKT03} study influence maximization problem as an expected coverage
maximization problem in uncertain graphs. They consider the scenario where
influence propagates probabilistically along relationships, under different
influence propagation models, like the {\em Independent Cascade (IC)} and
{\em Linear Threshold (LT)} models, and show that choosing $k$ influencers
to maximize the expected influence is NP-hard in the IC model.
The coverage problem in the presence of uncertainty was studied extensively also
in sensor placement and w.r.t. the placement of light sources
in computer vision.
A special case of the budgeted coverage problem is the
{\em Most Reliable Source (MRS)} problem, where given an uncertain graph
$\G = (V, E, p)$, the goal is to find a vertex $u \in V$ such that the expected
number of vertices in
$u$'s connected component
is maximized.
To the best of our knowledge, the computational complexity of MRS is not known,
but it is polynomial time solvable on some specific graph classes like trees
and series-parallel graphs~\cite{ColbournXue98,Ding09,Ding11,DingXue11,MelachrinoudisHelander96}. 
Domination is another special kind of coverage and its complexity is very
well-studied. 
The classical dominating set (DS) problem is known to be \wtwo-hard
in general graphs~\cite{DowneyFellows92},
and on planar graphs it is fixed parameter tractable with respect to the size
of the dominating-set as the parameter~\cite{FrickGrohe99}.
Further, on $H$-minor-free graphs,
the dominating-set problem is solvable in subexponential time~
\cite{AlberFN04,DemaineFHT05jacm}. It also admits a linear kernel on
$H$-minor-free graphs and graphs of bounded expansion~
\cite{AFN04,DrangeDFKLPPRVS16,F.V.Fomin:2010oq,FominLST18,PhilipRS09}.  
On graphs of treewidth bounded by $w$, the classical dynamic programming
approach~\cite{cygan2015parameterized} 
can be applied to show that the DS problem is FPT when parameterized~by~$w$.
}
\versionspacy{
The {\em Budgeted Dominating Set (BDS)} problem
is known to be NP-hard~\cite{KneisMR07} as well as \wone-hard for the
budget parameter~\cite{DowneyFellows92}.
Furthermore, a subexponential parameterized algorithm is known~
for BDS on apex-minor-free graphs~\cite{FominLRS11}.
The treewidth-parameterized FPT algorithm for the dominating-set problem can be
adapted to solve the BDS problem in time $\Oh(3^{w}kn)$.
In particular, for trees there exists a linear running time algorithm.
}
\versionspacy{
PBDS was studied as {\sc Max-Exp-Cover-1-RF} in the survey paper \cite{NarayanaswamyV20}, and given a 
dynamic programming algorithm on a nice tree decomposition with runtime $2^{\Oh(w \cdot \Delta)}n^{\Oh(1)}$, 
where $\Delta$ is the maximum degree of $\G$. The question whether PBDS has a treewidth parameterized FPT 
algorithm remained unresolved; it is settled in the negative in this work. 
}
\section{Preliminaries}
Consider a simple undirected graph $G=(V,E)$ with vertex set $V$ and edge set
$E$, and let $n=|V|$ and $m=|E|$. Given a vertex subset $S \subseteq V$, the 
subgraph induced by $S$ is denoted by $G[S]$. For a vertex $v \in V$,  $N(v)$ denotes 
the set of neighbors of $v$ and $N[v] = N(v) \cup \{v\}$ is the closed neighborhood of 
$v$. Let $deg(v)$ denote the degree of the vertex $v$ in $G$. A vertex subset 
$S \subseteq V$ is said to be a {\em dominating set} of $G$ if every vertex 
$u \in V\setminus S$ has a neighbor $v \in S$. For an integer $r > 0$, a vertex subset 
$S \subseteq V$ is said to be an {\em $r$-dominating set} of $G$ if for every vertex 
$u \in V\setminus S$ there exists a vertex $v \in S$ at distance at most $r$ from $u$.
A graph $H$ is said to be an {\em apex} if it can be made planar by the removal
of at most one vertex. A graph $G$ is said to be {\em apex-minor-free} if it does not 
contain as its minor some fixed apex graph $H$. All planar graphs are apex-minor-free 
as they do not contain as minor the apex graphs $K_{3,3}$ and $K_5$.
The notations $\mathbb{R}$, $\mathbb{Q}$ and $\mathbb{N}$ denote, respectively, 
the sets of real, rational, and natural numbers (including 0).
For integers $a \leq b$, define $[a,b]$ to be the set $\{a, a+1, \ldots, b\}$,
and for $b>0$ let $[b] \equiv [1,b]$.

Other than this, we follow standard graph theoretic and parameterized 
complexity terminology~
\cite{cygan2015parameterized,Diestel12,DowneyFellows13}.


\paragraphsf{Numerical Approximation.}
When analyzing our polynomial reductions, we employ numerical analysis
techniques to bound the error in numbers obtained as products of
an exponential and the root of an integer.
We use the following well-known bound on the error in approximating
an exponential function by the sum of the lower degree terms
in the series expansion.

\begin{lemma}{\bf \cite{Apostol69}}
\label{lem:lagrange}
For $z\in [-1,1]$, $e^z$ can be approximated using the Lagrange remainder as
$$ e^z = 1+z+\frac{z^2}{2!}+\frac{z^3}{3!}+\ldots+\frac{z^Q}{Q!}+R_Q(z)$$
where $|R_Q(z)|\leq e/(Q+1)! \leq  1/2^Q$.
\end{lemma}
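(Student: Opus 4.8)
The plan is to derive the identity directly from Taylor's theorem with the Lagrange form of the remainder, and then bound the remainder crudely, exploiting that $z$ is confined to $[-1,1]$.

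First I would invoke Taylor's theorem: if $f$ is $(Q+1)$-times differentiable on an interval containing $0$ and $z$, then there is some $\xi$ between $0$ and $z$ with
$$f(z) = \sum_{j=0}^{Q}\frac{f^{(j)}(0)}{j!}\,z^{j} + \frac{f^{(Q+1)}(\xi)}{(Q+1)!}\,z^{Q+1}.$$
Applying this to $f(z)=e^{z}$, every derivative of $f$ is again $e^{z}$, so $f^{(j)}(0)=1$ for all $j$; this gives exactly the stated expansion with $R_Q(z) = e^{\xi}z^{Q+1}/(Q+1)!$ for some $\xi$ between $0$ and $z$ (and $R_Q(0)=0$ trivially).

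Next I would estimate $|R_Q(z)|$. Since $\xi$ lies between $0$ and $z$ and $z\in[-1,1]$, we have $\xi\in[-1,1]$, hence $e^{\xi}\le e$; and $|z|\le 1$ yields $|z^{Q+1}|\le 1$. Combining, $|R_Q(z)|\le e/(Q+1)!$. For the last inequality $e/(Q+1)!\le 1/2^{Q}$, equivalently $e\cdot 2^{Q}\le (Q+1)!$, I would argue by induction on $Q$: a direct numerical check settles a small base case, and the inductive step is immediate since passing from $Q$ to $Q+1$ multiplies the left side by $2$ and the right side by $Q+2\ge 2$. I do not expect any genuine obstacle — this is a textbook estimate, which is why the paper merely cites Apostol; the only minor caveat is that the clean form $1/2^{Q}$ holds only once $Q$ exceeds a small threshold, so in the applications one should ensure $Q$ is taken large enough (or, harmlessly, replace the bound by a constant multiple of $2^{-Q}$).
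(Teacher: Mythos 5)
Your proof is correct and is exactly the standard Lagrange-remainder argument that the paper simply cites from Apostol rather than reproving. Your caveat is also well spotted: $e/(Q+1)!\le 1/2^{Q}$ genuinely fails for $Q\le 2$ and only holds from $Q=3$ onward, which is harmless here because the reduction sets $Q=3\log_2(kL)$ with $kL\ge 3$, so $Q\ge 3$ always.
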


We use the following lemma for bounding the error in multiplying approximate
values.
\begin{lemma}
\label{lem:prodLowerbound}
For any set $\{d_1,\ldots,d_k\}$ of $k$ reals in the range $[0,1]$,
$$\prod_{i\in[k]}(1-d_i) ~\geq~ 1- \sum_{i\in[k]}d_i.$$
\end{lemma}

\begin{proof}
The proof is by induction on $k$. The base case of $k=1$ trivially holds.
For any two reals $a,b\in [0,1]$, $(1-a)(1-b)\geq 1-(a+b)$.
Applying this result iteratively yields that for any $k\geq 1$, if
$\prod_{i\in[k-1]}(1-d_i) \geq 1- (\sum_{i\in[k-1]}d_i)$, then
$$\prod_{i\in[k]}(1-d_i) ~\geq~
\bigg(1- \big(\sum_{i\in[k-1]}d_i\big)\bigg) (1-d_{k})
~\geq~ 1- \sum_{i\in[k]}d_i.$$
The claim follows.
\end{proof}

\paragraphsf{Tree Decomposition.}
A {\em Tree decomposition} of an undirected graph $G=(V,E)$ is a pair
$(\TD, X)$, where $\TD$ is a tree whose set of nodes is $X = \{ \Xi \subseteq V \mid \bli \in V(\TD)\}$, such that
\begin{enumerate}
\item for each edge $u \in V$, there is an $\bli \in V(\TD)$ such that $u \in \Xi$,
\item for each edge $uv \in E$, there is an $\bli \in V(\TD)$ such that $u,v \in \Xi$, and
\item for each vertex $v \in V$ the set of nodes $\{\bli \mid v \in \Xi \}$ forms a subtree of $\TD$.
\end{enumerate}

The \emph{width} of a tree decomposition $(\TD, X)$ equals $\max_{\bli \in V(\TD)} |\Xi|-1$.
The treewidth of a graph $G$ is the minimum width over all tree decompositions of $G$.

A tree decomposition $(\TD, X)$ is {\em nice}
if $\TD$ is rooted by a node $\blr$ with $\Xr = \emptyset$ and every node in $\TD$ is 
either an \emph{insert node}, \emph{forget node}, \emph{join node} or \emph{leaf node}.
Thereby,
a node $\bli \in V(\TD)$ is an insert node if $\bli$ has exactly one child $\blj$ such that 
$\Xi = \Xj \cup \{v\}$ for some $v \notin \Xj$;
it is a forget node if $\bli$ has exactly one child $\blj$  such that $\Xi = \Xj \setminus \{v\}$ for some $v \in \Xj$;
it is a join node if $\bli$ has exactly two children $\blj$ and $\blh$ such that $\Xi = \Xj = \Xh$; and
it is a leaf node if $\Xi = \emptyset$.
Given a tree decomposition of width $w$, a nice tree decomposition of width $w$ and $\Oh(w\ n)$ 
nodes can be obtained in linear time~\cite{Kloks94}.

A tree decomposition $(\T, \X)$ is said to be a {\em path decomposition} if $\T$ is a path. 
The {\em pathwidth} of a graph $G$ is minimum width over all possible path decompositions of $G$. 
Let $\pw(G)$ and $\tw(G)$ denote the pathwidth and treewidth of the graph $G$, respectively. 
The pathwidth of a graph $G$ is one lesser than the minimum clique number of an interval supergraph $H$ which contains  $G$ as an induced subgraph. It is well-known that the maximal cliques of an interval graph can be linearly ordered so that for each vertex, the maximal cliques containing it occur consecutively in the linear order.  This gives a path decomposition of the interval graph.  
A path decomposition of the graph $G$ is the  path decomposition of the interval supergraph $H$ which contains $G$ as an induced subgraph. In our proofs we start with the path decomposition of an interval graph and then reason about the path decomposition of graphs that are constructed from it.

\section{Hardness Results on Trees}\label{section:np-hardness}

\subsection{\kspm~hardness}

We first show that the $k$-\spm\ ($k$-SPM) problem is NP-hard by a reduction from the \ksum ~problem.
Let $\langle X,k\rangle$ with $X=\{x_1, \ldots, x_N\}$ be an instance of the \ksum\ problem. 
Let $L=1+\max_{i\in [N]} |x_i|$. 

Denote by $\langle \A, k, t\rangle$ an instance of the \kspm\ problem.
Given an instance $\langle X,k\rangle$ of {\ksum}, we compute the array 
$\A(X) = \{(\tx_i, \ty_i) \mid i \in [N]\}$ of the \kspm\ problem as follows.
For $1\leq i\leq N$, set $\tx_i:=(L+x_i)/(kL)$.

Let $Q =  3\log_2 (kL) $.  For $i \in [N]$, define $y_i=\nex^{x_i/(kL)}$,
and let $\ty_i$ be a rational approximation of $y_i$ 
that is computed using Lemma~\ref{lem:lagrange}
such that $0 \leq y_i-\ty_i \leq 1/2^Q$.
The new instance of the \kspm~problem is $\langle \A(X), k, t=0 \rangle$.

Observe that for each $i\in[N]$,
\versiondense{
$\ty_i\geq y_i-1/2^Q \geq \nex^{-1/k} - 1/(kL)^3 \geq 1/2$,
}
\versionspacy{
$$\ty_i ~\geq~ y_i -\frac{1}{2^Q} ~\geq~ \nex^{-1/k} - \frac{1}{(kL)^3}
~\geq~ \frac{1}{2},$$
}
for $k\geq 3$. Thus, the elements of $\A(X)$ are positive rationals.
The next lemma provides a crucial property of any set $S$ of vertices of size $k$.

\begin{lemma}
\label{lem:precision}
Let $\lambda = (2kL)^{-2}$. 
For each $S \subseteq [N]$ of size $k$,
$$0 ~\leq~ \prod_{i\in S} y_i - \prod_{i\in S}\ty_i ~\leq~ \lambda.$$
\end{lemma}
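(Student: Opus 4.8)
The plan is to prove both inequalities separately. For the lower bound $\prod_{i\in S} y_i - \prod_{i \in S}\ty_i \geq 0$, I would simply observe that for each $i$ we have $\ty_i \leq y_i$ by construction (the rational approximation satisfies $0 \leq y_i - \ty_i$), and all the $y_i, \ty_i$ are positive; hence the product of the smaller positive numbers is no larger than the product of the larger ones, i.e.\ $\prod_{i\in S}\ty_i \leq \prod_{i\in S} y_i$. This is immediate and needs only the positivity established just before the lemma statement.

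**The upper bound.**

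For the upper bound, the idea is to write $\ty_i = y_i(1 - \delta_i)$ where $\delta_i = (y_i - \ty_i)/y_i \in [0,1]$, so that
\[
\prod_{i\in S} y_i - \prod_{i\in S}\ty_i ~=~ \Big(\prod_{i\in S} y_i\Big)\Big(1 - \prod_{i\in S}(1-\delta_i)\Big).
\]
Now I would invoke Lemma~\ref{lem:prodLowerbound} to get $\prod_{i\in S}(1-\delta_i) \geq 1 - \sum_{i\in S}\delta_i$, which bounds the bracketed factor by $\sum_{i\in S}\delta_i$. Since $\delta_i = (y_i - \ty_i)/y_i \leq (y_i - \ty_i)/\tfrac12 \leq 2/2^Q$ using $\ty_i \geq 1/2$ and $y_i - \ty_i \leq 1/2^Q$ (and $y_i - \ty_i \le \ty_i$ is not even needed — I just need a lower bound on $y_i$, which follows from $y_i \ge \ty_i \ge 1/2$), summing over the $k$ indices in $S$ gives $\sum_{i\in S}\delta_i \leq 2k/2^Q$. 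It remains to bound the leading factor $\prod_{i\in S} y_i$: since $y_i = \nex^{x_i/(kL)}$ and $|x_i| < L$, each $y_i \le \nex^{1/k}$, so $\prod_{i\in S} y_i \le \nex^{1/k\cdot k} = \nex \le 3$. Combining, the difference is at most $3 \cdot 2k/2^Q = 6k/2^Q$.

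**Finishing the estimate.**

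Finally I would plug in $Q = 3\log_2(kL)$, so $2^Q = (kL)^3$, giving a bound of $6k/(kL)^3 = 6/(k^2 L^3)$. It remains to check $6/(k^2L^3) \le \lambda = (2kL)^{-2} = 1/(4k^2L^2)$, i.e.\ $24 L^2 \le L^3$, i.e.\ $L \ge 24$; if the proof needs $L$ to be large I would either note that $L = 1 + \max|x_i|$ can be assumed large WLOG (pad the $k$-SUM instance, or just scale), or tighten the constants — e.g.\ using $y_i \le \nex^{1/k} \le 2$ and being slightly more careful gives a cleaner factor. The main obstacle, such as it is, is the bookkeeping of constants: making sure the chosen $Q$ is large enough that the accumulated multiplicative error over $k$ factors, amplified by the product $\prod y_i \approx \nex$, still fits under the target $\lambda = (2kL)^{-2}$, which is exactly the precision later needed to separate YES and NO instances in the reduction. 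No genuinely hard step is involved — it is Lemma~\ref{lem:prodLowerbound} plus Lemma~\ref{lem:lagrange}-style estimates assembled carefully.
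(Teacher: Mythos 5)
Your proof is correct and follows essentially the same route as the paper's: factor out $\prod_{i\in S} y_i$, apply Lemma~\ref{lem:prodLowerbound} to the (relative) errors, bound each error term via the $2^{-Q}=(kL)^{-3}$ precision together with a lower bound on $y_i$, and bound $\prod_{i\in S} y_i \le \nex$. The only difference is bookkeeping: you use $y_i\ge 1/2$ where the paper uses $1/y_i\le \nex^{1/k}$, so your final constant condition ($L\ge 24$) is slightly stronger than the paper's implicit one ($kL\ge 4\nex^2$, which the paper silently elides in its last inequality) — a point you correctly flag and patch.
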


\begin{proof}
Let $\alpha=1/(kL)^3$. 
We have:
\versiondense{
\begin{align*}
\textstyle
\prod_{i\in S} y_i - \prod_{i\in S}\ty_i 
& \textstyle \leq~ \prod_{i\in S} y_i - \prod_{i\in S}(y_i-\alpha)\\
& \textstyle =~ \prod_{i\in S} y_i \Big(1- \prod_{i\in S}\big(1-\frac{\alpha}{y_i}\big) \Big) \\
& \textstyle \leq~ \prod_{i\in S}y_i \big(\sum_{i\in S}\frac{\alpha}{y_i}\big)\\
& \textstyle \leq~ \nex^{\sum_{i\in S}x_i/(kL)} \alpha k \nex^{1/k} 
~ \leq~ \alpha k\nex^2 ~ \leq \frac{1}{4(kL)^2}~,
\end{align*}
}
\versionspacy{
\begin{align*}
\prod_{i\in S} y_i - \prod_{i\in S}\ty_i 
& \leq~ \prod_{i\in S} y_i - \prod_{i\in S}(y_i-\alpha)
~=~ \prod_{i\in S} y_i \Big(1-\prod_{i\in S}\big(1-\frac{\alpha}{y_i}\big) \Big) \\
& \leq~ \prod_{i\in S}y_i \big(\sum_{i\in S}\frac{\alpha}{y_i}\big)
~\leq~ \nex^{\sum_{i\in S}x_i/(kL)} \alpha k \nex^{1/k} 
~ \leq~ \alpha k\nex^2 ~ \leq \frac{1}{4(kL)^2}~,
\end{align*}
}
where the second inequality is obtained by Lemma~\ref{lem:prodLowerbound}.
The claim follows.
\end{proof}

\versionspacy{ 
\APPENDPROOFlemprecision
}

We now establish the correctness of the reduction. 

\begin{theorem}
\label{thm:sspmHard}
The $k$-{\sc SUM} problem is polynomial-time reducible to $k$-{\sc SPM}.
\end{theorem}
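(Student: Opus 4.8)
The plan is to show that the reduction $\langle X,k\rangle \mapsto \langle \A(X), k, t=0\rangle$ described above is correct, i.e. that $\langle X,k\rangle$ is a yes-instance of \ksum\ (some $k$-subset of $X$ sums to $0$) if and only if $\langle \A(X), k, 0\rangle$ is a yes-instance of \kspm\ (some $k$-subset $S\subseteq[N]$ satisfies $\sum_{i\in S}\tx_i - \prod_{i\in S}\ty_i \geq 0$). The reduction is clearly polynomial-time: computing each $\tx_i=(L+x_i)/(kL)$ is immediate, and computing each rational approximation $\ty_i$ of $\nex^{x_i/(kL)}$ to additive error $1/2^Q$ with $Q=3\log_2(kL)$ needs only $O(Q)$ Taylor terms by Lemma~\ref{lem:lagrange}, all with polynomially-bounded bit-length. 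So the entire argument reduces to establishing the ``iff'' on the arithmetic condition.

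First I would record the exact-arithmetic identity. For any $S\subseteq[N]$ of size $k$,
\[
\sum_{i\in S}\tx_i ~=~ \frac{1}{kL}\sum_{i\in S}(L+x_i) ~=~ 1 + \frac{1}{kL}\sum_{i\in S}x_i,
\qquad
\prod_{i\in S}y_i ~=~ \nex^{\frac{1}{kL}\sum_{i\in S}x_i}.
\]
Writing $s = \frac{1}{kL}\sum_{i\in S}x_i$, note $|s|\le \frac{1}{kL}\cdot k\cdot(L-1) < 1$, so $s\in(-1,1)$. The exact target quantity is $f(s) := (1+s) - \nex^{s}$. Using the series expansion (or just $\nex^s \ge 1+s$ with equality iff $s=0$), we get $f(s)\le 0$ always, and $f(s)=0$ iff $s=0$, i.e. iff $\sum_{i\in S}x_i=0$. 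Moreover, since the $x_i$ are integers and $|s|<1$, whenever $\sum_{i\in S}x_i\neq 0$ we have $|\sum_{i\in S}x_i|\ge 1$, hence $|s|\ge \frac{1}{kL}$, and a second-order estimate gives a concrete negative bound: $f(s) = -\frac{s^2}{2} - \frac{s^3}{6} - \cdots \le -\frac{s^2}{2}(1 - |s|/3 - \cdots) \le -\frac{s^2}{4} \le -\frac{1}{4(kL)^2}$ for $|s|<1$. So in exact arithmetic the two sides differ by a gap of at least $\lambda' := \frac{1}{4(kL)^2}$ between the yes-case ($=0$) and the no-case ($\le -\lambda'$).

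The substance of the proof is controlling the approximation error so this gap survives. Define $g(S) = \sum_{i\in S}\tx_i - \prod_{i\in S}\ty_i$; the \kspm\ instance asks whether some size-$k$ $S$ has $g(S)\ge 0$. Since $\tx_i$ is exact, $g(S) = f(s) + \big(\prod_{i\in S}y_i - \prod_{i\in S}\ty_i\big)$, and by Lemma~\ref{lem:precision} the parenthesized term lies in $[0,\lambda]$ with $\lambda=(2kL)^{-2}=\frac{1}{4(kL)^2}=\lambda'$. Now the two directions: if $\langle X,k\rangle$ is a yes-instance, take the witnessing $S$; then $f(s)=0$, so $g(S)\in[0,\lambda]\ge 0$, giving a \kspm\ witness. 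Conversely, if $\langle X,k\rangle$ is a no-instance, then every size-$k$ $S$ has $\sum_{i\in S}x_i\neq 0$, hence $f(s)\le -\lambda'= -\lambda$, hence $g(S) \le -\lambda + \lambda = 0$; but we need strict negativity or else $g(S)=0$ would count as a yes for \kspm. I would patch this boundary case by the standard device of tightening the threshold slightly — e.g.\ set the \kspm\ target to $t = -\lambda/2$ instead of $t=0$, or equivalently sharpen the exact bound in the no-case to $f(s)\le -\lambda' $ with strict inequality somewhere (indeed $|s|\ge\frac1{kL}$ forces $f(s) \le -\frac{s^2}{2}+\frac{|s|^3}{6}\cdot\frac{1}{1-|s|}$-type estimate strictly below $-\lambda$ once constants are chosen with a small margin, e.g.\ by taking $\lambda$ a constant factor smaller than $\lambda'$). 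Either way the gap is one-sided and the reduction goes through.

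The main obstacle is exactly this last numerical bookkeeping: making the constants in ``exact gap $\ge \lambda'$'' and ``approximation error $\le \lambda$'' line up with a strict margin, so that $g(S)\ge t$ holds in the yes-case and fails in the no-case. The cleanest route, which I would take, is to choose $Q$ (equivalently the precision of $\ty_i$) large enough that $\lambda$ in Lemma~\ref{lem:precision} is, say, at most $\frac{1}{8(kL)^2}$ — a matter of increasing $Q$ by a constant — and then use $t=0$ together with the strict bound $f(s) < -\frac{1}{4(kL)^2}$ for $s\neq 0$, giving $g(S) < -\frac{1}{4(kL)^2} + \frac{1}{8(kL)^2} < 0$ in the no-case and $g(S)\ge 0$ in the yes-case. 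Once the constants are pinned down, both implications are immediate from the displayed identity and Lemma~\ref{lem:precision}, and the polynomial-time claim has already been observed, completing the proof.
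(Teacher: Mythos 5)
Your proposal is correct and essentially mirrors the paper's proof: the paper works with $F(z)=z-e^{-1+z}$ evaluated at $z_S=\sum_{i\in S}\tx_i$, which is exactly your $f(s)=1+s-e^{s}$ at $s=z_S-1$, and it runs the same argument of an exact gap of order $1/(4(kL)^2)$ against the approximation error from Lemma~\ref{lem:precision}. The only cosmetic differences are that the paper gets the no-case gap via concavity of $F$ and its second-largest grid value at $z=1-1/(kL)$ instead of your direct Taylor bound $f(s)\le -s^2/4$, and the boundary case you propose to patch is already covered there, since the paper's intermediate estimate $-1/(2(kL)^2)+1/(6(kL)^3)$ is strictly below $-\lambda$, so the no-case value is strictly negative without changing $t=0$ or the precision $Q$.
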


\begin{proof}
Let $M=\sum_{i\in[N]} (x_i+L)$. 
Define a real valued function $F(z) = z-\nex^{-1+z}$ 
with domain $[0,M/(kL)]$. Observe that $F(1)=0$ and the derivative is $F'(1)=0$.
The function $F(\cdot)$ is clearly concave, which indicates that:
\begin{enumerate}[(i)]
\item $F(z)\leq 0$, for each $z\in [0,M/(kL)]$,
\item $F(z)$ obtains its unique maximum at $z=1$, where its value is $0$, and
\item When restricted to the values in the
set $\big\{z/(kL) \mid z\in [0,M] \text{ is an integer}\big\}$,
$F(z)$ obtains its second largest value at $z=1-1/(kL)$.
\end{enumerate}
For any $S\subseteq [N]$, denote $z_S = \sum_{i \in S} \tx_i$.
For a set $S\subseteq [N]$ of size $k$, we have:
\versiondense{
\begin{align}
\textstyle \sum_{i \in S} \tx_i - \prod_{i \in S} y_i
&~=~ z_S - \nex^{\sum_{i\in S} x_i/(kL)}
~=~ z_S - \nex^{\sum_{i\in S} \tx_i/(kL) - 1/k}\\
&~=~ z_S - \nex^{-1}\cdot \nex^{z_S/(kL)}
~=~ F(z_S).
\label{eqn:Fz_S}
\end{align}
}
\versionspacy{
\begin{align}
\sum_{i \in S} \tx_i - \prod_{i \in S} y_i
&~=~ z_S - \nex^{\sum_{i\in S} x_i/(kL)}
~=~ z_S - \nex^{\sum_{i\in S} \tx_i/(kL) - 1/k}\\
&~=~ z_S - \nex^{-1}\cdot \nex^{z_S/(kL)} ~=~ F(z_S).
\label{eqn:Fz_S}
\end{align}
}
By combining Lemma~\ref{lem:precision} and Eq.~\eqref{eqn:Fz_S}, we obtain the following.
\versiondense{
$$\textstyle F(z_S) ~\leq~ \sum_{i\in S}\tx_i - \prod_{i\notin S}\ty_i ~\leq~ F(z_S) + \lambda.$$
}
\versionspacy{
$$F(z_S) ~\leq~ \sum_{i\in S}\tx_i-\prod_{i\notin S}\ty_i ~\leq~ F(z_S)+\lambda.$$
}
On the other hand, for any set $S\subseteq [N]$ of size $k$ for which
$F(z_S) < 0$, we have
\versiondense{
$F(z_S) \leq F(1-1/(kL))= (1 -1/(kL)-e^{-1/(kL)})$.
}
\versionspacy{
$$F(z_S) ~\leq~ F(1-1/(kL)) ~=~ (1 -1/(kL)-e^{-1/(kL)}).$$
}
Further, 
\versiondense{
$$\textstyle 1 -\frac{1}{kL}-e^{-1/(kL)} ~\leq~
\big(1 -\frac{1}{kL}\big)-\big(1 -\frac{1}{kL}+\frac{1}{2(kL)^2}-\frac{1}{6(kL)^3}\big) 
~\leq~ -\frac{1}{4(kL)^{2}} ~=~ -\lambda~.$$
}
\versionspacy{
$$1 -\frac{1}{kL}-e^{-1/(kL)} ~\leq~
\big(1 -\frac{1}{kL}\big)-\big(1 -\frac{1}{kL}+\frac{1}{2(kL)^2}-\frac{1}{6(kL)^3}\big) 
~\leq~  -\frac{1}{4(kL)^{2}} ~=~ -\lambda~.$$
}
So, for a set $S$, $\sum_{i \in S} \tx_i - \prod_{i \in S} \ty_i\geq 0$
if and only if $\sum_{i \in S} \tx_i = 1$, or equivalently $\sum_{i \in S} x_i =0$. 
It follows that $\langle X,k\rangle$ is a yes instance of the \ksum ~problem 
if and only if $\langle \A(X), k, t=0 \rangle$ is a yes instance of the \kspm~problem.
The time to compute $\tx_i$ and $\ty_i$ from $x_i$ is polynomial in $Q\cdot\log_2 (kL)$, for $1\leq i\leq N$.
Thus, the time-complexity of our reduction is $N\cdot\log_2^{O(1)}(kL)$, which is at most polynomial in $N$
as long as $L=2^{O(N)}$.
Hence, the \ksum ~problem is polynomial-time reducible to the \kspm~problem. 
\end{proof}

\begin{proof}[Proof of Theorem~\ref{thm:kspmwonehard}]
The reduction given in the proof of Theorem~\ref{thm:sspmHard} is a parameter preserving reduction for the parameter $k$. 
That is, the parameters in the instances of the \ksum ~and the \kspm~
problem are same in values and the constructed instance of the \kspm~problem is of size polynomial in the input size of the \ksum ~instance. Thus, the reduction preserves the parameter $k$. 
Since the \ksum ~problem is known to be \wone-hard for the parameter $k$
~\cite{AbboudLW14,DowneyFellows92}, 
the \kspm~problem is also \wone-hard for the parameter $k$.
Further, it is known that under the Exponential time hypothesis (ETH),
there cannot exist an $o(N^k)$ time solution for the \ksum\ problem
~\cite{PatrascuW10}, so it follows that under ETH there is no $o(N^k)$  
time algorithm for \kspm~as well. 
\end{proof}

\subsection{Hardness of PBDS on Uncertain Trees}

In this subsection, we show the hardness results for the \pbds~problem on trees,
establishing Theorem~\ref{thm:treehardness}.
In order to achieve this, we present a polynomial time reduction from \kspm~to
{\sc PBDS} on unweighted trees. 


\begin{proof}[Proof of Theorem~\ref{thm:treehardness}]
In order to prove our claim,
we provide a reduction from $k$-{\sc SPM} to {\sc PBDS}.
Given an instance $\langle {\cal A}=((x_1,y_1),\ldots,(x_N,y_N)), k, t \rangle$ of the 
$k$-SPM~problem, where 
$t$ is a rational, an equivalent instance of the {\sc PBDS} problem 
is constructed as follows. Let $n=N^2+N+1$. 
Construct an uncertain tree $\T = (V,E,p)$, 
where the vertex set $V$ consists of three disjoint sets, namely, 
$A=\{a_0\}$, $B=\{b_1,\ldots,b_N\}$, and $C=\{c_{11},c_{12},\ldots,c_{NN}\}$. 
(see Figure~\ref{Figure:hardness}).
Note that the uncertain tree $\T$ is considered to be unweighted or unit weight on the vertices.
The vertex $a_0$ is connected by edges to the vertices in $B$. 
For each $1 \leq i \leq n$, the vertex $b_i$ is connected by edges to the vertices $c_{i1}\ldots, c_{iN}$.
Let $X_{\max}=\max\{1,x_1,x_2,\ldots,x_N\}$ and $Y_{\max}=\max\{1,y_1,y_2,\ldots,y_N\}$.
To complete the construction, define the probability function $p:E\to [0,1]$
as follows:
\[p(v\bar v) = \begin{cases}
r_i = 1 - (y_i)/(X_{\max}\cdot Y_{\max}),
	& \text{ if } v\bar v=a_0b_i \text{ for } 1 \leq i \leq N,\\
q_i = x_i/(X_{\max}\cdot Y_{\max})^k,
	& \text{ if } v\bar v=b_ic_{i1} \text{ for } 1 \leq i \leq N,\\
1, & \text{ otherwise.}
\end{cases}\]
Since $x_i,y_i\gneq 0$ for each $1 \leq i \leq n$, we have that
$p(v,\bar v) \in [0,1]$ is rational for every $(v,\bar v)\in E$.
This completes the construction of the instance for the PBDS problem.
We show that the given instance $\langle \A, k, t \rangle$ is a yes instance
of \kspm~if and only if $\T$ has a set $S$ of size $k$ such that
$\C(V, S) \geq 1+(N-1)k + t/(X_{\max}Y_{\max})^k$.

\begin{figure}[h]
\centering
\begin{tikzpicture}[scale=1.1]
\draw[black, thick] (0,0) -- (-2,-1);
\draw[black, thick] (0,0) -- (0,-1) node[midway, blue] {{$r_i~~~$}};
\draw[black, thick] (0,0) -- (2,-1);
\draw[black, thick] (-2,-1) -- (-2.75,-2);
\draw[black, thick] (-2,-1) -- (-2,-2);
\draw[black, thick] (-2,-1) -- (-1.25,-2);
\draw[black, thick] (0,-1) -- (-0.75,-2) node[midway, left, blue] {{$q_i$}};;
\draw[black, thick] (0,-1) -- (0,-2) node[midway, blue] {{$1~~$}};;
\draw[black, thick] (0,-1) -- (0.75,-2) node[midway, blue] {{$~~~1$}};;
\draw[black, thick] (2,-1) -- (2.75,-2);
\draw[black, thick] (2,-1) -- (2,-2);
\draw[black, thick] (2,-1) -- (1.25,-2);
\draw[black,fill=MidnightBlue] (0,0) circle (0.05cm) node[above, black] {{$a_0$}};
\draw[black,fill=MidnightBlue] (-2,-1) circle (0.05cm) node[left, black] {{$b_1$}};
\draw[black,fill=MidnightBlue] (0,-1) circle (0.05cm) node[left, black] {{$b_i$}};
\draw[black,fill=MidnightBlue] (2,-1) circle (0.05cm) node[left, black] {{$b_n$}};
\draw[black,fill=MidnightBlue] (-2.75,-2) circle (0.05cm) node[below, black] {{$c_{11}$}};
\draw[black,fill=MidnightBlue] (-2,-2) circle (0.05cm) node[below, black] {{$c_{12}$}};
\draw[black,fill=MidnightBlue] (-1.25,-2) circle (0.05cm) node[below, black] {{$c_{1n}$}};
\draw[black,fill=MidnightBlue] (-0.75,-2) circle (0.05cm) node[below, black] {{$c_{i1}$}};
\draw[black,fill=MidnightBlue] (0,-2) circle (0.05cm) node[below, black] {{$c_{i2}$}};
\draw[black,fill=MidnightBlue] (0.75,-2) circle (0.05cm) node[below, black] {{$c_{in}$}};
\draw[black,fill=MidnightBlue] (1.25,-2) circle (0.05cm) node[below, black] {{$c_{n1}$}};
\draw[black,fill=MidnightBlue] (2,-2) circle (0.05cm) node[below, black] {{$c_{n2}$}};
\draw[black,fill=MidnightBlue] (2.75,-2) circle (0.05cm) node[below, black] {{$c_{nn}$}};
\node[black, thick] at (-1,-1) {$\ldots$};
\node[black, thick] at (1,-1) {$\ldots$};
\node[black, thick] at (-1.625,-2) {$\ldots$};
\node[black, thick] at (0.375,-2) {$\ldots$};
\node[black, thick] at (2.375,-2) {$\ldots$};
\end{tikzpicture}
\caption{Illustration of the lower bound of Theorem~\ref{thm:treehardness}.
Here $p_i=1 - y_i/(X_{\max}\cdot Y_{\max})$ and $q_i=x_i/(X_{\max}\cdot Y_{\max})^k$
for $i\in [N]$.}
\label{Figure:hardness}
\end{figure}
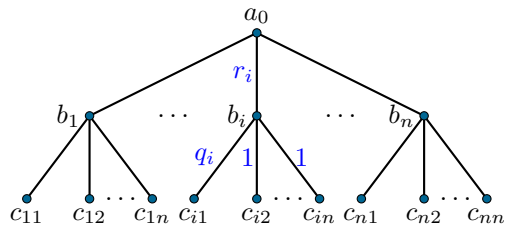

Let $S_{opt}$ be a set of size $k$ maximizing $\C(V, S_{opt})$ in $\T$.
We show $S_{opt}\subseteq B$.
Assume, to the contrary, that there exists some $z\in S_{opt}$ satisfying $z\notin B$.
Consider $i \in [N]$ such that
none of the vertices $c_{i1},\ldots,c_{iN}$ lie in $S_{opt}$.
Such $i$ must exist since $|S_{opt}|=k$.
If $z\in C$, then replacing $z$ by $b_i$ results in a set
$S' = S_{opt} \setminus \{z\} \cup \{b_i\}$ such that
$\C(V,S') \geq \C(V, S_{opt}) + N-3$, contradicting the optimality of $S_{opt}$.
Hence, $S_{opt}$ must be contained in $A\cup B$. 
In this case, $z$ must be $a_0$.  Now the set
$S' =  S_{opt} \setminus \{z\} \cup \{b_i\}$ is such that
$\C(V,S') \geq \C(V, S_{opt}) + N/2-2$. 
Since $N \geq 6$, this contradicts the optimality of $S_{opt}$.
It follows that $S_{opt} \subseteq B$.

Next, consider a set $S\subseteq B$ of size $k$, and let
$I_S=\{i\in [N] \mid b_i\in S\}$. We have
\versiondense{
\begin{align*}
\C(V, S)& \textstyle
=~\left(1-\prod_{i\in I_S}\left(1-p(a_0,b_i)\right)\right) +
\left(\sum_{i\in I_S} \left(p(b_i,c_{i1})+N-1\right)\right)
\\ & \textstyle
=~1+(N-1)k~+ \sum_{i\in I_S} x_i/(X_{\max}\cdot Y_{\max})^{k}~ -
\prod_{i\in I_S} y_i/(X_{\max}\cdot Y_{\max})
\\ & \textstyle
=~ 1+(N-1)k~+ (X_{\max}\cdot Y_{\max})^{-k}
\left(\sum_{i\in I_S} x_i~ -
\prod_{i\in I_S} y_i \right)~.
\end{align*}
}
\versionspacy{
\begin{align*}
\C(V, S)& =~\left(1-\prod_{i\in I_S}\left(1-p(a_0,b_i)\right)\right)+
\left(\sum_{i\in I_S} \left(p(b_i,c_{i1})+N-1\right)\right)
\\ & =~1+(N-1)k~+ \sum_{i\in I_S} x_i/(X_{\max}\cdot Y_{\max})^{k}~ -
\prod_{i\in I_S} y_i/(X_{\max}\cdot Y_{\max})
\\ & =~ 1+(N-1)k~+ (X_{\max}\cdot Y_{\max})^{-k}
\left(\sum_{i\in I_S} x_i~ -
\prod_{i\in I_S} y_i \right)~.
\end{align*}
}
This formulation of the coverage function shows that the given instance
$\langle \A, k, t \rangle$ is a yes instance of \kspm~if and only if $\T$ has
a set $S$ of vertices of size $k$ such that 
$\C(V, S) \geq 1+(N-1)k+ t/(X_{\max}Y_{\max})^k$.
Thus, the \kspm~problem is reduced in polynomial time  to 
{\sc PBDS} on unweighed trees.

It remains to prove 
NP-hardness, \wone-hardness, and $n^{o(k)}$ lower-bound under {\sc ETH}.
Note~that the above reduction is a parameterized preserving reduction
for the parameter $k$. 
That is, the parameter $k$ in the \kspm~problem is the solution size (also called $k$) parameter for 
the PBDS problem. Since the \kspm~problem 
(i) is NP-hard, 
(ii) is \wone-hard for the parameter $k$, and
(iii) cannot have time complexity $n^{o(k)}$ under the Exponential time
hypothesis (by Theorem~\ref{thm:kspmwonehard}),
it follows that the same hardness results hold for {\sc PBDS} as well.
Therefore, the {\sc PBDS} problem on uncertain trees 
(i) is NP-hard, 
(ii) is \wone-hard for the parameter $k$, and 
(iii) cannot have time complexity $n^{o(k)}$ if ETH holds true.
\end{proof}

The $k$-SUM conjecture~\cite{AbboudL13,Patrascu10} states that the \ksum, for the parameters $N$ and $k$,
requires at least $N^{\lceil k/2\rceil - o(1)}$ time.

\begin{conjecture}[$k$-{\sc SUM} Conjecture] 
There do not exist a $k \geq 2$, an $\varepsilon > 0$, and 
an algorithm that succeeds (with high probability) in solving
$k$-{\sc SUM}~in $N^{\lceil k/2\rceil - \varepsilon}$ time.
\end{conjecture}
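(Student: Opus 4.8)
The statement in question is the $k$-SUM Conjecture, which is a complexity-theoretic hardness \emph{assumption} rather than a theorem amenable to proof. It asserts an \emph{unconditional} lower bound of $N^{\lceil k/2\rceil - o(1)}$ on the running time of every (possibly randomized) algorithm for $k$-SUM, and no lower bound of this strength is known—nor expected to be provable with present techniques. The plan, accordingly, is not to \emph{prove} the conjecture, which would require establishing a polynomial lower bound of prescribed degree on an explicit problem and thereby separate complexity classes in ways far beyond current reach, but to situate it: to recall the upper bound it claims is essentially optimal, and to marshal the evidence that motivates believing it.

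First I would recall the matching upper bound that fixes the exponent $\lceil k/2\rceil$. The classical meet-in-the-middle algorithm solves $k$-SUM in $\widetilde O(N^{\lceil k/2\rceil})$ time: split the $k$ summands into a block of size $\lfloor k/2\rfloor$ and a block of size $\lceil k/2\rceil$, enumerate and sort the $\Theta(N^{\lfloor k/2\rfloor})$ subset-sums of the first block, and for each of the $\Theta(N^{\lceil k/2\rceil})$ subset-sums of the second block perform a binary search for a complementary value summing to the target. The conjecture posits that, up to the $N^{o(1)}$ slack (and suppressing logarithmic factors), this exponent cannot be beaten for any fixed $k\geq 2$, in particular ruling out any $N^{\lceil k/2\rceil-\varepsilon}$ algorithm.

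The hard part—indeed the very reason the statement is a conjecture—is that unconditionally lower-bounding algorithmic running time by a fixed polynomial degree is not within the scope of current methods. What one \emph{can} present, and what I would offer as supporting evidence, is threefold: (i) in the smallest cases $k=2$ and $k=3$ the best known algorithms have not breached the $\widetilde O(N)$ and $\widetilde O(N^{2})$ barriers despite sustained effort, so the conjecture is consistent with decades of algorithmic experience; (ii) Patrascu and Williams~\cite{PatrascuW10} establish the weaker \emph{conditional} statement that an $N^{o(k)}$ algorithm for $k$-SUM would refute the Exponential Time Hypothesis, pinning down the qualitative $k$-dependence modulo ETH; and (iii) the conjecture has proved robust and fruitful as the basis of an extensive web of fine-grained reductions, of which the $k$-SPM and PBDS hardness results of this paper (Theorems~\ref{thm:kspmwonehard} and~\ref{thm:pbdslowerbound}) are instances. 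The appropriate treatment, therefore, is to adopt the conjecture as a standard hardness hypothesis and to derive consequences from it, exactly as the subsequent results do, rather than to attempt a proof that no existing technique can supply.
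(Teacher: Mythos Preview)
Your assessment is correct and matches the paper's treatment: the statement is a \emph{conjecture}, not a theorem, and the paper does not attempt to prove it; it simply states it (with citations to~\cite{AbboudL13,Patrascu10}) and then \emph{uses} it as a hypothesis in the proof of Theorem~\ref{thm:pbdslowerbound}. Your contextual discussion of the matching upper bound and the supporting evidence is appropriate and goes somewhat beyond what the paper itself supplies, but the essential point---that this is an assumption to be adopted, not a claim to be derived---is exactly right.
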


\begin{proof}[Proof of Theorem~\ref{thm:pbdslowerbound}]
Consider the uncertain tree $\T$ constructed in the proof of Theorem~\ref{thm:treehardness}.
We set $n_0=0$. 
Modify the original construction of $\T$ by deleting the $N-2$ vertices: $c_{i3},c_{i4},\ldots,c_{iN}$,
and setting $\wt_{c_{i2}}=N-1$, for $1\leq i\leq N$.
Thus the tree contains exactly $n=3N+1$ vertices. 
Now, \ksum~is reducible to $k$-SPM, and \kspm\ is reducible to PBDS, both in polynomial time,
and, moreover the parameter $k$ remains unaltered and the size of problem grows by at most constant factor.
This shows that, for $\varepsilon>0$, an $n^{\lceil k/2\rceil - \varepsilon}$ time algorithm
to weighted {\sc PBDS} implies an $N^{\lceil k/2\rceil - \Omega(\varepsilon)}$ time algorithm to \ksum,
thereby, falsifying the $k$-SUM conjecture.
\end{proof}

Note that since the PBDS problem is NP-hard on trees, it is also {\tt para-NP}-hard
\cite{cygan2015parameterized,DowneyFellows13} for the treewidth parameter.  

\section{Hardness of Uni-PBDS for the pathwidth parameter}
\label{sec:whard}
In this section, we show that even for the restricted case of uniform
probabilities, the Uni-PBDS problem is \wone-hard for the pathwidth parameter,
and thus also for treewidth (Theorem~\ref{thm:whardpathwidth}).
This is shown by a reduction from the \mcc problem to the Uni-PBDS problem.
It is well-known that the \mcc problem is \wone-hard for the parameter
solution size~\cite{DowneyFellows92}.

\smallskip\smallskip
\noindent
\fcolorbox{gray!20}{gray!18}{
\parbox{13.5cm}{
\mcc\\
\textsf{Input}: A positive integer $k$ and a $k$-colored graph $G$.\\
{\sf Parameter}: $k$\\
{\sf Question}: Does there exist a clique of size $k$ with one vertex from each color class?
}
}
\smallskip

\noindent
Let $(G=(V,E),k)$ be an input instance of the \mcc problem, with $n$ vertices and $m$ edges. 
Let $V=(V_1,\ldots, V_k)$ denote the partition of the vertex set $V$ in the input instance. 
We assume, without loss of generality, $|V_i| = n$ for each $i \in [k]$. 
For each $1 \leq i \leq k$, let $V_i = \{u_{i,\ell} \mid 1 \leq \ell \leq n\}$. 

\subsection{Gadget based reduction from \mcc}

Let $(G, k)$ be an instance of the \mcc~problem. 
For any probability $0 < p < 1$, and for any integer $f$ such that
$f > \max\{knm, n+k^2/p\}$,
our reduction constructs an uncertain graph $\G$. 
The output of the reduction is an instance $(\G, k',t')$ of the Uni-PBDS problem where each edge has probability $p$, $k' = (n+1)(m+kn)$ and $t' = (kn+m)\big((n+1)fp + n+np+1+2(1-(1-p)^n)\big) + 4{k \choose 2}(1-(1-p)^{n+1})$.  In the presentation below, we show that this choice of $k'$ and $t'$ ensures that there is a set of size $k'$ with expected domination at least $t'$ in $\G$ if and only if $G$ has a multi-colored clique of size $k$. 

We first construct a gadget graph to represent the vertices and edges of the input instance of the \mcc problem. 
We construct two types of gadgets, $\D$ and $\I$ in the reduction,
illustrated in Figure \ref{fig:gadgetfig}.
The gadget $\I$ is the primary gadget and $\D$ is a secondary gadget
used to construct $\I$. 
When we refer to a gadget, we mean the primary gadget $\I$ unless the gadget $\D$ is specified. 
For each vertex and edge in the given graph, our reduction has a corresponding gadget. 
The gadget $\D$ is defined as follows.\\ 

\noindent {\bf Gadget of  type $\D$.} 
Given a pair of vertices $u$ and $v$, the gadget $\D_{u,v}$ consists of vertices $u$, $v$, and $f$ additional vertices. 
The vertices $u$ and $v$ are made adjacent to every other vertex. 
We refer to the vertices $u$ and $v$ as {\em heads}, and remaining vertices of $\D_{u,v}$ as {\em tails}, and $u$ are $v$ are said to be connected by the gadget $\D_{u,v}$. 


\begin{observation}
\label{obs:DPathwidth}
The pathwidth of a gadget of type $\D$ is 2.
\end{observation}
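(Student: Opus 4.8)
The claim is that the pathwidth of the gadget $\D_{u,v}$ is exactly $2$. Recall the structure of $\D_{u,v}$: it has two head vertices $u, v$, which are adjacent to each other and to all $f$ tail vertices, while the tail vertices form an independent set among themselves. Thus $\D_{u,v}$ is the join of the single edge $\{u,v\}$ with an independent set of size $f$; equivalently, it is the complete bipartite-like graph $K_2 + \overline{K_f}$ (a ``book'' graph with $f$ pages sharing the spine $uv$).

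The plan is to exhibit a path decomposition of width $2$ and then argue that no path decomposition of width $1$ can exist. For the upper bound, I would order the tail vertices as $w_1, \ldots, w_f$ and take the bags $X_j = \{u, v, w_j\}$ for $j = 1, \ldots, f$, arranged consecutively along a path. One checks the three tree-decomposition axioms directly: every vertex appears in some bag; every edge ($uv$, and each $uw_j$ and $vw_j$) appears inside some bag; and the set of bags containing any fixed vertex is contiguous ($u$ and $v$ appear in all bags, each $w_j$ in exactly one). Each bag has size $3$, so the width is $2$.

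For the lower bound, it suffices to note that $\D_{u,v}$ contains a triangle (on $u$, $v$, and any tail $w_1$, since $u$–$v$, $u$–$w_1$, $v$–$w_1$ are all edges, as $f \ge 1$). A graph containing $K_3$ as a subgraph has treewidth at least $2$, hence pathwidth at least $2$. Combined with the upper bound, $\pw(\D_{u,v}) = 2$. This argument is entirely routine; the only mild obstacle is simply being careful that the pathwidth convention (width $=$ max bag size minus one) is applied consistently, so that a decomposition into bags of size $3$ indeed certifies width $2$.
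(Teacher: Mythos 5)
The paper states this observation without any proof, so the only question is whether your argument fits the gadget as actually constructed --- and there is one concrete mismatch. Although the sentence ``$u$ and $v$ are made adjacent to every other vertex'' is ambiguous, in the paper's gadget $\D_{u,v}$ the two heads are \emph{not} adjacent to each other: each of $u$ and $v$ is joined only to the $f$ tail vertices, so $\D_{u,v}$ is the complete bipartite graph $K_{2,f}$, not the book graph $K_2+\overline{K_f}$ you describe. This is what Figure~\ref{fig:gadgetfig}(b) depicts, and it is forced by the later coverage computations: in Claim~\ref{claim:coverageOfRest} each unselected head $a_i$ contributes exactly $p$ (and $h_c$ contributes $1-(1-p)^n$), i.e.\ it has exactly one (respectively $n$) selected neighbours, which would be false if the two heads of a $\D$-gadget were adjacent. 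Consequently your lower-bound step does not apply as written: $K_{2,f}$ is bipartite, hence triangle-free, so the gadget contains no $K_3$ and ``contains a triangle, hence pathwidth $\ge 2$'' fails.

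The repair is immediate, and the rest of your proof is fine. Since $f>\max\{knm,\;n+k^2/p\}\ge 2$, the gadget contains the $4$-cycle $u,w_1,v,w_2$; any graph containing a cycle has treewidth, and hence pathwidth, at least $2$ (equivalently, graphs of pathwidth at most $1$ are forests). Your upper bound is unaffected: the bags $\{u,v,w_j\}$ for $j=1,\dots,f$ arranged along a path cover every edge that actually exists in $K_{2,f}$ (the edge $uv$ never needs covering), the occurrences of each vertex are contiguous, and every bag has size $3$, certifying width $2$. With the triangle argument replaced by the $C_4$ argument, your proof of $\pw(\D_{u,v})=2$ is complete and consistent with the paper's construction.
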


\noindent {\bf Gadget of type $\I$.} We begin the construction of the gadget with $2n$ vertices partitioned into two sets where each partition contains $n$ vertices. 
Let $A = \{a_1, \ldots, a_n\}$ and $C = \{c_1, \ldots, c_n\}$ be this partition. 
For each $i\in [n]$, vertices $a_i$ and $c_i$ are connected by the gadget $\D_{a_i, c_i}$. 
Let $h_a$ and $h_c$ be two additional vertices connected by the gadget $\D_{h_a, h_c}$. 
The vertices in the sets $A$ and $C$ are made adjacent to $h_a$ and $h_c$, respectively. 
This completes the construction of the gadget.  
In the reduction, a gadget of type $\I$ is denoted by the symbol $\I$ along with an appropriate subscript based on whether the gadget is associated with a vertex or an edge.

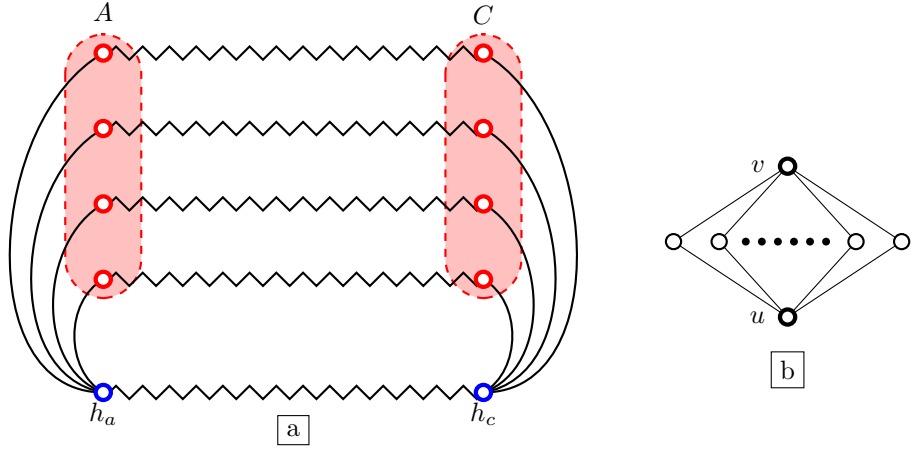
\begin{figure}[h]
\center
\begin{tikzpicture}
\coordinate (a1) at (3,4);
\coordinate (a2) at (3,3);
\coordinate (a3) at (3,2);
\coordinate (a4) at (3,1);

\coordinate (d1) at (8,1);
\coordinate (d2) at (8,2);
\coordinate (d3) at (8,3);
\coordinate (d4) at (8,4);

\coordinate (ha) at (3,-0.5);
\coordinate (hd) at (8,-0.5);

\draw[red, fill=red, fill opacity=0.25, dashed, thick,rounded corners=15pt] ($(a1)+(-0.5,0.25)$) rectangle ($(a4) + (0.5,-0.25)$);
\draw[red, fill=red, fill opacity=0.25, dashed, thick,rounded corners=15pt] ($(d4)+(-0.5,0.25)$) rectangle ($(d1) + (0.5,-0.25)$);

\draw[black, thick] (hd) to [out=30,in=330] (d1);
\draw[black, thick] (hd) to [out=20,in=330] (d2);
\draw[black, thick] (hd) to [out=10,in=330] (d3);
\draw[black, thick] (hd) to [out=0,in=330] (d4);
\draw[black, thick] (ha) to [out=180,in=210] (a1);
\draw[black, thick] (ha) to [out=170,in=210] (a2);
\draw[black, thick] (ha) to [out=160,in=210] (a3);
\draw[black, thick] (ha) to [out=150,in=210] (a4);

\draw[black, thick, decorate, decoration=\zigzag] (d4) -- (a1);
\draw[black, thick, decorate, decoration=\zigzag] (d3) -- (a2);
\draw[black, thick, decorate, decoration=\zigzag] (d2) -- (a3);
\draw[black, thick, decorate, decoration=\zigzag] (d1) -- (a4);
\draw[black, thick, decorate, decoration=\zigzag] (hd) -- (ha);

\draw[red, fill=white, ultra thick] (a1) circle (0.1cm) node[above=0.3cm, black, thick] {$A$};
\draw[red, fill=white, ultra thick] (a2) circle (0.1cm);
\draw[red, fill=white, ultra thick] (a3) circle (0.1cm);
\draw[red, fill=white, ultra thick] (a4) circle (0.1cm);
\draw[red, fill=white, ultra thick] (d1) circle (0.1cm);
\draw[red, fill=white, ultra thick] (d2) circle (0.1cm);
\draw[red, fill=white, ultra thick] (d3) circle (0.1cm);
\draw[red, fill=white, ultra thick] (d4) circle (0.1cm) node[above=0.25cm, black, thick] {$C$};

\draw[blue, fill=white, ultra thick] (ha) circle (0.1cm) node[below, black, thick] {$h_a$};
\draw[blue, fill=white, ultra thick] (hd) circle (0.1cm) node[below, black, thick] {$h_c$};
\node[draw] at (5.5,-1.0) {a};

\makeatletter
\tikzset{
    dot diameter/.store in=\dot@diameter,
    dot diameter=3pt,
    dot spacing/.store in=\dot@spacing,
    dot spacing=10pt,
    dots/.style={
        line width=\dot@diameter,
        line cap=round,
        dash pattern=on 0pt off \dot@spacing
    }
}
\makeatother

\coordinate (a1) at (12,0.5);
\coordinate (a2) at (13.5,1.5);
\coordinate (a3) at (12,2.5);
\coordinate (a4) at (10.5,1.5);

\draw[black] (a1) -- (a2);
\draw[black] (a1) -- (a4);
\draw[black] (a3) -- (a2);
\draw[black] (a3) -- (a4);
\draw[black] (a1) -- ($(a2) + (-0.6,0)$);
\draw[black] (a1) -- ($(a4) + (0.6,0)$);
\draw[black] (a3) -- ($(a2) + (-0.6,0)$);
\draw[black] (a3) -- ($(a4) + (0.6,0)$);

\draw[black, fill=white, ultra thick] (a1) circle (0.1cm) node[left=0.15cm] {$u$};
\draw[black, fill=white, thick] (a2) circle (0.1cm);
\draw[black, fill=white, ultra thick] (a3) circle (0.1cm) node[left=0.15cm] {$v$};
\draw[black, fill=white, thick] (a4) circle (0.1cm);
\draw[black, fill=white, thick] ($(a2) + (-0.6,0)$) circle (0.1cm);
\draw[black, fill=white, thick] ($(a4) + (0.6,0)$) circle (0.1cm);
\node[draw] at (12.0,-0.20) {b};

\draw [black, dot diameter=3pt, dot spacing=6pt, dots] ($(a4) + (0.95, 0)$) -- ($(a2) + (-0.9, 0)$);
\end{tikzpicture}
\caption{
(a) The gadget $\I$ for $n=4$.
(b) The gadget $\D$.
The zigzag edges in $\I$ between two vertices $u$ and $v$ is replaced by
the gadget $\D_{u,v}$. }
\label{fig:gadgetfig}
\end{figure}

\begin{claim}
\label{claim:gagdetPathwidth}
The pathwidth of a gadget of type $\I$ is at most 4. 
\end{claim}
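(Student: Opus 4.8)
The plan is to build an explicit path decomposition of $\I$ of width $4$, i.e., with every bag of size at most $5$, by threading together the path decompositions of the constituent $\D$-gadgets. Recall from Observation~\ref{obs:DPathwidth} that each gadget $\D_{u,v}$ has pathwidth $2$; more precisely, since $u$ and $v$ are universal in $\D_{u,v}$, there is a path decomposition of $\D_{u,v}$ in which every bag contains $\{u,v\}$ together with exactly one tail vertex, so every bag has size $3$, and the two ``end'' bags can be taken to be $\{u,v\}$ with a single tail. I would first record this refined structure as the building block.

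Next I would order the $n$ spine-pairs $(a_i,c_i)$ and process them left to right. For each $i$, start from the refined path decomposition of $\D_{a_i,c_i}$ and \emph{augment every bag} with the two special vertices $h_a,h_c$; this yields a path decomposition, call it $P_i$, of the subgraph consisting of $\D_{a_i,c_i}$ together with the edges $a_ih_a$ and $c_ih_c$, in which every bag is $\{h_a,h_c,a_i,c_i,\text{(one tail)}\}$ — size at most $5$. The first and last bags of $P_i$ can be taken to be exactly $\{h_a,h_c,a_i,c_i\}$. Now I concatenate $P_1,P_2,\ldots,P_n$ in this order; between consecutive blocks $P_i$ and $P_{i+1}$ the vertices $h_a,h_c$ persist, so the interval condition for $h_a$ and $h_c$ holds across the whole path, while each $a_i,c_i$ and each tail of $\D_{a_i,c_i}$ appears only within the contiguous stretch $P_i$. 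Finally I must also cover the gadget $\D_{h_a,h_c}$ itself: I prepend (or splice in at one end) the refined path decomposition of $\D_{h_a,h_c}$, whose bags are $\{h_a,h_c,\text{(one tail)}\}$ of size $3 \le 5$, attaching it to the $h_a,h_c$-containing end of the concatenation. One then checks the three tree-decomposition axioms: every vertex of $\I$ ($h_a$, $h_c$, the $a_i$, the $c_i$, and all tails of all $\D$-gadgets) lies in some bag; every edge of $\I$ (edges inside some $\D_{a_i,c_i}$, edges inside $\D_{h_a,h_c}$, and the connector edges $a_ih_a$, $c_ih_c$) has both endpoints in a common bag; and each vertex's bags form a contiguous subpath. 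Hence the width is at most $5-1 = 4$.

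The only mildly delicate point — and the step I'd watch most carefully — is the bookkeeping at the junctions between the $P_i$ blocks and at the attachment of the $\D_{h_a,h_c}$ block: I need to be sure that (a) adding $h_a,h_c$ to every bag of $P_i$ never pushes a bag above size $5$, which is fine because the refined $\D$-bags already have size $3$ and contain $\{a_i,c_i\}$; and (b) the connectivity (interval) condition for $h_a$ and $h_c$ is maintained globally, which holds because they are in \emph{every} bag of the concatenated part and in every bag of the $\D_{h_a,h_c}$ block, so the union is still a subpath. Everything else is a routine verification. This establishes that $\pw(\I)\le 4$, as claimed; I would remark that the bound is stated as ``at most $4$'' and this construction is exactly what is needed for the subsequent pathwidth analysis of the full reduction graph $\G$.
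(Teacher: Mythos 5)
Your proposal is correct and follows essentially the same route as the paper: the paper deletes $h_a,h_c$, takes width-$2$ path decompositions of the resulting $\D$-components (Observation~\ref{obs:DPathwidth}), concatenates them, and adds $h_a,h_c$ to every bag to get width $4$. Your write-up is just a more explicit version of this, including the handling of the $\D_{h_a,h_c}$ block that the paper treats implicitly.
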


\begin{proof}
We observe that the removal of the vertices $h_a$ and $h_c$ results in a graph in which for each $i \in [n]$,   there is a connected component consisting $a_i$ and $c_i$ which are the heads of a gadget of type $\D$. 
Each component is a gadget of type $\D$ and from Observation \ref{obs:DPathwidth} is of pathwidth 2.
Let $(\T', \X')$ be the path decomposition of $\I-\{h_a,h_c\}$ with width 2. 
Thus adding $h_a$ and $h_c$ into all the bags of the path decomposition $(\T',\X')$ gives a path decomposition for the gadget $\I$, and thus the pathwidth of the gadget $\I$ is at most 4.
\end{proof}

\noindent
{\bf Description of the reduction}.
For $1 \leq i < j \leq k$, let $E_{i,j} = \{xy \mid x \in V_i, y\in V_j\}$
be the set of edges with one end point in $V_i$ and the other in $V_j$ in $G$. 
For each $1 \leq i < j \leq k$, the graph $\G$ has an induced subgraph $\G_i$ corresponding to  $V_i$, and has an induced subgraph $\G_{i,j}$ for the edge set $E_{i,j}$.  
We refer to $\G_i$ as a vertex-partition block and $\G_{i,j}$ as an edge-partition block.  
Inside  block $\G_i$, there is a gadget of type $\I$ for each vertex in $V_i$, and
in the block $\G_{i,j}$, there is a gadget for each edge in $E_{i,j}$.  
For a vertex $u_{i,x}$,  $\I_x$ denotes the gadget corresponding to $u_{i,x}$ in the partition $V_i$, and for an edge $e$, $\I_e$ denotes the gadget corresponding to $e$.  The blocks are appropriately connected by connector vertices which are defined below.

\noindent
We start by defining the structure of a block denoted by $B$. 
The definition of the block applies to both the vertex-partition block and the edge-partition block.
A block $B$ 
consists of gadgets and additional vertices as follows (See Figure \ref{fig:block}).
\begin{itemize}
\item The block $B$ corresponding to the vertex-partition block  $\G_i$ for any $i \in [k]$ is described as follows: for each $\ell \in [n]$, add a gadget $\I_\ell$  to the vertex-partition block $\G_i$, to represent the vertex $u_{i,\ell} \in V_i$. 
In addition to the gadgets, we add $n+1$ vertices to the block $B$ described as follows: Let $F(B) = \{b_1,\ldots,b_n, d_i\}$ be the set of additional vertices that are added to the block $B$. 
For each $\ell \in [n]$, the vertices in the set $C$ of the gadget $\I_{\ell}$ in the block $B$ are made adjacent to $b_\ell$. 
For each $\ell \in [n]$, the vertices in the set $A$ of the gadget $\I_{\ell}$ in the block $B$ are made adjacent to $d_i$. 

\item The block $B$ corresponding to the  edge-partition block $\G_{i,j}$ for any $1 \leq i < j \leq k$ is described as follows: for each $e \in E_{i,j}$, add a gadget $\I_e$ in the edge-partition block $\G_{i,j}$, to  represent the edge $e$. 
In addition to the gadgets, we add $|E_{i,j}|+1$ vertices to the block $B$ described as follows: Let $F(B) = \{b_e \mid e \in E_{i,j}\}\cup\{d_{i,j}\}$ be the set of additional vertices that are added to the block $B$. 
For each $e \in E_{i,j}$, the vertices in the set $C$ of the gadget $\I_e$ in the block $B$ are made adjacent to $b_e$. 
For each $e \in E_{i,j}$, the vertices in the set $A$ of the gadget $\I_e$ in the block $B$ are made adjacent to $d_{i,j}$. 
\end{itemize}

\begin{figure}[h]
\centering
\begin{tikzpicture}
\makeatletter
\tikzset{
    dot diameter/.store in=\dot@diameter,
    dot diameter=3pt,
    dot spacing/.store in=\dot@spacing,
    dot spacing=10pt,
    dots/.style={
        line width=\dot@diameter,
        line cap=round,
        dash pattern=on 0pt off \dot@spacing
    }
}
\makeatother
\coordinate (x1) at (0,0);
\coordinate (r11) at ($(x1) + (0,2)$);
\coordinate (r21) at ($(x1) + (2.5,0.5)$);
\coordinate (a11) at ($(x1) + (0.25,1.75)$);
\coordinate (a21) at ($(x1) + (1.75,1.50)$);
\coordinate (c11) at ($(a11) + (0,-0.75)$);
\coordinate (c21) at ($(a21) + (0,-0.75)$);
\coordinate (x2) at ($(x1) + (3.5,0)$);
\coordinate (r12) at ($(x2) + (0,2)$);
\coordinate (r22) at ($(x2) + (2.5,0.5)$);
\coordinate (a12) at ($(x2) + (0.25,1.75)$);
\coordinate (a22) at ($(x2) + (1.75,1.50)$);
\coordinate (c12) at ($(a12) + (0,-0.75)$);
\coordinate (c22) at ($(a22) + (0,-0.75)$);
\coordinate (x3) at ($(x1) + (10,0)$);
\coordinate (r13) at ($(x3) + (0,2)$);
\coordinate (r23) at ($(x3) + (2.5,0.5)$);
\coordinate (a13) at ($(x3) + (0.25,1.75)$);
\coordinate (a23) at ($(x3) + (1.75,1.50)$);
\coordinate (c13) at ($(a13) + (0,-0.75)$);
\coordinate (c23) at ($(a23) + (0,-0.75)$);
\coordinate (a31) at ($(a21) + (0,0.25)$);
\coordinate (a32) at ($(a22) + (0,0.25)$);
\coordinate (a33) at ($(a23) + (0,0.25)$);
\coordinate (c31) at ($(c11) + (0,-0.25)$);
\coordinate (c32) at ($(c12) + (0,-0.25)$);
\coordinate (c33) at ($(c13) + (0,-0.25)$);
\coordinate (a) at (5.5,2.75);
\coordinate (c1) at ($(r21)+(-1.5,-0.5)$);
\coordinate (c2) at ($(r22)+(-1.5,-0.5)$);
\coordinate (c3) at ($(r23)+(-1.5,-0.5)$);

\draw[black, thick, fill=lightgray, fill opacity=0.3, rounded corners=10pt] (r11) rectangle (r21);
\draw[black, thick, fill=lightgray, fill opacity=0.3, rounded corners=10pt] (r12) rectangle (r22);
\draw[black, thick, fill=lightgray, fill opacity=0.3, rounded corners=10pt] (r13) rectangle (r23);

\filldraw[draw=gray, fill=gray!40] (a11) -- (a31) -- (a) -- (a11) -- cycle;
\filldraw[draw=gray, fill=gray!40] (a12) -- (a32) -- (a) -- (a12) -- cycle;
\filldraw[draw=gray, fill=gray!40] (a13) -- (a33) -- (a) -- (a13) -- cycle;

\filldraw[draw=gray, fill=gray!40] (c21) -- (c31) -- (c1) -- (c21) -- cycle;
\filldraw[draw=gray, fill=gray!40] (c22) -- (c32) -- (c2) -- (c22) -- cycle;
\filldraw[draw=gray, fill=gray!40] (c23) -- (c33) -- (c3) -- (c23) -- cycle;

\draw[black, thick] (a11) -- (a);
\draw[black, thick] (a31) -- (a);
\draw[black, thick] (a12) -- (a);
\draw[black, thick] (a32) -- (a);
\draw[black, thick] (a13) -- (a);
\draw[black, thick] (a33) -- (a);

\draw[black, thick] (c21) -- (c1);
\draw[black, thick] (c31) -- (c1);
\draw[black, thick] (c22) -- (c2);
\draw[black, thick] (c32) -- (c2);
\draw[black, thick] (c23) -- (c3);
\draw[black, thick] (c33) -- (c3);

\draw[red, dashed, thick, fill=red, fill opacity=0.25, rounded corners=2pt] (a11) rectangle (a21);
\draw[red, dashed, thick, fill=red, fill opacity=0.25, rounded corners=2pt] (a12) rectangle (a22);
\draw[red, dashed, thick, fill=red, fill opacity=0.25, rounded corners=2pt] (a13) rectangle (a23);
\draw[red, dashed, thick, fill=red, fill opacity=0.25, rounded corners=2pt] (c11) rectangle (c21);
\draw[red, dashed, thick, fill=red, fill opacity=0.25, rounded corners=2pt] (c12) rectangle (c22);
\draw[red, dashed, thick, fill=red, fill opacity=0.25, rounded corners=2pt] (c13) rectangle (c23);

\draw[blue, thick, fill=white] (a) circle (0.1cm) node[black, thick, above] {$d_i$};
\draw[blue, thick, fill=white] (c1) circle (0.1cm) node[black, thick, below] {$b_1$};
\draw[blue, thick, fill=white] (c2) circle (0.1cm) node[black, thick, below] {$b_2$};
\draw[blue, thick, fill=white] (c3) circle (0.1cm) node[black, thick, below] {$b_n$};
\draw[black, thick, rounded corners=5pt] ($(r11)+(-0.5,1.25)$) rectangle ($(r23)+(0.5,-1.25)$);
\draw [black, dot diameter=4pt, dot spacing=8pt, dots] ($(x2) + (3, 1.25)$) -- ($(x3) + (-.2, 1.25)$);

\node[black, right, ultra thick] at (a21) {$A$};
\node[black, right, ultra thick] at (c21) {$C$};
\node[black, below, ultra thick] at ($(r21) + (-0.25,0)$) {$\I_1$};
\node[black, below, ultra thick] at ($(r22) + (-0.25,0)$) {$\I_2$};
\node[black, below, ultra thick] at ($(r23) + (-0.25,0)$) {$\I_n$};

\end{tikzpicture}
\caption{Illustration of a vertex block $\G_i$ for a $V_i$, $i \in [k]$. Note the $n$ $\I$ gadgets for the $n$ vertices in $V_i$. Similarly, 
an edge block $\G_{i,j}$ for some $1 \leq i < j \leq k$ has $|E_{i,j}|$-many
$\I$ gadgets.
}
\label{fig:block}
\end{figure}
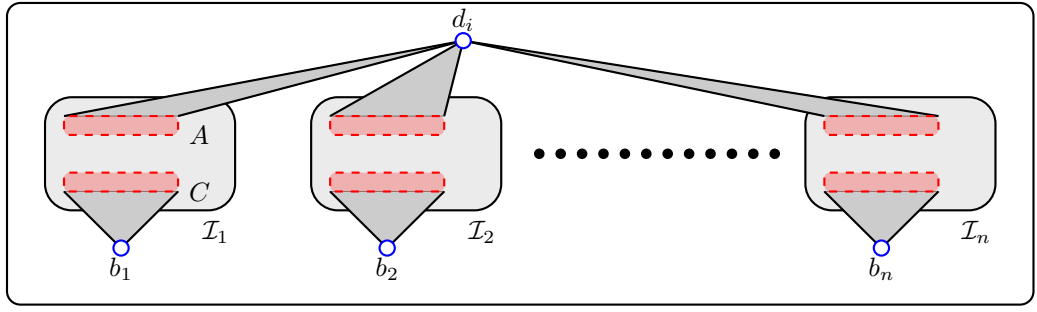

\noindent
The blocks defined above are connected by the connector vertices described next.
These connector vertices are used to connect the edge-partition blocks and vertex-partition blocks, and thus ensure that each edge in $G$ is appropriately represented in $\G$. 
Let $R = \{r_{i,j}^i, s_{i,j}^i, r_{i,j}^j, s_{i,j}^j \mid 1 \leq i < j \leq k\}$ be the connector vertices. 
The blocks are connected based on the cases described below.  The connections involving the $\I$ gadgets in two vertex-partition blocks and an $\I$ gadget in an edge-partition block is illustrated in Figure~\ref{fig:connector}. First, we describe the connection of vertex-partition blocks corresponding $V_i$ and $V_j$ to the appropriate connector vertices. Following this, we describe the connection of the two vertex-partition blocks to the edge-partition block corresponding to $E_{i,j}$ through the appropriate connector vertices.\\
For each $i \in [k]$, each $i < j \leq k$ and each $\ell \in [n]$, 
\begin{itemize}
 \item for each $1 \leq t \leq \ell$, 
 $a_t$ in the gadget 
 $\I_\ell$ of $\G_i$ is made adjacent to 
 $s_{i,j}^i$, and
 \item for each $\ell \leq t \leq n$, 
 $a_t$ in the gadget $\I_\ell$ of $\G_i$ is made adjacent to the vertex $r_{i,j}^i$.
 \end{itemize}
 
For each $i \in [k]$, each $1 \leq j < i$ and each $\ell \in [n]$, 
\begin{itemize}
\item for each $1 \leq t \leq \ell$, 
$a_t$ in the gadget $\I_\ell$ of $\G_i$ is made adjacent to the vertex $s_{j,i}^i$, and
\item for each $\ell \leq t \leq n$, 
$a_t$ in the gadget $\I_\ell$ of $\G_i$ is made adjacent to the vertex $r_{j,i}^i$.
\end{itemize}

\noindent
Now, we describe the edges to connect the $\I$ gadgets in the vertex-partition blocks $\G_i$ and $\G_j$ and to the appropriate $\I$ gadgets in the edge-partition block $\G_{i,j}$. For each $1 \leq i < j \leq k$, and for each $e = u_{i,x}u_{j,y} \in E_{i,j}$, 
\begin{itemize}
\item for each $1 \leq t \leq x$, $a_t$ in the gadget $\I_e$ of $\G_{i,j}$ is made adjacent to the vertex $r_{i,j}^i$, and
\item for each $x \leq t \leq n$, $a_t$ in the gadget $\I_e$ of $\G_{i,j}$ is made adjacent to the vertex $s_{i,j}^i$.
\item for each $1 \leq t \leq y$, $a_t$ in the gadget $\I_e$ of $\G_{i,j}$ is made adjacent to the vertex $r_{i,j}^j$, and
\item for each $y \leq t \leq n$, $a_t$ in the gadget $\I_e$ of $\G_{i,j}$ is made adjacent to the vertex $s_{i,j}^j$.
\end{itemize}

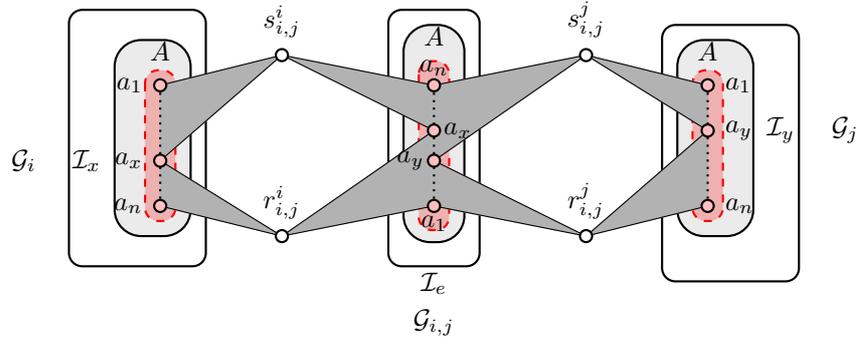
\begin{figure}[h]
\centering
\begin{tikzpicture}[scale=0.8]
\coordinate (a1) at (2,3.5);
\coordinate (a2) at (2,2.25);
\coordinate (a3) at (2,1.5);
\coordinate (b1) at (11,3.5);
\coordinate (b2) at (11,2.75);
\coordinate (b3) at (11,1.5);
\coordinate (c1) at (6.5,3.5);
\coordinate (cx) at (6.5,2.75);
\coordinate (cy) at (6.5,2.25);
\coordinate (c3) at (6.5,1.5);
\coordinate (si) at (4,4);
\coordinate (ri) at (4,1);
\coordinate (sj) at (9,4);
\coordinate (rj) at (9,1);

\draw[thick, black, fill=white, rounded corners=5pt ] ($(a1) + (-1.5, 1.25)$) rectangle ($(a3)+(0.75,-1)$);
\draw[black, thick, fill=lightgray, fill opacity=0.3, rounded corners=10pt] ($(a1)+(-0.75,+0.75)$) rectangle ($(a3)+(0.5,-0.5)$);
\draw[red, fill=red, fill opacity=0.25, dashed, thick,rounded corners=5pt ] ($(a1)+(-0.25,+0.25)$) rectangle ($(a3)+(0.25,-0.25)$);
\filldraw[draw=gray!60, fill=gray!60] (a1) -- (si) -- (a2) -- (a1) -- cycle;
\filldraw[draw=gray!60, fill=gray!60] (a3) -- (ri) -- (a2) -- (a3) -- cycle;
\draw[black] (a1) -- (si);
\draw[black] (a2) -- (si);
\draw[black] (a2) -- (ri);
\draw[black] (a3) -- (ri);
\draw[black, thick, dotted] (a1) -- (a3);
\draw[thick, black, fill=red!25] (a1) circle (0.1cm) node[left=0.1cm] {$a_1$} node[above=0.2cm] {$A$};
\draw[thick, black, fill=red!25] (a2) circle (0.1cm) node[left=0.1cm] {$a_{x}$} node[left=0.65cm] {$\I_x$} node[left=1.5cm] {$\G_i$};
\draw[thick, black, fill=red!25] (a3) circle (0.1cm) node[left=0.1cm] {$a_n$};

\draw[thick, black, fill=white, rounded corners=5pt ] ($(b1) + (-0.75, 1)$) rectangle ($(b3)+(1.5,-1.25)$);
\draw[black, thick, fill=lightgray, fill opacity=0.3, rounded corners=10pt] ($(b1)+(-0.5,+0.75)$) rectangle ($(b3)+(0.75,-0.5)$);
\draw[red, fill=red, fill opacity=0.25, dashed, thick,rounded corners=5pt ] ($(b1)+(-0.25,+0.25)$) rectangle ($(b3)+(0.25,-0.25)$);
\filldraw[draw=gray!60, fill=gray!60] (b1) -- (sj) -- (b2) -- (b1) -- cycle;
\filldraw[draw=gray!60, fill=gray!60] (b3) -- (rj) -- (b2) -- (b3) -- cycle;
\draw[black] (b1) -- (sj);
\draw[black] (b2) -- (sj);
\draw[black] (b2) -- (rj);
\draw[black] (b3) -- (rj);
\draw[black,thick, dotted] (b1) -- (b3);
\draw[thick, black, fill=red!25] (b1) circle (0.1cm) node[right=0.1cm] {$a_1$} node[above=0.2cm] {$A$};
\draw[thick, black, fill=red!25] (b2) circle (0.1cm) node[right=0.1cm] {$a_{y}$} node[right=0.65cm] {$\I_y$} node[right=1.5cm] {$\G_{j}$};
\draw[thick, black, fill=red!25] (b3) circle (0.1cm) node[right=0.1cm] {$a_n$};

\draw[thick, black, fill=white, rounded corners=5pt ] ($(c1) + (-0.75, 1.25)$) rectangle ($(c3)+(0.75,-1)$);
\draw[black, thick, fill=lightgray, fill opacity=0.3, rounded corners=10pt] ($(c1)+(-0.5,1)$) rectangle ($(c3)+(0.5,-0.6)$);
\draw[red, fill=red, fill opacity=0.25, dashed, thick,rounded corners=5pt ] ($(c1)+(-0.25,+0.4)$) rectangle ($(c3)+(0.25,-0.4)$);
\filldraw[draw=gray!60, fill=gray!60] (c1) -- (si) -- (cx) -- (c1) -- cycle;
\filldraw[draw=gray!60, fill=gray!60] (c1) -- (sj) -- (cy) -- (c1) -- cycle;
\filldraw[draw=gray!60, fill=gray!60] (c3) -- (ri) -- (cx) -- (c3) -- cycle;
\filldraw[draw=gray!60, fill=gray!60] (c3) -- (rj) -- (cy) -- (c3) -- cycle;
\draw[black] (c1) -- (si);
\draw[black] (c1) -- (sj);
\draw[black] (cx) -- (ri);
\draw[black] (cx) -- (si);
\draw[black] (cy) -- (rj);
\draw[black] (cy) -- (sj);
\draw[black] (c3) -- (ri);
\draw[black] (c3) -- (rj);
\draw[thick, black, dotted] (c1) -- (c3);
\draw[thick, black, fill=red!25] (c1) circle (0.1cm) node[above] {$a_n$} node[above=0.4cm] {$A$};
\draw[thick, black, fill=red!25] (cx) circle (0.1cm) node[right] {$a_{x}$} node[below=1.75cm] {$\I_e$} node[below=2.25cm] {$\G_{i,j}$};
\draw[thick, black, fill=red!25] (cy) circle (0.1cm) node[left] {$a_{y}$};
\draw[thick, black, fill=red!25] (c3) circle (0.1cm) node[below] {$a_1$};

\draw[thick, black, fill=white] (si) circle (0.1cm) node[above=0.1cm] {$s^i_{i,j}$};
\draw[thick, black, fill=white] (ri) circle (0.1cm) node[above=0.1cm] {$r^i_{i,j}$};
\draw[thick, black, fill=white] (sj) circle (0.1cm) node[above=0.1cm] {$s^j_{i,j}$};
\draw[thick, black, fill=white] (rj) circle (0.1cm) node[above=0.1cm] {$r^j_{i,j}$};
\end{tikzpicture}
\caption{
An illustration of the connector vertices $s_{i,j}^i$, $r_{i,j}^i$, $s_{i,j}^j$
and $r_{i,j}^j$, which connect the blocks $\G_i$ and $\G_{i,j}$,
and $\G_{j}$ and $\G_{i,j}$, for some $1 \leq i < j \leq k$.
The gadget $\I_e$ represents an edge $e = u_{i,x}u_{j,y} \in E_{i,j}$. }
\label{fig:connector}
\end{figure}

This completes the construction of the graph $\G$ with $O(mn^2)$ vertices and $O(mn^3)$ edges. 

\begin{claim}
\label{claim:BlockPathwidth}
The pathwidth of a block $B$ is at most 6. 
\end{claim}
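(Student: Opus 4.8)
The plan is to exhibit a path decomposition of $B$ of width at most $6$ by stacking the path decompositions of the constituent gadgets and then inserting the $n+1$ vertices of $F(B)$ into the bags. Recall the anatomy of a block: $B$ is obtained from $n$ pairwise vertex-disjoint gadgets of type $\I$, say $\I_1,\dots,\I_n$ (one per vertex of $V_i$ if $B=\G_i$, one per edge of $E_{i,j}$ if $B=\G_{i,j}$), together with the additional vertices of $F(B)=\{b_1,\dots,b_n,d_i\}$ (respectively $\{b_e\mid e\in E_{i,j}\}\cup\{d_{i,j}\}$). The only edges of $B$ that are not internal to some $\I_\ell$ are the edges joining $b_\ell$ to every vertex of the set $C$ of $\I_\ell$, and the edges joining $d_i$ (resp.\ $d_{i,j}$) to every vertex of the set $A$ of every $\I_\ell$. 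In particular, deleting $d_i$ from $B$ leaves the disjoint union of the graphs induced on $V(\I_\ell)\cup\{b_\ell\}$.

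First I would fix, for each $\ell\in[n]$, a path decomposition $P_\ell$ of $\I_\ell$ of width at most $4$, which exists by Claim~\ref{claim:gagdetPathwidth}; thus every bag of $P_\ell$ has size at most $5$. I then insert $b_\ell$ into every bag of $P_\ell$. Each bag now has size at most $6$, the occurrences of $b_\ell$ form a contiguous block of bags (all of $P_\ell$), and since each vertex of the set $C$ of $\I_\ell$ lies in some bag of $P_\ell$, every edge incident to $b_\ell$ is realized in a common bag; hence the modified $P_\ell$ is a valid path decomposition of $B[V(\I_\ell)\cup\{b_\ell\}]$ of width at most $5$. Concatenating $P_1,\dots,P_n$ end to end yields a path decomposition $P$ of $B-d_i$ (resp.\ $B-d_{i,j}$) of width at most $5$, using the standard fact that stacking path decompositions of the connected components of a disconnected graph gives a valid path decomposition.

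Finally I would insert $d_i$ (resp.\ $d_{i,j}$) into every bag of $P$. Each bag now has size at most $7$; the new vertex occupies a contiguous (indeed, the entire) range of bags; and since every vertex of the set $A$ of each gadget $\I_\ell$ appears in some bag of $P$, every edge incident to $d_i$ (resp.\ $d_{i,j}$) is realized in a common bag. The resulting decomposition is therefore a valid path decomposition of $B$ whose bags have size at most $7$, i.e.\ of width at most $6$, which gives $\pw(B)\le 6$.

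There is no real obstacle here beyond careful bookkeeping: one must re-verify the three tree-decomposition axioms after each of the two insertion steps. The only conceptual point — and the reason the bound comes out as $6$ rather than smaller — is that the gadget $\I$ forces $a_i$ and $c_i$ to co-occur in a bag for every $i$ (they are adjacent heads of $\D_{a_i,c_i}$), so no layout of $\I$ can separate its $A$-vertices from its $C$-vertices; consequently the two "hub" vertices $b_\ell$ and $d_i$ both have to be piled on top of the same size-$\le 5$ bags, yielding $5+2=7$.
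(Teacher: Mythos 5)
Your proof is correct and follows essentially the same route as the paper: delete the hub vertex $d_i$ (resp.\ $d_{i,j}$), add $b_\ell$ to every bag of the width-$4$ decomposition of each gadget $\I_\ell$ to get width $5$ per component, concatenate, and then add the hub to every bag to reach width $6$. The only difference is that you spell out the verification of the decomposition axioms and the edge-coverage for $b_\ell$ and $d_i$, which the paper leaves implicit.
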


\begin{proof}
Without loss of generality, assume that the block $B$ is a vertex partition
block $\G_i$ for any $i \in [n]$. 
If we remove the vertex $d_i$ from the block $B$, then the resulting graph is a disjoint collection of gadgets of type $\I$ with an additional vertex. 
See Figure~\ref{fig:block} for an illustration. 
By Claim~\ref{claim:gagdetPathwidth}, the pathwidth of a gadget is 4. 
Therefore, for each $\ell \in [n]$, adding the additional vertex $b_\ell$
to all bags of the path decomposition of the gadget $\I_\ell$
gives a path decomposition for the connected component containing the gadget.
Thus, each connected component is of pathwidth at most 5. 
Let $(\T',\X')$ be a path decomposition of $B-\{d_i\}$ with pathwidth 5. 
Thus, adding $d_i$ into all bags of $(\T', \X')$ gives a path decomposition for
the block $B$, and thus the pathwidth of the block is at most 6. 
\end{proof}

The following lemma bounds the pathwidth of the graph $\G$ by a polynomial
in $k$. 

\begin{lemma}
\label{lem:HPathwidth}
The pathwidth of the graph $\G$ is at most $4{k \choose 2} + 6$. 
\end{lemma}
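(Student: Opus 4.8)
The plan is to use the set $R$ of connector vertices as a small ``universal'' separator that breaks $\G$ into its blocks. First I would record that $R = \{r_{i,j}^i, s_{i,j}^i, r_{i,j}^j, s_{i,j}^j \mid 1 \leq i < j \leq k\}$ has exactly $4\binom{k}{2}$ vertices, and then inspect the edge set of the reduction to argue that every edge of $\G$ is either \emph{internal to a single block} (an edge inside a gadget of type $\I$, built from the $\D$ gadgets, or an edge joining a gadget to one of the vertices of $F(B)$, i.e.\ some $b_\ell$, $b_e$, $d_i$, or $d_{i,j}$) or else \emph{incident to a vertex of $R$}. Granting this, $\G - R$ is precisely the disjoint union of the $k$ vertex-partition blocks $\G_1,\ldots,\G_k$ and the $\binom{k}{2}$ edge-partition blocks $\G_{i,j}$, $1\leq i<j\leq k$.

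Next I would invoke Claim~\ref{claim:BlockPathwidth}, which gives $\pw(B)\leq 6$ for every block $B$, together with the elementary fact that the pathwidth of a disjoint union of graphs equals the maximum of the pathwidths of its components (concatenate the path decompositions of the components into one long path). This yields a path decomposition $(\T',\X')$ of $\G - R$ in which every bag has size at most $7$.

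Finally I would extend $(\T',\X')$ to a path decomposition of $\G$ by inserting all of $R$ into every bag, and check the three axioms: every vertex of $\G$ appears in some bag (vertices outside $R$ already did; vertices of $R$ now lie in all bags); every edge of $\G$ has both endpoints in a common bag (edges internal to a block are handled by $(\T',\X')$, and any edge incident to a connector vertex has that endpoint in every bag and its other endpoint in some bag of $(\T',\X')$); and for each vertex the set of bags containing it is a contiguous subpath (unchanged for vertices outside $R$, and the whole path $\T'$ for vertices of $R$). Since $|R| = 4\binom{k}{2}$, each bag now has size at most $7 + 4\binom{k}{2}$, so $\pw(\G) \leq 6 + 4\binom{k}{2}$, as claimed.

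The only step that is not pure bookkeeping is the first one: confirming, against the complete description of the edges of $\G$, that no edge joins two distinct blocks without passing through $R$ (in particular that the $b_\ell, b_e$ vertices only touch their own block's $C$-sets, the $d_i, d_{i,j}$ vertices only touch their own block's $A$-sets, and that gadget-internal edges never leave a gadget). That is the point I would be most careful about; once it is established, the ``add $R$ to every bag'' argument is the standard routine completion.
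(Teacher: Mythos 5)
Your proposal is correct and follows essentially the same route as the paper's proof: delete the $4\binom{k}{2}$ connector vertices of $R$, observe that $\G-R$ is a disjoint union of blocks of pathwidth at most $6$ (Claim~\ref{claim:BlockPathwidth}), and add all of $R$ to every bag of the resulting path decomposition. Your extra care in checking that no edge joins two blocks without passing through $R$ and in verifying the decomposition axioms is just a more explicit version of what the paper leaves implicit.
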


\begin{proof}
Removal of the connector vertices $R$ from $\G$ results in a collection of disjoint blocks. 
By Claim~\ref{claim:BlockPathwidth}, the pathwidth of a block is 6. 
Let $(\T',\X')$ be a path decomposition of $\G-R$ with pathwidth 6. 
Therefore, adding all connector vertices to the path decomposition $(\T', \X')$
gives a path decomposition for the graph $\G$ with pathwidth at most
$4{k \choose 2}+6$.  
\end{proof}

\subsection{Properties of a feasible solution for the Uni-PBDS instance $(\G, k',t')$ output by the reduction}
\label{apdx:twhard:equiv}
\noindent
We start with the observation that in a reduced instance, the expected domination achieved by a set of size $k'$ is at most $t'$.   We then prove properties of a feasible solution for the instance $(\G, k', t')$.
\begin{observation}
\label{obs:maxtprime}
The maximum expected domination that can be achieved by any vertex set of size
$k'$ in $\G$ is $t'$. 
\end{observation}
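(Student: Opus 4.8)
The plan is to prove the observation in the form $\C(V,S)\le t'$ for every $S\subseteq V(\G)$ with $|S|=k'$ (the bound is in fact tight, but only the inequality is needed, matching the statement preceding the observation). Write $q=1-p$ and recall that, since $p(uu)=1$, $\C(V,S)=|S|+\sum_{v\notin S}\bigl(1-q^{\,|N(v)\cap S|}\bigr)$. The first step is to reduce to a canonical shape for $S$. The graph $\G$ contains exactly $(n+1)(kn+m)=k'$ copies of the gadget $\D$ — each gadget $\I$ contributes $n+1$ of them, namely $\D_{a_\ell,c_\ell}$ for $\ell\in[n]$ and $\D_{h_a,h_c}$ — and every $\D$-copy carries $f$ tail vertices of degree $2$ whose only neighbours are the two heads of that copy. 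Since the heads of distinct $\D$-copies are distinct and $k'$ equals the number of copies, any $S$ that contains two heads of one copy, or contains a tail or an auxiliary vertex (some $b_\bullet$, $d_\bullet$, or connector in $R$), necessarily leaves some $\D$-copy with no head. A headless copy contributes $0$ from its $f$ tails, whereas placing a single head there would add $fp$; since $f>\max\{knm,\,n+k^2/p\}$ is chosen so that $fp$ exceeds $1+p\Delta$ for $\Delta$ the maximum degree of any tail or auxiliary vertex, moving the misplaced budget onto a headless copy strictly increases $\C(V,S)$ — the concavity of $j\mapsto 1-q^j$ handling the double-head case. Hence it suffices to bound $\C(V,S)$ for \emph{admissible} $S$, i.e.\ those taking exactly one head from each $\D$-copy (and therefore no tail, no $b_\bullet$ or $d_\bullet$, and no connector).

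For admissible $S$ the tail contribution is immediate: every $\D$-copy has exactly one head in $S$, so each of its $f$ tails is covered with probability $p$, for a total of $(n+1)(kn+m)fp=k'fp$, which is the $(n+1)fp$ part of $t'$. The remaining vertices are handled gadget by gadget. Inside a gadget $\I$ the admissible $S$ takes $a_\ell$ for $\ell$ in some set $P$ of size $\pi$ and $c_\ell$ for $\ell\notin P$, plus one of $h_a,h_c$; evaluating $\sum_v\bigl(1-q^{|N(v)\cap S|}\bigr)$ over the $2n+2$ internal non-tail vertices of $\I$ together with the attached block vertex $b_\bullet$ gives a quantity that — using AM--GM in the form $q^\pi+q^{n-\pi}\ge 2q^{n/2}\ge 2q^n$ and Bernoulli's inequality $1-q^{s}\le sp$ — is at most $n+np+1+2(1-q^n)$ for every choice of $P$ and of the $\D_{h_a,h_c}$-head, and moreover falls short of this value by at least $\pi p$. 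Summing over the $kn+m$ gadgets yields the $(kn+m)\bigl(n+np+1+2(1-q^n)\bigr)$ part of $t'$, with a total slack of at least $p$ times the number of chosen $a$-heads in each block. Finally, the block vertices $d_\bullet$ and the connectors are exactly the vertices an admissible $S$ can reach only through chosen $a$-heads: a vertex with $s$ chosen $a$-heads among its neighbours is covered with probability $1-q^{s}\le sp$, so the coverage of all $d_\bullet$ inside a block is at most $p$ times the number of $a$-heads taken there — absorbed by the slack just described — while the connectors are charged against this slack together with the explicit residual $4{k\choose2}(1-q^{n+1})$ appearing in $t'$. Combining the pieces gives $\C(V,S)\le k'fp+(kn+m)\bigl(n+np+1+2(1-q^n)\bigr)+4{k\choose2}(1-q^{n+1})=t'$.

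The main obstacle is this last, coupled bookkeeping: one has to verify the per-gadget non-tail bound uniformly over the combinatorial freedom in which heads are selected, and simultaneously show that the extra coverage picked up on the $d_\bullet$ and connector vertices — which grows precisely when $a$-heads are used — is always paid for by the slack those same $a$-heads leave in the gadget bound, with nothing left over beyond $4{k\choose2}(1-q^{n+1})$. The other delicate point is pinning the constant in the lower bound on $f$, so that no single tail or auxiliary vertex is ever worth as much to $\C(V,S)$ as promoting a headless $\D$-copy to a singly-headed one; this is exactly what $f>\max\{knm,\,n+k^2/p\}$ is tailored to guarantee.
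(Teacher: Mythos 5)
Your reduction to admissible sets (one head per $\D$-copy) is sound in spirit and usefully avoids the circularity in the paper's own bullet list (which invokes the observation while arguing for the canonical structure), and your per-gadget \emph{upper bound} $n+np+1+2(1-q^n)$ is correct for every choice of $P$ and of the $h$-head. But the quantitative claim your whole charging scheme rests on is false: when a gadget selects $A\cup\{h_c\}$ (i.e.\ $\pi=n$), the non-tail contribution is $(n+1)+np+(1-q^n)$, so the shortfall from the bound is exactly $1-q^n=1-(1-p)^n$, which by Bernoulli is \emph{at most} $np$ — strictly less for $n\ge 2$ — not ``at least $\pi p$'' as you assert. This is not a cosmetic slip, because the tight configuration (the canonical solution of Lemma~\ref{lem:mcc:Uni-PBDS}) has slack exactly $1-q^n$ per block, which is exactly consumed by the $d_\bullet$ vertex; there is then zero slack left, so your plan to charge the connector coverage ``against this slack together with the explicit residual $4{k\choose 2}(1-q^{n+1})$'' has nothing to draw on whenever an admissible set generates connector coverage above the residual. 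Such sets exist: if a block contains two all-$A$ gadgets, each affected connector pair gains on the order of $q^{n+1}\bigl[(1-q^{x'})+(1-q^{n+1-x'})\bigr]$ beyond $2(1-q^{n+1})$, while the additional slack freed inside the block is only $(1-q^n)^2$; comparing these requires pinning down the parameter regime (it fails numerically for small $n$ relative to $1/p$), and your proposal neither performs this comparison nor supplies the inequality that would make it go through. You yourself flag this coupled bookkeeping as ``the main obstacle,'' but the proposal does not resolve it, so the proof is incomplete at its central step.

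Two smaller points. First, the paper does not actually prove this observation by a uniform slack argument over all admissible sets; its (implicit) route is to first force the much stronger canonical structure — exactly one gadget per block with $A\cup\{h_c\}$, all others $C\cup\{h_a\}$ (Claims~\ref{claim:AorC} and~\ref{claim:existUniqueGadget}) — and only then evaluate the coverage exactly off the connectors (Claim~\ref{claim:coverageOfRest}) and bound each connector pair by $2(1-(1-p)^{n+1})$ (Claim~\ref{claim:connectorCoverageUB}); any correct proof of the observation has to either follow that route or genuinely carry out the charging you only sketch. Second, your exchange step asserts that $f>\max\{knm,\,n+k^2/p\}$ makes $fp$ exceed $1+p\Delta$ for every tail or auxiliary vertex; for connector vertices $\Delta$ can be of order $n^2+n|E_{i,j}|$, so this needs $f>1/p+\Delta$, which does not follow from the stated bound on $f$ without further argument — a minor gap compared with the main one, but it should be checked rather than asserted.
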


Let $S$ be a feasible solution for the Uni-PBDS instance $(\G, k', t')$.
We state a set of canonical properties of the set $S$ of size $k'$ and which
achieves the maximum value of $\C(V(\G), S) \geq t'$.  
All the observations below follow crucially from the fact that 
$|S| = k' = (n+1)(kn+m)$ and $\C(V(\G), S) \geq t'
= (kn+m)((n+1)fp + n+np+1+2(1-(1-p)^n)) + 4{k \choose 2}(1-(1-p)^{n+1})$. 

\begin{itemize}
\item Observe that the vertices in the sets $A$, $C$ and $\{h_a, h_c\}$
  are of degree at least $f$.  
This is because they are all heads in a gadget of type $\D$. 
Thus, the vertices of $A$, $C$, and $\{h_a, h_c\}$ have degree greater than
all the other vertices in $\G$. 
We refer to these vertices as {\em high degree} vertices and to the other
as {\em low degree} vertices.
\item There are $kn+m$ gadgets of type $\I$ and each gadget has $n$ vertices in the sets $A$ and $C$, respectively. 
Therefore, the number of high degree vertices in $\G$ is $2(n+1)(kn+m)$, and $k' = (n+1)(kn+m)$.
Also, the number of $\D$ type gadgets is $k'$. 
In the following points, we show that from each gadget exactly one head should be in $S$.  
\item If the set $S$ contains a low degree vertex, then it is possible to replace it with a high degree vertex which is not in $S$.
Since the edge probabilities are all identical, the resulting expected domination does not decrease.
\item Tails of the gadgets of type $\D$ are vertices with degree two, and thus are low-degree vertices. Therefore, the set $S$ does not contain any tails. 
\item Let $B$ be a block in $\G$. The vertices in the set $F(B)$ have degree $\max\{mn, n^2\}$, and are low-degree vertices.  Thus, we can assume that $S$ does not contain any vertex in $F(B)$.
\item The connector vertices are also low degree vertices
We conclude that $S$ does not contain a connector vertex.
\item There are $kn+m$ gadgets of type $\I$ and $k' = (n+1)(kn+m)$.
Therefore, $S$ contains $n+1$ vertices from each gadget of type $\I$.
Based on the observations above, it follows that $S$ contains vertices from
$A$, $C$, and $\{h_a, h_c\}$ from each gadget $I$ of type $\I$.
\item For every gadget of type $\D$, at least one of the head vertex must be in $S$. 
Suppose there exists a gadget $D$ such that both heads are not in $S$, then we cannot dominate the tail vertices of $D$. 
Since the number of gadgets of type $\D$ in $\G$ is the same as $k'$, there exists a gadget as $D'$ such that both heads are in $S$. 
Let us consider the set $S'$ obtained by replacing a head $\alpha$ in $V(D') \cap S$ by a head $\beta$ in $D$. Then, we get
$C(V(\G), S') \geq \C(V(\G), S) + fp-(f+n+k^2)(p-p^2)$ where
$fp-(f+n+k^2)(p-p^2) > 0$ since
$f > (n+k^2)/p$. 
This contradicts Observation~\ref{obs:maxtprime}, by which $\C(V(\G), S) = t'$
is the maximum value possible by a set of size $k'$.  


\item Since $S$ achieves an expected domination of $t'$, it follows that each gadget of type $\I$ selects exactly either $A \cup \{h_c\}$ or $C \cup \{h_a\}$ to achieve a part of the first term in the expression for $t'$.  Further, the additional term  of $2(1-(1-p)^n)$ for each gadget comes from covering the vertex named $d$ in a block $B$ which is adjacent to the set $A$ of each gadget in $B$, and a vertex named $b$ which is adjacent to the set $C$.
\end{itemize} 
We formalize the observations below.

\begin{claim}
For each tail vertex $x$ in $\G$, $N(x) \cap S$ is non empty.
\end{claim}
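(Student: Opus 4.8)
The plan is to establish this claim by combining the structural observations already derived for a feasible solution $S$ of the Uni-PBDS instance $(\G, k', t')$. The key point is that we have pinned down exactly which vertices of $\G$ the set $S$ consists of: from the preceding list of observations, $S$ contains no tails, no vertices of $F(B)$ for any block $B$, no connector vertices, and $n+1$ vertices from each gadget of type $\I$ — specifically, from each $\I$ gadget $S$ takes either $A \cup \{h_c\}$ or $C \cup \{h_a\}$. Moreover, for every gadget of type $\D$ at least one of its two head vertices lies in $S$ (otherwise the tails of that $\D$-gadget would be undominated, and the argument already given shows such an $S$ cannot be optimal since one could strictly improve the expected domination, contradicting Observation~\ref{obs:maxtprime}).

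First I would fix an arbitrary tail vertex $x$ of $\G$ and identify the unique gadget $\D_{u,v}$ of type $\D$ that contains $x$ as a tail. By the construction of gadget $\D$, the vertex $x$ is adjacent to both heads $u$ and $v$ of $\D_{u,v}$, so $N(x) \supseteq \{u,v\}$ and in fact these are the only neighbours of $x$ (tails have degree two). Then I would invoke the observation that at least one head of every $\D$-gadget belongs to $S$: this gives some $w \in \{u,v\} \cap S$, hence $w \in N(x) \cap S$, so $N(x) \cap S \neq \emptyset$. To make the argument airtight I would spell out why "at least one head of every $\D$-gadget is in $S$" holds for a \emph{feasible} $S$ (not merely an optimal one): if some $\D$-gadget $D$ had neither head in $S$, then none of $D$'s $f$ tails is dominated by $S$ — each tail's only neighbours are $D$'s two heads — so these tails contribute nothing, and the resulting $\C(V(\G),S)$ falls short of $t'$ by at least a term on the order of $f$; since $f > knm$ exceeds the total possible contribution of all other vertices one could hope to gain, $\C(V(\G),S) < t'$, contradicting feasibility.

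The main obstacle I expect is purely expository rather than mathematical: the claim is essentially a corollary of the bulleted observations, so the work is in citing the right observations cleanly and making sure the logical dependency on feasibility (rather than optimality) of $S$ is transparent — in particular that the "swap" argument used to show both heads of a $\D$-gadget cannot lie in $S$ while some other $\D$-gadget has none is not circular. A secondary small point to handle is the edge case where the two heads $u,v$ of $\D_{u,v}$ are themselves the special vertices $h_a,h_c$ of an $\I$ gadget, or the heads $a_i,c_i$ — but in every instance the relevant pair of heads is contained in $A \cup C \cup \{h_a,h_c\}$ of some $\I$ gadget, and $S$ is known to contain exactly one of the two "sides" of each such pairing structure, which is consistent with (indeed implies) that at least one head of each constituent $\D$-gadget is selected. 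I would therefore close by remarking that the claim follows immediately: $N(x) \cap S$ contains whichever head of $x$'s $\D$-gadget is selected by $S$.
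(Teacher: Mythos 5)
Your skeleton is the paper's: a tail's only neighbours are the two heads of its $\D$-gadget, so the claim reduces to showing that every $\D$-gadget has at least one head in $S$, which is exactly the bulleted observation this claim formalizes. The genuine gap is in your second paragraph, where you try to re-derive that observation for a merely feasible $S$ by arguing that the shortfall of order $f$ from a head-free gadget cannot be compensated ``since $f>knm$ exceeds the total possible contribution of all other vertices one could hope to gain.'' That bound is false: the other vertices include the $f$ tails of every other $\D$-gadget, and a gadget with \emph{both} heads in $S$ covers each of its $f$ tails with probability $2p-p^2$ instead of $p$, an excess of $fp(1-p)$ --- of the same order as the $fp$ lost on the head-free gadget. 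So the comparison cannot be settled by $f$ versus $knm$; one must weigh $fp$ against $fp(1-p)$ plus the small non-tail slack, i.e.\ verify $fp-(f+n+k^2)(p-p^2)>0$, which holds because $f>(n+k^2)/p$ (this, not $f>knm$, is the relevant part of the choice of $f$). That is precisely the paper's swap computation, and it also needs the pigeonhole step you omit: since $|S|=k'$ equals the number of $\D$-gadgets and $S$ consists only of heads, a head-free gadget forces some other gadget to contain both heads, which is what makes the swap available in the first place.

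Moreover, the feasible-versus-optimal worry that motivated your extra argument is already handled in the paper: feasibility gives $\C(V(\G),S)\ge t'$, Observation~\ref{obs:maxtprime} asserts that $t'$ is the maximum attainable by any set of size $k'$, hence $\C(V(\G),S)=t'$, and any strictly improving swap is then a contradiction; no separate argument for feasible (as opposed to optimal) $S$ is required. If you keep your first paragraph --- tails have degree two with both neighbours being the heads of their own $\D$-gadget, plus the swap/pigeonhole observation --- and drop the $f>knm$ counting, your proof is correct and coincides with the paper's.
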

\begin{claim}
For every block $B$ in $\G$, and each gadget $I$ of type $\I$ in $B$, 
\\
$S \cap V(I) \subseteq A \cup C \cup \{h_a,h_c\}$. 
\end{claim}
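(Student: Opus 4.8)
The plan is to derive this claim from the preceding claim (that every tail vertex has a neighbour in $S$) together with a short counting argument on the gadgets of type $\D$, so that the substantive work has already been done upstream. First I would record the structure of an $\I$-gadget: its vertex set decomposes as $V(\I) = A \cup C \cup \{h_a,h_c\}$ together with the tail sets of the $n+1$ gadgets of type $\D$ out of which $\I$ is built, namely $\D_{a_i,c_i}$ for $i\in[n]$ and $\D_{h_a,h_c}$. Thus the assertion $S \cap V(\I) \subseteq A \cup C \cup \{h_a,h_c\}$ is equivalent to saying that $S$ contains no tail of any $\D$-gadget lying inside $\I$, and I would in fact prove the stronger global statement that $S$ contains no tail of any $\D$-gadget of $\G$ at all, and indeed that $S$ picks exactly one head from each $\D$-gadget.

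Next I would do the counting. Each $\I$-gadget contributes exactly $n+1$ gadgets of type $\D$, and there are $kn+m$ gadgets of type $\I$ in $\G$ (one per vertex of $G$ and one per edge of $G$), so $\G$ contains exactly $(n+1)(kn+m)=k'$ gadgets of type $\D$. Moreover the head pairs of distinct $\D$-gadgets are vertex-disjoint: the heads are precisely the vertices $a_i,c_i$ and $h_a,h_c$ of the various $\I$-gadgets, and each such vertex is a head of a unique $\D$-gadget. Hence the heads of the $k'$ gadgets of type $\D$ form $k'$ pairwise disjoint pairs, $2k'$ distinct vertices in all.

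Then I would combine these facts with the preceding claim and the budget constraint $|S|=k'$. For a tail $x$ of a gadget $\D_{u,v}$, the open neighbourhood $N(x)$ (in the sense of the Preliminaries) is exactly $\{u,v\}$, the head pair of that gadget, so the preceding claim forces $S$ to contain at least one head of every one of the $k'$ gadgets of type $\D$. Since these head pairs are disjoint, $S$ already contains at least $k'$ distinct vertices on this account, and as $|S|=k'$ equality holds throughout: $S$ is precisely a system consisting of exactly one head from each $\D$-gadget and contains no other vertex of $\G$ whatsoever — in particular no tail, no vertex of any $F(B)$, and no connector vertex. Restricting to an $\I$-gadget, the only vertices of $V(\I)$ that are heads of $\D$-gadgets are those in $A\cup C\cup\{h_a,h_c\}$, so $S\cap V(\I)\subseteq A\cup C\cup\{h_a,h_c\}$, as required.

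The one place that needs care — and where I expect the real difficulty to sit — is the preceding claim itself, namely that an optimal size-$k'$ set dominates every tail. I would prove it by the exchange argument sketched in the bulleted observations: classify the vertices of $\G$ by degree with threshold $f$, so that the $2k'$ heads of $\D$-gadgets are the only \emph{high-degree} vertices (using $f>\max\{knm,\,n+k^2/p\}$ to push every tail, every $F(B)$-vertex and every connector strictly below $f$); then, if some tail $x$ of $\D_{u,v}$ had $u,v\notin S$, a pigeonhole argument on the $k'$ gadgets of type $\D$ against the budget $k'$ would produce a $\D$-gadget $D'$ with both heads in $S$, and swapping one head of $D'$ for a head of $\D_{u,v}$ would change $\C(V(\G),\cdot)$ by at least $fp-(f+n+k^2)(p-p^2)>0$, contradicting Observation~\ref{obs:maxtprime}. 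With the preceding claim in hand, the present claim follows immediately from the counting above.
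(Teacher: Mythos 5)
Your argument is correct, but it reaches the claim by a different route than the paper. The paper's justification for this claim is the degree/canonical-form argument in the bulleted observations: every vertex of $A\cup C\cup\{h_a,h_c\}$ is a head of a $\D$-gadget and hence has degree at least $f$, while tails, the vertices of $F(B)$, and the connectors all have degree strictly below $f$ (here $f>\max\{knm,\,n+k^2/p\}$ is used), and a low-degree vertex of $S$ can be exchanged for an unused high-degree vertex without decreasing the expected domination; a canonical feasible $S$ therefore consists only of heads, which restricted to a gadget $I$ are exactly $A\cup C\cup\{h_a,h_c\}$. You instead take the preceding claim (every tail has a neighbour in $S$) as the input and close the argument by counting: the $k'=(n+1)(kn+m)$ gadgets of type $\D$ have pairwise disjoint head pairs, each tail's neighbourhood is exactly its head pair, so $S$ must hit all $k'$ disjoint pairs, and the budget forces $S$ to be exactly one head per $\D$-gadget and nothing else. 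This is valid (tails indeed have degree two, the head pairs are disjoint, and the count $(n+1)(kn+m)=k'$ is right), and it buys you a stronger structural statement in one step -- it simultaneously yields the later facts that $S$ avoids $F(B)$ and the connectors and picks exactly one head per $\D$-gadget. The one caveat is a mild inversion of the paper's logical order: in the paper the containment $S\subseteq$ heads is what underpins the tail-domination claim (the pigeonhole producing a gadget with both heads in $S$ needs it), and your own sketch of that preceding claim falls back on the same degree-exchange observation, so the overall development is not independent of it -- but as a derivation of this particular claim from the claim stated before it, your proof is sound.
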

\begin{claim}
\label{claim:AorC}
For every block $B$ in $\G$, and each gadget $I$ of type $\I$ in $B$, 
\\
either $A \cup \{h_c\} = S \cap V(I)$ or $C \cup \{h_a\} = S \cap V(I)$. 
\end{claim}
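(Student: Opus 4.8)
The plan is to build on the two preceding claims, which already establish that for every gadget $I$ of type $\I$ we have $S \cap V(I) \subseteq A \cup C \cup \{h_a, h_c\}$ and that $|S \cap V(I)| = n+1$. So the task reduces to showing that among the $2n+2$ high-degree vertices of a gadget, the only way to pick $n+1$ of them while attaining the target domination $t'$ is the ``consistent'' choice $A \cup \{h_c\}$ or $C \cup \{h_a\}$. First I would analyze, for a single gadget $I$ in isolation, how much of the domination mass of $t'$ a selection $S \cap V(I)$ of size $n+1$ can contribute. The relevant mass consists of: (i) the $f$ tail vertices in each of the $n+1$ gadgets of type $\D$ inside $I$ (the $n$ copies $\D_{a_i,c_i}$ plus $\D_{h_a,h_c}$), each tail being covered with probability roughly $p$ if at least one of its two heads is chosen; (ii) the pairwise domination of the heads $a_i, c_i, h_a, h_c$ among themselves; and (iii) the ``external'' contributions, namely covering the block vertices $d_i$ (adjacent to all of $A$) and $b_\ell$ (adjacent to all of $C$), contributing the $2(1-(1-p)^n)$ per gadget term, and the connector-vertex terms $4\binom{k}{2}(1-(1-p)^{n+1})$.

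The core combinatorial step is a counting/exchange argument showing that to cover all $f$ tails of all $n+1$ internal $\D$-gadgets with only $n+1$ selected heads, one must pick exactly one head from each $\D$-gadget. Since the $\D$-gadgets are $\D_{a_1,c_1},\dots,\D_{a_n,c_n},\D_{h_a,h_c}$ and they partition the $2n+2$ high-degree vertices into $n+1$ disjoint pairs, picking $n+1$ heads that hit every pair forces exactly one vertex from each pair. This already gives that $S \cap V(I)$ is a ``transversal'' of the pairs. It then remains to rule out the transversals that are neither $A \cup \{h_c\}$ nor $C \cup \{h_a\}$: I would show that any other transversal loses at least a constant (independent of $f$) amount of domination relative to these two, because the vertex $h_a$ is adjacent to all of $A$ and $h_c$ to all of $C$, so choosing $h_c$ together with some $a_i$'s (rather than all of $A$) wastes the coverage of those $a_i$'s that are not selected, and symmetrically; more precisely, selecting $h_c$ but only a proper subset of $A$ means the unselected $A$-vertices are covered only through $h_a$ (not selected) or their own $\D$-head partner $c_i$ (not selected) — so they get covered with probability at most $1-(1-p)^{(\text{something small})}$ rather than $1$. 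Combined with Observation~\ref{obs:maxtprime} that $t'$ is the global maximum and the exact bookkeeping of the $t'$ expression, any deviation from the two consistent patterns makes $\C(V(\G),S) < t'$, a contradiction.

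The main obstacle I anticipate is the precise accounting: one has to verify that the slack gained nowhere else in the graph can compensate for the loss incurred inside a single ``inconsistent'' gadget. This requires carefully matching each summand of $t' = (kn+m)\big((n+1)fp + n+np+1+2(1-(1-p)^n)\big) + 4\binom{k}{2}(1-(1-p)^{n+1})$ to a source of domination, and showing the decomposition is ``tight'' — i.e., the $(n+1)fp$ term is achievable only when all tails are covered (forcing a transversal), the $n + np + 1$ term (the internal head-on-head domination) is maximized only by the two consistent patterns, and the $2(1-(1-p)^n)$ term forces exactly the $d_i$/$b_\ell$ coverage described. I would set this up by proving the per-gadget upper bound $\C$-contribution $\le (n+1)fp + n + np + 1 + 2(1-(1-p)^n)$ with equality iff the selection is $A\cup\{h_c\}$ or $C\cup\{h_a\}$, then summing over all $kn+m$ gadgets, adding the connector term (which is at most $4\binom{k}{2}(1-(1-p)^{n+1})$ always), and invoking that the total must reach $t'$, forcing equality in every gadget. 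The submodular-exchange style swap from the earlier bullet points (replacing an inconsistent configuration and showing domination strictly increases) gives an alternative, perhaps cleaner route to the same conclusion, and I would present whichever makes the constant-gap estimate most transparent.
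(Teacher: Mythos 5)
Your overall route is the one the paper intends: the tail-domination claim plus the count $k'=(n+1)(kn+m)$ force exactly one head from each $\D$-gadget, so $S\cap V(I)$ is a transversal of the $n+1$ disjoint head-pairs, and one then argues that a coverage of $t'$ (the maximum, by Observation~\ref{obs:maxtprime}) is unattainable if some gadget's transversal is inconsistent. The paper is just as terse as you are at this last step, so the plan is the right one. However, one ingredient as you state it is false: the connector term is \emph{not} ``at most $4\binom{k}{2}\bigl(1-(1-p)^{n+1}\bigr)$ always''. The paper's Claim~\ref{claim:connectorCoverageUB} is proved only under the unique-$A$-gadget hypothesis of Claim~\ref{claim:existUniqueGadget}, which is established \emph{after} the present claim and cannot be assumed here; without it a connector such as $s^i_{i,j}$ may have many more than $n+1$ selected neighbours (e.g.\ if several gadgets of $\G_i$ contribute $A$-vertices to $S$), so its domination can exceed $1-(1-p)^{n+1}$. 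Likewise your per-gadget bound with ``equality iff consistent'' is not available verbatim: the $d$-vertices (shared by all gadgets of a block) and the connectors do not decompose per gadget, and in fact equality also fails for perfectly consistent gadgets in a block containing no $A$-gadget (there $d$ is not dominated at all), so ``reaching $t'$ forces equality in every gadget'' does not follow from the bounds you actually have.

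The repair is quantitative rather than structural, and it is the piece missing from your write-up. Given the transversal structure, the tail and selected-vertex contributions are identical for every transversal, and an inconsistent transversal loses strictly more than $p$ of expected domination inside its own gadget (the unselected partner heads of the pairs on the ``wrong'' side get probability $0$ instead of $p$, and the unselected $h$-vertex loses as well). On the other hand, the only soft terms that can ever exceed the allocations built into $t'$ are the $d$-vertices (at most $(1-p)^{n}$ above $1-(1-p)^{n}$ each) and the $4\binom{k}{2}$ connectors (at most $(1-p)^{n+1}$ above $1-(1-p)^{n+1}$ each); the $b$-vertices never exceed theirs. Hence the total possible surplus is $O\bigl(k^{2}(1-p)^{n}\bigr)$, which is smaller than $p$ once $n$ is sufficiently large relative to $k$ (for fixed $p$), and only then does ``$\C(V(\G),S)\ge t'$'' rule out an inconsistent gadget. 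Spelling out this comparison (or the equivalent exchange argument: swapping an inconsistent gadget's selection to a consistent one strictly increases coverage, contradicting Observation~\ref{obs:maxtprime}) is exactly what your final forcing step needs and currently lacks.
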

%
\begin{claim}
\label{claim:existUniqueGadget}
In every block $B$ in the graph $\G$, there exists a unique gadget $I$ such that $A \cup \{h_c\} = S \cap V(I)$. 
\end{claim}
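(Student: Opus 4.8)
The plan is to pin down $g(B)$, the number of \emph{$A$-type} gadgets in $B$ (a gadget $I$ being $A$-type if $S\cap V(I)=A\cup\{h_c\}$ and \emph{$C$-type} if $S\cap V(I)=C\cup\{h_a\}$, these being the only possibilities by Claim~\ref{claim:AorC}), by squeezing $\C(V(\G),S)$ between the value forced by the hypothesis $\C(V(\G),S)\ge t'$ and the upper bound $t'$ of Observation~\ref{obs:maxtprime}. The starting point is an exact accounting: since $S$ consists of exactly $n+1$ vertices inside each $\I$-gadget and of nothing else, every $b$-vertex, every $d$-vertex and every connector of $R$ lies outside $S$, and its domination is determined solely by which of the sets $A(I),C(I)$ lie in $S$. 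A direct per-gadget computation, using the symmetry of $\I$ under the swap $(A,h_c)\leftrightarrow(C,h_a)$, shows that the vertices of a single gadget together with the $(n+1)f$ tails of its $\D$-subgadgets are dominated by total expected weight exactly $(n+1)fp+n+np+1+(1-(1-p)^n)$ regardless of the gadget's type (the summand $1-(1-p)^n$ coming from the head of the gadget not in $S$, dominated by the $n$ vertices of the selected side). Hence, writing $\Sigma_b,\Sigma_d,\Sigma_R$ for the total domination probabilities of the $b$-, $d$- and connector-vertices,
\[
\C(V(\G),S)=(kn+m)\big((n+1)fp+n+np+1+(1-(1-p)^n)\big)+\Sigma_b+\Sigma_d+\Sigma_R,
\]
so, comparing with the definition of $t'$, the hypothesis $\C(V(\G),S)\ge t'$ is equivalent to $\Sigma_b+\Sigma_d+\Sigma_R\ge(kn+m)(1-(1-p)^n)+4{k \choose 2}(1-(1-p)^{n+1})$.

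Next I would establish two matching upper bounds, forcing all inequalities to equalities. For a block $B'$ with $n_{B'}$ gadgets of which $g'=g(B')$ are $A$-type, the $b$-vertex of each $C$-type gadget of $B'$ is dominated with probability $1-(1-p)^n$, that of each $A$-type gadget with probability $0$, and the single vertex $d(B')$ with probability $1-(1-p)^{g'n}$; since $1-q^{g'}\le g'(1-q)$ for $q=(1-p)^n$, with equality iff $g'\in\{0,1\}$, the $b$- and $d$-contribution of $B'$ is at most $n_{B'}(1-(1-p)^n)$, with equality iff $g'\in\{0,1\}$. Summing over blocks gives $\Sigma_b+\Sigma_d\le(kn+m)(1-(1-p)^n)$, while a separate concavity argument over the connector pairs (see below) gives $\Sigma_R\le 4{k \choose 2}(1-(1-p)^{n+1})$. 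These are exactly the reverse of the inequality above, so every inequality is tight; in particular $g(B')\in\{0,1\}$ for every block. To exclude $g(B')=0$, re-select one gadget $I_0$ of such a block as $A$-type, obtaining $S'$ of the same size: the gadget-internal and non-selected-head contributions are unchanged, and $\Sigma_b+\Sigma_d$ is unchanged (a loss of $1-(1-p)^n$ at $b(I_0)$ is matched by a gain of $1-(1-p)^n$ at $d(B')$), but $\Sigma_R$ strictly increases, since some connector acquires a neighbour in $A(I_0)$ (as $A(I_0)\ne\emptyset$ and $B'$ is incident to a connector, using $k\ge2$); this contradicts Observation~\ref{obs:maxtprime}. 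Hence $g(B)=1$ for every block $B$, which is the claim.

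The hard part will be the connector bound $\Sigma_R\le 4{k \choose 2}(1-(1-p)^{n+1})$ and the characterization of its equality case. The mechanism is that, for a fixed pair $i<j$, an $A$-type gadget of index $\ell$ in $\G_i$ contributes exactly $\ell$ neighbours to $s^i_{i,j}$ and $n-\ell+1$ to $r^i_{i,j}$, while an $A$-type gadget of $V_i$-index $x$ in $\G_{i,j}$ contributes $n-x+1$ to $s^i_{i,j}$ and $x$ to $r^i_{i,j}$; since $m\mapsto 1-(1-p)^m$ is concave and the two contributions each total $n+1$, the sum $\Pr[s^i_{i,j}\sim S]+\Pr[r^i_{i,j}\sim S]$ is maximized, at value $2(1-(1-p)^{n+1})$, precisely when $\G_i$ and $\G_{i,j}$ each have a unique $A$-type gadget and their indices coincide --- which is exactly the condition that the vertex of $V_i$ chosen in $\G_i$ is an endpoint of the edge of $E_{i,j}$ chosen in $\G_{i,j}$. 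Carrying this out uniformly over all four connectors $s^i_{i,j},r^i_{i,j},s^j_{i,j},r^j_{i,j}$ of each pair, and checking that the inequality (and not merely its tightness) survives configurations in which some block has several $A$-type gadgets --- which is where the thresholds $f>\max\{knm,\,n+k^2/p\}$ and the exact constant in $t'$ are used to make the $b$- and $d$-losses dominate the potential connector gains --- is the technical core, and the place where I expect the real work to lie.
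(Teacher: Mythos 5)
Your accounting framework is sound as far as it goes, and the pieces you actually carry out are correct: the gadget-internal coverage is indeed type-independent and equal to $(n+1)fp+n+np+1+(1-(1-p)^n)$, the per-block $b$/$d$ contribution is at most $n_{B'}(1-(1-p)^n)$ with equality iff $g'\in\{0,1\}$, the exchange argument against $g'=0$ works (and is the same style of argument the paper itself makes via Observation~\ref{obs:maxtprime}), and the per-pair connector analysis via concavity matches the computation in Claim~\ref{claim:connectorCoverageUB}. The problem is that the squeeze in your second paragraph rests on the bound $\Sigma_R\le 4{k\choose 2}(1-(1-p)^{n+1})$ stated as if it were unconditional, and it is not: if some block contains two or more $A$-type gadgets, a connector such as $s^i_{i,j}$ can have far more than $n+1$ selected neighbours, so its domination probability can exceed $1-(1-p)^{n+1}$ (up to nearly $1$), and each pair can exceed $2(1-(1-p)^{n+1})$. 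This is exactly why the paper states its connector bound (Claim~\ref{claim:connectorCoverageUB}) only \emph{conditionally} on the very claim you are proving. You acknowledge this in your last paragraph, but you then defer the required trade-off ("where I expect the real work to lie") rather than prove it — and that trade-off \emph{is} the content of the claim: one must show that whenever some block has $g'\ge 2$, the loss in $\Sigma_b+\Sigma_d$, which is at least $\bigl(1-(1-p)^n\bigr)^2$ per offending block, strictly exceeds the total possible connector surplus above the baseline, which is at most $2(k-1)(1-p)^{n+1}$ per block (each affected connector can gain at most $(1-p)^{n+1}$ over $1-(1-p)^{n+1}$). Without this quantitative comparison the chain "$\C(V(\G),S)\ge t'$ forces equality everywhere, hence $g'\in\{0,1\}$" simply does not follow, so the proof is incomplete at its decisive step. (A simple exchange argument does not rescue the $g'\ge 2$ case either: switching an $A$-type gadget back to $C$-type gains in $b$/$d$ but can lose a non-negligible amount, up to order $1-p$, at a single connector, so only the aggregate deficit-versus-surplus accounting settles it.)

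A secondary inaccuracy: you point to the thresholds $f>\max\{knm,\,n+k^2/p\}$ as the mechanism that makes the $b$/$d$ losses dominate the connector gains. They cannot — $f$ appears only in the type-independent term $(n+1)fp$ and cancels from the residual comparison; its role is confined to the earlier claims forcing $S$ to consist of one head per $\D$-gadget. The quantities actually in tension are $\bigl(1-(1-p)^n\bigr)^2$ versus $O(k)\,(1-p)^{n+1}$, i.e., the argument needs $n$ sufficiently large relative to $\log k$ (harmless after padding, but it should be said). For comparison, the paper offers no explicit proof of this claim at all — it is asserted on the strength of the informal budget observations preceding it — so your explicit decomposition is a step in the right direction; it just stops short of the one case that genuinely needs an argument.
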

\begin{claim}
\label{claim:coverageOfRest}
If the set $S$ satisfies Claim~\ref{claim:existUniqueGadget}, then $$\C(V(\G)\setminus R, S) = (kn+m)((n+1)fp + n+np+1+2(1-(1-p)^n)).$$
\end{claim}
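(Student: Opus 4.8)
The plan is to evaluate $\C(V(\G)\setminus R,S)$ directly by partitioning $V(\G)\setminus R$ into the vertex sets $V(I)$ of the $kn+m$ gadgets of type $\I$ (one for each of the $kn$ vertices of $G$ and one for each of the $m$ edges) together with the sets $F(B)$ of additional vertices of the blocks $B$; these parts are pairwise disjoint and exhaust $V(\G)\setminus R$, so nothing is double counted. Since every edge of $\G$ has probability $p$, for any vertex $v$ we have $\Pr(v\sim S)=1$ if $v\in S$ and $\Pr(v\sim S)=1-(1-p)^{|N(v)\cap S|}$ otherwise, so each vertex's contribution is governed by the number of its neighbours in $S$. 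By Claim~\ref{claim:AorC}, for every gadget $I$ the set $S\cap V(I)$ equals either $A\cup\{h_c\}$ or $C\cup\{h_a\}$, and by Claim~\ref{claim:existUniqueGadget} exactly one gadget per block is of the first type. Together with the earlier structural observations that $S$ contains no $\D$-tail, no connector vertex of $R$, and no vertex of any $F(B)$, this pins down $|N(v)\cap S|$ for every $v\in V(\G)\setminus R$.

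First I would compute the contribution of one gadget $I$, say with $S\cap V(I)=C\cup\{h_a\}$. Each of the $(n+1)f$ tails inside $I$ lies in a $\D$-gadget whose two heads are one vertex of $C\cup\{h_a\}$ (in $S$) and one vertex of $A\cup\{h_c\}$ (not in $S$), hence contributes $p$, for a total of $(n+1)fp$; the $n+1$ vertices of $C\cup\{h_a\}$ contribute $1$ each; every $a_t\in A$ has $h_a$ as its unique $S$-neighbour, because its remaining neighbours are $\D$-tails and connector vertices, so it contributes $p$, for a total of $np$; and $h_c$ is adjacent exactly to the $n$ vertices of $C\subseteq S$ (its only other neighbours being the tails of $\D_{h_a,h_c}$), contributing $1-(1-p)^n$. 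The opposite case $S\cap V(I)=A\cup\{h_c\}$ is symmetric: here each $c_t\in C$ has $h_c$ as its unique $S$-neighbour (its neighbour $b$-vertex lies outside $S$), and $h_a$ is adjacent to all of $A\subseteq S$; the per-gadget total is again $(n+1)fp+n+1+np+(1-(1-p)^n)$. Summing over all $kn+m$ gadgets gives $(kn+m)\big((n+1)fp+n+1+np+(1-(1-p)^n)\big)$.

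Next I would handle $\bigcup_B F(B)$. Fix a block $B$ and let $I^\ast$ be its unique gadget with $S\cap V(I^\ast)=A\cup\{h_c\}$. The vertex $d_B$ (that is, $d_i$ or $d_{i,j}$) is adjacent precisely to the $A$-sets of the gadgets of $B$, of which only $A(I^\ast)$, of size $n$, meets $S$; so $d_B$ contributes $1-(1-p)^n$. Each $b$-vertex of $B$ is adjacent precisely to the $C$-set of its own gadget, contributing $1-(1-p)^n$ if that gadget has $S$-selection $C\cup\{h_a\}$ and $0$ if that gadget is $I^\ast$; since exactly one gadget of $B$ is $I^\ast$, the $b$-vertices contribute $(n_B-1)(1-(1-p)^n)$, where $n_B$ is the number of gadgets in $B$. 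Hence $F(B)$ contributes $n_B(1-(1-p)^n)$, and since $\sum_B n_B=kn+m$, the total over all blocks is $(kn+m)(1-(1-p)^n)$. Adding this to the gadget total yields $\C(V(\G)\setminus R,S)=(kn+m)\big((n+1)fp+n+np+1+2(1-(1-p)^n)\big)$, proving the claim.

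The argument is a routine accounting of coverage probabilities; the only delicate point is making sure that the neighbourhoods are read off correctly — especially that the unselected vertices $a_t$ (resp.\ $c_t$) and the block vertices $d_B$ and $b_\ell$ have no $S$-neighbours beyond the single head (resp.\ the single $n$-element set) identified above — which is exactly where the earlier observations that $S$ avoids tails, connector vertices, and $F(B)$-vertices are invoked.
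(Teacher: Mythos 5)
Your proof is correct and follows essentially the same route as the paper's: decompose $V(\G)\setminus R$ into the $kn+m$ gadgets and the sets $F(B)$, compute the per-gadget contribution $(n+1)fp+(n+1)+np+(1-(1-p)^n)$ in both selection cases, and add the $n_B(1-(1-p)^n)$ contribution of the block vertices. Your accounting of the neighbourhoods (tails, unselected heads, $d$- and $b$-vertices) matches the paper's calculation, just spelled out in a little more detail.
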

\begin{proof}
Let $B$ be a block in $\G$. Let $I$ be a gadget of type $\I$ in $B$.  
Either the set $A$ or the set $C$ in $I$ is in $S$. 
Further, in every gadget of type $\D$ in $I$, exactly one head is in $A$ and another head is in $C$. The other two heads of a gadget of type $\D$ are the vertices in the set $\{h_a,h_c\}$. 
Therefore, in every gadget of type $\D$ in the graph $\G$, exactly one
of the heads is in $S$. 
For each pair $(u, v)$ such that there is a gadget $D_{u,v}$ in $I$, the expected domination of the set $V(D_{u,v})\setminus\{u,v\}$ by the set $S$ is $fp$. 
The gadget $I$ has $n+1$ gadgets of type $\D$. 
The expected domination of $V(I)$ by the set $S$ is given as follows:
\begin{eqnarray*}
\C(V(I), S) &=& \sum_{u,v \mid \D_{u,v} \in I}\C(V(D_{u,v})\setminus \{u,v\}, S) + \C(A \cup C \cup \{h_a, h_c\}, S)\\
&=& (n+1)fp + (n+1)+ np + (1-(1-p)^n)
\end{eqnarray*}
Then, the expected domination of $V(B)$ by the set $S$ is the sum of the expected domination contributed by the gadgets of type $\I$ and the domination due to $F(B)$ by the set $S$. 
There exists a unique gadget $I$ in the block $B$ such that the vertex set $A$ is added to $S$. 
In the remaining gadgets, the vertex set $C$ is added to $S$. 
We compute the value $\C(V(B), S)$ based on type of the block $B$
Let $B$ be a vertex-partition block $\G_{i}$ for some $i \in [k]$. 
All vertices in $F(\G_{i})$ except $b_{x_i}$ have $n$ neighbors in $S$. 
Therefore, the value $\C(V(\G_{i}), S)$ is given as follows:
\begin{eqnarray*}
\C(V(\G_{i}), S) &=& \sum_{\ell \in [n]}\C(V(\I_{\ell}), S) + \C(F(\G_{i}), S)\\
&=& n((n+1)fp + (n+1) + np + (1-(1-p)^n)) + n(1-(1-p)^n)
\end{eqnarray*}
Let $B$ be an edge-partition block $\G_{i,j}$ for some $1 \leq i,j \leq k$. 
All vertices in $F(\G_{i,j})$ except $b_{u_{i,x_i}u_{j,x_j}}$ have $n$ neighbors in $S$. 
Therefore, the value $\C(V(\G_{i,j}), S)$ is given as follows:
\begin{eqnarray*}
\C(V(\G_{i,j}), S) &=& \sum_{e \in E_{i,j}}\C(V(\I_e), S) + \C(F(\G_{i,j}), S)\\
&=& |E_{i,j}|((n+1)fp + (n+1) + np + (1-(1-p)^n)) + |E_{i,j}|(1-(1-p)^n)
\end{eqnarray*}
Finally, the expected domination of $V(\G)\setminus R$ by the set $S$ is computed as follows:
\begin{eqnarray*}
\C(V(\G)\setminus R, S) &=& \sum_{i \in [k]} \C(V(\G_i), S) + \sum_{1 \leq i < j \leq k} \C(V(\G_{i,j}), S)\\
&=& \sum_{i \in [k]}n((n+1)fp + (n+1) + np + 2(1-(1-p)^n))\\
&& + \sum_{1 \leq i < j \leq k} |E_{i,j}|((n+1)fp + (n+1) + np + 2(1-(1-p)^n))\\
&=& (m+kn)((n+1)fp + (n+1) + np + 2(1-(1-p)^n))
\end{eqnarray*}
Hence the claim is proved. 
\end{proof}
\begin{claim}
\label{claim:connectorCoverageUB}
If the set $S$  satisfies Claim~\ref{claim:existUniqueGadget}, then for each $1 \leq i < j \leq k$, $$\C(\{r_{i,j}^i, s_{i,j}^i\}, S) \leq 2(1-(1-p)^{n+1}).$$ 
\end{claim}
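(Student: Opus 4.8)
The plan is to reduce the claim to a small counting statement about the neighbourhoods of the two connector vertices, and then finish with a one‑line appeal to the concavity of $x\mapsto 1-(1-p)^x$. First I would recall two ingredients already in hand: the set $S$ contains no connector vertex (shown in the list of canonical properties above), and by Claim~\ref{claim:existUniqueGadget} together with Claim~\ref{claim:AorC}, inside the block $\G_i$ there is a \emph{unique} gadget $\I_{x_i}$ of type $\I$ whose $A$-set lies in $S$, every other gadget of $\G_i$ contributing its $C$-set instead; and likewise inside $\G_{i,j}$ there is a unique gadget, say $\I_e$ with $e=u_{i,x'}u_{j,y'}$, whose $A$-set lies in $S$. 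Since the tree is unweighted and $S$ avoids the connector vertices, for a connector vertex $v$ we have $\C(\{v\},S)=1-(1-p)^{|N(v)\cap S|}$, so it suffices to bound $|N(r_{i,j}^i)\cap S|$ and $|N(s_{i,j}^i)\cap S|$.

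Next I would read the adjacencies of $r_{i,j}^i$ and $s_{i,j}^i$ straight off the construction. The only neighbours of these two vertices are $A$-vertices of gadgets inside $\G_i$ and $\G_{i,j}$: in $\G_i$, vertex $a_t$ of $\I_\ell$ is adjacent to $s_{i,j}^i$ for $1\le t\le\ell$ and to $r_{i,j}^i$ for $\ell\le t\le n$; in $\G_{i,j}$, vertex $a_t$ of $\I_e$ (with $e=u_{i,x'}u_{j,y'}$) is adjacent to $r_{i,j}^i$ for $1\le t\le x'$ and to $s_{i,j}^i$ for $x'\le t\le n$. Crucially, $C$-vertices and the vertices $h_a,h_c$ are \emph{not} adjacent to $r_{i,j}^i$ or $s_{i,j}^i$, so among the neighbours of these connectors only the $A$-vertices of the two uniquely selected gadgets actually belong to $S$.

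Applying Claim~\ref{claim:existUniqueGadget} then gives $|N(r_{i,j}^i)\cap S| = (n-x_i+1)+x'$ and $|N(s_{i,j}^i)\cap S| = x_i+(n-x'+1)$, so that
\[
|N(r_{i,j}^i)\cap S| + |N(s_{i,j}^i)\cap S| \;=\; 2(n+1),
\]
regardless of the values of $x_i$ and $x'$. Writing $g(x)=1-(1-p)^x$, which is concave on $[0,\infty)$, the two‑point concavity bound yields
\[
\C(\{r_{i,j}^i\},S) + \C(\{s_{i,j}^i\},S) \;=\; g\big(|N(r_{i,j}^i)\cap S|\big) + g\big(|N(s_{i,j}^i)\cap S|\big) \;\le\; 2\,g(n+1) \;=\; 2\big(1-(1-p)^{n+1}\big),
\]
and since $\C(\{r_{i,j}^i,s_{i,j}^i\},S)=\C(\{r_{i,j}^i\},S)+\C(\{s_{i,j}^i\},S)$ this is exactly the claim.

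The main obstacle is the bookkeeping in the middle step: one must carefully verify, from the four families of connector edges in the construction, that \emph{only} $A$-vertices are adjacent to $r_{i,j}^i$ and $s_{i,j}^i$, and then invoke Claim~\ref{claim:existUniqueGadget} to pin down which $A$-vertices survive into $S$. The overlap at $t=\ell$ (resp.\ $t=x'$), where a single $A$-vertex is adjacent to both connectors, is harmless because the argument only uses the \emph{sum} of the two neighbourhood sizes; everything after the count is the routine concavity estimate.
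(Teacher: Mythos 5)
Your proposal is correct and follows essentially the same route as the paper: the same reading of the connector adjacencies giving $|N(s_{i,j}^i)\cap S| = x_i + (n-x'+1)$ and $|N(r_{i,j}^i)\cap S| = (n-x_i+1) + x'$, which sum to $2(n+1)$, followed by the same final estimate (the paper phrases it as the case split $x_i = z$ versus $x_i \neq z$ with $(1-p)^q + (1-p)^{-q} \geq 2$, which is exactly your two-point concavity bound for $g(x)=1-(1-p)^x$).
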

\begin{proof}
Since $S$ satisfies Claim~\ref{claim:existUniqueGadget}, it follows that for each block there is a unique gadget $I$ in the block such that the set $A$ inside the gadget $I$ is contained in $S$.
 For the vertex partition block $\G_i$, let $I_{x_i}$ be this unique gadget. Cleary, $x_i$ is a vertex in $V_i$.  Similarly, for the edge partition block $\G_{i,j}$, let $\I_{u_{i,z}u_{j,y}}$ be the corresponding unique gadget.  It is clear that $u_{i,z}u_{j,y} \in E_{i,j}$.
Let $\tilde{A}$ be the union of the sets $A$ in the    above mentioned gadgets. 
By construction of $\G$, the neighbors of the vertices $r_{i,j}^i$ and $s_{i,j}^i$ in $S$ are subsets of the set $\tilde{A}$. 
More precisely, $|N(s_{i,j}^i) \cap \tilde{A}| = x_i + n-z+1$ and $|N(r_{i,j}^i) \cap \tilde{A}| = n-x_i+1+z$.
Therefore, $$\C(\{r_{i,j}^i, s_{i,j}^i\}, S) = (1-(1-p)^{n + 1 + x_i - z}) + (1-(1-p)^{n + 1 + z - x_i}).$$
We consider two cases based on the values $x_i$ and $z$. 
First, consider the case $x_i \not= z$. Let $q = x_i-z > 0$.  
\begin{eqnarray*}
\C(\{r_{i,j}^i, s_{i,j}^i\}, S) &=& (1-(1-p)^{n + 1 + q}) + (1-(1-p)^{n + 1 - q})\\
&=& 2-\big((1-p)^{n+1}((1-p)^q + (1-p)^{-q})\big)\\
&<& 2-2(1-p)^{n+1}
~=~ 2(1-(1-p)^{n+1})
\end{eqnarray*}
Next, we consider $x_i = z$. Then, 
$$ \C(\{r_{i,j}^i, s_{i,j}^i\}, S) = (1-(1-p)^{n + 1 }) + (1-(1-p)^{n + 1}) = 2(1-(1-p)^{n + 1}).$$
The claim follows. 
\end{proof}


\subsection{Equivalence between multi-colored clique and Uni-PBDS}

\begin{lemma}
\label{lem:mcc:Uni-PBDS}
If $(G,k)$ is a YES-instance of the \mcc problem, then $(\G,k',t')$ is a YES-instance of the Uni-PBDS problem. 
\end{lemma}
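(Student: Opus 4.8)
The plan is to exhibit, directly from a multi-colored clique, a vertex set $S$ of size $k'$ with $\C(V(\G),S)=t'$, reusing the bookkeeping already carried out in Claims~\ref{claim:coverageOfRest} and~\ref{claim:connectorCoverageUB} so that almost no new computation is needed. Let $K=\{u_{1,x_1},\ldots,u_{k,x_k}\}$ be a clique of $G$ with $u_{i,x_i}\in V_i$ for each $i$. I would define $S$ blockwise: in the vertex-partition block $\G_i$, put $A\cup\{h_c\}$ of the gadget $\I_{x_i}$ into $S$ and $C\cup\{h_a\}$ of every other gadget $\I_\ell$ ($\ell\ne x_i$) of $\G_i$ into $S$; in the edge-partition block $\G_{i,j}$, put $A\cup\{h_c\}$ of the gadget $\I_{u_{i,x_i}u_{j,x_j}}$ — which exists because $K$ is a clique, so $u_{i,x_i}u_{j,x_j}\in E_{i,j}$ — into $S$, and $C\cup\{h_a\}$ of every other gadget of $\G_{i,j}$ into $S$. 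Since there are $kn+m$ gadgets of type $\I$ and exactly $n+1$ vertices are chosen from each, $|S|=(n+1)(kn+m)=k'$.

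Next I would split $\C(V(\G),S)=\C(V(\G)\setminus R,S)+\C(R,S)$ and evaluate the two terms. By construction $S$ selects, in every block, exactly one gadget whose set $A$ is taken and the set $C$ in all remaining gadgets; thus $S$ satisfies the structural hypotheses of Claims~\ref{claim:AorC} and~\ref{claim:existUniqueGadget}. Since the computation in the proof of Claim~\ref{claim:coverageOfRest} was derived purely from this structural property (and not from optimality of $S$), it applies verbatim to our $S$ and gives $\C(V(\G)\setminus R,S)=(kn+m)\big((n+1)fp+n+np+1+2(1-(1-p)^n)\big)$.

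For the connector vertices, fix $1\le i<j\le k$. The unique gadget of $\G_i$ with its $A$-set in $S$ is $\I_{x_i}$, and the unique gadget of $\G_{i,j}$ with its $A$-set in $S$ is $\I_e$ with $e=u_{i,x_i}u_{j,x_j}$, whose $V_i$-endpoint has index exactly $x_i$. In the language of the proof of Claim~\ref{claim:connectorCoverageUB} this is precisely the case $z=x_i$, so $|N(s_{i,j}^i)\cap S|=|N(r_{i,j}^i)\cap S|=n+1$ and hence $\C(\{r_{i,j}^i,s_{i,j}^i\},S)=2(1-(1-p)^{n+1})$; the same holds for the pair $r_{i,j}^j,s_{i,j}^j$ because the $V_j$-endpoint of $e$ has index $x_j$, matching the $A$-gadget $\I_{x_j}$ chosen in $\G_j$. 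Summing over all ${k\choose 2}$ pairs gives $\C(R,S)=4{k\choose 2}(1-(1-p)^{n+1})$. Adding the two contributions yields $\C(V(\G),S)=t'$, and since $|S|=k'$, the triple $(\G,k',t')$ is a YES-instance of Uni-PBDS.

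The one genuinely delicate point is that the clique property of $K$ is exactly what the argument needs: it is precisely what guarantees that the edge gadget $\I_{u_{i,x_i}u_{j,x_j}}$ is present in $\G_{i,j}$ \emph{and} that its two endpoint indices line up with the $A$-gadgets chosen in $\G_i$ and $\G_j$, which is what puts us in the equality — rather than the strict-inequality — case of Claim~\ref{claim:connectorCoverageUB}; were $K$ not a clique for some pair $\{i,j\}$, one would be forced into the case $z\ne x_i$ and lose a positive amount of connector coverage on that pair, so the argument would break exactly where the converse direction would demand it to. A secondary item to verify carefully is the reuse of Claim~\ref{claim:coverageOfRest} noted above, namely that its proof depends only on the structural property of Claim~\ref{claim:existUniqueGadget} and not on $S$ being an optimal solution.
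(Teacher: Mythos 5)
Your proposal is correct and follows essentially the same route as the paper's proof: the same blockwise construction of $S$ (taking $A\cup\{h_c\}$ in the clique-indexed gadgets and $C\cup\{h_a\}$ elsewhere), the same appeal to Claim~\ref{claim:coverageOfRest} for $\C(V(\G)\setminus R,S)$, and the same observation that each connector vertex has exactly $n+1$ neighbors in $S$, giving $\C(R,S)=4{k \choose 2}(1-(1-p)^{n+1})$ and hence $\C(V(\G),S)=t'$. The point you flag about Claim~\ref{claim:coverageOfRest} depending only on the structural property of Claim~\ref{claim:existUniqueGadget} is handled identically in the paper, so no gap remains.
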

\begin{proof}
Let $K = \{u_{i,x_i} \mid i \in [k]\}$ be a $k$-clique in $G$. 
That is, for each $i \in [k]$, the $x_i$-th vertex of the partition $V_i$
is in the clique. 
Now we construct a feasible solution $S$ for the instance $(\G,k',t')$ of the Uni-PBDS problem. 
The set $S$ consists of the following vertices.

For each $i \in [k]$,
\begin{itemize}
\item for each $\ell \in [n]$ with $\ell \not= x_i$, add $C \cup \{h_a\}$ 
in the gadget $\I_\ell$ in the vertex-partition block $\G_i$ to $S$, and
\item from the gadget $\I_{x_i}$ in the vertex-partition block $\G_i$, add $A \cup \{h_c\}$ to $S$.
\end{itemize}

For each $1 \leq i < j \leq k$, 
\begin{itemize}
\item for each edge $e \in E_{i,j}$ with $e \not= u_{i,x_i}u_{j,x_j}$,
add $C \cup \{h_a\}$ in the gadget $\I_e$ in the edge-partition block $\G_{i,j}$
to $S$, and 
\item for the edge $e=u_{i,x_i}u_{j,x_j}$, add $A \cup \{h_c\}$ in the gadget
$\I_e$ in the edge-partition block $\G_{i,j}$ to $S$. 
\end{itemize}

We show that $S$ is a feasible solution to the reduced instance of the Uni-PBDS.

\noindent
The expected domination by the set $S$ in $\G$ can be computed as follows: 
\begin{equation*}
\C(V(\G), S) = \C(R,S) + \sum_{i \in [k]}\C(V(\G_{i}), S) + \sum_{1 \leq i < j \leq k} \C(V(\G_{i,j}), S) 
\end{equation*}
By the definition of the set $S$, it satisfies the condition in Claim~\ref{claim:existUniqueGadget}.
Therefore, by Claim~\ref{claim:coverageOfRest},
\begin{eqnarray*}
\C(V(\G)\setminus R, S) &=& (m+kn)((n+1)fp + (n+1) + np + 2(1-(1-p)^n))\\
&=& t' - 4{k \choose 2}(1-(1-p)^{n+1}).
\end{eqnarray*}
\noindent
Next, the expected domination of the connector vertices by $S$ is computed as follows. 
For each $1 \leq i < j \leq k$, consider the pair of connectors $s_{i,j}^i$ and $r_{i,j}^i$. 
These vertices are connected to the set $A$ of every gadget of type $\I$ in the blocks $\G_i$ and $\G_{i,j}$. 
By the definition of $S$, there are unique gadgets $I_{x_i}$ in $\G_{i}$ and $I_{u_{i,x_i}u_{j,x_j}}$ in $\G_{i,j}$ for which the set $A$ is added to $S$.  Therefore, the neighbors of $s_{i,j}^i$ and $r_{i,j}^i$ in $S$ are the vertices from the set $A$ in these two gadgets. 
By construction of $\G$, both the vertices $s_{i,j}^i$ and $r_{i,j}^i$ have $n+1$ many neighbors in $S$. 
Therefore, the expected domination of the pair by the set $S$ is
\begin{eqnarray*}
\C(\{s_{i,j}^i, r_{i,j}^i\}, S) &=& 2(1-(1-p)^{n+1}).
\end{eqnarray*}
In the graph $\G$, there exist $2{k \choose 2}$ pairs of connector vertices.
Therefore,
\begin{eqnarray*}
\C(R, S) = 4{k \choose 2}(1-(1-p)^{n+1}).
\end{eqnarray*}
Based on the above calculations, the expected domination of $V(\G)$
by the set $S$ is
\begin{eqnarray*}
\C(V(\G), S) &=& \C(R, S) + \C(V(\G) \setminus R, S)\\
&=& 4{k \choose 2}(1-(1-p)^{n+1}) + t' - 4{k \choose 2}(1-(1-p)^{n+1})
~=~
t'
\end{eqnarray*}
Thus, the instance $(\G, k',t')$ is an $YES$-instance of the Uni-PBDS problem. 
\end{proof}


\begin{lemma}
\label{lem:Uni-PBDS:mcc}
If $(\G, k',t')$ is a YES-instance of the Uni-PBDS problem, then $(G,k)$ is a YES-instance of the \mcc problem. 
\end{lemma}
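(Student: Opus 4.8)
The plan is to start from a feasible solution $S$ for the instance $(\G,k',t')$ and invoke the structural machinery already developed. First I would apply the sequence of canonical properties established in Section~\ref{apdx:twhard:equiv}, together with Observation~\ref{obs:maxtprime}, to conclude that $\C(V(\G),S)=t'$ exactly and that $S$ must satisfy Claim~\ref{claim:existUniqueGadget}: in every block $B$ (vertex-partition block $\G_i$ or edge-partition block $\G_{i,j}$) there is a \emph{unique} gadget $I$ of type $\I$ with $A\cup\{h_c\}=S\cap V(I)$, while all other gadgets in $B$ contribute $C\cup\{h_a\}$. This lets me read off, for each $i\in[k]$, a distinguished vertex: namely the index $x_i$ such that $\I_{x_i}$ is the special gadget in $\G_i$, i.e.\ the candidate clique vertex $u_{i,x_i}\in V_i$. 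Similarly for each pair $1\le i<j\le k$, the special gadget in $\G_{i,j}$ identifies a distinguished edge $e_{i,j}=u_{i,z}u_{j,y}\in E_{i,j}$.

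The crux is to show $K=\{u_{i,x_i}\mid i\in[k]\}$ is a clique, equivalently that for every pair $i<j$ the distinguished edge of $\G_{i,j}$ is exactly $u_{i,x_i}u_{j,x_j}$. Here I would exploit the tightness $\C(V(\G),S)=t'$ and the decomposition $\C(V(\G),S)=\C(V(\G)\setminus R,S)+\C(R,S)$. By Claim~\ref{claim:coverageOfRest}, $\C(V(\G)\setminus R,S)=(kn+m)\big((n{+}1)fp+n+np+1+2(1-(1-p)^n)\big)$, so to reach $t'$ we must have $\C(R,S)=4\binom{k}{2}(1-(1-p)^{n+1})$. But Claim~\ref{claim:connectorCoverageUB} shows that for each pair $i<j$, $\C(\{r_{i,j}^i,s_{i,j}^i\},S)\le 2(1-(1-p)^{n+1})$, with equality if and only if the index $z$ coming from the special edge gadget of $\G_{i,j}$ on the $i$-side equals the index $x_i$ coming from the special vertex gadget of $\G_i$; and symmetrically on the $j$-side with $y=x_j$. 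Summing the four connector vertices over all $\binom{k}{2}$ pairs, the only way the total $\C(R,S)$ can attain $4\binom{k}{2}(1-(1-p)^{n+1})$ is for every single one of these equalities to hold. Hence for every $i<j$ the distinguished edge of $\G_{i,j}$ is $u_{i,x_i}u_{j,x_j}$, which in particular certifies $u_{i,x_i}u_{j,x_j}\in E_{i,j}\subseteq E(G)$. Therefore $K$ is a clique with exactly one vertex per colour class, so $(G,k)$ is a YES-instance of \mcc.

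The main obstacle I anticipate is the bookkeeping needed to make the tightness argument airtight: one must be careful that the canonical properties genuinely force $S$ to be of the ``$A\cup\{h_c\}$ or $C\cup\{h_a\}$ per gadget'' form (no slack left from low-degree vertices, tails, connector vertices, or $F(B)$ vertices), since the whole argument rests on $\C(V(\G)\setminus R,S)$ hitting the exact value in Claim~\ref{claim:coverageOfRest} rather than merely being bounded. Once that is secured, the equality-case analysis of Claim~\ref{claim:connectorCoverageUB} does the real work, propagating the single global budget constraint into a local agreement condition ($z=x_i$ and $y=x_j$) at every pair, which is exactly the clique condition. A secondary point to handle cleanly is that the ``unique special gadget'' in each block is well-defined (exactly one, not zero or more than one), which follows from the cardinality count $k'=(n+1)(kn+m)$ matching $n+1$ vertices drawn from each of the $kn+m$ gadgets together with Claim~\ref{claim:AorC}.
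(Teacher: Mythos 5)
Your proposal is correct and follows essentially the same route as the paper's proof: extract the unique special gadgets per block via Claim~\ref{claim:existUniqueGadget}, use Claim~\ref{claim:coverageOfRest} and the value $t'$ to force $\C(R,S)$ to equal $4{k \choose 2}(1-(1-p)^{n+1})$, and then invoke the equality case of Claim~\ref{claim:connectorCoverageUB} at every connector pair to conclude that the distinguished edge in each $\G_{i,j}$ is exactly $u_{i,x_i}u_{j,x_j}$, so $K=\{u_{i,x_i}\mid i\in[k]\}$ is a multi-colored clique. No substantive differences from the paper's argument.
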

\begin{proof}
Let $S$ be a feasible solution to the instance $(\G, k', t')$ of the Uni-PBDS problem. 
For each $i \in [k]$, let $\I_{x_i}$ be the unique gadget for some $x_i \in [n]$, for which the set $A$ in $\I_{x_i}$ is in $S$. 
For each $1 \leq i < j \leq k$, let $\I_{u_{i,x_i}u_{j,x_j}}$ be the unique gadget for some $u_{i,x_i}u_{j,x_j} \in E_{i,j}$, for which the set $A$ in $\I_{u_{i,x_i}u_{j,x_j}}$ is in $S$. 
The existence of such gadgets are ensured by the Claim~\ref{claim:existUniqueGadget}. 
Let $K = \{u_{i,x_i} \mid i \in [k]\}$. 
We show that the set $K$ is a clique in $G$ as follows. 
Observe that we picked one vertex from each partition $V_i$ for $i \in [k]$. 
Next, we show that for each $1 \leq i < j \leq k$, there is an edge $u_{i,x_i}u_{j, x_j} \in E(G)$. 
Let $i,j$ such that $i < j$. 
By Claim~\ref{claim:coverageOfRest},
\begin{eqnarray*}
\C(V(\G)\setminus R, S) &=& (m+kn)((n+1)fp + (n+1) + np + 2(1-(1-p)^n))\\
&=& t' - 4{k \choose 2}(1-(1-p)^{n+1}).
\end{eqnarray*}
Since $\C(V(\G), S) \geq t'$ and $\C(V(\G)\setminus R, S) = t' - 4{k \choose 2}(1-(1-p)^{n+1})$, the expected domination of $R$ by $S$ is at least $4{k \choose 2}(1-(1-p)^{n+1})$. 
There are $2{k \choose 2}$  disjoint pairs of connector vertices in the graph $\G$. 
By Claim~\ref{claim:connectorCoverageUB}, each pair of connectors contributes at most an expected domination of value $2(1-(1-p)^{n+1})$.
It follows that for each pair of connector vertices the expected domination by $S$ is equal to $2(1-(1-p)^{n+1})$.  
Consequently, for each $i < j \in [k]$, each of the two pairs of connector
vertices connecting the blocks $\G_i$, $\G_{i,j}$ and $\G_j$, contributes
$2(1-(1-p)^{n+1})$ to the expected domination only if
$\I_{x_i}$, $\I_{u_{i,x_i}u_{j,x_j}}$, and $\I_{x_j}$ are the unique gadgets
in which the set $A$ of the gadgets are subsets of $S$.
It follows that $u_{i,x_i}u_{j,x_j} \in E(G)$.
Hence, the set $K$ forms a clique in $G$. 
\end{proof}



Given an instance $(G,k)$ of \mcc, the instance $(\G,k')$ is constructed in
polynomial time where $k'$ and $t'$ are polynomial in input size. 
By Lemma~\ref{lem:HPathwidth},
the pathwidth of $\G$ is a quadratic function of $k$.  
Finally, by Lemmas \ref{lem:mcc:Uni-PBDS} and \ref{lem:Uni-PBDS:mcc},
the Uni-PBDS instance  $(\G,k',t')$ output by the reduction is equivalent to
the \mcc instance $(G,k)$ that was input to the reduction. 
Since \mcc is known to be \wone-hard for the parameter $k$,
it follows that the Uni-PBDS problem is \wone-hard with respect to the
pathwidth parameter of the input graph. 
This complete the proof of Theorem~\ref{thm:whardpathwidth}.

\section{PBDS on Trees: PTAS and Exact Algorithm}
\label{sec:FPTAS}
In this section, we present our algorithmic results for the PBDS problem on trees.
Throughout this section, assume $\T$ is rooted at some vertex $r$. For each $x\in V$, 
denote by $\parent(x)$ the parent of $x$ in $V$, and by $\T(x)$ the subtree of $\T$ 
rooted at $x$.

\subsection{PTAS for PBDS on Trees}~\label{subsection:PTAS}
For each $v\in V$ and each $b\in[0,k]$, define ${\Y}_v(par,curr,b)$ to be the optimal value 
of $\C(V(\T(v)), S)$ where $par$ and $curr$ are boolean indicator variables that, respectively, 
denote whether or not $\parent(v)$ and $v$ are in $S$, and $b$ denotes the number of 
descendants of $v$ in~$S$. Formally, ${\Y}_v(par,curr,b)$ is represented as follows:
\begin{center}
$\arg\max\Big\{ \sum_{x\in \T(v)} \C(x,S)~~\Big|~~S\subseteq V,~|S\cap (\T(v)\setminus v)|=b, 
~curr=I_{v\in S}, ~par=I_{\parent(v)\in S}\Big\}$
\end{center}
The main idea behind our PTAS is to use the rounding method. Instead of computing ${\Y}_x$, 
we compute its approximation, represented as $\widehat{\Y}_x$. This is done in a bottom-up 
fashion, starting from leaf nodes of $\T$. For each $x\in V$, define $\delta(x)$ to be 
$|{\Y}_x - \widehat{\Y}_x|$. Throughout our algorithm, we maintain the invariant that 
$\widehat{\Y}_x\leq {\Y}_x$, for every $x\in V$.

We now present an algorithm to compute $\widehat{\Y}$.
Since ${\Y}_x$ is easy to compute for a leaf $x$, 
we set $\widehat{\Y}_x={\Y}_x$. 
For a leaf $x$,
\versiondense{
${\Y}_x(par,curr,b)$ is (i) undefined if $b\neq 0$, (ii) $\wt_x$ if 
$curr=1,b=0$, (iii) $\wt_x~p_{(\parent(x),x)}$
if $par=1,curr=0,b=0$, and (iv) zero otherwise.
}
\versionspacy{
$${\Y}_x(par,curr,b) =
\left\{
\begin{array}{ll}
\hbox{(i) undefined},  &  \hbox{if}~b\neq 0, 
\\
\hbox{(ii)}~ \wt_x,  &  \hbox{if}~curr=1,~b=0,
\\
\hbox{(iii)}~ \wt_x~p_{(\parent(x),x)},  &  \hbox{if}~par=1,~curr=0,~b=0,
\\
\hbox{(iv)}~ 0,  &  \hbox{otherwise}.
\end{array}
\right.
$$
}
Consider a non-leaf $v$. Let $z_1,\ldots,z_{t}$ be $v$'s children in $\T$, and $z_0$ be $v$'s 
parent in $\T$ (if exists). Let $\L(\beta)$, for $\beta\geq 0$, denote the collection of all integral 
vectors $\sigma=(b_1,curr_1,\ldots,b_t,curr_t)$ of length $2t$ satisfying
\versiondense{
(i) $curr_i\in\{0,1\}$ and $b_i\geq 0$, for $i\in[1,t]$, and
(ii) $\sum_{i\in[1,t]}(b_i+curr_i)=\beta$.
}
\versionspacy{
\\
(i) $curr_i\in\{0,1\}$ and $b_i\geq 0$, for $i\in[1,t]$, and
\\
(ii) $\sum_{i\in[1,t]}(b_i+curr_i)=\beta$.
\\
}
In our representation of $\sigma$ as $(b_1,curr_1,\ldots,b_t,curr_t)$,
the term $curr_i$ corresponds to the indicator 
variable representing whether or not $z_i$ lies in our tentative set $S$, and $b_i$ corresponds to 
the cardinality of $S\cap \big(V(\T(z_i))\setminus z_i\big)$. Further, for $i\in[1,t]$, let $\L_i(\beta)$ 
be the collection of those vectors $\sigma=(b_1,curr_1,\ldots,b_t,curr_t)\in \L(\beta)$ that satisfy 
$b_j,curr_j=0$ for $j>i$.

For a given $curr,par,b\geq 0$, we now explain the computation of $\widehat \Y_v(par,curr,b)$. 
Assume that we have already computed the approximate values $\widehat{\Y}_{z_i}$ $(i\in[1,t])$ 
corresponding to $v$'s children in $\T$. Setting $W = \max_{u \in V} \wt_u$, and using the scaling 
factor $M = \epsilon W/n$, let
\begin{align}
\quad
&A(\sigma) =\begin{cases*}
	\wt_v, & if curr=1,\\[1mm]
	M\bigg\lfloor\frac{\wt_v}{M}\Big(1- (1-par\cdot p_{(z_0,v)})\bigcdot
	\prod_{\substack{i\in[1,t] \\curr_i=1}}(1-p_{(z_i,v)}) \Big)\bigg\rfloor,  & otherwise,
	\end{cases*}\label{Eq:A_sigma}\\[2mm]
&B(\sigma)=\sum_{i\in[1,t]} \widehat{\Y}_{z_i}(curr,curr_i,b_i), \label{Eq:B_sigma}\\[2mm]
&\widehat{\Y}_v(par,curr,b) = \displaystyle \max_{\sigma\in \L(b)}\big(A(\sigma)+ B(\sigma)\big).
\end{align}

In order to efficiently compute $\widehat{\Y}_v$, we define the notion of {\em preferable} 
vectors. For any two vectors $\sigma_1,\sigma_2 \in \L(\beta)$, we say that $\sigma_1$ 
is {\em preferred} over $\sigma_2$ (and write $\sigma_1\geq \sigma_2$) if both
(i)~$A(\sigma_1)\geq  A(\sigma_2)$, and
(ii) $ B(\sigma_1)\geq  B(\sigma_2)$.
For $i\in[1,t]$, let $\L^*_i(\beta)$ be a {\em maximal} subset of $\L_i(b)$ such that 
$\sigma_1\ngeq \sigma_2$ for any two vectors $\sigma_1,\sigma_2\in \L^*_i(\beta)$. 

Define $\phi_v = |\{A(\sigma)~|~\sigma\in \L(\beta), \text{ for }\beta\in [0,k]\}|$.
The following observation is immediate by the definition of $\L^*_i$.

\begin{observation}\label{observation:size_Li}
For each $i\in[1,t]$ and $\beta\in[0,k]$, $|\L^*_i(\beta)|\leq \phi_v$.
\end{observation}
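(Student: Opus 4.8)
The plan is to show that the map $\sigma \mapsto A(\sigma)$ is injective on $\L^*_i(\beta)$, and then observe that its image lies inside the set counted by $\phi_v$.

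First I would recall that $\L^*_i(\beta)$ is, by its definition, an antichain with respect to the preference order: for any two distinct $\sigma_1,\sigma_2\in\L^*_i(\beta)$ we have $\sigma_1\ngeq\sigma_2$ and $\sigma_2\ngeq\sigma_1$. The key point is that $\sigma_1\geq\sigma_2$ is precisely the coordinatewise order on the pair $\big(A(\sigma),B(\sigma)\big)\in\mathbb{R}^2$. Hence, if two distinct vectors $\sigma_1,\sigma_2\in\L^*_i(\beta)$ satisfied $A(\sigma_1)=A(\sigma_2)$, then, since $B(\sigma_1)$ and $B(\sigma_2)$ are reals and therefore comparable, one of $B(\sigma_1)\geq B(\sigma_2)$ or $B(\sigma_2)\geq B(\sigma_1)$ would hold, yielding $\sigma_1\geq\sigma_2$ or $\sigma_2\geq\sigma_1$ — contradicting the fact that $\L^*_i(\beta)$ is an antichain. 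Thus all vectors in $\L^*_i(\beta)$ have pairwise distinct $A$-values, so $|\L^*_i(\beta)|$ equals the number of distinct values taken by $A(\cdot)$ on $\L^*_i(\beta)$.

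Next I would note the containments $\L^*_i(\beta)\subseteq \L_i(\beta)\subseteq \L(\beta)$, together with $\beta\in[0,k]$. Consequently every value $A(\sigma)$ with $\sigma\in\L^*_i(\beta)$ belongs to the set $\{A(\sigma)\mid \sigma\in\L(\beta),\ \beta\in[0,k]\}$, whose cardinality is $\phi_v$ by definition. Combining this with the injectivity established above gives $|\L^*_i(\beta)|\leq\phi_v$, as claimed.

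I do not anticipate any real obstacle here; the one step that deserves care is the reduction from ``antichain in the product order on $(A,B)$'' to ``distinct $A$-coordinates,'' which rests entirely on the elementary fact that two vectors sharing the same $A$-value are automatically comparable through their (totally ordered) $B$-values.
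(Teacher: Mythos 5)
Your argument is correct and is exactly the reasoning the paper leaves implicit when it calls the observation ``immediate by the definition of $\L^*_i$'': since $\L^*_i(\beta)$ is an antichain in the product order on $(A(\cdot),B(\cdot))$, equal $A$-values would force comparability via the totally ordered $B$-values, so the $A$-values are pairwise distinct and all lie in the set of size $\phi_v$. Nothing is missing.
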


In order to compute $\widehat{\Y}_v(par,curr,b)$, we explicitly compute and store 
$\L^*_i(\beta)$, for $1\leq i\leq t$.  The set $\L^*_1(\beta)$ is quite easy to compute.
Let $\sigma_1=(\beta,0,0,\ldots,0)$ and $\sigma_2=(\beta-1,1,0,\ldots,0)$ be the only 
two vectors lying in $\L_1(\beta)$. Then $\L^*_1(\beta)$ is that vector among $\sigma_1$ 
and $\sigma_2$ that maximizes the sum $A(\sigma)+B(\sigma)$. 

The lemma below provides 
an iterative procedure for computing the sets $\L^*_i(\beta)$, for $i\geq 2$.

\begin{lemma}\label{lemma:computation_Li}
For every $i,\beta\geq 1$, the set $\L^*_i(\beta)$ can be computed from
$\L^*_{i-1}(\beta)$ in time
$\widetilde O\big(\beta+\sum_{\alpha\in[0,\beta]}~|\L^*_{i-1}(\alpha)|\big)$.
\end{lemma}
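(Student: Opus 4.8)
The plan is to view the passage from $\L^*_{i-1}(\cdot)$ to $\L^*_i(\cdot)$ as ``appending coordinate $i$'' to each partial vector. Recall that a vector $\sigma\in\L_i(\beta)$ is obtained from a vector $\sigma'\in\L_{i-1}(\alpha)$ (for some $\alpha\le\beta$) by setting its $i$-th block to $(b_i,curr_i)$ with $b_i+curr_i=\beta-\alpha$; concretely either $(b_i,curr_i)=(\beta-\alpha,0)$ or $(b_i,curr_i)=(\beta-\alpha-1,1)$. Since $A(\sigma)$ depends only on the multiset $\{\,i'\le t : curr_{i'}=1\,\}$ and not on the $b$-values, appending coordinate $i$ changes $A$ in a way that depends only on whether $curr_i=1$; and $B(\sigma)=B(\sigma')+\widehat{\Y}_{z_i}(curr,curr_i,b_i)$, so $B$ just gets an additive term. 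Hence for a fixed choice of $curr$ (the indicator of whether $v\in S$, which is a parameter, not part of $\sigma$), the candidate set for $\L_i(\beta)$ is
\[
\Big\{\ \sigma'\oplus(\beta-\alpha,0)\ :\ \alpha\in[0,\beta],\ \sigma'\in\L^*_{i-1}(\alpha)\ \Big\}\ \cup\ \Big\{\ \sigma'\oplus(\beta-\alpha-1,1)\ :\ \alpha\in[0,\beta-1],\ \sigma'\in\L^*_{i-1}(\alpha)\ \Big\},
\]
where I write $\oplus$ for ``fill in block $i$''. The key point is that it suffices to extend only the \emph{maximal} (non-dominated) vectors of $\L^*_{i-1}(\alpha)$: if $\sigma'_1\ge\sigma'_2$ in $\L_{i-1}(\alpha)$ then $\sigma'_1\oplus u \ge \sigma'_2\oplus u$ for either choice of the block $u$, because the $A$-increment and $B$-increment are identical for both, and domination is preserved under adding equal quantities to both the $A$- and the $B$-component. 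So no maximal element of $\L_i(\beta)$ is lost by restricting to $\bigcup_{\alpha\le\beta}\L^*_{i-1}(\alpha)$.

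First I would generate this candidate list: iterate $\alpha$ from $0$ to $\beta$, and for each $\sigma'\in\L^*_{i-1}(\alpha)$ emit the (at most) two extensions, computing their $A$- and $B$-values in $\widetilde O(1)$ each from the stored values of $\sigma'$ and the table entry $\widehat{\Y}_{z_i}$. This produces $O\big(\beta+\sum_{\alpha\in[0,\beta]}|\L^*_{i-1}(\alpha)|\big)$ candidates. Then I would prune to an antichain: sort the candidates by $A$-value (breaking ties by $B$-value), sweep once keeping a running maximum of the $B$-value seen so far, and discard any candidate whose $B$-value does not strictly exceed that running maximum (equivalently, keep only the Pareto-optimal frontier). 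A maximal antichain $\L^*_i(\beta)$ results, and because each surviving vector has a distinct $A$-value among the $\phi_v$ possible values, the output size respects Observation~\ref{observation:size_Li}. The sorting step costs $\widetilde O\big(L_\beta\log L_\beta\big)$ where $L_\beta=\beta+\sum_{\alpha\le\beta}|\L^*_{i-1}(\alpha)|$, which is $\widetilde O(L_\beta)$, as claimed. (One should note that $curr$ ranges over the two values $\{0,1\}$, a constant, so handling both adds only a constant factor.)

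The step I expect to be the main obstacle is the \emph{correctness} of pruning to $\L^*_{i-1}$ before extending — i.e., proving that extending only non-dominated partial vectors still captures every non-dominated vector of $\L_i(\beta)$, and that the two discrete block choices $(\beta-\alpha,0)$ and $(\beta-\alpha-1,1)$ genuinely exhaust the relevant extensions (the other splittings of $\beta-\alpha$ between $b_i$ and $curr_i$ are infeasible since $curr_i\in\{0,1\}$). This requires the monotonicity facts that (a) $A(\sigma'\oplus u)-A(\sigma')$ and $B(\sigma'\oplus u)-B(\sigma')$ depend only on $u$ (not on $\sigma'$), which is clear from \eqref{Eq:A_sigma} and \eqref{Eq:B_sigma}, and (b) that $\L^*_{i-1}(\alpha)$ contains, for \emph{every} vector of $\L_{i-1}(\alpha)$, some vector dominating it — which is exactly the defining maximality of $\L^*_{i-1}$. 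Once these are in place the rest is the routine Pareto-frontier bookkeeping sketched above.
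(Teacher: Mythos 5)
Your proposal follows essentially the same route as the paper's proof: you generate candidates by filling block $i$ of every vector of $\L^*_{i-1}(\alpha)$ for all $\alpha\in[0,\beta]$ (the paper's sets ${\cal P}_{b,curr}$), justify restricting to non-dominated partial vectors by the same exchange argument (strip block $i$, pass to a dominating vector in $\L^*_{i-1}$, re-append the block), and prune to the Pareto frontier in $\widetilde O$ of the candidate count, where your sort-and-sweep is just an alternative bookkeeping to the paper's sorted list with $O(\log|{\cal P}|)$ lookups. One small caution: your claim (a) that $A(\sigma'\oplus u)-A(\sigma')$ is independent of $\sigma'$ is not literally true when $curr=0$ and $curr_i=1$ (the change is multiplicative inside the floor of Eq.~\eqref{Eq:A_sigma}), but all that is needed is the weaker fact that $A(\sigma'_1)\geq A(\sigma'_2)$ is preserved after appending the same block, which is precisely the step the paper itself dispatches as ``easily verified.''
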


\begin{proof}
Initialize $\L^*_{i}(\beta)$ to $\emptyset$. At each stage, maintain the list $\L^*_{i}(\beta)$ sorted 
by the values $A(\cdot)$, and reverse-sorted by the values $B(\cdot)$. Our algorithm to compute 
$\L^*_{i}(\beta)$ involves the following steps.
\begin{enumerate}
\item For each $curr\in \{0,1\}$ and $b\in [0,\beta]$, first compute a set ${\cal P}_{b,curr}$ obtained 
by replacing the values $b_i$ and $curr_i$ in each $\sigma\in \L^*_{i-1}(\beta - (curr+b))$ by $b$ and 
$curr$ respectively. Let
\versiondense{
${\cal P}=\bigcup_{b\in[0,\beta],curr\in\{0,1\}} ~{\cal  P}_{b,curr}$.
}
\versionspacy{
$${\cal P} ~=~ \bigcup_{b\in[0,\beta],curr\in\{0,1\}} ~{\cal  P}_{b,curr}.$$
}
\item
For each $\sigma\in {\cal P}$, check in $O(\log |{\cal P}|)$ time if there is a $\sigma'\in \L^*_{i}(\beta)$ 
that is preferred over $\sigma$ (i.e. $\sigma' \geq \sigma$). If no such $\sigma'$ exists, then
\versiondense{
(a) add $\sigma$ to $\L^*_{i}(\beta)$, and
(b) remove all those $\sigma''$ from $\L^*_{i}(\beta)$ that are less preferred
than $\sigma$, that is, 
$\sigma''<\sigma$.
}
\versionspacy{
\\
(a) add $\sigma$ to $\L^*_{i}(\beta)$, and
\\
(b) remove all those $\sigma''$ from $\L^*_{i}(\beta)$ that are less preferred
than $\sigma$, that is, 
$\sigma''<\sigma$.
\\
}
\end{enumerate}
The runtime of the algorithm is $O(\beta+ |\cal P|\log |{\cal P}|)$ which is at most 
$\widetilde O(\beta+\sum_{\alpha\in[0,\beta]}~|\L^*_{i-1}(\alpha)|)$. 

Next we now prove its correctness. Consider a $\sigma=(b_1,curr_1,\ldots,b_t,curr_t)\in \L_{i}(\beta)$. It 
suffices to show that if $\sigma\notin {\cal P}_{b_i,curr_i}$, then there exists a $\sigma'\in {\cal P}_{b_i,curr_i}$ 
satisfying $\sigma'\geq \sigma$. Let $\sigma_0$ be obtained from $\sigma$ by replacing $b_i,curr_i$ with $0$. 
Since $\sigma\notin {\cal P}_{b_i,curr_i}$, it follows that $\sigma_0\notin \L_{i-1}(\beta-(b_i+curr_i))$. 
So there must exist a vector $\sigma'_0=(b'_1,curr'_1,\ldots,b'_t,curr'_t)\in \L_{i-1}(\beta-(b_i+curr_i)$ satisfying 
$A(\sigma'_0)\geq A(\sigma_0)$ and $B(\sigma'_0)\geq B(\sigma_0)$. Let $\sigma'$ be the vector obtained from 
$\sigma'_0$ by replacing $b'_i,curr'_i$ with $b_i,curr_i$. It can be easily verified from Eq.~\eqref{Eq:A_sigma} and~\eqref{Eq:B_sigma}, that $A(\sigma')\geq A(\sigma)$ and $B(\sigma')\geq B(\sigma)$. 
Since the constructed $\sigma'$ indeed lies in ${\cal P}_{b_i,curr_i}$,
the proof follows.
\end{proof}

The following claim is an immediate corollary of Lemma~\ref{lemma:computation_Li}.

\begin{lemma}\label{lemma:computationYv}
The value of $\widehat{\Y}_v(par,curr,b)$, for any $par,curr\in \{0,1\}$ and
$b\in[0,k]$, is computable in 
$\widetilde O(b\cdot\deg(v)\cdot\phi_v)$ time, given the values
of $\widehat{\Y}_{z_i}$ for $i\leq t$.
\end{lemma}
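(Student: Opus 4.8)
The plan is to reduce the evaluation of $\widehat{\Y}_v(par,curr,b)$ to a single linear scan of the Pareto list $\L^*_t(b)$, and to build that list layer by layer by repeatedly invoking Lemma~\ref{lemma:computation_Li}, amortizing the construction of the intermediate lists over the budget values $0,1,\dots,b$.

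First I would record the reduction to $\L^*_t(b)$. By definition $\widehat{\Y}_v(par,curr,b)=\max_{\sigma\in\L(b)}\big(A(\sigma)+B(\sigma)\big)$, and $\L(b)=\L_t(b)$ since the side condition ``$b_j,curr_j=0$ for $j>t$'' is vacuous. The correctness argument inside the proof of Lemma~\ref{lemma:computation_Li} shows that the list $\L^*_t(b)$ produced by the described procedure is a \emph{dominating} antichain of $\L_t(b)$: for every $\sigma\in\L_t(b)$ there is a $\sigma'\in\L^*_t(b)$ with $A(\sigma')\ge A(\sigma)$ and $B(\sigma')\ge B(\sigma)$. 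Adding these two inequalities gives $A(\sigma')+B(\sigma')\ge A(\sigma)+B(\sigma)$, hence
\[
\widehat{\Y}_v(par,curr,b)~=~\max_{\sigma\in\L^*_t(b)}\big(A(\sigma)+B(\sigma)\big),
\]
and once $\L^*_t(b)$ is in hand this value is read off by an $O(|\L^*_t(b)|)=O(\phi_v)$-time scan, using Observation~\ref{observation:size_Li}.

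Next I would describe how $\L^*_t(b)$ is obtained, processing the budget values $\beta=0,1,\dots,b$ in increasing order so that whenever $\L^*_t(b)$ is wanted, every list $\L^*_i(\alpha)$ with $i\le t$ and $\alpha<b$ is already cached. For a fixed $\beta$, compute $\L^*_1(\beta)$ directly: it is the better (w.r.t. $A(\cdot)+B(\cdot)$) of the two vectors $(\beta,0,0,\dots,0)$ and $(\beta-1,1,0,\dots,0)$, and after a one-time $O(\deg(v))$ precomputation of $\sum_{i\in[1,t]}\widehat{\Y}_{z_i}(curr,0,0)$ the $A$- and $B$-values of both candidates are evaluated in $\widetilde O(1)$ time. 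Then, for $i=2,\dots,t$, apply Lemma~\ref{lemma:computation_Li} to get $\L^*_i(\beta)$ from the cached lists $\L^*_{i-1}(0),\dots,\L^*_{i-1}(\beta)$ (the last of which was just produced in the current round, the rest in earlier rounds); by that lemma this step costs $\widetilde O\big(\beta+\sum_{\alpha\in[0,\beta]}|\L^*_{i-1}(\alpha)|\big)$, which by Observation~\ref{observation:size_Li} is at most $\widetilde O\big(\beta+(\beta+1)\phi_v\big)=\widetilde O(\beta\,\phi_v)$. Summing over the $t\le\deg(v)$ layers, extending the cache from budget $\beta-1$ to budget $\beta$ costs $\widetilde O(\beta\,\deg(v)\,\phi_v)$; taking $\beta=b$ and adding the final $O(\phi_v)$ scan yields the stated $\widetilde O(b\,\deg(v)\,\phi_v)$ bound.

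The only real subtlety, which I would make explicit, is the accounting: computing $\L^*_t(b)$ in isolation recursively touches $\L^*_i(\alpha)$ for \emph{all} $i\le t$ and \emph{all} $\alpha\le b$, which would cost an extra factor of $b$. The point is that these intermediate Pareto lists are exactly the quantities needed for the smaller queries $\widehat{\Y}_v(par,curr,b')$ with $b'<b$, so they are computed once and reused, and charging only the incremental work of the $b$-th round to the query $\widehat{\Y}_v(par,curr,b)$ gives the claimed bound. I would also verify the two $O(1)$-per-vector bookkeeping facts implicit in Lemma~\ref{lemma:computation_Li}: when a single coordinate pair $(b_i,curr_i)$ of $\sigma$ is changed, $A(\sigma)$ updates in $O(1)$ time because it depends only on the running product $\prod_{i:curr_i=1}(1-p_{(z_i,v)})$, and $B(\sigma)=\sum_i\widehat{\Y}_{z_i}(curr,curr_i,b_i)$ updates in $O(1)$ time by subtracting $\widehat{\Y}_{z_i}(curr,0,0)$ and adding $\widehat{\Y}_{z_i}(curr,curr_i,b_i)$ — so the $|{\cal P}|$-element sweep in that lemma indeed runs in $\widetilde O(|{\cal P}|)$ time, which is what makes the final bound go through.
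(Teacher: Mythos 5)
Your proposal is correct and follows essentially the same route as the paper: express $\widehat{\Y}_v(par,curr,b)$ as $\max_{\sigma\in\L^*_t(b)}\big(A(\sigma)+B(\sigma)\big)$ and bound the construction of $\L^*_t(b)$ via Lemma~\ref{lemma:computation_Li} together with Observation~\ref{observation:size_Li}. Your explicit amortization over budgets $\beta\in[0,b]$ (caching the lists $\L^*_i(\alpha)$ for smaller budgets and charging only the incremental round-$b$ work to this query) and the $O(1)$-per-vector update of $A(\sigma)$ and $B(\sigma)$ merely spell out bookkeeping that the paper's two-line proof leaves implicit, and are consistent with how the bound is used in Proposition~\ref{lemma:total_runtime}.
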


\begin{proof}
Observe that
$\widehat{\Y}_v(par,curr,b)=\max_{\sigma\in\L^*_t(b)}\big(A(\sigma)+B(\sigma)\big)$,
where $A(\sigma)$ and $B(\sigma)$ are as defined in Eq.~\eqref{Eq:A_sigma}
and~\eqref{Eq:B_sigma}.
By Observation~\ref{observation:size_Li} and Lemma~\ref{lemma:computation_Li},
the total computation time of the set $\L^*_t(b)$ is at most
$\widetilde O(b\cdot t\cdot\phi_v)$, which is equal to
$\widetilde O(b\cdot\deg(v)\cdot\phi_v)$.
%
\end{proof}

Lemma~\ref{lemma:computationYv} implies that starting from leaf nodes,
the values $\widehat{\Y}_x(par,curr,b)$ can be computed in bottom-up manner,
for each valid choice of triplet $(par,curr,b)$ and each $x\in V$,
in total time $\widetilde O(k^2n\cdot \max_{v\in V}\phi_v)$. We now prove
$\phi_v=O(\epsilon^{-1}n)$. If $curr=1$, then $A(\sigma)$ takes only one value.
If $curr=0$, then the value of $A(\sigma)$ is a multiple of $M$ and is also bounded above by $W$.
This implies that the number of distinct values $A(\sigma)$ can take
is indeed bounded by $W/M=O(\epsilon^{-1}n)$.

\begin{proposition}\label{lemma:total_runtime}
Computing $\widehat{\Y}_x$ for all $x\in V$ takes in total
$\widetilde O(k^2n\cdot \max_{x\in V}\phi_x)=\widetilde O(k^2\epsilon^{-1}n^2)$ time.
\end{proposition}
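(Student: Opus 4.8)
The plan is a straightforward aggregation of the per-node work over the whole tree, using the bottom-up schedule. Two ingredients are already in place: first, the bound $\phi_v = O(\epsilon^{-1}n)$ established just above (when $curr=1$, $A(\sigma)$ is the single value $\wt_v$; when $curr=0$, $A(\sigma)$ is a nonnegative integer multiple of $M=\epsilon W/n$ that is at most $\wt_v\le W$, hence takes at most $W/M+2=O(\epsilon^{-1}n)$ distinct values); and second, Lemma~\ref{lemma:computationYv}, which says that for a fixed node $v$ and a fixed triplet $(par,curr,b)$ the value $\widehat{\Y}_v(par,curr,b)$ is computable in $\widetilde O(b\cdot\deg(v)\cdot\phi_v)$ time once the children's tables $\widehat{\Y}_{z_i}$ are available.

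Next I would bound the total work at a single node $v$ with children $z_1,\dots,z_t$ (so $t\le\deg(v)$). For a leaf this is $\widetilde O(k)$ using the closed form for $\Y_x$. For an internal node, computing every $\widehat{\Y}_v(par,curr,b)$ over the $O(1)$ choices of $(par,curr)$ and all $b\in[0,k]$ reduces to computing the lists $\L^*_i(\beta)$ for $i\in[1,t]$ and $\beta\in[0,k]$; by Lemma~\ref{lemma:computation_Li} and Observation~\ref{observation:size_Li} this costs $\sum_{i\le t}\sum_{\beta\le k}\widetilde O\big(\beta+\sum_{\alpha\le\beta}|\L^*_{i-1}(\alpha)|\big)=\sum_{i\le t}\sum_{\beta\le k}\widetilde O(\beta+(\beta+1)\phi_v)=\widetilde O(k^2\,t\,\phi_v)=\widetilde O(k^2\deg(v)\phi_v)$. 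Equivalently one sums the bound of Lemma~\ref{lemma:computationYv} over the $O(k)$ triplets; the point is that the $\L^*_i$ tables are constructed once and reused across all triplets, so nothing is double counted. Processing nodes in post-order, this estimate applies at every node given that its children have already been processed.

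Finally I would sum over $v\in V$. Since $\T$ is a tree, $\sum_{v\in V}\deg(v)=2(n-1)=O(n)$, and $\phi_v\le\max_{x\in V}\phi_x$ for every $v$, so the overall running time is $\sum_{v\in V}\widetilde O(k^2\deg(v)\phi_v)\le\widetilde O\big(k^2\,(\max_{x\in V}\phi_x)\sum_{v\in V}\deg(v)\big)=\widetilde O(k^2 n\max_{x\in V}\phi_x)$, and substituting $\max_{x}\phi_x=O(\epsilon^{-1}n)$ yields $\widetilde O(k^2\epsilon^{-1}n^2)$. There is no serious obstacle here; the only points requiring care are (i) charging the shared construction of the $\L^*_i$ tables to the node once rather than once per triplet, and (ii) confirming that the $\widetilde O$ absorbs only polylogarithmic factors in $n$, $k$, $\epsilon^{-1}$ and the bit-length of the weights, which is exactly what the sorting and binary-search steps inside Lemma~\ref{lemma:computation_Li} incur.
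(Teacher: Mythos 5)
Your proposal is correct and follows essentially the same route as the paper: apply Lemma~\ref{lemma:computation_Li}/Lemma~\ref{lemma:computationYv} per node, aggregate over the $O(k)$ triplets, sum over the tree using $\sum_v \deg(v)=O(n)$, and plug in $\phi_v=O(\epsilon^{-1}n)$ from the rounding to multiples of $M$. You merely make explicit the degree-sum and per-node accounting that the paper leaves implicit (noting only that the $\L^*_i$ lists also depend on the constantly many choices of $(par,curr)$, which does not affect the bound).
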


\subsection{Approximation Analysis of PTAS on Trees}
\label{appendix:approxAnalysis}

We provide here the approximation analysis of the $(1-\epsilon)$-bound.
Let
\begin{align*}
	S_\opt &~= \argmax_{{S\subseteq V, |S|=k}}~\C(V, S) ~= \argmax_{{S\subseteq V, |S|=k}}~
		\sum_{x\in V}\wt_x\Pr(x \sim S),\\
	\widehat S_\opt &~= \argmax_{{S\subseteq V, |S|=k}}~
		\bigg(\sum_{x\in S} \wt_x\Pr(x\sim S) + \sum_{x\in V\setminus S}M\Big\lfloor 
		\frac{\wt_x\Pr(x\sim S)}{M} \Big\rfloor \bigg).
\end{align*}

Observe that
\versiondense{
$\max\{{\Y}_r(0,0,k),{\Y}_r(0,1,k-1)\}=\C(V, S_\opt)$ and 
$\max\{{\widehat \Y}_r(0,0,k),{\widehat\Y}_r(0,1,k-1)\}=\C(V,\widehat S_{opt})$.
}
\versionspacy{
\\
$\max\{{\Y}_r(0,0,k),{\Y}_r(0,1,k-1)\}=\C(V, S_\opt)$ ~~and
\\
$\max\{{\widehat\Y}_r(0,0,k),{\widehat\Y}_r(0,1,k-1)\}=\C(V,\widehat S_{opt})$.
\\
}
The following lemma proves that $\widehat S_\opt$ indeed achieves a $(1-\epsilon)$-approximation bound.

\begin{lemma}
\label{lem:apxlowerbound}
$(1-\epsilon)~\C(V, S_\opt)~\leq~ \C(V, \widehat S_\opt)~\leq ~\C(V, S_\opt)$.
\end{lemma}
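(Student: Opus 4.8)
The plan is to relate $\widehat S_\opt$ to $S_\opt$ by passing through the rounded objective. Write $\widehat\C(S)=\sum_{x\in S}\wt_x\Pr(x\sim S)+\sum_{x\in V\setminus S}M\lfloor \wt_x\Pr(x\sim S)/M\rfloor$ for the quantity maximized in the definition of $\widehat S_\opt$, where $M=\epsilon W/n$ and $W=\max_{u\in V}\wt_u$. Then $\C(V,S)$ and $\widehat\C(S)$ differ only in that each summand of the second sum is rounded down to a multiple of $M$, and the whole argument rests on the two elementary inequalities $a-M\le M\lfloor a/M\rfloor\le a$, valid for $a\ge 0$.

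The upper bound is immediate: $\widehat S_\opt$ is a set of size $k$, and $S_\opt$ maximizes $\C(V,\cdot)$ over all such sets, so $\C(V,\widehat S_\opt)\le\C(V,S_\opt)$. For the lower bound I would first bound the rounding loss incurred at $S_\opt$: since at most $n$ vertices lie in $V\setminus S_\opt$ and each contributes a loss of at most $M$, we get $\widehat\C(S_\opt)\ge\C(V,S_\opt)-nM=\C(V,S_\opt)-\epsilon W$. Then I would chain $\C(V,\widehat S_\opt)\ge\widehat\C(\widehat S_\opt)\ge\widehat\C(S_\opt)\ge\C(V,S_\opt)-\epsilon W$, where the first inequality uses $\widehat\C\le\C(V,\cdot)$ (by $M\lfloor a/M\rfloor\le a$) and the second uses optimality of $\widehat S_\opt$ for $\widehat\C$.

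It remains to convert the additive $\epsilon W$ slack into the multiplicative factor $(1-\epsilon)$, and this is the only step worth a second's thought: it needs the easy lower bound $\C(V,S_\opt)\ge W$, which holds because a heaviest vertex $u$ (with $\wt_u=W$) is covered with probability $1$ by any size-$k$ set containing it, since $p(uu)=1$; one may assume $k\ge 1$, the case $k=0$ being trivial as then $\C(V,\cdot)\equiv 0$. Combining, $\C(V,\widehat S_\opt)\ge\C(V,S_\opt)-\epsilon W\ge(1-\epsilon)\,\C(V,S_\opt)$, which finishes the proof. There is no genuine obstacle in this lemma; the content lies entirely in having chosen the scale $M=\epsilon W/n$ so that the aggregate rounding error $nM=\epsilon W$ is at most an $\epsilon$-fraction of the optimum.
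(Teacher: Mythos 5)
Your proof is correct and follows essentially the same route as the paper: bound the aggregate rounding loss by $(n-k)M\le nM=\epsilon W$, chain through optimality of $\widehat S_\opt$ for the rounded objective, and use $\C(V,S_\opt)\ge W$ to turn the additive slack into the factor $(1-\epsilon)$. The only difference is cosmetic — you spell out the intermediate chain and explicitly justify $\C(V,S_\opt)\ge W$ (via the heaviest vertex and $p(uu)=1$), a step the paper leaves implicit.
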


\begin{proof}
In order to prove the first inequality, we first show that
\\
$\C(V, S_{opt}) - \C(V, \widehat S_{opt})\leq~ \epsilon~\C(V, S_{opt})$.
\begin{align*}
	\C(V, S_{opt}) - \C(V, \widehat S_{opt})
	&\leq
	\max_{\substack{S\subseteq V\\ |S|=k}}~
		\Big(\sum_{x\in V}\wt_x\Pr(x\sim S) - \sum_{x\in S} \wt_x\Pr(x\sim S)
		- \sum_{x\in V\setminus S}M\Big\lfloor \frac{\wt_x\Pr(x\sim S)}{M} \Big\rfloor \Big)\\
	&\leq (n-k) M ~\leq~ \epsilon~ W~\leq~ \epsilon~\C(V, S_{opt}).
\end{align*}
Next, for each $x\in V$ and $S\subseteq V$, we have $M\lfloor M^{-1}\wt_x\Pr(x\sim S)\rfloor\leq \wt_x\Pr(x\sim S)$, 
thereby implying that $\C(V, \widehat S_\opt)\leq \C(V, S_\opt)$. This completes our proof. 
\end{proof}

For any integer $k$, any $n$-vertex tree $\T$ with arbitrary edge probabilities, and for every $\epsilon > 0$, a $(1-\epsilon)$ approximate solution can be computed in time $\widetilde O(k^2\epsilon^{-1}n^2)$. 
This follows from Proposition~\ref{lemma:total_runtime} and Lemma~\ref{lem:apxlowerbound}.
We thus prove Theorem~\ref{thm:FPTAS}.


\subsection{Linear-time algorithm for Uni-PBDS on Trees}
We next establish our result for the scenario of Uni-PBDS on trees
(Theorem~\ref{thm:boundedProb}).
In fact, this result holds for a somewhat broader scenario,
wherein, for each vertex $x$, the cardinality of
$\textsc{prob}_x=\{p_e~|~e$~is incident to~$x\}$ is bounded above
by some constant $\gamma$.

\begin{proof}[Proof of Theorem~\ref{thm:boundedProb}]
Observe that the only place where approximation was used in our PTAS was in bounding the number of distinct values that can be taken by $A(\sigma)$ in Eq.~\eqref{Eq:A_sigma}. In order to obtain an exact solution for the bounded probabilities setting, the only modification performed in our algorithm is to redefine $A(\sigma)$ as follows.

\begin{align*}
A(\sigma) &=  \wt_v\cdot I_{(curr=1)}~+~\wt_v\Big(1- (1-par\cdot p_{(z_0,v)})\bigcdot\prod_{\substack{i\in[1,t] \\curr_i=1}}(1-p_{(z_i,v)}) \Big)\cdot I_{(curr\neq 1)}.
\end{align*}

It can be verified that the algorithm correctly computes $\Y_x$ at each step, that is, $\delta(x)$ is essentially zero.
The time it takes to compute $\Y_v(par,curr,b)$, for a non-leaf $v$, crucially depends on the cardinality of
$\{A(\sigma)~|~\sigma\in \L_t^*(b)\}$, where $t$ is the number of children of $v$ in $\T$.
Observe that the number of distinct values $A(\sigma)$ can take is at most $b^{|\textsc{prob}_x|}=O(k^\gamma)$. This along with Lemma~\ref{lemma:total_runtime} implies that the total runtime of our exact algorithm is $\widetilde O(k^{\gamma+2}n)$.
\end{proof}

\subsection{Solving PBDS optimally on general trees}

Let $c\geq 1$ be the smallest real such that
$2$-SPM problem has an $\widetilde O(N^{c})$ time algorithm. 
We will show that in such a case, $k$-PBDS can be solved optimally on trees 
with arbitrary probabilities in $\widetilde O((\delta N)^{c\lceil k/2\rceil +1})$ time, for a constant $\delta >0$.

For any node $v\in \T$, let $\T_v^i$, for $1\leq i\leq \deg(v)$, represent the 
components of the subgraph $\T\setminus\{v\}$. 
We start with the following lemma
which is easy to prove using a standard counting argument.

\begin{lemma}~\label{lemma:counting}
For any set $S$ of size $k$ in $\T$, there exist a node $v\in \T$
and an index $q\in[1,\deg(v)]$ such that
the cardinalities of the sets $S\cap \big(\bigcup_{i\lneq q} T_v^i \big)$,
$S\cap \big(T_v^q \big)$ and $S\cap \big(\bigcup_{i\gneq q} T_v^i \big)$
are all bounded by $k/2$.
\end{lemma}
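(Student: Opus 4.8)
## Proof proposal for Lemma~\ref{lemma:counting}

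The plan is to use a classical centroid-type argument adapted to subset sizes rather than tree sizes. Root $\T$ arbitrarily, and for each vertex $v$ and each child subtree hanging off $v$, record how many elements of $S$ it contains. First I would start at the root $r$ and, as long as some child subtree $\T_r^{i}$ contains more than $k/2$ elements of $S$, walk into that subtree (note that at most one child subtree can contain more than $k/2$ elements, since $|S|=k$). Continuing this descent, we must eventually reach a vertex $v$ at which \emph{every} child subtree contains at most $k/2$ elements of $S$; such a $v$ exists because the number of elements of $S$ below the current vertex strictly decreases (or the descent stops when the overflowing subtree is a leaf's subtree, which trivially has at most $k/2$ elements once $k\geq 2$, or when $S$ is exhausted below $v$).

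Now fix this vertex $v$, and consider the components $\T_v^{1},\ldots,\T_v^{\deg(v)}$ of $\T\setminus\{v\}$. Exactly one of these — call it $\T_v^{p}$ — is the component containing $\parent(v)$ (if $v=r$, skip this; then all components are child subtrees and we are already done by taking $q$ to be any index and grouping trivially, or more carefully, we argue as below). Each of the other components is a child subtree of $v$, and by the choice of $v$ each of these contains at most $k/2$ elements of $S$. The only component that might be ``large'' is $\T_v^{p}$, the ``upward'' component, which contains $|S| - |S\cap\T_v^{p}|^{c}$... more precisely it contains $k - \sum_{i\neq p}|S\cap \T_v^{i}| - |S\cap\{v\}|$ elements. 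To handle this, order the indices so that $\T_v^{p}$ receives the largest index, i.e.\ relabel so the upward component is $\T_v^{\deg(v)}$; then for the statement I would \emph{not} take $q=\deg(v)$ but rather proceed as follows: since every child subtree individually has at most $k/2$ elements, I can greedily group the child subtrees in order $\T_v^{1},\T_v^{2},\ldots$ and stop at the first index $q$ such that $\bigcup_{i\leq q}\T_v^{i}$ contains more than $\lceil k/2\rceil$...

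Let me restate the key step cleanly. Relabel the components of $\T\setminus\{v\}$ so that the (unique) component possibly containing more than $k/2$ elements of $S$ is listed \emph{last}, as $\T_v^{\deg(v)}$; all earlier components have at most $k/2$ elements each. Let $q$ be the smallest index with $\bigl|S\cap\bigcup_{i\leq q}\T_v^{i}\bigr| > k/2$ if such an index exists; otherwise the argument degenerates and I would instead move the ``split'' into $v$'s parent direction, but since the last component is the only large one, such a $q$ either does not exist (then take $q=\deg(v)$ and the prefix $\bigcup_{i<\deg(v)}$ has $\le k/2$, the middle $\T_v^{q}$ may be large — contradiction with what we want) — so the honest fix is: because $\bigl|S\cap\T_v^{i}\bigr|\le k/2$ for all $i<\deg(v)$, the prefix sums $\bigl|S\cap\bigcup_{i\le q}\T_v^{i}\bigr|$ increase in steps of size at most $k/2$, hence there is an index $q<\deg(v)$ where the prefix sum first lands in the interval $(\,? ,\,?]$ — here I pick $q$ to be the largest index with $\bigl|S\cap\bigcup_{i< q}\T_v^{i}\bigr|\le k/2$, so that $S\cap\bigl(\bigcup_{i<q}\T_v^{i}\bigr)$ has at most $k/2$ elements, $S\cap\T_v^{q}$ has at most $k/2$ elements (it is a single child subtree, as $q\le\deg(v)-1$, unless no such split is needed), and $S\cap\bigl(\bigcup_{i>q}\T_v^{i}\bigr)$ has at most $k/2$ elements because its complement $\bigl(\bigcup_{i\le q}\T_v^{i}\bigr)$ contains more than $k/2$ by maximality of $q$. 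Finally I would relabel the components so this $q$ is the desired index, matching the statement's use of $\bigcup_{i\lneq q}$, $T_v^{q}$, $\bigcup_{i\gneq q}$.

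The main obstacle, and the only subtle point, is the bookkeeping when the ``upward'' component $\T_v^{\deg(v)}$ (the one containing $\parent(v)$) is itself large: the descent guarantees every \emph{child} subtree of $v$ is small, but not that the upward component is small. The resolution, as sketched above, is that smallness of every child subtree forces the prefix sums over the child subtrees to grow slowly enough that a valid cut index $q$ among the child subtrees always exists, with the upward component absorbed into the ``$i>q$'' group whose size is controlled by taking complements. I would also double-check the boundary cases $v=r$ (no upward component) and $\deg(v)=1$ separately, both of which are immediate. The whole argument is $O(n)$ and uses only elementary counting, consistent with the lemma's claim that it follows from ``a standard counting argument.''
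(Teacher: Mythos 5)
Your walk down the tree is essentially the paper's argument, but your treatment of the component containing $\parent(v)$ leaves a genuine gap. The missing observation is that the descent itself already bounds that component: if the stopping vertex $v$ is not the root, you stepped into $v$ precisely because the subtree rooted at $v$ contains more than $k/2$ elements of $S$, so the upward component, being the complement of that subtree, contains fewer than $k/2$ elements. (This is exactly the invariant in the paper's proof, which walks on the unrooted tree: when you move from $u$ into its unique heavy component, everything on the $u$-side becomes light, so at the stopping vertex \emph{all} $\deg(v)$ components of $\T\setminus\{v\}$ hold at most $k/2$ elements of $S$.) You never prove this; you explicitly claim the descent gives no control over the upward component and then try to compensate with the prefix/complement bookkeeping.

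That compensation cannot work. If the upward component really could hold more than $k/2$ elements of $S$, then no index $q$ at this $v$ satisfies the lemma at all, since that component lies entirely inside one of the three parts. Concretely, in the case where the child subtrees of $v$ together contain at most $k/2$ elements of $S$, your ``largest $q$ with prefix at most $k/2$'' is $q=\deg(v)$, the middle part is the upward component for which you have no bound, and your complement bound on the suffix gives nothing because the prefix never exceeds $k/2$; your closing claim that ``a valid cut index among the child subtrees always exists, with the upward component absorbed into the $i>q$ group'' is false in this case. Once you insert the missing observation that the upward component is light, the patching becomes unnecessary and your argument collapses to the paper's: all components are light, choose $q$ minimal with $\bigl|S\cap(T_v^1\cup\cdots\cup T_v^q)\bigr|>k/2$ (or $q=\deg(v)$ if no such index exists), and the three bounds follow from minimality of $q$, the lightness of $T_v^q$, and $|S|=k$.
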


\begin{proof}
We first show that there exists a node $v$ in $\T$ satisfying the property 
$|S\cap T_v^i|\leq k/2$, for each $i\in[1,\deg(v)]$.
Consider a node $u\in \T$. If $u$ satisfies the above mentioned property then we are done.
Otherwise, there exists an index $j\in[1,\deg(u)]$ for which
$|S\cap T_u^{j}|\gneq k/2$.
This implies the number of elements of $S$ lying in 
$\{u\}\cup\big(T_u^1 \cup \cdots \cup T_u^{j-1} \big)\cup\big(T_u^{j+1} \cup \cdots \cup T_u^{\deg(u)} \big)$
is at most $k/2$.
In such a case we replace $u$ by its $j^{th}$ neighbor.
Repeating the process eventually leads to the required node $v$.

Now, let $q\in[1,\deg(v)]$ be the smallest integer for which
$S\cap\big(T_v^1 \cup \cdots \cup T_v^{q} \big)$ is larger than $k/2$.
Then, $S\cap \big(\bigcup_{i\lneq q} T_v^i \big)$ and
$S\cap \big(\bigcup_{i\gneq q} T_v^i \big)$ are both bounded by $k/2$,
by definition of $q$.
Also, $S\cap T_v^q$ is bounded by $k/2$ due to the choice of $v$.
\end{proof}

For the rest of this section, we refer to a tuple $(v,q)$ satisfying
the conditions stated in Lemma~\ref{lemma:counting} as a {\em valid pair}.
Let us suppose we are provided with a valid pair $(v,q)$. 
For sake of convenience, we assume that $\T$ is rooted at node $v$.
Let $U_0=T_v^q$, $U_1=\bigcup_{i\lneq q} T_v^i$, and $U_2=\bigcup_{i\gneq q} T_v^i$.
Also let $\{x_1,\ldots,x_\alpha\}$ be the children of $v$ in $U_1$, where
$\alpha=q-1$;
let $\{y_1,\ldots,y_\beta\}$ be the children of $v$ in $U_2$, where $\beta=\deg(v)-q$;
and let $z$ be $q^{th}$ child of $v$.

An important observation is that if the optimal $S$ contains $v$, then the problem is easily solvable in $O(k^2\cdot n^{k/2})$ time, as then the structures
$U_0$, $U_1$ and $U_2$ become independent.
Indeed, it suffices to consider all $O(k^2)$ partitions of $k-1$ into triplet $(c_0,c_1,c_2)$ consisting of integers 
in the range $[0,k/2]$, and next we iterate over all the $c_i\leq k/2$ subsets in $U_i$.
This takes in total $O(k^2 n^{k/2})$ time.
So the challenge is to solve the problem when $v$ is not contained in $S$.
Assuming $(v,q)$ is a valid pair, and $v$ is not contained in the optimal $S$,
the solution $S$ to $k$-PBDS is the union $S_0\cup S_1\cup S_2$
of the tuple $(S_0,S_1,S_2)\in U_0\times U_1\times U_2$ that maximizes
\begin{equation}~\label{eq:exact-pbds}
\C(U_0,S_0)+\C(U_1,S_1)+\C(U_2,S_2)+
\wt_v\Big(1- (1-d\cdot p(v,z))\bigcdot
\prod_{\substack{i\in[1,\alpha]\\x_i\in S_1}} (1-p(v,x_i))
\prod_{\substack{j\in[1,\beta]\\y_i\in S_2}} (1-p(v,y_j))
\Big)
\end{equation}
where
$d$ is an indicator variable denoting whether or not $z\in S_0$, and
$|S_1|,|S_2|,|S_3|$ are integers in the range $[0,k/2]$ that sum up to $k$.

Define $\Gamma$ to be set of all quadruples $(d,c_0,c_1,c_2)$ comprising of
integers in the range $[0,k/2]$ 
such that $d\in \{0,1\}$ and $c_0+c_1+c_2=k$. 
For each $\gamma=(d,c_0,c_1,c_2)\in \Gamma$, let

\begin{align*}
\L_\gamma^1 ~&=~ \Big\{ \Big(~\frac{\C(U_1,S_1)}{\wt_v(1-d\cdot p(v,z))}, \prod_{\substack{i\in[1,\alpha]\\x_i\in S_1}} (1-p(v,x_i))~\Big)
~\Big |~S_1\subseteq U_1 \text{ is of size }c_1 \Big\}~,\\
\L_\gamma^2 ~&=~ \Big\{ \Big(~\frac{\C(U_2,S_2)}{\wt_v(1-d\cdot p(v,z))}, \prod_{\substack{j\in[1,\beta]\\y_j\in S_2}} (1-p(v,y_j))~\Big)
~\Big |~S_2\subseteq U_2 \text{ is of size }c_2 \Big\}~,\\
Z_\gamma~&=~ \max \Big\{\C(U_0,S_0)~\Big|~S_0\subseteq U_0 \text{ is of size }c_0,\text{ and }d=I_{z\in S_0} \Big\}~.
\end{align*}
So, the maximization considered in Eq.~(\ref{eq:exact-pbds}) is equivalent to
the following optimization:
\begin{equation}\label{eq:exact-pbds-spm}
\max_{\substack{\gamma=(d,c_0,c_1,c_2)\in\Gamma\\(a,b)~\in~\L^1_\gamma,~(\bar a,\bar b)~\in~\L^2_\gamma}}~
 \wt_v+Z_\gamma + \Big(\wt_v(1-d\cdot p(v,z)) \Big)
\Big(a+\bar a- b\bar b\Big)~.
\end{equation}
In the next lemma we show that optimizing the above expression is equivalent to 
solving $|\Gamma|=O(k^2)$ different $2$-SPM problems
(each of size $O(n^{k/2})$).

\begin{lemma}
\label{lemma:2-spm-conversion}
Let $A=\big((a_1,b_1),\ldots,(a_N,b_N)\big)$ and 
$\bar A=\big((\bar a_1,\bar b_1),\ldots,(\bar a_N,\bar b_N)\big)$
be two arrays. Then, solving the maximization problem
$\max_{i_0,j_0}(a_{i_0}+\bar a_{j_0}-b_{i_0}\bar b_{j_0})$,
can be transformed in linear time to the following equivalent $2$-{\sc SPM}:
$$\L=\big((Q+a_1,Rb_1),\ldots,(Q+a_N,Rb_N),
(-Q+\bar a_1, R^{-1}\bar b_1),\ldots,(-Q+\bar a_N,R^{-1}\bar b_N)
\big)~,$$
where $Q=\max_{i,j} (b_i\bar b_j) + 2\max_{i,j}(a_i+\bar a_j)$ and 
$R =  \sqrt{4Q/\min_{i}(b_i)^2}$.
\end{lemma}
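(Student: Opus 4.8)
The plan is to check that the $2$-{\sc SPM} instance $\L$ output by the transformation has optimum value exactly $V^\star:=\max_{i_0,j_0}\big(a_{i_0}+\bar a_{j_0}-b_{i_0}\bar b_{j_0}\big)$ — the quantity we wish to maximize — and that $\L$ is computable in linear time. I will assume throughout, as is the case for the arrays $\L_\gamma^1,\L_\gamma^2$ produced in Eq.~\eqref{eq:exact-pbds-spm}, that all $a_i,\bar a_j$ are nonnegative rationals and all $b_i,\bar b_j$ are positive rationals; thus $Q>0$, $\min_\ell b_\ell>0$, and $R$ is well defined. Since $R$ cancels in the one configuration that matters (see below), one may freely replace $\sqrt{4Q/\min_\ell(b_\ell)^2}$ by any rational upper bound on it, e.g.\ $2\lceil\sqrt{Q}\,\rceil/\min_\ell b_\ell$, so that $\L$ stays over the rationals.

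Call the first $N$ pairs of $\L$ the $A$-block and the last $N$ the $\bar A$-block. A feasible solution of $2$-{\sc SPM} is a size-$2$ set $S\subseteq[2N]$, and I split into three cases. First, if $S$ picks $(Q+a_i,Rb_i)$ from the $A$-block and $(-Q+\bar a_j,R^{-1}\bar b_j)$ from the $\bar A$-block, then
\[
(Q+a_i)+(-Q+\bar a_j)-(Rb_i)(R^{-1}\bar b_j)=a_i+\bar a_j-b_i\bar b_j ,
\]
so over all such $S$ the objective ranges over exactly the values $a_i+\bar a_j-b_i\bar b_j$, with maximum $V^\star$, which is attained. Second, if $S=\{i,i'\}$ lies inside the $A$-block, the objective is $2Q+a_i+a_{i'}-R^2b_ib_{i'}$; since $b_i,b_{i'}\ge\min_\ell b_\ell$ and $R^2(\min_\ell b_\ell)^2\ge 4Q$, we get $R^2b_ib_{i'}\ge 4Q$, hence the objective is at most $a_i+a_{i'}-2Q$. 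Writing $2Q=2\max_{k,l}(b_k\bar b_l)+4\max_{k,l}(a_k+\bar a_l)$ and using $\max_{k,l}(a_k+\bar a_l)\ge a_i+\bar a_{j^\star}$ and $\max_{k,l}(a_k+\bar a_l)\ge a_{i'}+\bar a_{j^\star}$, where $(i^\star,j^\star)$ attains $V^\star$, nonnegativity of the $a$'s and $\bar a$'s yields $a_i+a_{i'}-2Q\le-2\max_{k,l}(b_k\bar b_l)\le-2b_{i^\star}\bar b_{j^\star}\le a_{i^\star}+\bar a_{j^\star}-b_{i^\star}\bar b_{j^\star}=V^\star$. Third, if $S=\{j,j'\}$ lies inside the $\bar A$-block, the objective is $-2Q+\bar a_j+\bar a_{j'}-R^{-2}\bar b_j\bar b_{j'}\le-2Q+\bar a_j+\bar a_{j'}$, and the identical estimate on $Q$ shows this is $\le V^\star$.

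Combining the three cases, the optimum of $\L$ under the ``sum minus product'' objective equals $V^\star$; in decision form, $\langle\L,2,t\rangle$ is a yes-instance of $2$-{\sc SPM} iff $\max_{i_0,j_0}(a_{i_0}+\bar a_{j_0}-b_{i_0}\bar b_{j_0})\ge t$. Computing $Q$, $R$ and the $2N$ pairs of $\L$ takes $O(N)$ arithmetic operations on numbers of polynomially bounded size, so the transformation runs in linear time. If one insists that $\L$ literally meet the positivity requirement in the definition of $2$-{\sc SPM}, add $2Q$ to every first coordinate and replace the target $t$ by $t+4Q$; because a solution selects exactly two elements this shifts the objective uniformly by $4Q$ and preserves equivalence, while making every first coordinate positive.

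The only real content is the calibration of $Q$ and $R$: $R$ is chosen so that any within-block product of second coordinates is at least $4Q$, and $Q$ so that the $\pm 2Q$ offset introduced on the summed first coordinates of a within-block pair overwhelms every attainable sum of two $a$'s or two $\bar a$'s; together these force the $2$-{\sc SPM} optimum to be realized by a pair taking one element from each block, which by construction encodes the original bipartite maximization. I expect this calibration (and, if one bothers with it, carrying it through the positivity shift) to be the only step requiring care — the rest is direct substitution into the objective.
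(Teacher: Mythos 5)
Your proposal is correct and follows essentially the same route as the paper: calibrate $Q$ and $R$ so that any pair taken within a single block is dominated, forcing the $2$-SPM optimum onto a cross pair, where the $\pm Q$ offsets and the factors $R,R^{-1}$ cancel to recover $a_{i_0}+\bar a_{j_0}-b_{i_0}\bar b_{j_0}$. The only (minor) differences are that the paper bounds within-block values by $-Q$ and notes the cross-block optimum is at least $-Q$, whereas you bound them directly by $V^\star$, and that you additionally handle the rationality of $R$ and the positivity requirement of the $k$-SPM definition, points the paper leaves implicit.
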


\begin{proof}
Consider the following $2$-SPMs obtained by two equal partitions of $\L$:
\begin{align*}
\L^1~&=~\big((Q+a_1,Rb_1),\ldots,(Q+a_N,Rb_N)\big)~\text{, and}\\
\L^2~&=~(-Q+\bar a_1, R^{-1}\bar b_1),\ldots,(-Q+\bar a_N,R^{-1}\bar b_N)\big)~.
\end{align*}

Observe that the optimal value of $\L^1$ is bounded above by
$2Q+2\max_{i}(a_i)-R^2\min_i(\bar b_i)^2$ which~is~strictly less than $-Q$.
Similarly, the optimal value of $\L^2$ is bounded above by
$2Q+2\max_{i}(\bar a_i)-\min_i(\bar b_i)^2/R^2$,
which is again strictly less than $-Q$.

Now the answer to the optimization problem
$\max_{i_0,j_0}(a_{i_0}+\bar a_{j_0}-b_{i_0}\bar b_{j_0})$ 
is at least $-Q$.
This clearly shows that the solution to $\L$ cannot be obtained by its restrictions $\L^1$ and $\L^2$.
Hence, the maximization problem
$\max_{i_0,j_0}(a_{i_0}+\bar a_{j_0}-b_{i_0}\bar b_{j_0})$ 
is equivalent to solving the $2$-SPM $\L$.
\end{proof}

\begin{proof}[Proof of Theorem~\ref{thm:twospmforpbds}]
The time to compute $\L^1_\gamma$, $\L^2_\gamma$, $Z_\gamma$,
for a given $\gamma$, is $\tilde O(n^{k/2})$.
By transformation presented in Lemma~\ref{lemma:2-spm-conversion},
it follows that the total time
required to optimize the expression in
Eq.~(\ref{eq:exact-pbds-spm}) is
$k^{O(1)}\cdot n^{c\lceil k/2\rceil }$,
which is at most $O\big((\delta n)^{c\lceil k/2\rceil +1}\big)$, for some constant $\delta\geq 1$.
Now recall that Eq.~(\ref{eq:exact-pbds-spm})
provides an optimal solution assuming
$(v,q)$ is a valid pair, and $v$ is not contained in optimal $S$.
Even when $(v,q)$ is a valid pair, and $v$ is contained in the optimal $S$,
the time complexity turns out to be $O(k^2 \cdot n^{k/2})$.
Iterating over all choices of pair $(v,q)$
incurs an additional multiplicative factor of $n$ in the runtime.
\end{proof}

\section{Parameterization based on graph structure}
\label{section:fpt-algorithms}
In this section, we state our results on structural parameterizations
of the Uni-PBDS problem.
First,
following the approach of Arnborg et al.~\cite{ArnborgLS91},
we formulate the MSOL formula for the Uni-PBDS problem where the
quantifier rank of the formula is $\Oh(k)$. 
This indeed yield an FPT algorithm for the Uni-PBDS problem parameterized by
budget $k$ and the treewidth of the input graph.

In addition, we show that the Uni-PBDS problem is FPT for the budget parameter
on apex-minor-free graphs.
In particular we show that, for any integer $k$, and any $n$-vertex weighted apex-minor free graph 
with uniform edge probability, the Uni-PBDS problem can be solved in time $(2^{\Oh(\sqrt{k} \log k)}n^{\Oh(1)})$.

\subsection{Results on Planar and Apex Minor-Free graphs}~\label{appendix: apex-minor-free}
We present here a subexponential time algorithm to solve the Uni-PBDS problem on planar and apex minor-free graphs.
The algorithm is based on the technique due to Fomin~et~al.~\cite{FominLRS11} used in the subexponential algorithm for the partial cover problem, and the claim proved in Theorem~\ref{thm:twkfpt} and Theorem~\ref{theorem:fpt-by-wk}. 
Let $H$ be a given apex graph. Then our input is an instance $\langle \G=(V, E, p, \wt), k \rangle$ of the {\sc PBDS} problem where $G=(V,E)$ is an $H$-minor-free graph.
Let $\sigma = (v_1,v_2,\ldots,v_n)$ be an ordering of the vertices in
non-increasing order of their expected coverage, that is,
$\C(V, v_1) \geq \C(V, v_2) \geq \cdots \geq \C(V, v_n)$. 
For $1 \leq i \leq n$, let $V_\sigma^i = \{v_1,\ldots,v_i\}$ and $\G_\sigma^i = \G[V_\sigma^i]$. 
Let $S_{opt}$ be an optimal solution and $i$ be the largest index of a vertex in $S_{opt}$.
The following lemma states a crucial property of $S_{opt}$.
\begin{lemma}\label{lemma:lexileast}
$S_{opt}$ is a $3$-dominating set for $\G[N[V_\sigma^i]]$.
\end{lemma}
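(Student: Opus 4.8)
The plan is to argue by contradiction via a single ``swap'' move, exploiting the way the order $\sigma$ is built and the choice of $S_{opt}$. Here it is essential that $S_{opt}$ is chosen so that its largest vertex index $i$ is minimum among all optimal solutions (this is what the lexicographically least choice guarantees); in particular, no optimal solution has a largest index smaller than $i$. Suppose, for contradiction, that some $w\in N[V_\sigma^i]$ is at distance greater than $3$ from $S_{opt}$ in $\G[N[V_\sigma^i]]$. Then $w\notin S_{opt}$, and there are two cases. If $w\in V_\sigma^i$, set $u=w$. If $w\in N(V_\sigma^i)\setminus V_\sigma^i$, pick a neighbour $v_\ell\in V_\sigma^i$ of $w$; since $w$ is at distance $>3$ from $S_{opt}$ in $\G[N[V_\sigma^i]]$, we get $v_\ell\notin S_{opt}$ and $v_\ell$ is at distance $\geq 3$ from $S_{opt}$ in $\G[N[V_\sigma^i]]$, so set $u=v_\ell$. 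In both cases $u\in V_\sigma^i\setminus S_{opt}$, the index of $u$ is strictly smaller than $i$ (because $u\neq v_i$), and $u$ is at distance $\geq 3$ from $S_{opt}$ in $\G[N[V_\sigma^i]]$.

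Next I would transfer this distance bound from the induced subgraph to $\G$: I claim $u$ is at distance $\geq 3$ from $S_{opt}$ in $\G$ as well, hence $N[u]\cap N[S_{opt}]=\emptyset$. Indeed, any walk of length at most $2$ in $\G$ from $u$ to a vertex $s\in S_{opt}$ has both endpoints in $V_\sigma^i$ and its middle vertex, if present, in $N[S_{opt}]\subseteq N[V_\sigma^i]$; such a walk therefore lies entirely inside $\G[N[V_\sigma^i]]$, contradicting the bound just established.

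Then I would perform the swap $S'=(S_{opt}\setminus\{v_i\})\cup\{u\}$; since $u\notin S_{opt}$ we have $|S'|=|S_{opt}|$. Because $N[u]$ is disjoint from $N[S_{opt}\setminus\{v_i\}]$, adjoining $u$ to $S_{opt}\setminus\{v_i\}$ raises the expected coverage by exactly $\C(V,\{u\})$. On the other hand, removing $v_i$ from $S_{opt}$ lowers the expected coverage by at most $\C(V,\{v_i\})$, since for any vertex $x$ and any set $T$ with $x\notin T$ the marginal gain of $x$ over $T$ equals $\sum_{v\in N[x]}\wt_v\,p(vx)\prod_{s\in T}(1-p(vs))\leq \sum_{v\in N[x]}\wt_v\,p(vx)=\C(V,\{x\})$ (equivalently, $S\mapsto\C(V,S)$ is submodular). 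Since the index of $u$ is smaller than that of $v_i$, the construction of $\sigma$ gives $\C(V,\{u\})\geq\C(V,\{v_i\})$, whence $\C(V,S')\geq\C(V,S_{opt})$ and $S'$ is optimal. But $S'$ is obtained from $S_{opt}$ by deleting its largest-index element $v_i$ and inserting an element of strictly smaller index, so the largest index of $S'$ is $<i$ and $S'$ precedes $S_{opt}$ lexicographically — contradicting the choice of $S_{opt}$. This contradiction shows every vertex of $N[V_\sigma^i]$ lies within distance $3$ of $S_{opt}$ in $\G[N[V_\sigma^i]]$, i.e.\ $S_{opt}$ is a $3$-dominating set of $\G[N[V_\sigma^i]]$.

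I expect the main obstacle to be the second step — faithfully relating distances in $\G$ to distances in the induced subgraph $\G[N[V_\sigma^i]]$ — together with being careful that the swap compares an \emph{exact} coverage gain $\C(V,\{u\})$ (which we get only because $u$ is well separated from $S_{opt}$) against merely an \emph{upper bound} $\C(V,\{v_i\})$ on the coverage loss incurred by dropping $v_i$; the remaining steps are short bookkeeping.
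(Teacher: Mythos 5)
Your proof is correct and rests on the same exchange idea as the paper's: swap the largest-index vertex $v_i$ of a canonically chosen optimal solution for a far-away vertex of smaller index, compare singleton coverages via the ordering $\sigma$, and contradict the minimality of $i$. The difference is in how the separation hypothesis is handled, and your handling is actually the more careful one. The paper reduces to showing that $S_{opt}$ is a $2$-dominating set of $\G_\sigma^i$ and, given a violator $v_j$ with $N_{\G_\sigma^i}[v_j]\cap N_{\G_\sigma^i}[S_{opt}]=\emptyset$, asserts the exact additivity $\C(V,S')=\C(V,S_{opt}\setminus\{v_i\})+\C(V,v_j)$; strictly speaking that equality needs the closed neighborhoods to be disjoint in all of $\G$, not merely in the induced subgraph on $V_\sigma^i$ (a common neighbour outside $V_\sigma^i$ would make the marginal gain of $v_j$ strictly smaller than $\C(V,v_j)$). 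You instead argue directly about distance $3$ in $\G[N[V_\sigma^i]]$ and prove the transfer step explicitly: any walk of length at most $2$ from a vertex of $N[V_\sigma^i]$ to $S_{opt}$ stays inside $N[V_\sigma^i]$, so the separation carries over to $\G$, which is exactly what licenses the exact gain $\C(V,\{u\})$ against the submodular upper bound $\C(V,\{v_i\})$ on the loss from deleting $v_i$. This proves precisely the statement the algorithm needs (3-domination of $\G[N[V_\sigma^i]]$) without the paper's stronger intermediate claim, and it closes the small imprecision in the paper's additivity step. One cosmetic remark: rather than asserting that ``lexicographically least'' forces the largest index to be minimal (which depends on the ordering convention), it is cleaner to choose $S_{opt}$ among optimal solutions so as to minimize its largest index $i$ --- that is all your contradiction uses, and it is what the paper's proof implicitly relies on as well.
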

 \begin{proof}
 It suffices to show that $S_{opt}$ is a $2$-dominating set for $\G_\sigma^{i}$. The proof is by contradiction. Suppose $S_{opt}$ is not a $2$-dominating set for $\G_\sigma^{i}$, then there exists a vertex $v_j \in V_\sigma^{i}$, with $j <{i}$, such that $N_{\G_\sigma^{i}}[S_{opt}] \cap N_{\G_\sigma^{i}}[v_j] = \emptyset$. Let $S' = (S_{opt}\setminus\{v_{i}\})\cup \{v_j\})$. We know that $\C(V, v_j) \geq \C(V, v_{i})$, and $v_j$ is not 2-dominated by $S$. Thus, 
 $$\C(V, S') = \C(V, S_{opt}\setminus\{v_{i}\}) + \C(V, v_j) \geq \C(V, S_{opt}) - \C(V, v_{i}) + \C(V, v_j) \geq \C(V, S_{opt}).$$
Clearly, $S'$ is also an optimal solution and $S'$ is lexicographically smaller than $S_{opt}$. This contradicts the fact that $S_{opt}$ is the lexicographically least solution. Therefore, the set $S_{opt}$ must be a $2$-dominating set for $\G_\sigma^{i}$, and  thus also a $3$-dominating set for $\tildeG_\sigma^{i}$.
 \end{proof}

We use the following structural property on apex-minor-free graphs from
Fomin et al.~\cite{FominLRS11}.

\begin{lemma}[Fomin et al.~\cite{FominLRS11}]\label{lemma:bdrdom}
If an apex-minor-free graph $G$ has an $r$-dominating set of size $k$ then the treewidth of the graph $G$ is at most $(c\bigcdot r\sqrt{k})$, where $c$ is a constant dependent only the size of the apex graph.
\end{lemma}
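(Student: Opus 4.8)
The plan is to establish this treewidth bound through the machinery of bidimensionality theory; since it is quoted from Fomin et al.~\cite{FominLRS11}, I reconstruct its proof. The argument rests on three ingredients: (i) the distance-$r$ domination number is non-increasing under edge contraction, (ii) the contraction form of the excluded-grid theorem for apex-minor-free graphs, and (iii) a volume-counting lower bound on the $r$-domination number of a grid. Writing $\gamma_r(G)$ for the size of a smallest $r$-dominating set, the goal is to show $\tw(G)\leq c\bigcdot r\sqrt{\gamma_r(G)}$ with $c$ depending only on the excluded apex, and the whole argument runs by contradiction against a putatively large treewidth.

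First I would verify the contraction-closure in~(i). If $G'$ is obtained from $G$ by contracting an edge, then for every pair of vertices the distance in $G'$ is at most the distance in $G$; hence the image of any $r$-dominating set of $G$ remains $r$-dominating in $G'$, so $\gamma_r(G')\leq \gamma_r(G)$, and iterating shows $\gamma_r$ cannot increase along any sequence of contractions. Next, for~(iii), consider the $t\times t$ grid, or more generally any partial triangulation of it: a ball of radius $r$ around any vertex contains only $O(r^2)$ vertices, since each of the $r$ steps changes the grid coordinates by a bounded amount even when diagonal edges are present. As each chosen vertex can $r$-dominate at most the $O(r^2)$ vertices in its radius-$r$ ball, and there are $t^2$ vertices to cover, every $r$-dominating set of the (partially triangulated) grid has size at least $c_2\bigcdot t^2/r^2$ for a universal constant $c_2>0$.

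The central ingredient is~(ii), and it is where the apexness is essential. For the fixed apex graph $H$ there is a constant $c_H$, depending only on $|H|$, such that every graph excluding $H$ as a minor with $\tw(G)\geq c_H\bigcdot t$ can be \emph{contracted} to a partially triangulated $t\times t$ grid~\cite{DemaineFHT05jacm}. This is the main obstacle: $r$-domination is contraction-closed but \emph{not} minor-closed, so the ordinary excluded-grid theorem, which only produces a grid minor, is insufficient; one must invoke the contraction (surface-minor) version with its \emph{linear} treewidth-to-grid bound, which is available precisely for the apex-minor-free class and fails for general $H$-minor-free graphs.

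Combining the three ingredients finishes the argument. Let $t^*$ be the largest side of a partially triangulated grid to which $G$ contracts; by maximality and the contrapositive of~(ii) we get $\tw(G)< c_H(t^*+1)\leq 2c_H t^*$. By~(i) and~(iii), $\gamma_r(G)\geq \gamma_r(\text{grid}) \geq c_2\bigcdot (t^*)^2/r^2$, whence $t^*\leq r\sqrt{\gamma_r(G)/c_2}$. Substituting yields $\tw(G)\leq c\bigcdot r\sqrt{\gamma_r(G)}$ with $c=2c_H/\sqrt{c_2}$, a constant determined solely by the size of the apex graph, which is exactly the claimed bound.
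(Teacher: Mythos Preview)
The paper does not prove this lemma; it is quoted as a black box from Fomin et al.~\cite{FominLRS11}. Your reconstruction via contraction bidimensionality---contraction-closure of $\gamma_r$, the linear grid-contraction theorem for apex-minor-free graphs, and the $\Omega(t^2/r^2)$ volume lower bound for $r$-domination of partially triangulated grids---is correct and is precisely the argument used in the cited source.
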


We next use the following lemma, to compute an approximation to treewidth of  prescribed-minor-free graphs.
\begin{lemma}[Demaine et al.~\cite{DemaineHK05}]
\label{lemma:tdaprox}
For each fixed $H$, there is a polynomial time algorithm, which 
for any $H$-minor free graph $G$ computes a tree decomposition of width  $\delta$ times the treewidth of $G$, where $\delta$ is a constant.
\end{lemma}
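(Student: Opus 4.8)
The plan is to reduce the problem of approximating treewidth to that of repeatedly finding small balanced vertex separators, and to exploit the fact that $H$-minor-free graphs admit a constant-factor separator oracle. Recall the standard structural fact: if $\tw(G)\le k$, then for every terminal set $W\subseteq V(G)$ there is a separator $S$ with $|S|\le k+1$ such that each component of $G-S$ contains at most $|W|/2$ terminals; one obtains $S$ as a suitable centroid bag of any width-$k$ tree decomposition. Hence the \emph{minimum} $W$-balanced separator never exceeds $\tw(G)+1$. The first step is therefore a generic top-down construction: maintain a current piece together with a boundary $W$ of terminals of size $O(k)$, call a balanced-separator routine to obtain a set $X$ that splits $W$ into parts of at most $2|W|/3$ terminals each, create the decomposition bag $W\cup X$, and recurse on each side with the new boundary consisting of the inherited terminals together with $X$. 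A direct invariant check shows that the boundary stays of size $O(k)$ throughout, so every bag has size $O(k)$ and the resulting tree decomposition has width $O(k)$.

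The crux is the balanced-separator routine itself. For general graphs the best polynomial-time routines lose a $\sqrt{\log k}$ factor (as in the work of Feige, Hajiaghayi, and Lee), which would only yield an $O(\sqrt{\log \tw})$-approximation. The key point that removes this factor for $H$-minor-free graphs is the bounded flow-cut gap: by the theorem of Klein, Plotkin, and Rao (and its vertex-capacitated analogue), the multicommodity flow-cut gap of an $H$-minor-free graph is bounded by a constant $\delta_0=\delta_0(H)$. Consequently, rounding the linear-programming relaxation of the balanced-separator problem by the region-growing technique yields, in polynomial time, a $\tfrac{2}{3}$-balanced separator of size at most $\delta_0$ times the optimum. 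Combined with the existence bound of the previous paragraph, this oracle returns, for any boundary $W$, a balanced separator of size $O(k)$ (with the hidden constant depending only on $H$) whenever $\tw(G)\le k$.

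Assembling the two pieces gives the result: run the top-down construction using the constant-factor separator oracle; since the optimum balanced separator at each step is bounded by $\tw(G)+1$, and both the oracle and the recursion incur only constant multiplicative overhead, the output is a tree decomposition of width $\delta\cdot\tw(G)$ for a constant $\delta$ depending only on $H$ (equivalently, on the size of the excluded apex), computed in polynomial time; note that one need not know $\tw(G)$ in advance, as the construction is driven purely by the oracle. I expect the main obstacle to be the separator oracle, specifically establishing the constant flow-cut gap and turning it into a polynomial-time rounding that respects the balance requirement for \emph{vertex} separators; this is precisely where $H$-minor-freeness is essential, since on general graphs no constant-factor balanced-separator algorithm is known. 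A secondary technical point is verifying that the boundary-size invariant in the recursion is maintained exactly, so that the accumulated constant in $\delta$ does not blow up.
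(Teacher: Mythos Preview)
The paper does not prove this lemma at all: it is quoted verbatim as a result of Demaine, Hajiaghayi, and Kawarabayashi and used as a black box in Algorithm~\ref{algorithm:fpt-amfg}. So there is no ``paper's own proof'' to compare against; any correct argument you supply is strictly more than what the paper provides.

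Your sketch is a legitimate route to the statement, though it is not the one Demaine et al.\ take. Their original argument goes through the Robertson--Seymour structure theorem for $H$-minor-free graphs (clique-sum decomposition into almost-embeddable pieces) and builds the tree decomposition directly from that structural description. Your approach is instead the separator-recursion paradigm of Feige, Hajiaghayi, and Lee: reduce treewidth approximation to a balanced-vertex-separator oracle, and supply that oracle via the constant flow--cut gap of Klein--Plotkin--Rao for excluded-minor families. Both approaches yield a constant $\delta=\delta(H)$; yours is arguably more modular, while theirs gives more explicit structural information. The one place your sketch is genuinely thin is exactly where you flag it: KPR is stated for edge cuts and multicommodity flows, and you need a vertex-capacitated version together with a rounding that preserves the $\tfrac{2}{3}$-balance on terminals. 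This can be done (e.g., by the standard vertex-splitting reduction and the region-growing argument of FHL specialised to constant gap), but it is the step that would require real work to make rigorous. The recursion invariant on boundary size is routine once the oracle returns separators of size $O(k)$.
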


Algorithm \ref{algorithm:fpt-amfg}, presented next, solves the {\sc PBDS}
problem on apex minor free graphs.
\begin{algorithm}[h]
 \KwData{An uncertain $H$-minor-free graph $\G=(V,E, p, \wt)$, 
 and an integer~$k$, where $H$ is an apex graph.}
For each $j\in[n]$, compute the $\delta$-approximate treewidth $tw_j$ of $ \G[N[V_\sigma^j]]$ using  Lemma~\ref{lemma:tdaprox}.\\
Let $i=\max\{ j~|~ tw_j \leq 3c\bigcdot\delta\bigcdot\sqrt{k}\}$.\\
Compute a tree decomposition $(\TD,X)$ of $\G[N[V_\sigma^i]]$ using Lemma~\ref{lemma:tdaprox}.\\
Run the FPT algorithm in Theorem \ref{thm:twkfpt} on the instance $\langle \G[N[V_\sigma^i]],p,w, k \rangle$ with tree decomposition $(\TD,X)$ and output the solution.
 \caption{FPT for the {\sc PBDS} problem in apex-minor-free graphs}
\label{algorithm:fpt-amfg}
\end{algorithm}

\begin{theorem}\label{theorem:fpt-amfg}
For any integer $k$, and any $n$-vertex weighted apex-minor free graph $\G$ 
with uniform edge probability, the Uni-PBDS problem can be solved in time $(2^{\Oh(\sqrt{k} \log k)}n^{\Oh(1)})$.
\end{theorem}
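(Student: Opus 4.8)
The plan is to show that Algorithm~\ref{algorithm:fpt-amfg} correctly solves Uni-PBDS on an $n$-vertex weighted apex-minor-free uncertain graph $\G$ with uniform edge probabilities, and that it runs within time $2^{O(\sqrt{k}\log k)}n^{O(1)}$. The argument rests on three ingredients already at hand: the structural fact that the lexicographically least optimal solution is a $3$-dominating set of $\G[N[V_\sigma^i]]$ for the index $i$ of its largest-index vertex (Lemma~\ref{lemma:lexileast}); the treewidth bound $O(r\sqrt{k})$ for an apex-minor-free graph admitting an $r$-dominating set of size $k$ (Lemma~\ref{lemma:bdrdom}); and the combined-parameter FPT algorithm of Theorem~\ref{thm:twkfpt}, whose running time is $k^{O(w)}n^2$ on a graph of treewidth $w$. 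Here $c$ and $\delta$ denote, respectively, the constants from Lemmas~\ref{lemma:bdrdom} and~\ref{lemma:tdaprox}.

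For correctness, I would fix the lexicographically least optimal solution $S_{opt}$ (with respect to $\sigma$) and let $i^\ast$ be the largest index of a vertex of $S_{opt}$, so $S_{opt}\subseteq V_\sigma^{i^\ast}$. By Lemma~\ref{lemma:lexileast}, $S_{opt}$ is a $3$-dominating set of $\G[N[V_\sigma^{i^\ast}]]$, whence Lemma~\ref{lemma:bdrdom} (with $r=3$, $|S_{opt}|\le k$) gives $\tw(\G[N[V_\sigma^{i^\ast}]])\le 3c\sqrt{k}$, so the $\delta$-approximate value computed for $j=i^\ast$ satisfies $tw_{i^\ast}\le 3c\delta\sqrt{k}$. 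Hence $i^\ast$ lies in the set maximized in the second line, the selected index $i$ satisfies $i\ge i^\ast$, and $S_{opt}\subseteq V_\sigma^{i^\ast}\subseteq V_\sigma^i\subseteq N[V_\sigma^i]$. The next step is to verify that passing to the induced instance $\G[N[V_\sigma^i]]$ does not change the optimum: for any $S\subseteq V_\sigma^i$, every edge joining a vertex of $N[V_\sigma^i]$ to a vertex of $S$ survives in $\G[N[V_\sigma^i]]$, so $\Pr(v\sim S)$ agrees in $\G$ and in $\G[N[V_\sigma^i]]$ for every $v\in N[V_\sigma^i]$, while every $v\notin N[V_\sigma^i]$ has no neighbour in $V_\sigma^i\supseteq S$ and hence contributes zero coverage in $\G$; consequently $\C_\G(V,S)=\C_{\G[N[V_\sigma^i]]}(N[V_\sigma^i],S)$ for all such $S$, in particular for $S_{opt}$, and the trivial inequality $\C_{\G[N[V_\sigma^i]]}(N[V_\sigma^i],S)\le\C_\G(V,S)$ then forces the two optima to coincide. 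Thus the solution returned by the call to Theorem~\ref{thm:twkfpt} in the last line is optimal for $\G$.

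For the running time I would observe that computing the single-vertex coverages $\C(V,v_j)$ and sorting to obtain $\sigma$ is polynomial, and by Lemma~\ref{lemma:tdaprox} each of the $n$ approximate-treewidth computations and the final tree-decomposition computation is polynomial, so all steps preceding the FPT call cost $n^{O(1)}$. For the selected $i$ the selection rule gives $tw_i\le 3c\delta\sqrt{k}$, hence $\tw(\G[N[V_\sigma^i]])\le 3c\delta\sqrt{k}$ and the decomposition $(\TD,X)$ produced has width $w=O(\sqrt{k})$; invoking Theorem~\ref{thm:twkfpt} then takes $k^{O(w)}n^2=k^{O(\sqrt{k})}n^2=2^{O(\sqrt{k}\log k)}n^{O(1)}$ time, which dominates and yields the claimed bound.

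The point needing the most care is the treewidth bookkeeping just above: Lemma~\ref{lemma:bdrdom} directly bounds only $\tw(\G[N[V_\sigma^{i^\ast}]])$, but the algorithm may select a strictly larger index $i$, so to certify that the graph actually fed to the Theorem~\ref{thm:twkfpt} routine still has treewidth $O(\sqrt{k})$ one must lean on the selection criterion itself together with the constant-factor slack of Lemma~\ref{lemma:tdaprox}. The remaining pieces — the equivalence of the restricted instance and the estimate $k^{O(\sqrt{k})}=2^{O(\sqrt{k}\log k)}$ — are routine.
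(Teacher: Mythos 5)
Your proposal is correct and follows essentially the same route the paper intends: combine Lemma~\ref{lemma:lexileast} with Lemma~\ref{lemma:bdrdom} to bound the treewidth of $\G[N[V_\sigma^{i}]]$ by $O(\sqrt{k})$ via the selection rule and the $\delta$-approximation of Lemma~\ref{lemma:tdaprox}, then invoke the $k^{O(w)}n^2$ algorithm of Theorem~\ref{thm:twkfpt}. The details you supply (that $i\ge i^\ast$, that restricting to $\G[N[V_\sigma^i]]$ preserves the optimum, and the $k^{O(\sqrt{k})}=2^{O(\sqrt{k}\log k)}$ bound) are exactly the bookkeeping the paper leaves implicit.
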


\section{Uni-PBDS problem parameterized by the treewidth and budget $k$}

\newcommand{\qr}{\text{qr}}
\subsection{{MSOL formulation of the Uni-PBDS problem}}\label{appendix:mso}
We show that an extension of Courcelle's theorem due to Arnborg et. al.~\cite{ArnborgLS91} results in an FPT algorithm for the combined parameters treewidth and $k$. 
This is obtained by expressing the Uni-PBDS problem as a {\em monadic second order logic} (MSOL) formula (see ~\cite{Courcelle90,CourcelleEngelfriet12}) of length $\Oh(k)$.
 The following MSOL formulas  are used in the MSOL formula for the Uni-PBDS problem. 
The upper case variables (with subscripts) take values from the set of subsets of $V$, and the lower case variables take values from $V$.
\begin{itemize}
\item The vertex set $S$ contains $d$ elements. 
\begin{eqnarray*}
{\tt SIZE}_d (S) = \exists x_1 \exists x_2 \cdots \exists x_d \forall y ( y\in S \to \bigvee_{i=0}^d ( y = x_i)) 
\end{eqnarray*}
\item Given a vertex set $X$ and a vertex $x$, there exists a set $S \subseteq X$ of size $d$, and for each vertex $y$ in $S$, $y$ is a neighbor of $x$. 
\begin{eqnarray*}
{\tt INC}_d (x, X) = \exists S({\tt SIZE}_d(S) \wedge \forall y ((y \in S \to y \in X) \wedge  (y\in S \to adj(x,y)))) 
\end{eqnarray*}
\item The sets $X$, $Y_0, Y_1,\ldots,Y_k$ partition the vertex set $V$. 
\begin{eqnarray*}
{\tt PART}(X, Y_0, Y_1,\ldots,Y_k) = \forall x \Big( && (x \in X \vee \displaystyle\bigvee_{i=0}^k x \in Y_i) \wedge \\
&& \displaystyle\bigwedge_{i=0}^k \neg(x \in X \wedge x \in Y_i) \wedge \bigwedge_{i\not= j} \neg(x \in Y_i \wedge x \in Y_j)\Big)\\
\end{eqnarray*}
\end{itemize}
Now we define the MSOL formula for the Uni-PBDS problem.  The formula expresses the statement that $V$ can be partitioned into $X$ and $V\setminus X$, and $V\setminus X$ can be partitioned into $k+1$ sets $Y_0, Y_1, \ldots, Y_k$ such that for each set $Y_i$ and each vertex $y$ in $Y_i$, $y$ has $i$ neighbors in $X$. 
\begin{eqnarray*}
\text{{\tt Uni-PBDS}} = \exists X \exists Y_0 \exists Y_1 \cdots \exists Y_k \bigg(&&{\tt PART}(X, Y_0, Y_1, \ldots, Y_k) \wedge \\
&& \forall x \forall y \Big(\big(y \in Y_0 \wedge x \in X) \to \neg\big(adj(x,y)\big)\Big)\\
&& \forall y \Big( \bigwedge_{i = 1}^k \big(y \in Y_i \to {\tt INC}_i(y, X)\big)\Big)\bigg)
\end{eqnarray*}

\begin{lemma}
\label{lem:qr}
The {\em quantifier rank} of {\tt Uni-PBDS} is $\Oh(k)$.
\end{lemma}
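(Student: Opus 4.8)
The plan is to compute the quantifier rank directly by structural induction on the formula, using the standard recursion: atomic formulas have rank $0$; Boolean connectives take the maximum of the ranks of their subformulas; negation does not change the rank; and a block of $m$ consecutive quantifiers raises the rank by exactly $m$. Since every quantifier (first-order \emph{and} set-valued) counts toward the rank, the bound will be driven entirely by nested quantifier blocks, and I expect the final value to be $2k+\Oh(1)$.

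First I would bound the three auxiliary formulas. The formula ${\tt SIZE}_d(S)$ is a block of $d$ existential first-order quantifiers followed by a single $\forall y$ over a quantifier-free body, so its rank is $d+1$. The formula ${\tt INC}_d(x,X)$ prepends one set quantifier $\exists S$ to a conjunction whose conjuncts have ranks $d+1$ (the embedded ${\tt SIZE}_d(S)$) and $1$ (the $\forall y$ clause), hence its rank is $1+(d+1)=d+2$. The formula ${\tt PART}(X,Y_0,\ldots,Y_k)$ is a single $\forall x$ over a quantifier-free body, so its rank is $1$.

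Next I would assemble the bound for the full formula {\tt Uni-PBDS}. Its matrix, sitting under the leading block of $k+2$ set quantifiers $\exists X\,\exists Y_0\cdots\exists Y_k$, is a conjunction of three parts: the subformula ${\tt PART}$, of rank $1$; the clause $\forall x\forall y\big((y\in Y_0\wedge x\in X)\to\neg adj(x,y)\big)$, of rank $2$; and the clause $\forall y\,\bigwedge_{i=1}^{k}\big(y\in Y_i\to {\tt INC}_i(y,X)\big)$, whose rank is $1+\max_{1\le i\le k}(i+2)=k+3$ since the conjunction only contributes Boolean structure and the dominant conjunct is ${\tt INC}_k$. The maximum of the three is $k+3$, and prepending the $k+2$ outermost quantifiers gives total quantifier rank $(k+2)+(k+3)=2k+5=\Oh(k)$.

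There is no genuine obstacle here; the only care needed is the bookkeeping around nesting. The key observation is that the linear-in-$k$ dependence arises from two independent sources that stack additively: the block of $k+2$ outermost set quantifiers, and the occurrence of ${\tt INC}_i$ with the index $i$ ranging up to $k$ inside a universal quantifier $\forall y$. Every quantifier-free body contributes $0$, so no other term matters, and the $\Oh(k)$ claim follows.
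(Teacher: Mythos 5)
Your proposal is correct and follows essentially the same route as the paper: count the block of $k+2$ leading set quantifiers, use the rule $\qr(\phi\wedge\psi)=\max\{\qr(\phi),\qr(\psi)\}$, and observe that the dominant inner contribution comes from ${\tt INC}_k$ (via ${\tt SIZE}_k$) nested under one universal quantifier. Your bookkeeping yields $2k+5$ while the paper states $2k+3$ after a slightly looser simplification, but this constant discrepancy is immaterial to the claimed $\Oh(k)$ bound.
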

\begin{proof}
There are $k+2$ initial quantifiers for the sets $X, Y_0, Y_1,\ldots, Y_k$. For two MSOL formulas $\phi$ and $\psi$ with quantifier rank $\qr(\phi)$ and $\qr(\psi)$, respectively, $\qr(\phi \wedge \psi) = \qr(\phi \vee \psi) = \max\{\qr(\phi), \qr(\psi)\}.$ Therefore, \qr({\tt Uni-PBDS}) is bounded as follows:
\begin{eqnarray*}
\qr(\text{\tt Uni-PBDS})&=& k+2 + \max\{\qr(\text{\tt PART}), 1+\qr(\text{\tt INC})\}\\
&=& k+2 + \max\{1, 1+\max\{\qr(\text{\tt SIZE}), 2\}\}\\
&\leq& k+2 + k = 2k + 3 = \Oh(k)
\end{eqnarray*}
\end{proof}

\noindent
We now show that the {\tt Uni-PBDS} problem
is fixed-parameter tractable
in parameters $k$ and treewidth by expressing the maximization problem on the MSOL formula as a minor variation of {\em extended monadic second-order extremum problem} as described by Arnborg et. al. \cite{ArnborgLS91}.  

\begin{proof}[Proof of Theorem~\ref{thm:twkfpt}]

For each $0 \leq i \leq k$, define the weight function $w^i$ associated with the set variable $Y_i$ as follows: for each $v \in V(G)$, 
$w_v^i = (1-(1-p)^i)w(v)$.  The difference between the weight function in \cite{ArnborgLS91} and our problem is that in their paper $w(v)$ is considered to be constant value, for all vertices, for the set variable $Y_i$.
Observe, however, that the running time of their algorithm does not change as long as $w_v^i$ can be computed in polynomial time, which is the case in our definition.  
Therefore, our maximization problem is now formulated as a variant of the EMS maximization problem in \cite{ArnborgLS91}:
\[Maximize \sum_{u \in X} w(u) + \sum_{i = 0} ^ k \sum_{u \in Y_i} w_v^i\cdot y_v^i \text{ over partitions } (X,Y_0,Y_1,\ldots,Y_k)\text{  satisfying } \text{\tt Uni-PBDS}\]
Using Theorem 5.6 in \cite{ArnborgLS91} along with the
additional observation, we make, that $w_v^i$ can be efficiently computed, an optimal solution for the Uni-PBDS problem can be computed in time
$f(\qr(\text{\tt Uni-PBDS}),w)\cdot poly(n)$, where $f(\qr(\text{\tt Uni-PBDS}),w)$ is a function which does not depend on $n$- it depends only on the quantifier rank of Uni-PBDS and the treewidth.
By Lemma \ref{lem:qr},
\qr(\text{\tt Uni-PBDS}) $=O(k)$, and thus by \cite{ArnborgLS91},
$f(\qr(\text{\tt Uni-PBDS}),w) = f(O(k), w)$.  This shows that Uni-PBDS is FPT with resepect the parameters $k$ and treewidth. Hence the theorem is proved.
\end{proof}

\subsection{An Alternative Parameterization for Uni-PBDS by treewidth and budget}
\label{sec:fpt-dp}


We present here an alternate algorithm for Uni-PBDS parametrized by treewidth and budget.
Recall that in the previous section we showed using Courcelle's theorem 
that for any integer $k$, and any $n$-vertex uncertain graph of treewidth $w$
with uniform edge probabilities, 
$k$-Uni-{\sc PBDS}
can be solved in time $O(f(k,w) n^2)$.
Here, we show that the Uni-PBDS problem for
any $n$-vertex uncertain graph of treewidth $w$
can be solved in $2^{\Oh(w \log k)}n^2$ time, thereby proving that  $f(k,w)=2^{\Oh(w \log k)}$.\\

 \noindent{\bf Definitions and Notation}. We first set up some definitions and notation required for the DP formulation.
Let $(\TD, X)$ be a nice tree decomposition of the graph $\G$ rooted at node $\blr$. 
For a node $\bli\in V(\TD)$, let $\TD_\bli$ be the subtree rooted at $\bli$ and
$\Xi^+ = \cup_{\blj \in  V(\TD_\bli)}\{\Xj\}$. 
The uncertain graph induced by the vertices $\Xi^+$ is $\G[\Xi^+]$ and it is denoted by $\Gi$.  
The expected domination function $\C$ over the graph $\Gi$ is denoted by $\Ci$.  We refer to the expected domination as coverage in the presentation below.

For each node $\bli \in \TD$, we compute two tables $\soli$ and $\vali$. 
The rows of both tables are indexed by 4-tuples which we refer to as states. $\S_\bli$ denotes the set of all states associated with node $\bli$.  For a state $s$, the DP formulation gives a recursive definition of the values $\soli[s]$ and $\vali[s]$.  $\soli[s]$ is a subset $S$ of $\Xi^+$ that achieves the optimum coverage for $\Gi$ and satisfies additional constraints specified by the state $s$. $\vali[s]$ is the value $\Ci(\Xi^+,\soli[s])$.  
A state $s$ at the node $\bli$ is a tuple $(b, \gamma, \alpha, \beta)$, where
\begin{itemize}
\item $0 \leq b \leq k$ is an integer and specifies the size of $\soli[s]$,
\item $\gamma:\Xi\to \{0,1\}$ is an indicator function for the vertices of $\Xi$. This specifies the constraint that
$\gamma^{-1}(1) \subseteq \soli[s]$ and $\gamma^{-1}(1) \cap \soli[s] = \emptyset$.  We use  $A$ to denote $\cg\inv(1)$ and the state will be clear from the context.
\item $\alpha:\Xi\to [0,k]$ is a function. The constraint is that for each vertex $u \in \gamma\inv(0)$, $u$ should have $\alpha(u)$ neighbors in $\soli[s]$.
\item $\beta: \Xi \to [0,k]$ is function. The constraint is that for each $u \in \gamma\inv(0)$, the weight of $s$ in this state is to be considered as $\wt_s(u) = \wt(u)(1-p)^{\beta(u)}$. 
\end{itemize}
{\bf The coverage problem at state $s$}: The PBDS instance at state $s$ is $( \Gi = (\Xi^+, E(\Xi^+), p, \wt_s))$ and budget $b$. $\wt_s$ is defined as follows: For each $u \in \Xi^+$, 
\begin{equation}
\label{eqn:twdpws}
\wt_s(u) = \begin{cases}
\big((1-p)^{\cb(u)}\big)\wt(u) & \text{ if } u \in \Xi \text{ and } c(u) = 0,\\
\wt(u) & \text{ otherwise,}
\end{cases}
\end{equation}
\noindent
In  the following presentation the usage of $\Gi$ and $\mathcal{C}_\bli$ will always be at a specfic state which will be made clear in the context.  
$\soli[s]$ is a subset of $\Xi^+$ which satisfies all the constraints specified by state $s$ and $\vali[s]=\Ci(\Xi^+,\soli[s]S)$ is the maximum value among $\Ci(\Xi^+,S)$ over all $S \subseteq \Xi^+$ and $|S| \leq b$.
In other words it is the optimum solution for Uni-{\sc PBDS} problem on instance $\langle \Gi, b\rangle$ and it satisfies the constraints specified by $s$.
A state $s = (b,\cg,\ca,\cb)$ at a node $\bli$ is said to be \emph{invalid} if there no feasible solution that satisfies the constraints specified by $s$, and $\soli[s] = \texttt{undefined}$ and $\vali[s] = \texttt{undefined}$. If there is a feasible solution, the state is called valid. \\  

  \noindent
{\bf State induced at a node in $\TD$ by a set}:
For a set $D \subseteq V$ of size $k$, we say that $D$ {\em induces} a state $s=(b,\cg, \ca, \cb)$ at node $\bli$
and $s$ is defined as follows:
\begin{itemize}
\item $b = |D \cap \Xi^+|$,
\item The function $\cg:\Xi\to\{0,1\}$ is defined as follows- for each $u\in D \cap \Xi$, $\cg(u) = 1$ and $\cg(u)=0$, for each $u \in D \setminus \Xi$.
\item The functions $\ca, \cb:\Xi\to[0,k]$ are defined as follows-  for each $u\in \cg\inv(0)$, $\ca(u) = |N(u) \cap \Xi^+ \cap D|$, and $\cb(u) = |N(u) \cap (V\setminus \Xi^+) \cap D|$.
\end{itemize}
Depending on $\ca$ and $\cb$ that there can be  different states induced by a set $D$.
\subsubsection{Recursive definition of $\soli$ and $\vali$}
\label{sec:recdef}
For each node $\bli \in V(\TD)$  and $s = (b,\cg,\ca,\cb) \in \Si$, we show how to compute $\soli[s]$ and $\vali[s]$ from the  tables at the children of $\bli$.  $\soli[s]$ and $\vali[s]$ are recursively defined below and we prove a statement on the structure of an optimal solution based on the type of the node $\bli$ in $\TD$.
These statements are used in Section \ref{sec:bottomup} to prove the correctness of the bottom-up evaluation.  
%

\noindent {\bf Leaf node}: Let $\bli$ be a leaf node with bag $\Xi = \emptyset$. The state set $\S_\bli$ is a singleton set with a state $s = (0, \emptyset\to\{0,1\}, \emptyset\to[k], \emptyset \to [k])$. Therefore,
$\soli[s] = \emptyset \text{ and } \vali[s] = 0.$
This can be computed in constant time.
\begin{lemma}
\label{lem:leaf}
The table entries for the state $s$ at a leaf node is computed optimally. 
\end{lemma}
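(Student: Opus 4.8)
The plan is to verify directly, from the definitions in Section~\ref{sec:fpt-dp}, that the unique state associated with a leaf node admits exactly one feasible partial solution, and that the claimed entries $\soli[s]=\emptyset$ and $\vali[s]=0$ record precisely that solution together with its coverage. Concretely, the argument has three short steps.

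First I would unwind the structure at a leaf node $\bli$. Since $\TD_\bli$ consists of $\bli$ alone, we have $\Xi^+ = \bigcup_{\blj\in V(\TD_\bli)}\Xj = \Xi = \emptyset$, so $\Gi = \G[\Xi^+]$ is the empty graph. By the definition of the state set at a leaf, $\Si$ is the singleton $\{s\}$ with $s=(0,\emptyset\to\{0,1\},\emptyset\to[k],\emptyset\to[k])$; in particular the budget component is $b=0$ and $\cg\inv(1)=\cg\inv(0)=\emptyset$.

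Second I would check that $s$ is valid and that $\emptyset$ is the only set meeting its constraints. Any $S\subseteq\Xi^+=\emptyset$ must equal $\emptyset$; this satisfies $|S|\le b=0$, satisfies $\cg\inv(1)=\emptyset\subseteq S$, and vacuously satisfies the $\ca$- and $\cb$-constraints because $\cg\inv(0)=\emptyset$. Hence $s$ is valid and $\soli[s]=\emptyset$ is the (unique, hence optimal) feasible solution. Third I would evaluate the coverage: $\Ci(\Xi^+,\emptyset)=\sum_{v\in\emptyset}\wt_s(v)\Pr(v\sim\emptyset)=0$, so $\vali[s]=0$ equals $\max\{\Ci(\Xi^+,S)\mid S\subseteq\Xi^+,\ |S|\le b\}$, which is exactly the correctness requirement.

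Since every one of these checks is immediate from the definitions, there is no genuine obstacle here; this lemma is the base case of the bottom-up induction proved in Section~\ref{sec:bottomup}. The only point that merits care is that for a leaf it is $\Xi^+$, not merely $\Xi$, that is empty, which is why the sum defining $\Ci(\Xi^+,\cdot)$ ranges over the empty vertex set and collapses to $0$; I would state this explicitly so the reader does not confuse the bag with the subtree's vertex set.
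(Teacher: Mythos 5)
Your proof is correct and follows essentially the same route as the paper's: at a leaf node $\Xi^{+}=\Xi=\emptyset$, so $\Gi$ is the null graph, the empty set is the unique (hence optimal) feasible solution for the unique state, and its coverage is $0$. Your write-up simply makes the paper's two-line argument explicit, including the helpful remark that it is $\Xi^{+}$ (not just the bag) that is empty.
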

\begin{proof}
The correctness follows from the fact that the graph $\Gi$ is a \emph{null graph}. 
Thus, for a null graph and a valid state $s$, empty set with coverage value zero is the only optimal solution. 
\end{proof}
\noindent {\bf Introduce node}: Let $\bli$ be an introduce node with child $\blj$ such that $\Xi = \Xj \cup \{v\}$ for some $v \notin \Xj$.  
Since $\bli$ is an introduce node, all the neighbors of $v$ in $\G[\Xi^+]$ are in $\Xi$. 
Thus, $N(v) \cap \Xi^+ = N(v) \cap \Xi$.   
In case, $\alpha(v) \not= |N(v) \cap A|$, then, by definition of a solution at a state, the state $s$ does not have a feasible solution. 
Therefore, the state $s$ is invalid. 
Next consider that in state $s$,  $\alpha(v) = |N(v) \cap A|$,
We  define the state $s_\blj$ and define $\soli[s]$ in terms of $\solj[s_\blj]$. 
The state $s_\blj$ differs based on whether $\cg(v) = 0$ and $\cg(v) = 1$. 
For the case  $\cg(v)=0$ the solution to be computed for the state $s$ must not contain $v$, and  for the case $\cg(v)=1$, solution to be computed must contain $v$.  

\noindent
In the case $\cg(v)=0$, define $s_\blj = (b, \cg_\blj, \ca_\blj, \cb_\blj)$ to be the state from  node $\blj$ where
the functions $\cg_\blj:\Xj\to\{0,1\}$, $\alpha_\blj:\Xj\to[0,k]$ and $\beta_\blj:\Xj\to[0,k]$ are as follows:  for each $u \in \Xj$, $\cg_\blj(u) = \cg(u)$, $\alpha_\blj(u) = \alpha(u)$ and $\beta_\blj(u) = \beta(u)$. 
If the state $s_\blj$ is invalid then the state $s$ is also invalid. 
Therefore, we consider that the state $s_\blj$ is valid. 
Then the solution at state $s$ as follows:
\begin{equation}
\label{eqn:dpinsolcv0}
\soli[s] = \solj[s_\blj] 
\end{equation}
and
\begin{equation}
\label{eqn:dpinvalcv0}
\vali[s] = \valj[s_\blj]+\big( 1- (1-p)^{\ca(v)} \big)\wt_s(v)
\end{equation}
\noindent Next, we consider the second case that $\cg(v) = 1$. 
Let $D_v = N(v) \cap \cg\inv(0)$, that is the set of neighbors of $v$ which are not to be selected in $\soli[s]$.
Indeed, these vertices must be considered as their contribution to the expected coverage will increased due to the {\em introduction} of $v$.  
Let $s_\blj = (b-1, \cg_\blj, \ca_\blj, \cb_\blj)$ be the state in the node $\blj$ where
the functions $\cg_\blj:\Xj\to\{0,1\}$, $\alpha_\blj:\Xj\to[0,k]$ and $\beta_\blj:\Xj\to[0,k]$ are defined as follows:  for each $u \in \Xj$, $\cg_\blj(u) = \cg(u)$,
$$\alpha_\blj(u) = \begin{cases} \alpha(u)-1 & \text{ if } u \in D_v,\\
\alpha(u) & \text{ otherwise},\end{cases}$$ 
and $$\beta_\blj(u) = \begin{cases} \beta(u)+1 & \text{ if } u \in D_v,\\
\beta(u) & \text{ otherwise}.\end{cases}$$  
Note the increase and decrease of $\alpha$ and $\beta$ at each $u \in D_v$: this is to take care of the fact that
a neighbor $v$ has been introduced at node $\bli$ and the aim is to compute a solution that contains $v$. Therefore, at $s_\blj$ for each $u \in \cg\inv(0)$ we consider $\alpha_\blj(u) = \alpha(u)-1$ to reflect the fact that $v$ is not in $\Xj$. To ensure that the solution computed at state $s_\blj$ gives us the desired solution at $s$, for each $u \in \cg\inv(0)$ we consider $\beta_\blj(u) = \beta(u)+1$. Recall that this ensures that $\wt_{s_\blj}(u) = (1-p)^{\beta(u)+1}\wt(u) = (1-p)\wt_s(u)$.  This is used in the coverage expressions in the proof of Lemma \ref{lem:intr}.

\noindent
We now define $\soli[s]$ and $\vali[s]$.
If the state $s_\blj$ is invalid then the state $s$ also invalid. 
Therefore, we consider that the state $s_\blj$ is valid. 
The solution for the state $s$ is defined as follows:
\begin{equation}
\label{eqn:dpinsolcv1}
\soli[s] = \solj[s_\blj] \cup \{v\}
\end{equation}
and 
\begin{equation}
\label{eqn:dpinvalcv1}
\vali[s] = \valj[s_\blj] + \sum_{u \in D_v} p(1-p)^{\alpha(u)-1}\wt_s(u) + \wt_s(v)
\end{equation}
In both the cases, the state $s_\blj$ can be computed in $\Oh(w)$ time. 

\begin{lemma}
\label{lem:intr}
Let $\bli$ be an introduce node in $\TD$ and let $s$ and $s_\blj$ be as defined above.  Let $S$ be a solution of optimum coverage at the state  $s$, then $S \setminus \{v\}$ gives the optimum coverage at the state $s_\blj$ in the node $\blj$.
\end{lemma}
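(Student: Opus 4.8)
The statement is a standard "exchange lemma" for the introduce-node step in a treewidth DP, so the plan is to argue by contradiction about optimality at the child state. First I would carefully record what it means for $S$ to be a feasible solution at state $s=(b,\cg,\ca,\cb)$ at node $\bli$: $|S|=b$, $\cg\inv(1)\subseteq S$, $S\cap\cg\inv(0)=\emptyset$ (restricted to $\Xi$), and for each $u\in\cg\inv(0)$ we have $|N(u)\cap\Xi^+\cap S|=\ca(u)$; moreover its value $\Ci(\Xi^+,S)$ is computed using the modified weights $\wt_s$ of Eq.~\eqref{eqn:twdpws}. Since $\bli$ is an introduce node, $v$ has no neighbours in $\Xi^+\setminus\Xi$, so $N(v)\cap\Xi^+ = N(v)\cap\Xi$, and $\Gi - v = \Gj$ (i.e.\ $\Xi^+\setminus\{v\}=\Xj^+$).

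The core of the argument is to check that $S\setminus\{v\}$ is feasible for the state $s_\blj$ defined in the two cases. In the case $\cg(v)=0$, we have $v\notin S$ by feasibility, so $S\setminus\{v\}=S$; one only needs to observe that restricting the constraints of $s$ to $\Xj$ gives exactly $s_\blj$ (same $\cg,\ca,\cb$ on $\Xj$, same budget $b$), using $N(v)\cap\Xi^+=N(v)\cap\Xi$ to see that removing $v$ does not change any $\ca(u)$ for $u\in\Xj$ since $v\notin S$. In the case $\cg(v)=1$, we have $v\in S$, so $|S\setminus\{v\}|=b-1$; for each $u\in D_v=N(v)\cap\cg\inv(0)$ the neighbour count drops by exactly one (as $v$ was counted in $\ca(u)$), matching $\ca_\blj(u)=\ca(u)-1$, and for $u\notin D_v$ it is unchanged; the shift $\cb_\blj(u)=\cb(u)+1$ on $D_v$ is a bookkeeping change to the weights, not a constraint on the set, so $S\setminus\{v\}$ still satisfies every set-constraint of $s_\blj$. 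In both cases I would then verify the value identity: $\Ci(\Xi^+,S)$ decomposes as $\Cj(\Xj^+,S\setminus\{v\})$ plus the contribution of $v$ itself plus, in the $\cg(v)=1$ case, the increment in coverage of the vertices in $D_v$ caused by adding the neighbour $v$; matching this against Eqs.~\eqref{eqn:dpinvalcv0}/\eqref{eqn:dpinvalcv1} shows that $S\setminus\{v\}$ achieves value $\valj[s_\blj]$ exactly when $S$ achieves $\vali[s]$.

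Finally, the contradiction: suppose $S\setminus\{v\}$ is \emph{not} optimal for $s_\blj$, i.e.\ there is a feasible $S'$ for $s_\blj$ with $\Cj(\Xj^+,S')>\Cj(\Xj^+,S\setminus\{v\})$. Then I would lift $S'$ back to node $\bli$ by setting $\widetilde S = S'$ (if $\cg(v)=0$) or $\widetilde S = S'\cup\{v\}$ (if $\cg(v)=1$), check that $\widetilde S$ is feasible for $s$ (the same constraint-checking as above, run in reverse, again using that $v$'s only $\Xi^+$-neighbours lie in $\Xi$), and compute $\Ci(\Xi^+,\widetilde S)$ using the same decomposition; the strict inequality on the $\Cj$-part carries over verbatim because the $v$-contribution term and the $D_v$-increment term depend only on $\ca(v),\ca(u),\wt_s$ — all fixed by $s$ and $s_\blj$ — and hence are identical for $\widetilde S$ and $S$. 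This yields $\Ci(\Xi^+,\widetilde S)>\Ci(\Xi^+,S)$, contradicting optimality of $S$ at $s$.

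The main obstacle I anticipate is purely in the second case ($\cg(v)=1$): one must be scrupulous about the interplay of the three simultaneous changes going from $s$ to $s_\blj$ (budget $b\to b-1$, $\ca(u)\to\ca(u)-1$ on $D_v$, $\cb(u)\to\cb(u)+1$ on $D_v$) and confirm that the weight relation $\wt_{s_\blj}(u)=(1-p)\wt_s(u)$ for $u\in D_v$ is exactly what makes the coverage increment term $\sum_{u\in D_v}p(1-p)^{\ca(u)-1}\wt_s(u)$ in Eq.~\eqref{eqn:dpinvalcv1} come out right — i.e.\ that the marginal gain in $\Pr(u\sim S)$ from $\ca(u)-1$ to $\ca(u)$ neighbours, scaled by $\wt_s(u)$, equals $p(1-p)^{\ca(u)-1}\wt_s(u)$. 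Everything else is routine constraint-matching enabled by the introduce-node property $N(v)\cap\Xi^+\subseteq\Xi$.
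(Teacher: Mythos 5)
Your proposal is correct and follows essentially the same route as the paper: decompose $\Ci(\Xi^+,S)$ into $\Cj(\Xj^+,S\setminus\{v\})$ plus a term depending only on $v$, $D_v$, and the state (using the weight shift $\wt_{s_\blj}(u)=(1-p)\wt_s(u)$ on $D_v$ and the marginal-gain identity $p(1-p)^{\ca(u)-1}\wt_s(u)$ in the $\cg(v)=1$ case), then conclude optimality of $S\setminus\{v\}$ at $s_\blj$. The only cosmetic difference is that you spell out the lift-back contradiction explicitly, whereas the paper leaves that exchange step implicit after establishing the decomposition.
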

\begin{proof}
Since $S$ is solution at state $s$, by definition $S$ induces the state $s$ at node $\bli$.  
In this proof $\Ci(\Xi^+,S)$ is the coverage at state $s$ and $\Cj(\Xj^+,S)$ is the coverage at state $s_\blj$.
The proof technique is to rewrite $\Ci(\Xi^+,S)$ as a sum of $\Cj(\Xj^+,S \setminus \{v\})$ and an additional term that depends only on $v$, its neighborhood and the state $s$.  This shows that $S \setminus \{v\}$ induces the state $s_\blj$ and attains the maximum coverage at $s_\blj$. 
We consider two cases based on whether $v \in A$ or not. 
In the case $v\notin A$, first it follows that $v \notin S$. Further, by definition 
we have $\Ci(\Xi^+, S) = \Ci(\Xi^+\setminus\{v\}, S) + \Ci(v,S)$. 
Since $s_\blj$ is identical to $s$ except for the vertex $v$ which is not in $\Xj$, it follows that for each $u \in \Xj$, $\wt_{s_{\blj}}(u) = \wt_s(u) $. Therefore,  $\Cj(\Xj^+, S) = \Ci(\Xi^+\setminus\{v\}, S) $.  Further, the state induced by the solution $S$ at node $\blj$ is the state $s_\blj$, and in this case
$S \setminus \{v\}$ is $S$ itself. Therefore, the optimum coverage at $s_\blj$ is achieved by $S \setminus \{v\}$. 

\noindent In the case $v\in A$, it follows that $v \in S$. Further, by definition of $s_\blj$, and the fact that $S$ induces the state $s$, it follows that  $S\setminus \{v\}$ induces the state $s_\blj$.
The coverage value $\Ci(\Xi^+, S)$ can be written as follows:
\begin{eqnarray*}
\Ci(\Xi^+, S) &=& \Ci(\Xi^+\setminus \Xi, S) + \Ci(\Xi\setminus(D_v \cup \{v\}), S) + \Ci(D_v, S) + \Ci(v,S)\\
&=& \Cj(\Xj^+\setminus \Xj, S\setminus\{v\}) + \Cj(\Xj\setminus D_v, S\setminus\{v\}) + \Ci(D_v, S) + \wt_s(v)
\end{eqnarray*}
The first equality follows by partitioning $\Xi^+$ into four sets so that the coverage of each set by $S$ summed up gives the coverage of $\Xi^+$ by $S$. 
In the second equality, the first two terms are re-written as coverage by $S \setminus \{v\}$ at node $\blj$, and $\Ci(v,S)$ is re-written as $\wt_s(v)$.  Since $\bli$ is an introduce node,  the key facts used in this equality are that $\Xi^+\setminus \Xi = \Xj^+\setminus \Xj$ and $\Xi \setminus(D_v \cup \{v\} =  \Xj\setminus D_v$. The other key facts used are that
$\cg_\blj$ and $\gamma$ are identical on $\Xj$, $\ca_\blj$ and $\ca$ are identical on $\Xj^+\setminus \D_v$, and $\cb_\blj$ and $\cb$ are identical on $\Xj^+\setminus \D_v$.  
The coverage value of $\Ci(D_v, S)$ is now re-written as follows, using the key fact that for all $u \in \D_v$, $\gamma(u) = 0$, and $|N(u) \cap S| = \alpha(u)$, and that the state induced by solution $S$ at node $\bli$  is the state $s$:
\begin{eqnarray*}
\Ci(D_v, S) &=& \sum_{u \in D_v} \Ci(u,S) = \sum_{u \in D_v} (1-(1-p)^{|N(u) \cap S|})\wt_s(u) = \sum_{u \in D_v} (1-(1-p)^{\ca(u)})\wt_s(u)\\
&=& \sum_{u \in D_v} (1-(1-p)^{\ca(u)})(1-p)^{\cb(u)}\wt(u)\\ 
&=& \sum_{u \in D_v} \Big(\big(1-(1-p)^{\ca(u)-1}\big)(1-p)^{\cb(u)+1} + p(1-p)^{\ca(u)-1}(1-p)^{\cb(u)}\Big)\wt(u)\\
&=& \sum_{u \in D_v} \Big(\big(1-(1-p)^{\ca(u)-1}\big)\wt_{s_\blj}(u) +  \sum_{u \in D_v}p(1-p)^{\ca(u)-1}\wt_s(u)\\
&=& \sum_{u \in D_v} \Cj(u, S\setminus\{v\}) +  \sum_{u \in D_v}p(1-p)^{\ca(u)-1}\wt_s(u)\\
&=& \Cj(D_v, S\setminus\{v\}) + \sum_{u \in D_v}p(1-p)^{\ca(u)-1}\wt_s(u)
\end{eqnarray*}
By substituting the value of $\Ci(D_v, S)$ into $\Ci(\Xi^+,S)$, we  get the following:
\begin{eqnarray*}
\Ci(\Xi^+, S) &=& \Cj(\Xj^+\setminus D_v, S\setminus\{v\}) + \Cj(D_v, S\setminus\{v\}) + \sum_{u \in D_v}p(1-p)^{\ca(u)-1}\wt_s(u) + \wt_s(v)\\
&=& \Cj(\Xj^+, S\setminus\{v\}) + \sum_{u \in D_v}p(1-p)^{\ca(u)-1}\wt_s(u) + \wt_s(v)
\end{eqnarray*}
In this case also, we have rewritten $\Ci(\Xi^+, S)$ as sum of  $\Cj(\Xj^+, S\setminus\{v\})$  and term dependent on $v$ and the state $s$.  
Thus, since $S$ is an optimal solution for state $s$ in $\bli$, it follows that $S \setminus \{v\}$ is an optimal solution for the state $s_\blj$ in $\blj$. 
\end{proof}

\noindent {\bf Forget node}: Let $\bli$ be a forget node with child $\blj$ such that $\Xi = \Xj \setminus \{v\}$ for some $v \in \Xj$.
Since $\bli$ is a forget node,  $N(v) \cap (V\setminus \Xi^+) = \emptyset$, that is, 
all neighbors of $v$ in $\G$ are in $\Gi$. Further, for each vertex $u \in \Xi$, $N[u] \cap \Xi^+ = N[u] \cap \Xj^+$. 
We define the state $s_\blj$ and define $\soli[s]$ in terms of $\solj[s_\blj]$. 
We  consider all possible values of $\cg(v)$, $\ca(v)$, and $\cb(v)$ to define the state $s_\blj$. 
These values specify the different states in $\blj$. 
For each $z \in \{0,1\}$, define $\cg^z:\Xj \to \{0,1\}$ as follows: for each $u \in \Xj$, $$\cg^z(u) = \begin{cases} \cg(u) & \text{ if } u \not= v,\\
z & \text{ u = v }.\end{cases}$$ 
The parameter $z$ specifies  whether $v$ should be in the desired solution or not. 
For each $x \leq [0,b]$, define $\alpha^{x}:\Xj \to [0,k]$ as follows: $$\alpha^{x}(u) = \begin{cases} \alpha(u) & \text{ if } u \not= v,\\
x & \text{ u = v.}\end{cases}$$ 
In the case when $\cg^{z}(v)=0$,  then the parameter $x$ specifies the number of neighbors of $v$ which should be in the desired solution. 
Define $\beta':\Xj \to [0,k]$ as follows: $$\beta'(u) = \begin{cases} \beta(u) & \text{ if } u \not= v,\\
0 & \text{ u = v.}\end{cases}$$ 
For each $z \in \{0,1\}$ and each $x \in [0,b]$, let $s^{z,x}$ denote the state $(b, \cg^z, \alpha^{x}, \beta')$ in $\blj$. \\
If for each $z \in \{0,1\}$ and each $x \in [0,b]$,  $s^{z,x}$ is invalid, then the state $s$ also invalid. 
Therefore, we consider that there exists a 2-tuple $z \in \{0,1\}$ and each $x \in [0,b]$ such that the state $s^{z,x}$ is valid. 
Further, we define the following 2-tuple as follows:
\begin{equation}
\label{eqn:dpfnargmax}
z',x' = \argmax_{\substack{z \in \{0,1\}, x \in [0,b]\\ s^{z,x}\text{ is valid}}} \valj[s^{z,x}].
\end{equation} 
Define $s_\blj = s^{z',x'} = (b, \cg^{z'},\alpha^{x'}, \beta')$ and the solution at the state $s$ as follows:
\begin{equation}
\label{eqn:dpfnsol}
\soli[s] = \solj[s_\blj],
\end{equation}
and
\begin{equation}
\label{eqn:dpfnval}
\vali[s] = \valj[s_\blj].
\end{equation}
The state $s_\blj$ can be computed in $\Oh( k)$ time.
\begin{lemma}
\label{lem:forget}
Let $\bli$ be a forget node in $\TD$, and let $s$ and $s_\blj$ be as defined above. If $S$ is a solution of optimum coverage at state $s$, then $S$ is a solution of optimum coverage at the state $s_\blj$ in node $\blj$. 
\end{lemma}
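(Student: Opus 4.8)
The plan is to follow the template of the proof of Lemma~\ref{lem:intr} for introduce nodes, using the fact that a forget node leaves the ``seen so far'' vertex set unchanged. \textbf{Step 1 (the graphs agree).} Since $\blj$ is the only child of $\bli$, we have $V(\TD_\bli)=\{\bli\}\cup V(\TD_\blj)$, and as $\Xi=\Xj\setminus\{v\}\subseteq\Xj\subseteq\Xj^+$ this yields $\Xi^+=\Xi\cup\Xj^+=\Xj^+$; hence $\Gi$ and $\Gj$ have the same vertex set and $V\setminus\Xi^+=V\setminus\Xj^+$. \textbf{Step 2 (the state weights agree).} I would check that $\wt_{s^{z,x}}$ on $\Xj^+$ equals $\wt_s$ on $\Xi^+$ for every $z\in\{0,1\}$ and $x\in[0,b]$: on $\Xi$ we have $\cg^z\equiv\cg$ and $\beta'\equiv\beta$, so Eq.~\eqref{eqn:twdpws} returns equal values; on $\Xj^+\setminus\Xj$ both equal $\wt(\cdot)$; and for the single vertex $v$ we have $\beta'(v)=0$ --- this is exactly where the forget-node property $N(v)\cap(V\setminus\Xi^+)=\emptyset$ is used --- so $\wt_{s^{z,x}}(v)=(1-p)^0\wt(v)=\wt(v)=\wt_s(v)$ regardless of $z,x$. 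It follows that $\Ci(T,S')=\Cj(T,S')$ for every $T,S'\subseteq\Xi^+=\Xj^+$.

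\textbf{Step 3 (feasibility correspondence).} I would then show that $S'\subseteq\Xi^+$ is feasible at state $s$ in $\bli$ iff it is feasible at the state $s^{z^*,x^*}$ it induces at $\blj$, where $z^*=I_{v\in S'}$ and $x^*=|N(v)\cap S'|$ (the value $x^*$ being immaterial when $z^*=1$, since then $v$ is not constrained by $\alpha^{x^*}$). Both directions are routine from the definitions: $\cg^{z^*}$, $\alpha^{x^*}$, $\beta'$ restrict on $\Xi$ to $\cg$, $\alpha$, $\beta$; the budget constraint is the same since $\Xi^+=\Xj^+$; the $v$-constraints of $s^{z^*,x^*}$ hold by the choice of $z^*,x^*$ together with $N(v)\subseteq\Xj^+$; and $\beta'(v)=0$ is consistent because $v$ has no neighbour outside $\Xj^+$. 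Restricting in the other direction, every solution feasible at a valid $s^{z,x}$ is feasible at $s$; so the valid members of $\{s^{z,x}\}$ are precisely the states induced at $\blj$ by feasible solutions at $s$, and the optimum coverage over feasible solutions at $s$ equals the maximum, over valid $s^{z,x}$, of the optimum coverage over feasible solutions at $s^{z,x}$ in $\blj$ --- which is what Eq.~\eqref{eqn:dpfnargmax} records as $\valj[s_\blj]$.

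\textbf{Step 4 (conclusion).} Let $S$ be an optimum-coverage solution at state $s$, and let $\hat s$ be the state it induces at $\blj$ (so $\hat s=s^{z^*,x^*}$ with $z^*,x^*$ as above). By Step~2, $\Cj(\Xj^+,S)=\Ci(\Xi^+,S)$, the optimum value at $s$. For an arbitrary $S'$ feasible at $\hat s$ in $\blj$, Step~3 makes $S'$ feasible at $s$, hence $\Cj(\Xj^+,S')=\Ci(\Xi^+,S')\le\Ci(\Xi^+,S)=\Cj(\Xj^+,S)$; therefore $S$ is a solution of optimum coverage at state $\hat s$ in $\blj$. Finally, since $\hat s$ is valid and its optimum value equals that of $s$ --- which by Step~3 is $\max_{z,x}\valj[s^{z,x}]$ --- the maximum in Eq.~\eqref{eqn:dpfnargmax} is attained at $\hat s$, so we may take $s_\blj=\hat s$; this gives the lemma, and along the way reconfirms $\vali[s]=\valj[s_\blj]$ and $\soli[s]=\solj[s_\blj]$ (Eqs.~\eqref{eqn:dpfnval} and~\eqref{eqn:dpfnsol}). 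I expect the one subtle point --- the main obstacle --- to be reconciling the globally $\argmax$-defined $s_\blj$ with the state the particular optimum $S$ induces at $\blj$; the resolution is precisely that every valid $s^{z,x}$ has optimum value at most that of $s$, with equality for the state induced by any optimum, so the $\argmax$ may be assumed to select it.
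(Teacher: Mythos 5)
Your proposal is correct and follows essentially the same route as the paper's own proof: using $\Xi^+=\Xj^+$ to identify the coverage at $\bli$ and $\blj$, relating $S$ to the state it induces at $\blj$, and sandwiching $\Cj(\Xj^+,S)$ between $\valj[\hat s]$ and $\valj[s_\blj]$ via the $\argmax$ in Eq.~\eqref{eqn:dpfnargmax}. The only difference is that you spell out the weight-agreement (via $\beta'(v)=0$ and the forget-node property) and the two-way feasibility correspondence, which the paper leaves implicit.
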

\begin{proof}
Let $\hat{s}$ be the state induced by $S$ in the node $\blj$. In the following argument, the coverage $\Ci(\Xi^+,S)$ is considered at state $s$ and the coverage $\Cj(\Xj^+,S)$ is considered at state $\hat{s}$ in node $\blj$.
Since $\Xi^+ = \Xj^+$, $\Ci(\Xi^+, S) = \Ci(\Xj^+, S) = \Cj(\Xj^+, S) = \valj[\hat{s}]$. 
Let us define $z = |\{v\} \cap S|$ and $x = |N[z] \cap S|$.  From Equation~\ref{eqn:dpfnargmax}, it is clear that the $z', x'$ chosen is such that $\valj[s^{z',x'}] \geq \valj[s^{z,x}] = \valj[\hat{s}]$. Since $s_\blj$ is defined as $s^{z',x'}$, $\valj[s_\blj] \geq \valj[s^{z,x}] = \valj[\hat{s}] = \Cj(\Xj^+,S)$.  Since $S$ is an optimal solution at state $s$, and since $\Xi^+ = \Xj^+$, it follows that $\Cj(\Xj^+,S) \geq \valj[s_\blj]$.  Therefore $\vali[s] = \Ci(\Xi^+,S) = \Cj(\Xj^+,S) = \valj[s_\blj]$. Hence the lemma.
\end{proof}

\noindent {\bf Join node}: Let $\bli$ be a join node with children $\blj$ and $\blh$ such that $\Xi = \Xj = \Xh$. 
We define the states $s_\blj$ and $s_\blh$, and define $\soli[s]$ in terms of $\solj[s_\blj]$ and $\solh[s_\blh]$.  $s_\blj$ and $s_\blh$ are selected from a set consisting of $O(k^\omega)$ elements.  

\noindent
To define $s_\blj$ and $s_\blh$, we observe that since $\Xi = \Xj = \Xh$, $\cg\inv(1)$ is contained in both $\Xj$ and $\Xh$. Therefore, the $\cg$ gets carried over  from $s$ to $s_\blj$ and $s_\blh$.

\noindent
Next, we identify the candidate values for the budget in the two states $s_\blj$ and $s_\blh$.  We know that in a solution $S$ which induces the state $s$ at node $\bli$ , $|\cg\inv(1)|$ vertices are in $\Xi$ and $b-|\cg\inv(1)|$ vertices are in $\Xi^+\setminus \Xi$. 
Since $\Xi^+\setminus \Xi$ can be partitioned into two sets $\Xj^+\setminus \Xj$ and $\Xh^+\setminus \Xh$, we consider a parameter $z$ to  partition the value $b-|\cg\inv(1)|$. 
For each $0 \leq z \leq b-|\cg\inv(1)|$, let $b_{\blj,z} = |\cg\inv(1)| + z$ and $b_{\blh,z} = b-z$: we consider states at nodes $\blj$ and $\blh$ with { budget} $b_{\blj,z}$ and $b_{\blh,z}$, respectively.  In other words, for each $0 \leq z \leq b-|\cg\inv(1)|$, we search for a solution of size $b_{\blj,z}$ for a subproblem on $\Xj^+$ which contains $\cg\inv(1)$ and $z$ vertices from $\Xj^+ \setminus \Xj$.  Symmetrically, in node $\blh$, we search for a solution of size $b_{\blh,z}$ for a subproblem on  $\Xh^+$ which contains $\cg\inv(1)$ and $b-\cg\inv(1)-z$ vertices from $\Xh^+ \setminus \Xh$.  These two solutions taken together gives a solution fo size $b$ at state $s$ in node $\bli$.  To ensure that the constraints specified by $\ca$ and $\cb$ are met, we next consider appropriate candidate functions in $s_\blj$ and $s_\blh$.\\

\noindent
 The different candidate functions to obtain $\ca_\blj$ and $\ca_\blh$ in the states $s_\blj$ and $s_\blh$, respectively are based on the coverage of vertices in $\cg\inv(0)$. For each $u \in \cg\inv(0)$, $\alpha(u)$ and $\beta(u)$ are  distributed between $s_\blj$ and $s_\blh$, respectively. The number of possible ways in which this can be done is defined by the following set $\A$.  
Let $$ \A = \{\eta:\cg\inv(0) \to [0,b] \mid \text{for each } u \in \cg\inv(0), 0 \leq \eta(u) \leq \alpha(u)-|N(u) \cap \cg\inv(1)|\}. $$ 
A function $\eta \in \A$ specifies that for each $u \in \cg\inv(0)$, $|N(u) \cap \cg\inv(1)|+\eta(u) $  neighbors of $u$  from $\Xj^+$ should be in the solution from the node $\blj$ and $\alpha(u)-\eta(u)$  neighbors of $u$ from $\Xh^+$ should be in the solution from the node $\blh$.  In particular, in the solution at node $\blj$, $\eta(u)$ neighbours of $u$ must be in $\Xj^+ \setminus \Xj$ and, in the solution at node $\blh$, $\alpha(u) - \eta(u) - |N(u) \cap \cg\inv(1)|$ neighbors of $u$ must be in $\Xh^+ \setminus \Xh$.
More precisely, for each $\eta \in \A$, we define the following functions. 
Let $\alpha_{\blj,\eta}:\Xj\to[0,k]$ such that for each $u \in \Xj$,
$$ \alpha_{\blj, \eta}(u) = \begin{cases} \alpha(u) & \text{ if } \cg(u) = 1,\\
|N(u) \cap \cg\inv(1)| + \eta(u) & \text{ if } \cg(u) = 0.\end{cases}$$
Let $\alpha_{\blh,\eta}:\Xh\to[0,k]$ such that for each $u \in \Xh$,
$$ \alpha_{\blh, \eta}(u) = \begin{cases} \alpha(u) & \text{ if } \cg(u) = 1,\\
\alpha(u)-\eta(u) & \text{ if } \cg(u) = 0.\end{cases}$$
Note that $N(u) \cap \cg\inv(1)$ is counted in both the nodes $\blj$ and $\blh$, and we take care of this after  identifying the candidates functions for $\beta_\blj$ and $\beta_\blh$. Recall, that at state $s$ in node $\bli$, for each $u \in \cg\inv(0)$,  $\soli[s]$ provides a coverage value for $u$ as if it has $\alpha(u) + \beta(u)$ neighbors in $\soli[s]$.  Among these, exactly $\alpha(u)$ must be selected from $\Xi^+$.  To ensure that this constraint is met, we consider the following candidate functions for $\beta_{\blj}$ and $\beta_{\blh}$.\\
For each $\eta \in \A$, define $\beta_{\blj,\eta}:\Xj\to[0,k]$ such that for each $u \in \Xj$,
$$ \beta_{\blj, \eta}(u) = \begin{cases} \beta(u) & \text{ if } \cg(u) = 1,\\
\beta(u) + \alpha(u)-|N(u) \cap \cg\inv(1)|-\eta(u) & \text{ if } \cg(u) = 0.\end{cases}$$
Observe that $\alpha_{\blj,\eta}(u)=|N(u) \cap \cg\inv(1)|-\eta(u)$ is subtracted from $\alpha(u)+\beta(u)$. Symmetrically, $\alpha_{\blh,\eta}(u)$ is  subtracted from $\alpha(u)+\beta(u)$ to obtained a candidate $\beta_\blh$.
Let $\beta_{\blh,\eta}:\Xh\to[0,k]$ such that for each $u \in \Xh$,
$$ \beta_{\blh, \eta}(u) = \begin{cases} \beta(u) & \text{ if } \cg(u) = 1,\\
\beta(u) + \eta(u) & \text{ if } \cg(u) = 0.\end{cases}$$ \\

\noindent
Using the candidate values for budget, $\cg$, $\ca$, and $\cb$ at  node $\blj$ and node $\blh$, we now specify the set of candidate states to be considered at nodes $\blj$ and $\blh$.
For each $0 \leq z \leq b-|\cg\inv(1)|$ and each $\eta \in \A$, let $s_{\blj,z,\eta} = (b_{\blj,z}, \cg, \alpha_{\blj,\eta}, \beta_{\blj,\eta})$ and $s_{\blh,z,\eta} = (b_{\blh,z}, \cg, \alpha_{\blh,\eta}, \beta_{\blh,\eta})$. \\

\noindent
Next, to write $\soli[s]$ in terms of coverage of $\Xj^+$ and $\Xh^+$ we need to identify vertices whose contribution the coverage would be over-counted when we take the union of $\soli[s_\blj]$ and $\soli[s_\blh]$. Since $\Xi = \Xj = \Xh$, two cases need to be handled.
First, the vertices on $\cg\inv(1)$ are counted twice, once in $\Xj$ and once in $\Xh$.  Therefore, in the expansion of $\Ci(\Xi^+, S)$ in terms of the coverage at $\Xj$ and $\Xh$, we will have to subtract out $\sum_{u \in \cg\inv(1)}\wt_s(u)$.  Secondly,  for each $\eta \in \A$ and each $u \in \cg\inv(0)$, the coverage of $u$ by $\cg\inv(1)$ is counted twice, once in $\Xj$ and once in $\Xh$.  To subtract this over-counting we introduce the following function $\lambda$ associated with the state $s$ at the join node $\bli$. It is an easy arithmetic exercise to verify that $\lambda(\eta,u)$ is the value that must be subtracted.
For each $\eta \in \A$ and for each $u \in \cg\inv(0)$, let $$ \lambda(\eta,u) = \big( (1-p)^{\alpha(u)-|N(u) \cap \cg\inv(0)|-\eta(u)} + (1-p)^{\eta(u)} - (1-p)^{\alpha(u)}-1\big)\wt_s(u).$$

\noindent
Finally, we come to the recursive specification of $\soli[s]$. 
If for each $0 \leq z \leq b-|\cg\inv(1)|$ and each $\eta \in \A$, either $s_{\blj,z,\eta}$ or $s_{\blh,z,\eta}$ is invalid, then $s$ is also invalid. 
Therefore, we consider those values of $0 \leq z \leq b-|\cg\inv(1)|$ and $\eta \in \A$ such that both the states $s_{\blj,z,\eta}$ and $s_{\blh,z,\eta}$ are valid. 
Further, we define the following tuple :
\begin{equation}
\label{eqn:dpjnargmax}
z',\eta' = \argmax_{\substack{0 \leq z \leq b-|\cg\inv(1)|, \eta \in \A,\\ s_{\blj,z,\eta} \text{ and } s_{\blh,z,\eta} \text{ are valid}}}\valj[s_{\blj,z,\eta}] + \valh[s_{\blh,z,\eta}] - \sum_{u \in \cg\inv(0)}\lambda(\eta,u)
\end{equation}
Define $s_\blj = s_{\blj,z',\eta'} = (b_{\blj,z'},\cg,\alpha_{\blj,\eta'}, \beta_{\blj,\eta'})$ and $s_\blh = s_{\blh,z',\eta'} = (b_{\blh,z'},\cg,\alpha_{\blh,\eta'}, \beta_{\blh,\eta'})$.
Then, the solution for the state $s$ is defined as follows:
\begin{equation}
\label{eqn:dpjnsol}
\soli[s] = \solj[s_\blj] \cup \solh[s_\blh],
\end{equation}
and
\begin{equation}
\label{eqn:dpjnval}
\vali[s] = \valj[s_\blj] + \valh[s_\blh] - \sum_{u \in \cg\inv(0)} \lambda(\eta',u)-\sum_{u \in \cg\inv(1)}\wt_s(u).
\end{equation}
The cardinality of the set $\mathcal{A}$ is at most $\Oh(k^{w})$. This is the most dominant term in the size of the recursive definition. This is becuase for each fixed $\eta \in \mathcal{A}$, $\ca$ and $\cb$ are uniquely defined.  Then, the states $s_\blj$ and $s_\blh$ can be computed in $\Oh(k^{w+1}wn)$ time. 
\begin{lemma}
\label{lem:join}
Let $s$ be a state in the join node $\bli$ in $\TD$ and let $s_\blj$ and $s_\blh$ be as defined above. 
Let $S$ be a solution of optimum coverage at state $s$, then $\Ci(\Xi^+, S) \leq \vali[s]$.
\end{lemma}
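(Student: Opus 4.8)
The plan is to restrict the feasible solution $S$ to the two subtrees hanging below $\bli$, read off from this restriction the pair $(z,\eta)$ that the recursion ranges over in Eq.~\eqref{eqn:dpjnargmax}, and then match the coverage of $S$ against $\vali[s]$ term by term. Write $S_\blj = S\cap\Xj^+$ and $S_\blh = S\cap\Xh^+$. Since $\bli$ is a join node, $\Xi=\Xj=\Xh$, and the connectivity property of tree decompositions gives $\Xj^+\cap\Xh^+=\Xi$, $\Xj^+\cup\Xh^+=\Xi^+$, and the fact that every $\G$-neighbour of a vertex of $\Xj^+\setminus\Xj$ lies in $\Xj^+$ (symmetrically for $\Xh^+\setminus\Xh$). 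Because $S$ induces the state $s$ at $\bli$ (only this is used, not optimality of $S$), we have $S\cap\Xi=\cg\inv(1)$, hence $S=S_\blj\cup S_\blh$ and $S_\blj\cap S_\blh=\cg\inv(1)$. Note also that $(\Xj^+\setminus\Xi)$ and $(\Xh^+\setminus\Xi)$ are disjoint, so $\Xi^+$ is partitioned into $(\Xj^+\setminus\Xi)$, $\cg\inv(1)$, $\cg\inv(0)$, and $(\Xh^+\setminus\Xi)$.

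Next I would set $z:=|S\cap(\Xj^+\setminus\Xj)|$ and, for each $u\in\cg\inv(0)$, $\eta(u):=|N(u)\cap(\Xj^+\setminus\Xj)\cap S|$. Using that $S$ induces $s$ — so $|N(u)\cap\Xi^+\cap S|=\alpha(u)$ and $N(u)\cap\Xi\cap S=N(u)\cap\cg\inv(1)$ — together with the disjointness above, one checks $0\le z\le b-|\cg\inv(1)|$ and $0\le\eta(u)\le\alpha(u)-|N(u)\cap\cg\inv(1)|$, so $\eta\in\A$ and $(z,\eta)$ is among the pairs quantified over in Eq.~\eqref{eqn:dpjnargmax}. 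Then I would verify that $S_\blj$ satisfies every constraint of the state $s_{\blj,z,\eta}$: $|S_\blj|=|\cg\inv(1)|+z=b_{\blj,z}$, $S_\blj\cap\Xj=\cg\inv(1)$, and $|N(u)\cap S_\blj|=|N(u)\cap\cg\inv(1)|+\eta(u)=\alpha_{\blj,\eta}(u)$ for every $u\in\cg\inv(0)$; hence $s_{\blj,z,\eta}$ is valid, and by the correctness of the child table (established in the bottom-up analysis, cf. Section~\ref{sec:bottomup}) $\Cj(\Xj^+,S_\blj)\le\valj[s_{\blj,z,\eta}]$. The same argument, using the complementary split of the neighbours of $u$, handles $s_{\blh,z,\eta}$ and $S_\blh$, giving $\Ch(\Xh^+,S_\blh)\le\valh[s_{\blh,z,\eta}]$.

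The core step is a vertex-by-vertex decomposition of $\Ci(\Xi^+,S)$ along this partition. For $u\in\Xj^+\setminus\Xi$ all $\G$-neighbours of $u$ lie in $\Xj^+$ and $\wt_s(u)=\wt_{s_{\blj,z,\eta}}(u)=\wt(u)$ by Eq.~\eqref{eqn:twdpws}, so $\Ci(u,S)=\Cj(u,S_\blj)$; symmetrically $\Ci(u,S)=\Ch(u,S_\blh)$ for $u\in\Xh^+\setminus\Xi$. For $u\in\cg\inv(1)\subseteq S$ we have $\Ci(u,S)=\Cj(u,S_\blj)=\Ch(u,S_\blh)=\wt_s(u)$, which accounts for the term $-\sum_{u\in\cg\inv(1)}\wt_s(u)$. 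For $u\in\cg\inv(0)$, substituting $\wt_{s_{\blj,z,\eta}}(u)=(1-p)^{\beta(u)+\alpha(u)-|N(u)\cap\cg\inv(1)|-\eta(u)}\wt(u)$, $\wt_{s_{\blh,z,\eta}}(u)=(1-p)^{\beta(u)+\eta(u)}\wt(u)$ and the neighbour counts $\alpha_{\blj,\eta}(u),\alpha_{\blh,\eta}(u)$ into the coverage formula, a short computation shows $\Cj(u,S_\blj)+\Ch(u,S_\blh)-\lambda(\eta,u)=(1-p)^{\beta(u)}\wt(u)\bigl(1-(1-p)^{\alpha(u)}\bigr)=\Ci(u,S)$, which is exactly the identity that $\lambda$ was designed to enforce. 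Summing over $\Xi^+$ gives
\[
\Ci(\Xi^+,S)=\Cj(\Xj^+,S_\blj)+\Ch(\Xh^+,S_\blh)-\sum_{u\in\cg\inv(0)}\lambda(\eta,u)-\sum_{u\in\cg\inv(1)}\wt_s(u),
\]
and plugging in the bounds from the previous paragraph, the right-hand side is at most $\valj[s_{\blj,z,\eta}]+\valh[s_{\blh,z,\eta}]-\sum_{u\in\cg\inv(0)}\lambda(\eta,u)-\sum_{u\in\cg\inv(1)}\wt_s(u)$, which is at most the maximum over all valid $(z',\eta')$ in Eq.~\eqref{eqn:dpjnargmax}, i.e. $\vali[s]$ as defined in Eq.~\eqref{eqn:dpjnval}.

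I expect the case $u\in\cg\inv(0)$ to be the main obstacle: one has to track how the ``pretend-neighbour'' exponent $\beta(u)$ is split between the two children — the $\Xh^+$-side neighbours of $u$ in $S$ become part of $\beta_{\blj,\eta}(u)$ and the $\Xj^+$-side ones become part of $\beta_{\blh,\eta}(u)$, while the shared neighbours in $\cg\inv(1)$ get counted on both sides — verify the resulting $\lambda$-identity, and simultaneously confirm that the neighbour-count constraints $\alpha_{\blj,\eta},\alpha_{\blh,\eta}$ are satisfied so that both child states are genuinely valid. The remaining parts are routine bookkeeping with tree-decomposition incidences.
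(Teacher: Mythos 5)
Your proposal is correct and takes essentially the same route as the paper's proof: restrict $S$ to the two child subtrees, read off $(z,\eta)$ from $S$, use the per-vertex $\lambda$-identity on $\cg\inv(0)$ together with the double-counted $\wt_s$ on $\cg\inv(1)$ to decompose $\Ci(\Xi^+,S)$ into the two child coverages, bound these by the child table entries at the induced states, and finish with the maximization in Eq.~\eqref{eqn:dpjnargmax}. Your computation also correctly treats the exponent in $\lambda$ as $\alpha(u)-|N(u)\cap\cg\inv(1)|-\eta(u)$ (the paper's displayed formula writes $\cg\inv(0)$, evidently a typo), and your explicit identification of the induced child states with $s_{\blj,z,\eta}$, $s_{\blh,z,\eta}$ together with their validity is a slightly more careful rendering of the same argument.
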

\begin{proof}
Let $\hat{s} = (\hat{b}, \hat{\cg}, \hat{\ca}, \hat{\cb})$ and $\tilde{s} =(\tilde{b}, \tilde{\cg}, \tilde{\ca}, \tilde{\cb})$ be the  states induced at the nodes $\blj$ and $\blh$ by the set $S$, respectively.
In this context, we consider the coverage functions $\Ci(\cdot,\cdot)$, $\Cj(\cdot,\cdot)$ and $\Ch(\cdot,\cdot)$ are considered at states $s$, $\hat{s}$ and $\tilde{s}$, respectively. \\
Let $S_\blj = S \cap (\Xj^+ \setminus \Xj)$ and $S_\blh = S \cap (\Xh^+ \setminus \Xh)$.
Let $z = |S_\blj|$ and $\eta:\cg\inv(0)\to[0,b]$ such that for each $u \in \cg\inv(0)$, $\eta(u) = N(u) \cap (\Xj^+\setminus \Xj)$.   We now define $\Ci(\Xi^+, S)$ in terms of $\Cj(\Xj^+, S)$, $\Ch(\Xh^+, S)$, and a subtracted term dependent on $z$, $\eta$, and $s$.  This is done as follows and we ensure that the coverage is exactly counted. 
\begin{eqnarray}
\Ci(\Xi^+, S) &=& \Ci(\Xi^+\setminus\Xi, S) + \Ci(\Xi, S) \nonumber\\
&=& \Ci(\Xj^+\setminus \Xj, S) + \Ci(\Xh^+\setminus \Xh, S) + \Ci(\cg\inv(1), S) + \Ci(\cg\inv(0), S) \nonumber \\
&=& \Cj(\Xj^+\setminus \Xj, S_\blj) + \Ch(\Xh^+\setminus \Xh, S_\blh) + \Ci(\cg\inv(1), S) + \sum_{u \in \cg\inv(0)} \Ci(u, S) \label{eqn:cov-expan}
\end{eqnarray}
The first equality follows by the partition $\Xi^+$ into $\Xi^+\setminus \Xi$ and $\Xi$. 
In the second equality, the set $\Xi^+\setminus \Xi$ is partitioned into $\Xj^+\setminus \Xj$ and $\Xh^+ \setminus \Xh$, and the set $\Xi$ is partitioned in $\cg\inv(0)$ and $\cg\inv(1)$.
The third equality  follows from the fact that for each $u \in \Xj^+\setminus \Xj$, $\wt_{\hat{s}}(u) = \wt_s(u)$ and $N[u] \cap S = N[u] \cap S_\blj$. 
Similarly, for each $u \in \Xh^+\setminus \Xh$, $ \wt_{\hat{s}}(u)= \wt_s(u)$ and $N[u] \cap S = N[u] \cap S_\blh$. \\

\noindent
Next, we consider the term $\sum_{u \in \cg\inv(0)} \Ci(u, S) $.  For each $u \in \cg\inv(0)$, $\Ci(u,S)$ can be written as sum of coverages of $u$ by the sets $S_\blj$ and $S_\blh$. In particular, here we carefully use the values of $\hat{\ca}$ and $\hat{\cb}$ at node $\blj$ and $\tilde{\ca}$ and $\tilde{\cb}$ at node $\blh$. Further, each of the equations follows by simple arithmetic and the definition of $\Cj(u,S_\blj)$, $\Cj(u,S_\blh)$, $\hat{\alpha}$, and $\tilde{\alpha}$.
\begin{eqnarray*}
\Ci(u, S) &=& \big(1-(1-p)^{\alpha(u)}\big)\wt_s(u) \\
&=& \big(1-(1-p)^{\ca(u)}\big)(1-p)^{\cb(u)}\wt(u)\\
&=& \Big(\big(1-(1-p)^{\hat{\ca}(u)}\big)(1-p)^{\hat{\cb}(u)}\wt(u)\Big) + \Big(\big(1-(1-p)^{\tilde{\ca}(u)}\big)(1-p)^{\tilde{\cb}(u)}\wt(u)\Big)\\
&& - \big( (1-p)^{\hat{\ca}(u)-\cg\inv(1)} + (1-p)^{\tilde{\ca}(u)-\cg\inv(1)} - (1-p)^{\ca(u)}-1\big)(1-p)^{\cb(u)}\wt(u)\\
&=& \Cj(u, S_\blj) + \Ch(u, S_\blh)\\
&& - \big( (1-p)^{|N(u) \cap (\Xj^+\setminus \Xj)|} + (1-p)^{|N(u) \cap (\Xh^+\setminus \Xh)|} - (1-p)^{\alpha(u)}-1\big)\wt_s(u)\\
&=& \Cj(u, S_\blj) + \Ch(u, S_\blh)\\
&& - \big( (1-p)^{\alpha(u)-|N(u) \cap \cg\inv(0)|-\eta(u)} + (1-p)^{\eta(u)} - (1-p)^{\alpha(u)}-1\big)\wt_s(u)\\
&=& \Cj(u, S_\blj) + \Ch(u, S_\blh) - \lambda(\eta, u)
\end{eqnarray*}
We substitute this in Equation \ref{eqn:cov-expan} and derive the following equalities.
\begin{eqnarray*}
\Ci(\Xi^+, S)&=& \Cj(\Xj^+\setminus \Xj, S_\blj) + \Ch(\Xh^+ \setminus \Xh, S_\blh) + \Ci(\cg\inv(1), S)\\  
&& + \sum_{u \in \cg\inv(0)} \Big(\Cj(u, S_\blj) + \Ch(u, S_\blh) - \lambda(\eta, u)\Big)\\
&=& \Cj(\Xj^+ \setminus \Xj, S_\blj) + \Ch(\Xh^+ \setminus \Xh, S_\blh) + \Cj(\cg\inv(0), S_\blj) + \Ci(\cg\inv(1), S) \\
&& + \Ch(\cg\inv(0), S_\blh) + \Ci(\cg\inv(1), S)  - \Ci(\cg\inv(1), S) -\sum_{u \in \cg\inv(0)}\lambda(\eta, u)
\end{eqnarray*}
Using $\hat{\alpha}$ and $\hat{\beta}$, it follows that  $\Cj(\Xj, S) = \Cj(\cg\inv(0), S_\blj) + \Ci(\cg\inv(1), S)$. Similarly, using $\tilde{\alpha}$ and $\tilde{\beta}$, it follows that  $\Ch(\Xh, S) = \Ch(\cg\inv(0), S_\blh) + \Ci(\cg\inv(1), S)$. Therefore,
\begin{eqnarray*}
\Ci(\Xi^+, S)&=& \Cj(\Xj^+ \setminus \Xj, S_\blj) + \Ch(\Xh^+ \setminus \Xh, S_\blh) + \Cj(\Xj, S) + \Ch(\Xh, S) \\
&&  - \Ci(\cg\inv(1), S) -\sum_{u \in \cg\inv(0)}\lambda(\eta, u) 
\end{eqnarray*}
Putting together the terms corresponding to $\Cj(\cdot,\cdot)$ and $\Ch(\cdot,\cdot)$ it follows that
\begin{eqnarray*}
\Ci(\Xi^+, S)&=& \Cj(\Xj^+, S) + \Ch(\Xh^+, S) -  \Ci(\cg\inv(1), S) -\sum_{u \in \cg\inv(0)}\lambda(\eta, u) 
\end{eqnarray*}
Since $\hat{s}$ and $\tilde{s}$ are states at nodes $\blj$ and $\blh$ induced by the state $S$, we know that
$\Cj(\Xj^+, S) \leq  \valj[\hat{s}]$ and $\Ch(\Xh^+, S) \leq \valh[\tilde{s}]$.  Therefore, 
\begin{eqnarray*}
\Ci(\Xi^+, S) &\leq& \valj[\hat{s}] + \valh[\tilde{s}]  - \sum_{u \in \cg\inv(1)}\wt_s(u) - \sum_{u \in \cg\inv(0)} \lambda(\eta, u)
\end{eqnarray*}
Further, by the choice of $z',\eta'$ in Equation ~\ref{eqn:dpjnargmax}, we know that $\hat{s}$ and $\tilde{s}$ do not result in a larger value than $s_{\blj}$ and $s_{\blh}$.  Formally,   we know that
\begin{eqnarray*}
\valj[\hat{s}] + \valh[\tilde{s}]  - \sum_{u \in \cg\inv(1)}\wt_s(u) - \sum_{u \in \cg\inv(0)} \lambda(\eta, u) \leq \vali[s]
\end{eqnarray*}
Hence the Lemma.
\end{proof}
\subsubsection{Bottom-Up Evaluation: Correctness of the DP Formulation}
\label{sec:bottomup}
\noindent {\bf Correctness invariant.} For a node $\bli$ and a valid state $s$ at $\bli$, the recursive definition in Section \ref{sec:recdef} ensures that
\[\soli[s] = \max_{\substack{D \subseteq V, |D| = k,\\ D\text{ induces }s}}\Ci(\Xi^+, D\cap \Xi^+)\] 
We prove this invariant by induction on the height of a node in the proof of the following theorem.
\begin{theorem}\label{theorem:fpt-by-wk}
The Uni-PBDS problem can be solved in time $2^{\Oh(w \log k)}n^2$ where $w$ is treewidth of the input graph.
\end{theorem}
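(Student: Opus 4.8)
The plan is to prove the theorem by finishing the dynamic programming over a nice tree decomposition that was set up in Section~\ref{sec:recdef}. First I would invoke Kloks' result~\cite{Kloks94} to obtain in linear time a nice tree decomposition $(\TD,X)$ of $\G$ of width $w$ with $\Oh(wn)$ nodes, rooted at $\blr$ with $\Xr=\emptyset$. The table $\vali$ at a node $\bli$ is indexed by states $s=(b,\cg,\ca,\cb)$ with $b\in[0,k]$, $\cg\colon\Xi\to\{0,1\}$, and $\ca,\cb\colon\Xi\to[0,k]$; since $|\Xi|\le w+1$, the number of states per node is at most $(k+1)\cdot 2^{w+1}\cdot(k+1)^{2(w+1)}$, which is $2^{\Oh(w\log k)}$. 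The algorithm processes the nodes of $\TD$ in postorder, and for every node $\bli$ and every valid state $s$ it computes $\soli[s]$ and $\vali[s]$ from the children's tables exactly as prescribed by Equations~\eqref{eqn:dpinsolcv0}--\eqref{eqn:dpjnval}.

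The heart of the proof is to establish, by induction on the height of $\bli$ in $\TD$, the correctness invariant of Section~\ref{sec:bottomup}: for every valid state $s$ at $\bli$,
\[
\vali[s] ~=~ \max\Big\{\,\Ci(\Xi^+,\,D\cap\Xi^+) ~\Big|~ D\subseteq V,\ |D|=k,\ D\text{ induces }s\,\Big\}.
\]
The base case is Lemma~\ref{lem:leaf}. For the inductive step I would split on the type of $\bli$. The ``$\le$'' direction --- that the recursion never produces a spurious value --- is precisely the content of Lemmas~\ref{lem:intr}, \ref{lem:forget}, and~\ref{lem:join}: each shows that an optimal witness $S$ for state $s$ restricts to a witness for the child state(s) whose value, after the prescribed arithmetic corrections, equals $\Ci(\Xi^+,S)$. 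For the ``$\ge$'' direction I would argue conversely, using the induction hypothesis to supply optimal child solutions $\solj[s_\blj]$ (and $\solh[s_\blh]$ at a join node) and verifying that the set $\soli[s]$ glued together via Eq.~\eqref{eqn:dpinsolcv0}/\eqref{eqn:dpinsolcv1}/\eqref{eqn:dpfnsol}/\eqref{eqn:dpjnsol} indeed induces state $s$ at $\bli$ and attains the recursively defined value --- this reduces to checking that the neighbour-count bookkeeping carried in $\alpha_{\blj,\eta},\beta_{\blj,\eta}$ (resp.\ $\alpha^x,\beta'$, resp.\ the introduce-node shifts of $\alpha,\beta$ on $D_v$) is internally consistent and that the term $\lambda(\eta,u)$ together with $\sum_{u\in\cg\inv(1)}\wt_s(u)$ exactly cancels the coverage double-counted across the shared bag $\Xi=\Xj=\Xh$. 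Finally, reading the answer off the root: since $\Xr=\emptyset$, every set of size at most $k$ induces some state $(b,\emptyset,\emptyset,\emptyset)$ with $b\le k$, and as coverage is monotone, $\max_{b\le k}\val_\blr[(b,\emptyset,\emptyset,\emptyset)]=\val_\blr[(k,\emptyset,\emptyset,\emptyset)]$ is the optimum of Uni-PBDS, with the corresponding $\soli$ entry giving an optimal set.

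For the running time, the dominant contribution is the join node: to compute $s_\blj$ and $s_\blh$ via Eq.~\eqref{eqn:dpjnargmax} one ranges over $z\in[0,b]$ and $\eta\in\A$ with $|\A|=\Oh(k^{w})$, and for each choice evaluates $\valj[s_{\blj,z,\eta}]+\valh[s_{\blh,z,\eta}]-\sum_{u\in\cg\inv(0)}\lambda(\eta,u)$ in $\Oh(wn)$ time (the $\lambda$-sum has at most $w+1$ terms and a table lookup costs $\Oh(n)$), so one join state costs $\Oh(k^{w+1}wn)$; the introduce, forget, and leaf cases are cheaper. Multiplying by the $2^{\Oh(w\log k)}$ states per node and the $\Oh(wn)$ nodes yields a total of $2^{\Oh(w\log k)}n^2$, which together with Theorem~\ref{thm:twkfpt} shows $f(k,w)=2^{\Oh(w\log k)}$. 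I expect the main obstacle to be the ``$\ge$'' direction of the invariant at the join node: proving that $\lambda(\eta,u)$ is exactly the overcount incurred when two partial solutions sharing $\Xi$ are merged, so that the glued set genuinely realizes state $s$ with the claimed coverage --- this is the ``easy arithmetic exercise'' on which the construction rests, and aligning the exponents in $\lambda(\eta,u)$ with $|N(u)\cap(\Xj^+\setminus\Xj)|$ and $|N(u)\cap(\Xh^+\setminus\Xh)|$ needs care.
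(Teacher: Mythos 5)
Your proposal follows essentially the same route as the paper's own proof: the identical DP tables over a nice tree decomposition, the correctness invariant established by induction on node height via Lemmas~\ref{lem:leaf}, \ref{lem:intr}, \ref{lem:forget} and~\ref{lem:join}, and the same accounting of $2^{\Oh(w\log k)}$ states per node, $\Oh(k^{w+1}wn)$ work per join-node state, and $\Oh(wn)$ nodes to get $2^{\Oh(w\log k)}n^2$. Your explicit attention to the achievability (``$\ge$'') direction of the invariant, especially the $\lambda(\eta,u)$ cancellation at join nodes, is a sound refinement of what the paper treats tersely, but it is the same argument, not a different one.
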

\begin{proof}
  We first show that the bottom-up evaluation of the tables in $\TD$ maintains the correctness invariant.\\
{\em \bf Invariant:}
For each node $\bli$ in $\TD$, and for each state $s \in \Si$, the correctness invariant is maintained for $\soli[s]$.
\begin{proof}[Proof of Invariant]
The proof is by induction on the height of a node in $\TD$. Recall, that the height of a node $\bli$ in the rooted tree $\TD$ is the distance to the furthest leaf in the subtree rooted at $\bli$. 
The base case is when $\bli$ is a leaf node $\TD$ and height is 0 and the proof of the claim follows from Lemma ~\ref{lem:leaf}.
Let us assume that the claim is true for all nodes in $\TD$ of height at most $\ell-1 \geq 0$. 
We now prove that if the claim is true for all nodes of height at most $\ell-1$, then it is true for a node of height $\ell$. 
Let $\bli$ be a node of height $\ell \geq 1$. 
Since $\bli$ is not a leaf node,  its children are at height at most $\ell-1$. 
Therefore, by the induction hypothesis, the correctness invariant is maintained at all the chidlren of $\bli$.   
Now, we prove that the correctness invariant is maintained at node $\bli$. Let $s$ be a state in node $\bli$.  
Let $S$ be an optimum solution at state $s$ in node $\bli$. We show that $\vali[s] = \Ci(\Xi^+,S)$. If $\bli$ is an introduce node then from  Lemma~\ref{lem:intr}, we know that the optimum coverage of $\Xj^+$ is achieved by $S \setminus \{v\}$ at state $s_\blj$ at node $\blj$.  Similarly, if $\bli$ is 
a forget node, then from Lemma ~\ref{lem:forget} we know that the optimum coverage of $\Xj^+$ is achieved by $S$ at state $s_\blj$ at node $\blj$. By the induction hypothesis in both these cases $\solj[s_\blj]$ is  the set which achieves optimum coverage, and this proves that  $\soli[s]$ is the optimum value at state $s$.  
 Further, if $\bli$ is a join node, then from the description of the computation at a join node, we know that $\vali[s]$ is recursively defined using $\vali[s_\blj]$ and $\vali[s_\blh]$ for an appropriate $s_\blj$ and $s_\blh$.  By the induction hypothesis, $\vali[s_\blj]$ and $\vali[s_\blh]$ are the optimal values. Therefore, it follows from Lemma ~\ref{lem:join} that $\C(\Xi^+,S) \leq \vali[s]$, and thus $\vali[s]$ is the optimum value.   
 Therefore, it  follows from the induction hypothesis that the solution and value are correctly computed at state $s$ based on the correct values computed at $s_\blj$ and $s_\blh$.   This completes the proof of the invariant.
 \end{proof}
 \noindent
Finally, at the root node $\blr$, the state set $\S_\blr$ is a singleton set with a state $s = (k, \emptyset\to\{0,1\}, \emptyset\to[0,k], \emptyset \to [0,k])$.  By the induction hypothesis, the solution and the value maintained at this state are indeed the set that achieves the optimum coverage and the value of the coverage, respectively.
Finally, a node $\bli \in V(T)$ can have $k(2k+2)^{2w+3}$ states and each of them can be computed in time $\Oh((k+1)^{w+1}wn)$.
Since the nice tree decomposition $(T,X)$ has $\Oh(w\ n)$ nodes, the  tables at the nodes in $\TD$ can be computed in time $\Oh(w^2(2k+2)^{4w+8}n^2) = 2^{\Oh(w \log k)}n^2$.  This completes the proof of the theorem.
\end{proof}


\versionspacy{
\section{Solving $k$-SPM}

The structure of $k$-SPM problem plays a central role in our paper.
By the Exponential time hypothesis and the reduction presented in Section~\ref{section:np-hardness},
it follows that any solution to $k$-SPM 
must require $N^{\Omega(k)}$ time.
It is interesting to explore how efficiently can a $k$-SPM problem be solved. 

Let $c\geq 0$ be the smallest real such that
$2$-SPM problem has an $\widetilde O(N^{c})$ time algorithm. 
We show that in such a case, $k$-SPM can be solved optimally
in $\widetilde O(N^{c\lceil k/2\rceil })$ time.

Let $\langle\A,k\rangle$ be the optimization version of $k$-SPM problem,
where $\A=\{ (x_1,y_1), \ldots, (x_N,y_N)\}$.
(We assume that $k$ is even, as otherwise, we can increment $k,N$ by $1$, and 
add a new element $(x_0,y_0)$ to $\A$, where $x_0 = \sum_{i=1}^{N}x_i$
and $y_0=1$, to ensure that $i_0$ is always part of the solution).
We now employ color-coding technique of Alon, Yuster, and Zwick~\cite{AlonYZ95}.
Repeat the following for $\alpha=\nex^{2k}$ iterations:
Independently assign each element of $[N]$ a uniformly random color (integer)
from $[1,k]$.
Let $U$ be the collection of different colorings obtained
by taking the union over all the iterations.

\hl{Perhaps explain in more detail?}

If $S$ is our desired optimal solution to $k$-SPM, then the probability that 
$S$ is colorful
\hl{For example, define \& motivate ``colorful''}
with respect to at least one coloring in $U$ is at least
$$\textstyle 1- \Big(1- \frac{k!}{k^k}\Big)^{\nex^{2k}} \leq 
1- \Big(1- \frac{\nex k}{\nex^{k}}\Big)^{\nex^ {2k}} \leq 1-\frac{1}{2^{\Theta(k)}}~.$$

For each coloring $\gamma\in U$, 
if $(J_1,\ldots,J_k)$ is a partition of $[N]$ induced by $\gamma$,
then compute
\versiondense{
(i) a list~$\L_\gamma^1$ consisting of pairs
$(\sum_{z=1}^{k/2}x_{i_z}, \prod_{z=1}^{k/2}y_{i_z})$,
for $(i_1,\ldots,i_{k/2})\in (J_1,\ldots,J_{k/2})$;~and
(ii) a list~$\L_\gamma^2$~consisting of pairs
$(\sum_{z=1+k/2}^{k}x_{i_z}, \prod_{z=1+k/2}^{k}y_{i_z})$,
for $(i_{1+k/2},\ldots,i_{k})\in (J_{1+k/2},\ldots,J_{k})$.
}
\versionspacy{
\begin{description}
\item{(i)} a list $\L_\gamma^1$ consisting of pairs
$(\sum_{z=1}^{k/2}x_{i_z}, \prod_{z=1}^{k/2}y_{i_z})$,
for $(i_1,\ldots,i_{k/2})\in (J_1,\ldots,J_{k/2})$, and
\item{(ii)} a list~$\L_\gamma^2$~consisting of pairs
$(\sum_{z=1+k/2}^{k}x_{i_z}, \prod_{z=1+k/2}^{k}y_{i_z})$,
\\
for $(i_{1+k/2},\ldots,i_{k})\in (J_{1+k/2},\ldots,J_{k})$.
\end{description}
}
Note that an optimal solution to the $k$-SPM $\A$
can be derived by solving the following equation, if the optimal solution $S$
of $\A$ is colorful with respect to at least one coloring in $U$:
\begin{equation}~\label{eq:kspm-exact}
\max_{\gamma~\in~U}~\max_{\substack{(a,b)~\in~\L^1_\gamma\\(\bar a,\bar b)~\in~\L^2_\gamma}}~
\Big(a+\bar a- b\bar b\Big)~.
\end{equation}
By Lemma~\ref{lemma:2-spm-conversion},
the above equation is equivalent to solving $\alpha=|U|$ different $2$-SPM
problems (each of size $O(N^{k/2})$).

The time to compute the lists $\L^1_\gamma$ and $\L^2_\gamma$, for a given
$\gamma\in U$, is $\tilde O(k\cdot N^{k/2})$.
By the transformation given in Lemma~\ref{lemma:2-spm-conversion}
and the assumption that
a $2$-SPM of size $m$ is solvable in $\tilde O(m^c)$ time, we get that
the total time required to solve Eq.~(\ref{eq:kspm-exact}) is
$\tilde O(|U|\cdot k \cdot N^{c\lceil k/2\rceil })$.
(Recall that if $k$ is odd, then we transform the problem and increment $k$
by $1$).
This shows that the $k$-SPM has a Monte-Carlo algorithm 
with running time $\tilde O(\nex^{2k}\cdot k \cdot N^{c\lceil k/2\rceil })$,
and success probability
$1-1/2^{\Theta(k)}$,
hence the following theorem is immediate. 

\begin{theorem}
Let $c> 0$ be the smallest real such that
$2$-{\sc SPM} problem has an $\widetilde O(N^{c})$ time algorithm. 
Then, there exists an $\widetilde O\big((dN)^{c\lceil k/2\rceil}\big)$ time Monte-Carlo algorithm
to solve the $k$-{\sc SPM} problem, for some constant $d>0$.
\end{theorem}

%
%
%
%

}

\versionspacy{
\section{Future Research}
Our research raises a number of directions for future study.
\begin{itemize}
\item
It is fundamental to understand on what graph classes and what failure
models the budgeted dominating set problem switches from being easy to hard.
In this paper we have shown that under the model in which edges fail
independently at random, the problem becomes hard already for trees.
\item
In geometric intersection graphs like disk graphs, the presence of an edge
is dependent upon the radius of disks.  For these graphs very good
approximations are known for dominating set problems, and the natural question
is what happens when the radius of a disk is a random variable.
In particular, one concrete problem is as follows:
The input consists of $n$ disks specified by their centres and each disk has a probability distribution on a finite set of possible radii. The goal is to select $k$ disks such that the expected number of intersected disks is maximized.  Variants of this problem can be used to model different practical scenarios, especially in epidemiology.
\item
The exact position of the PBDS problem in trees in the W-hierarchy is still
unknown. In particular the difficulty is in coming up with a natural circuit characterization with  bounded weft for the existence of a set of $k$ vertices that is expected to dominate  a set of weight at least $k$.  So, could it be that this problem is not in the W-hierarchy at all, is it provable based on the kind of primitives required to describe the property?
\end{itemize}
   
}


\paragraphbf{Acknowledgement.}
We thank an anonymous reviewer for pointing us to
\cite{ArnborgLS91}, yielding a shorter proof of the FPT algorithm for Uni-PBDS
parameterized by treewidth and $k$
\versiondense{
\newpage
}

{\small
\bibliographystyle{plain}
\bibliography{references/report,references/report2,references/uncert}
}


\end{document}